\newcommand{\zlabel}[1]{\label{#1} }
\newcommand{\fc}{\frac} 
\newcommand{\mrr}{\mathbf{r}}
\newcommand{\mq}{\mathbf{q}}
\newcommand{\pr}{\prime}
\newcommand{\mss}{\mathbf{s}}
\newcommand{\mx}{\mathbf{x}}
\newcommand{\my}{\mathbf{y}}
\newcommand{\mrrp}{\mathbf{r}^\prime}
\newcommand{\al}{\alpha}
\newcommand{\be}{\beta}
\newcommand{\pa}[2]{\frac{\partial #1}{\partial #2}}
\newcommand{\bv}{\mathbf{v}}
\newcommand{\bu}{\mathbf{u}}
\newcommand{\bw}{\mathbf{w}}
\newcommand{\bp}{\mathbf{p}}
\newcommand{\dive}{\text{\bf div}\;\!}
\newcommand{\grad}{\text{\bf grad}\;\!}
\newcommand{\lab}[1]{ (\ref{#1}):\; }
\newcommand{\veps}{\varepsilon}
\newcommand{\sPsi}{\mbox{\small $\Psi$}}
\newcommand{\defi}{\,\,\dot{=}\,\,}
\newcommand{\defim}{\,\,\dot{=}\,}
\newcommand{\text}{\mbox}
\newcommand{\boldsymbol}{\mathbf}
\newcommand{\alt}[1]{}
\newcommand{\zero}{\mathbf{0}}
\newcommand{\imply}{\;\Rightarrow\;}
\newcommand{\Pu}{P}\newcommand{\Pv}{p}
\newcommand{\ES}{E}\newcommand{\Etheta}{F}
\newcommand{\bS}{\textbf{\textit{S}}}
\newcommand{\hs}[1]{\hspace{#1}}
\newcommand{\spa}{\hspace{0.1ex}}
\newcommand{\intp}{\hs{0.8ex}\hat{\hs{-1.5ex}\int}_{\hs{-1ex}1+}}
\newcommand{\intf}[1]{\hs{1.5ex}\hat{\hs{-1.5ex}\int}_{\hs{-1ex}#1}\hs{1ex}\!}
\newcommand{\mmu}{\text{{\boldmath $\mu$}}}
\newcommand{\oomega}{\text{{\boldmath $\omega$}}}
\newcommand{\sch}{Schr\"odinger }
\newcommand{\Ve}{V_{\mbox{\small e}}}
\newcommand{\coulf}[2]{\mbox{\Large$/$} \overrightarrow{|#1 \mbox{\tiny $\!\;\setminus\!\;$} #2|}\mbox{} ^{2}}
\newcommand{\coulp}[2]{\mbox{\large $/$}|#1 \mbox{\tiny $\!\;\setminus\!\;$} #2|\mbox{}}
\newcommand{\qcharge}{\text{\small $Q$}}
\newcommand{\room}{\hspace{0.15ex}}
\newcommand{\rsb}[1]{\raisebox{0ex}{\fbox{#1}}}
\newtheorem{lemma}{Lemma}
\newtheorem{theorem}{Theorem}
\newtheorem{corollary}{Corollary}
\newtheorem{definition}{Definition}
\newtheorem{proof}{Proof}
\newtheorem{hypothesis}{Hypothesis}
\begin{document} 

\title[A Formulation of Quantum Fluid Mechanics and Trajectories]{\hs{-3ex}\mbox{A Formulation of Quantum Fluid Mechanics and Trajectories}}


\author{James P. Finley: }

\address{Department of Physical Sciences,
Eastern New~Mexico University,
Portales, NM 88130}
\ead{james.finley@enmu.edu}


\date{\today}


\begin{abstract}

  
A formalism of classical mechanics is given for time-dependent many-body states of quantum
mechanics, describing both fluid flow and point mass trajectories.  The familiar equations of
energy, motion, and those of Lagrangian mechanics are obtained.  An energy and continuity
equation is demonstrated to be equivalent to the real and imaginary parts of the time dependent
\sch equation, respectively, where the \sch equation is in density matrix form.  For certain
stationary states, using Lagrangian mechanics and a Hamiltonian function for quantum mechanics,
equations for point-mass trajectories are obtained.  For 1-body states and fluid flows, the
energy equation and equations of motion are the Bernoulli and Euler equations of fluid
mechanics, respectively.  Generalizations of the energy and Euler equations are derived to
obtain equations that are in the same form as they are in classical mechanics.  The fluid flow
type is compressible, inviscid, irrotational, with the nonclassical element of \emph{local}
variable mass. Over all space mass is conserved. The variable mass is a necessary condition for
the fluid flow to agree with the zero orbital angular momentum for s states of hydrogen. Cross
flows are examined, where velocity directions are changed without changing the kinetic energy.
For one-electron atoms, the velocity modification gives closed orbits for trajectories, and
mass conservation, vortexes, and density stratification for fluid flows.  For many body states,
Under certain conditions, and by hypotheses, Euler equations of orbital-flows are
obtained. One-body \sch equations that are a generalization of the Hartree-Fock equations are
also obtained. These equations contain a quantum Coulomb's law, involving the 2-body pair
function of reduced density matrix theory that replace the charge densities.
\end{abstract}



\section{Introduction}

A formalism of classical mechanics is presented for time-dependent many-body states of quantum
mechanics, describing both fluid flows and point mass trajectories. An energy and continuity
equation is demonstrated to be equivalent to the real and imaginary parts of the time dependent
\sch equation, respectively, where the \sch equation is in density matrix form.  For a class at
stationary states, the energy equation implies a Hamiltonian that is used in Lagrangian
mechanics to obtain equations of motion.

For fluid flow of 1-body states, the equation of motion is the Euler equation, and the energy
equation is the Bernoulli equation. The flow type is compressible, inviscid, irrotational, with
the nonclassical element of \emph{local} variable mass. Over all space mass is conserved.  For
many body states, under certain conditions, and by hypotheses, Euler equations of orbital-flows
are obtained. Under additional conditions, 1-body \sch equations that are a generalization of
the Hartree-Fock equations are also obtained. These equations contain a quantum Coulomb's law,
involving the 2-body pair function of reduced density matrix theory, replacing charge
densities. In order to obtain local conservation of mass for fluids flows and trajectories
with continuous velocity vectors, cross flows are examined, where velocity directions are
changed without changing the kinetic energy. These cross flows provide vortexes for
one-electron atoms and closed orbits for trajectories.  Conditions are given for continuity
satisfaction and incompressible fluid flow, giving density stratification, where the mass
density is only constant of each streamline.  As in the relationship between Madelung and
Bohminan mechanics for one-body states, the streamlines of the fluid flow are also the paths of
point mass trajectories. Because of the this close relationship, we move freely back and forth
between the two descriptions.

Needed wavefunction identities are obtained Sec.~\ref{2890}.  Sec.~\ref{4193} defines variables
for quantum states that correspond to linear self-adjoint operators, and explicit forms of
these variables are given with the wavefunction in polar form.  Sec.~\ref{8920} uncovers much
mathematical structure, with many relationships, involving the terms that originate in the
imaginary part of the \sch equation in density matrix form.

The energy equation of Sec.~\ref{4491} is a combination and generalization of the energy
equations of the Bernoulli equation of fluid mechanics, used to describes 1-body states with a
real valued wavefunctions \cite{Finley-Bern}, and the one from Bohmian Mechanics
\cite{Bohm:52a,Bohm:52b,Takabayasi,B2,B4,B5,B6,B7,B8,B9,Jung,Renziehausenb}, used to describe
states with complex value wavefunctions.  The energy equation can be represented with a certain
pressure, for fluids, or a certain body potential, for trajectories.  There are two linearly
independent velocity components: one velocity component of Bohmian mechanics, denoted $\bv$,
and one from the Bernoullian formalism mentioned above, denoted $\bu$.  Because the energy
equation has a nonclassical feature of kinetic energy when $\bu\cdot\bv$ does not vanish, and
we wish to remove such elements, a generalized energy equation is derived without this feature.

In Sec.~\ref{2283}, for 1-body states, an Euler equation of fluid dynamics and a continuity
equation is proved to be equivalent to the time dependent \sch equation. Because terms of the
momentum time-derivative of classical fluids are missing in the Euler equation, a range of
generalizations are obtained. The most generalized one has all classical terms, and the most
number of additional forces, including bulk viscosity. The additional forces involve
coupling between the two velocity field components.  Interpretations of the various
equations, and the identifications of variables are given in Secs.~\ref{2215} and \ref{2217}.
Sec.~\ref{4938} covers cross flows.  Sec.~\ref{9918} provides Lagrangian mechanics for quantum
states.  Sec.~\ref{4918} obtains orbital flows, and mixed flows, involving a single 1-body
wavefunction describing all bodies. These equations contain a quantum Coulomb's law, involving
the 2-body pair function of reduced density matrix theory that replace the charge densities.

There are a number of overlapping formalism that go by various names for the description
of quantum states: the De Broglie--Bohm theory
\ \cite{Bohm:52a,Bohm:52b,Takabayasi,B2,B4,B5,B6,B7,B8,B9,Jung,Renziehausenb}, also called
Bohmian mechanics and quantum hydrodynamics, and Madelung fluid mechanics
\cite{Madelung:26,Madelung:27}. All of these methods have the same Madelung velocity
$\bv$, and, except for the earliest ones, explicitly contain the quantum
potential $Q$ \cite{Bohm:52a,Bohm:52b,B6}. For many body states, the theories based on a
Madelung fluid differ from the Bohmian particle trajectories \cite{Renziehausenb}.
Heifetz and coworkers \cite{Heifetz,Heifetz2} explores the thermodynamics of Madelung fluids.
There are many modifications and generalizations of the Madelung equations
\cite{Sorokin,Broadbridge,Schonberg,Caliari,Jamali,Waegell}.
The generalization by Broadbridge \cite{Broadbridge} and Jamali \cite{Jamali} use a complex velocity.
Tsekov \cite{Tsekov} also uses a complex velocity to derives a complex Navier--Stokes equation.
Vadasz \cite{Vadasz} derived an extension of the Schr\"odinger equation from the Navier--Stokes
equation. Quantum hydrodynamic theory has been employed to treat systems with single particle wave functions
\cite{6,7,8,9,10,11,12,14,15,16,17,18,19,20,21,22}.
The method also also been generalized to treat many particle systems
\cite{8b,Renziehausen}. Application of this formalism include the investigation of spin effects
\cite{14b,15b}, Bose--Einstein condensates \cite{16b}, graphene \cite{17b} and plasmas
\cite{18b,22b,23b}.

The quantum potential of Bohmian mechanics suffers from kinetic energy contamination.  Because
of the missing kinetic energy, Bohmian mechanics has a difficulty in explaining the
nonvanishing expectation value of kinetic energy for states with real valued wavefunction,
given that these states are described as being static. Bohmian mechanics theory provides no
information about these states, nor explain what the Bohmian potential does for static
states. It is possible that spin angular momentum might be accounted for by an orbiting
electron, and there is a formula for the electron motion derived by Holland
\cite{Holland:99,Colin1,Colin2}. However, the kinetic energy responsible for the creation of
the orbital momentum must be present in nonrelativistic quantum mechanics, and the only place
it could be is in the Bohm potential.  Bohmian mechanics also cannot explain the $x$ and $y$
portions of orbital angular momentum. Furthermore, the Bohm potential, besides being difficult
to interpret, gives a stress tensor for force for equations of motion \cite{B6,Renziehausen},
even for stationary states, where the forces should be conservative and be defined by a
potential.  The addition of the velocity vector component $\bu$ of the Bernoullian formulation,
and the elimination of the Bohm potential, removes these problems \cite{Finley-Bern}.

Adding the velocity vector $\bu$ and a pressure $\Pu$ has other advantages \cite{Finley-Bern}.
The force from the pressure removes the necessity of tunneling, so that nothing special happens
when a particle moves into a region that is, otherwise, classically forbidden.  The velocity
vector $\bu$ explains the zero angular momentum for the $s$ electronic states of the hydrogen
atom, since the velocity field is in the radial direction.  It is also easily demonstrated that the
pressure force $-\nabla\Pu$ helps stabilize one-electron atoms, giving, for the hydrogen 1s
state, a strong repulsive force close to the nucleus and a weak attractive force beyond a
distance of about 1.37 a.u. For energy, a portion of the pressure cancels the energy from the
proton nucleus, giving a finite energy at the nucleus. Equations for the speed of sound have
also been obtained, where the extrema of the momentum per volume occur at points that are Mach
1 speed \cite{Finley-Arxiv}.

\section{Runner Notation System \zlabel{3428}} 

In this section, a notational system is introduced, applicable to the sections that follow,
where equations for the general $n$-body case have the same appearance as in the special case
where $(n=1)$.  This notation permits the omission of the summation sign along with the
corresponding subscripts that run over an index set. Consider the two equations:
$(-\nabla^2\Psi/2 + U\Psi = \ES)$ and $(K \defim -\!\nabla^2\Psi/2)$. It is understood, by
the rules given below, that the first equation is a single equation, such that $\nabla^2$ denotes
the series $(\nabla^2_1 + \cdots + \nabla^2_n)$. The second one is a sequence of $n$
definitions, the $i$th being $(K_i = -\nabla^2\Psi_i\div 2)$.

\begin{definition} 
The core family $(\mathbf{\Psi}\defi \{\text{Re}\Psi,\text{Im}\Psi,\rho,S\}$) is a set of
real-valued functions from the Cartesian product
$\bS\times\left(\mathbb{R}\times\mathbb{R}^{3n}\right)$. The spin space is defined by
$\bS\defi\{(\omega_1,\omega_2,\cdots, \omega_n)\vert\omega_i\in\{-1,+1\}\}$, and
$\mathbb{R}\times\mathbb{R}^{3n}$ is a Euclidean space. The members of $\mathbf{\Psi}$ are two
times continuously differentiable almost everywhere on the space
$\mathbb{R}\times\mathbb{R}^{3n}$ for all $\mathbf{s}\in\bS$.  We write
$\hat{\mathbb{R}}\times\mathbb{R}^{3n}$ to emphasize that the variable of $\mathbb{R}$
is time. The elements of $\mathbf{\Psi}$ are related via the polar form of the
wavefunction: $(\Psi = \sqrt{\rho} e^{i S/\hbar})$.
\end{definition}
If two functions are equal almost everywhere (a.e.), they are considered physically
indistinguishable. The symbol ``$=$'' means equal a.e. The same idea is used for equivalence of
equations: A function satisfies one of the equations a.e, if and only if it satisfies the other
one a.e. Hence, there will be on reason to write ``a.e,'' since the condition is worked into
the formalism.  For the most part, we do not use, nor do we need, the special structure from the
$L^3(\mathbb{R}^{3n})$ Hilbert space, that is not already in the underlining vector space.
 
\begin{definition}\mbox{\rm (Runners)}
Let $(\mathbf{\Sigma},+)$ be a vector space of $n$-tuples fitted with the sequence to series map
$(X)\mapsto [X]$, where $(X) = (X_1,X_2,\cdots X_n) \in\mathbf{\Sigma}$ and $([X] \defi X_1 +
X_2 \cdots + X_n)$. The runner notation system uses special shorthand notation: If
$(X),(Y)\in~\mathbf{\Sigma}$, then the $n$-equations
%
\[X_i  = Y_i, \qquad i = 1,\cdots,n,\]
are denoted by $(X=Y)$. However, if $U\notin\mathbf{\Sigma}$ appears as a term in an equation
with one or more elements from $\mathbf{\Sigma}$, then, for example, $(U = X + Y)$ denotes the
series equality: $(\room U = [X] + [Y]\room)$, that is
\[U = X_1 + X_2 + \cdots X_n  + Y_1 + Y_2 + \cdots Y_n. \]
The elements $(X)$ and $(Y)$ are said to run in the equation series $(X =Y +U)$ and step in the
equation sequence $(X=Y)$.  The members of $(\mathbf{\Sigma},+)$ are called runners.

Let $\rho,S\in\mathbf{\Psi}$ and $U\notin\mathbf{\Sigma}$.  For runners $(\nabla\rho)$,
$(\nabla S)\in\Sigma$ of an $n$-body state, here are two examples of the notation:
\[\hspace{-3ex}
\nabla^2Y \defi \nabla\rho\cdot\nabla S\quad \text{means} \quad
\nabla_i^2Y = \nabla_i\rho\cdot\nabla_i S, \hspace{2ex} i\in\{1,2,\cdots n\},
\]
\[\hspace{-8ex}
\nabla^2Y = \nabla\rho\cdot\nabla S + U \quad \text{means} \quad
\sum_{i=1}^n\nabla_i^2Y = \sum_{i=1}^n\nabla_i\rho\cdot\nabla_i S + U,
\]
where it is understood that the collection $\mathbf{\Sigma}$ is extended to include $Y$, and
the symbol ``$\defi$'' is used for definitions.  If $f\in\mathbf{\Psi}$ and $(\hat{O} =
\hat{O}_1+\cdots \hat{O}_n)$ is a differential operator, where $\hat{O}_i$ acts on the $i$th
component of $\mathbb{R}^{3n}$, then we require that
$(\room(\hat{O}f)\in\mathbf{\Sigma}\room)$.

Other notation used includes the following items. 1) A semicolon in an equation means ``such that'' or
``where.''  For example, ``$(A > B^{-1};$ $B\ne 0)$'' means ``$A$ is greater than $B$ such that
$(B\ne 0)$.''  2) ``Eq.~($N$)$\times c$'' means Eq.~($N$) that is multiplied by $c$. 3) The
symbol $\partial$ is the partial time derivative operator, that is
$(\hs{0.25ex}[\partial\Psi](t) \defi \partial\Psi/\partial t\hs{0.15ex})$. 4) The following
notation is used for spin summations and spatial integrations:
\begin{equation*}
\hs{-12.5ex}\sum_m \text{ and  } \int_m \text{ are the sum and integration over the $m$th component of elements} 
\end{equation*}
from $\bS$ and $\mathbb{R}^{3n}$, respectively, For example, for $(n=3)$ and $(m=2)$, we have
\hspace{1ex}
\[ \hs{-12ex}
\left[\int_2\sum_2\Psi\right](\mrr_1,\omega_1;\mrr_3,\omega_3) =
\int \sum_{\omega_2\in\{-1,+1\}}\left[\Psi(\mrr_1,\omega_1;\mrr_2,\omega_2;\mrr_3,\omega_3) \right] \, d\mrr_2. \]
The following notation is used:
\begin{equation*}\hs{-10ex}
  \int_{m+} = \int_{m+1}\cdots\int_n,
  \quad \sum_{m+} = \sum_{m+1}\cdots\sum_{n},
\quad \intf{m} = \sum_m\int_m, \quad \intf{m+} = \sum_{m+}\int_{m+}, 
\end{equation*}
\begin{equation}\label{spinsum2}
 \lefteqn{\hs{-7ex}\text{and}\hs{7ex}}\quad \intf{} = \intf{1}\cdots \intf{n}.
\end{equation}
\end{definition}

It is common in mathematical proofs to prove that a statement, involving a given function, is true for
only a value $f(x)$ of the function $f$.  It is then pointed out that, since the point $x$ is
arbitrary, the proof holds for the function $f$. We use this approach for spin, where the spin
variables are treated as parameters.
For density matrices, this differs from the usual convention, where spin is an explicit
variable or the spin variables are summed over, giving spinless density matrices.

\section{Correspondence Variables \zlabel{4930}} 



In quantum mechanics, physical properties are replaced by observables. Linear self-adjoint
operators represent observables that,
together with rules, provide predictions for the statistical outcome of a sufficiently
large collection of measurements of an observable. There are special probabilistic predictions
given for the order of measurements and the simultaneous measurements of related pairs of
observables, position and momentum being an example. There is a special axiom declaring the
physical meaning of the probability distribution ($\rho\defi \Psi^*\Psi$), called the Born
rule.

In this paper, variables are defined that correspond to observables of quantum mechanics. These
correspondence variables are needed to give a description of quantum states based on classical
mechanics. These variables are generalized real-valued fields, with domain
$\hat{\mathbb{R}}\times\mathbb{R}^{3n}$, that are defined using the wavefunction $\Psi$ and
operators of quantum mechanics.  This approach yields two variables in some cases, one from the
real part, and one from the imaginary part, of a defining complex valued-function.
There are two classes of correspondence variables: \emph{particle and fluid variables}. The two
classes being related by the map: \text{\rm (}$Y\mapsto Y\rho$\text{\rm ) }, where $Y$ is a
particle variable, $Y\rho$ is a fluid variable, and \text{\rm (}$\rho \defi
\Psi^*\Psi$\text{\rm )} is the probability distribution. These definitions are given to avoid
saying ``per mass'' and ``per volume.''

It is a requirement of a wavefunction for all times that the subset of $\mathbb{R}^{3n}$ where
the probability distribution $\rho$ vanish has measure zero, and that a wavefunction and its
first derivatives vanish at infinity. These conditions can be satisfied by requiring the
external potentials to be members of the $(L^{3/2} + L^\infty)$ Hilbert space \cite{Lieb}.
Fluid variables have the same domain as the wavefunction they are determined by.  For particle
variables, the domain is the support of $\Psi^*\Psi$, defined as the region of
$\mathbb{R}^{3n}$ space where $\Psi^*\Psi$ does not vanish, for a given point of time. 

Next we define the classes of correspondence variables.  Let $\hat{L}$ be an operator that
represents an observable that is given by a sum over 1-body operators. For example, $(-L =
i\hbar \nabla_1 + \cdots + i\hbar \nabla_n)$.\vspace{1ex}

\noindent
    {\bf 1a)} \mbox{\rm(\mbox{\it \bf Weak variable from operator})}
Let $(\room(\Psi^*\hat{L}\Psi)\in\mathbf{\Sigma})$.  A weak fluid variable $(\room
(Y)\in\mathbf{\Sigma}\room)$, corresponding to operator $\hat{L}$, is a real-valued generalized
field, with domain $\mathbb{R}^{3n}$, that satisfies the following equation sequence:
\begin{equation} \label{matel} 
  \left\langle\,\hat{L}\,\right\rangle \defi \intf{}\, \Psi^*\hat{L}\Psi = \intf{}\;Y, 
\end{equation}
where $\left\langle\,\hat{L}\,\right\rangle$ is the expectation value of an observable that
corresponds to $\hat{L}$, and $\hs{1ex}\hat{\hs{-1ex}\int}$ is the $n$-body spin-sum,
spatial-integration operator~(\ref{spinsum2}). 
\vspace{1ex}

\noindent 
{\bf 1b)} \mbox{\rm(\mbox{\it \bf Strong variable by operator})} The two strong fluid
variables, corresponding to linear self adjoint operator $\hat{L}$, are the real and imaginary
parts of the function \mbox{\rm $(\Psi^*\hat{L}\Psi)$}.  The strong particle variables are the
real or imaginary parts of $(\room(\Psi^*L\Psi/\rho)\hspace{0.2ex)} \in\mathbf{\Sigma}\room)$.
It follows from the definitions that the strong fluid variable \mbox{\rm
  ($\text{Re}\Psi^*\hat{L}\Psi$)} is weak, since the expectation value of a Hermitian operator
must be real.  The strong--particle variables by composition are defined by the real or
imaginary parts of the composite $(\room(g\circ(\Psi^*L\Psi/\rho)\hspace{0.2ex)}
\in\mathbf{\Sigma}\room)$, involving a given map $g$ that acts on each component of the
sequence $(\Psi^*\hat{L}\Psi/\rho)$.  We also take the external potential $U$ and
$(i\hbar\Psi^*\partial\Psi/\rho)$ to be strong particle variables of potential and total energy.
These variables are not runners. 

\vspace{1ex}

\noindent
{\bf 1c)} \mbox{\rm(\mbox{\it \bf First order reduced variables by operator})} Let $f$ and $g$
be the strong particle variables of $\hat{L}$:
\begin{equation}\zlabel{7720}
f\rho + ig\rho \defi \Psi^*\hat{L} \Psi;\qquad f,g\in\mathbf{\Sigma}.
\end{equation}
The first order, reduced, strong
particle-variables, $\hat{f}$ and $\hat{g}$, corresponding to operator $\hat{L}$, are
\begin{equation} \zlabel{7722} 
  \hat{f}\hat{\rho} + i\hat{g}\hat{\rho} \defi
  \intf{1+}\hs{-1ex}\Psi^*\hat{L}_1 \Psi,
\end{equation}
where $\hs{1ex}\hat{\hs{-1ex}\int}_{1+}$ is the spin-sum, spatial-integration operation~(\ref{spinsum2})
with $(m = 1)$; the probability density $\hat{\rho}$, defined by
\begin{equation} \zlabel{2914}
\hat{\rho} \defi \intf{1+}\rho; \qquad \rho = \Psi^*\Psi, \quad \intf{1}\hat{\rho} = 1, 
\end{equation}
is normalized to unity, instead of the usual $n$ normalization for a wavefunction normalized to
unity. Defs.~(\ref{7720}) an (\ref{7722}) give 
\begin{equation} \label{7724}
  \hat{f}\hat{\rho} = \intf{1+} f_1\rho \quad\text{and}\quad
  \hat{g}\hat{\rho} = \intf{1+} g_1\rho. 
\end{equation}

\vspace{1ex}

\noindent 
{\bf 2)}\mbox{\rm(\mbox{\it \bf Variable from equation})} For the special case of 1-body
\text{\rm ($n=1$)}, the field is a variable of an equation of classical mechanics, and the
equation is equivalent to, or implied by, a 1-body Schr\"odinger equation.  

\vspace{1ex}

\noindent 
{\bf 3)} \mbox{\rm(\mbox{\it \bf A free, or distributive, variable})} is a fluid variable whose
integration over all space $\mathbb{R}^{3n}$ vanish at all times.  If $(Y = W\rho)$ and $(Y)$
is free, $(W)$ is a weak particle variable. Note that if $(Y)$ is a weak fluid variable, such
that
\begin{equation} \zlabel{7922}
Y = Z + X,
\end{equation}
where $X$ is free, then $(Z)$ is also a weak variable of operator.  The strong fluid variable
\mbox{\rm ($\text{Im}\Psi^*\hat{L}\Psi$)} is free, since the expectation value of a
Hermitian operator must be real.

\section{Wavefunction Identities \zlabel{2890}}

The lemmas in this section are needed to obtain formulae for certain correspondence variables.
Note that displayed equations in Lemmas~\ref{Lap1} and \ref{Lap2} below, are sequences of
$n$ equations.
Let
\[\hspace{-3ex}(\Psi\nabla^2 \Psi), \Psi \nabla\Psi, \vert\nabla\Psi\vert^2,
\nabla\cdot\left(\Psi^*\nabla\Psi\right)\in\mathbf{\Sigma},\quad\text{and}\quad
    \hat{\text{\rm P}}_i \defim -i\hbar\nabla_i. \]
\begin{lemma}[Momentum Operator Form of the Laplacian] \label{Lap1} \mbox{}
%
\begin{equation} \label{laplacian2b}
    \Psi^*\left(-\hbar^2\nabla^2 \Psi\right)
    =  \vert\hat{\text{\rm P}}\Psi\vert^2 - i\hbar\nabla\cdot\left(\Psi^*\hat{\text{\rm P}}\Psi\right),
    \quad \hat{\text{\rm P}} \defim -i\hbar\nabla.
\end{equation}
%
\begin{equation} \label{momen}
\hs{5ex}\vert\hat{\text{\rm P}}\Psi\vert^2\rho = \vert\Psi^*\hat{\text{\rm P}}\Psi\vert^2.
\end{equation}
\vspace{-0.5ex}
\begin{equation} \label{laplacian2} 
  \Psi^*\left(-\fc{\hbar^2}{2m} \nabla^2 \Psi\right)\rho
  = \fc{1}{2m}\vert\Psi^*\hat{\text{\rm P}}\Psi\vert^2
   - i\fc{\hbar}{2m}\nabla\cdot\left(\Psi^*\hat{\text{\rm P}}\Psi\right)\rho.
\end{equation}
\end{lemma}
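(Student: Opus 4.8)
The plan is to establish the three displayed identities essentially by direct computation, working one body-index at a time (so that everything reduces to ordinary three-dimensional vector calculus on $\mathbb{R}^3$), and then reading the $n$-body statement off the Runner conventions of Sec.~\ref{3428}. Throughout I fix the spin parameter $\mathbf{s}$, as the paper permits, and treat $\Psi$ as a smooth complex function where $\rho=\Psi^*\Psi$ does not vanish.

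For (\ref{laplacian2b}), the starting point is the Leibniz identity
\[
\nabla\cdot\left(\Psi^*\nabla\Psi\right) = \nabla\Psi^*\cdot\nabla\Psi + \Psi^*\nabla^2\Psi
= \left|\nabla\Psi\right|^2 + \Psi^*\nabla^2\Psi .
\]
Solving for $\Psi^*\nabla^2\Psi$ and multiplying through by $-\hbar^2$ gives
\[
\Psi^*\left(-\hbar^2\nabla^2\Psi\right) = \hbar^2\left|\nabla\Psi\right|^2 - \hbar^2\nabla\cdot\left(\Psi^*\nabla\Psi\right).
\]
Now substitute $\hat{\mathrm{P}}=-i\hbar\nabla$: then $\left|\hat{\mathrm{P}}\Psi\right|^2 = \hbar^2\left|\nabla\Psi\right|^2$ and $\Psi^*\hat{\mathrm{P}}\Psi = -i\hbar\,\Psi^*\nabla\Psi$, so $-\hbar^2\nabla\cdot(\Psi^*\nabla\Psi) = -i\hbar\,\nabla\cdot(\Psi^*\hat{\mathrm{P}}\Psi)$, which is exactly (\ref{laplacian2b}). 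For (\ref{momen}), I would write $\Psi=\sqrt{\rho}\,e^{iS/\hbar}$ (the polar form supplied in Definition 1), compute $\nabla\Psi = \left(\tfrac{\nabla\rho}{2\sqrt{\rho}} + \tfrac{i}{\hbar}\sqrt{\rho}\,\nabla S\right)e^{iS/\hbar}$, hence $\hat{\mathrm{P}}\Psi = \left(-\tfrac{i\hbar\,\nabla\rho}{2\sqrt{\rho}} + \sqrt{\rho}\,\nabla S\right)e^{iS/\hbar}$, and $\Psi^*\hat{\mathrm{P}}\Psi = \left(-\tfrac{i\hbar\,\nabla\rho}{2} + \rho\,\nabla S\right)$. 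Taking squared modulus of each: $\left|\hat{\mathrm{P}}\Psi\right|^2 = \tfrac{\hbar^2\left|\nabla\rho\right|^2}{4\rho} + \rho\left|\nabla S\right|^2$, so $\left|\hat{\mathrm{P}}\Psi\right|^2\rho = \tfrac{\hbar^2\left|\nabla\rho\right|^2}{4} + \rho^2\left|\nabla S\right|^2 = \left|\Psi^*\hat{\mathrm{P}}\Psi\right|^2$; the cross terms vanish because the two pieces are respectively imaginary and real, which is the one small point worth checking carefully. Then (\ref{laplacian2}) is obtained by multiplying (\ref{laplacian2b}) by $\rho/(2m)$ and using (\ref{momen}) to replace $\left|\hat{\mathrm{P}}\Psi\right|^2\rho$ by $\left|\Psi^*\hat{\mathrm{P}}\Psi\right|^2$.

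Finally I would address the passage from the single-index computation to the $n$-body displayed equations. Each of (\ref{laplacian2b}), (\ref{momen}), (\ref{laplacian2}) is, per the remark preceding the lemma and the Runner rules, a \emph{sequence} of $n$ equations indexed by $i$, with $\nabla$ standing for $\nabla_i$ and $\hat{\mathrm{P}}$ for $\hat{\mathrm{P}}_i=-i\hbar\nabla_i$; since the computations above never coupled different body indices, the $i$th equation in each sequence follows verbatim from the three-dimensional calculation applied to the $i$th coordinate block, which completes the proof. I expect no serious obstacle here; the only genuinely delicate point is the reality/imaginarity bookkeeping in (\ref{momen}) — confirming that $\Psi^*\hat{\mathrm{P}}\Psi$ splits cleanly into a real part $\rho\nabla S$ and an imaginary part $-\tfrac{i\hbar}{2}\nabla\rho$ so that no cross term survives in the modulus — together with keeping track of the placement of the factor $\rho$ when combining (\ref{laplacian2b}) and (\ref{momen}) to get (\ref{laplacian2}).
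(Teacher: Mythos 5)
Your proposal is correct, and for Eqs.~(\ref{laplacian2b}) and (\ref{laplacian2}) it follows the paper's proof essentially verbatim: the Leibniz identity $\nabla\cdot(\Psi^*\nabla\Psi)=\nabla\Psi^*\cdot\nabla\Psi+\Psi^*\nabla^2\Psi$ rearranged and scaled by $-\hbar^2$ is exactly the paper's third displayed identity, and the final step (multiply by $\rho/2m$, invoke (\ref{momen})) is identical. The one place you genuinely diverge is Eq.~(\ref{momen}): you pass to the polar form $\Psi=\sqrt{\rho}\,e^{iS/\hbar}$, compute $\Psi^*\hat{\text{\rm P}}\Psi=\rho\nabla S-\tfrac{i\hbar}{2}\nabla\rho$, and check that the real/imaginary split kills the cross terms in the modulus. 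The paper instead proves (\ref{momen}) by a one-line factor-shuffling argument that never mentions $\rho$ and $S$: writing $\vert\hat{\text{\rm P}}\Psi\vert^2\rho=\rho(-i\hbar\nabla\Psi)\cdot(i\hbar\nabla\Psi^*)$ and distributing $\rho=\Psi^*\Psi$ as one factor of $\Psi^*$ onto $\nabla\Psi$ and one factor of $\Psi$ onto $\nabla\Psi^*$ gives $(\Psi^*\hat{\text{\rm P}}\Psi)\cdot(\Psi^*\hat{\text{\rm P}}\Psi)^*$ directly. The paper's route is shorter and holds pointwise even on the nodal set where $\rho=0$ (both sides vanish there), whereas your polar computation is only defined where $\rho>0$; under the paper's almost-everywhere conventions this costs you nothing, and your version has the side benefit of exhibiting the decomposition $\rho\nabla S - \tfrac{i\hbar}{2}\nabla\rho$ that the paper will need anyway in Eq.~(\ref{p5204}). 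Your remark on the Runner bookkeeping (each identity is a sequence of $n$ uncoupled single-index identities) is also consistent with the paper, which notes before the lemma that these displays are sequences of $n$ equations.
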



\textbf{\textit{Proof}}
Substituting the following two identities into the third one gives Eq.~(\ref{laplacian2b}):
\[\hspace{-2ex} \vert\hat{\text{P}}\Psi\vert^2 =
 (-i\hbar\nabla\Psi)\cdot(i\hbar\nabla\Psi^*)
= \hbar^2(\nabla\Psi^*\cdot\nabla\Psi),\]
\[i\hbar\nabla\cdot\left(\Psi^*\hat{\text{P}}\Psi\right)
   = \hbar^2\nabla\cdot\left(\Psi^*\nabla\Psi\right)\]
\[\hspace{-2ex} \Psi^*\left(-\hbar^2\nabla^2 \Psi\right)
= \hbar^2\nabla\Psi^*\cdot\nabla\Psi - \hbar^2\nabla\cdot\left(\Psi^*\nabla\Psi\right).\]
%
\[ \hs{-14ex} \text{For Eq.~(\text{\ref{momen}})}, \hs{1.1ex}
\vert\hat{\text{\rm P}}\Psi\vert^2\rho = \rho(-i\hbar\nabla\Psi)\cdot(i\hbar\nabla\Psi^*)
= (-i\hbar\Psi^*\nabla\Psi)\cdot(i\hbar\Psi\nabla\Psi^*) = \vert\Psi^*\hat{\text{\rm P}}\Psi\vert^2. 
\]
The multiplication of identity~(\ref{laplacian2b}) by $\rho/2m$ and
using identity (\ref{momen}) gives Eq.~(\ref{laplacian2}).
\begin{lemma}[Schr\"odinger Eq.~and Vanish of Divergence Momentae Integral]\zlabel{vanlem} \mbox{}  
The $n$-body Schr\"odinger equation, in the form
  \begin{equation} \hs{-5ex}\label{schrodinger}
    i\hbar\Psi^*\partial\Psi = \Psi^*\hat{H}\Psi; \qquad
    \hat{H} \defi = -\fc{\hbar^2}{2m}\nabla^2 + U,\quad\!\text{and}\quad \nabla^2\Psi\in\mathbf{\Sigma},
\end{equation} 
  can be written \vspace{0.5ex}
\begin{equation} \label{schrodinger2}
i\Psi^*\hbar\partial\Psi =
   \fc{1}{2m}\vert\hat{\text{\rm P}}\Psi\vert^2 - i\fc{\hbar}{2m}\nabla\cdot\left(\Psi^*\hat{\text{\rm P}}\Psi\right) + |\Psi|^2\room U.
\end{equation} \vspace{1.5ex}
The real and imaginary parts are
\begin{equation} \hs{-5ex}\label{schre}
  -\hbar\text{Im}\left(\Psi^*\partial\Psi\right) = \fc{1}{2m}\vert\hat{\text{\rm P}}\Psi\vert^2
  + \fc{\hbar}{2m}\text{Im}\left(\nabla\cdot\left(\Psi^*\hat{\text{\rm P}}\Psi\right)\right) + \vert\Psi\vert^2 U,
  \quad\text{and}
\end{equation} 
\begin{equation} \label{schim}
  \hspace{2ex}-\hbar\text{Re}\left(\Psi^*\partial\Psi\right)
   = \fc{\hbar}{2m}\text{Re}\left(\nabla\cdot\left(\Psi^*\hat{\text{\rm P}}\Psi\right)\right),
\end{equation}
respectively, where $U$ is the external potential.
For each component $(\nabla_i\cdot\left(\Psi^*\hat{\text{\rm
    P}_i}\Psi\right)$ of the sequence
$(\nabla\cdot\left(\Psi^*\hat{\text{\rm P}}\Psi\right) \in \mathbf{\Sigma})$,
\[
\text{if $\Psi$ vanish at infinity then}\hspace{1ex}
\int_{\mathbb{R}^3}\nabla_i\cdot\left(\Psi^*\hat{\text{\rm P}_i}\Psi\right)\hspace{0.9ex} \text{ vanish}.
\]
\end{lemma}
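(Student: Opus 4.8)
The plan is to dispatch the three assertions of Lemma~\ref{vanlem} in order, each resting on an identity already available. \emph{First}, the rewriting (\ref{schrodinger2}): expand the Hamiltonian side of (\ref{schrodinger}) as $\Psi^*\hat H\Psi = \Psi^*(-\frac{\hbar^2}{2m}\nabla^2\Psi) + U|\Psi|^2$, and feed the kinetic piece into Lemma~\ref{Lap1}. Concretely, I would sum the sequence identity (\ref{laplacian2b}) over the $n$ one-body components to match the series form of $\nabla^2$ carried by $\hat H$ (using the runner-to-series convention noted just before that lemma), divide by $2m$, and substitute; the potential term is left alone. This produces (\ref{schrodinger2}) verbatim. (The $\rho$-weighted identities (\ref{momen}) and (\ref{laplacian2}) are not needed here, though the same manipulation would yield their analogue.)

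\emph{Second}, the split into (\ref{schre}) and (\ref{schim}): this is pure bookkeeping on (\ref{schrodinger2}). On the right, $\frac{1}{2m}|\hat{\text{\rm P}}\Psi|^2 = \frac{\hbar^2}{2m}|\nabla\Psi|^2$ and $|\Psi|^2 U$ are real (the external potential $U$ being real-valued), while $-i\frac{\hbar}{2m}\nabla\cdot(\Psi^*\hat{\text{\rm P}}\Psi)$ has real part $\frac{\hbar}{2m}\,\text{Im}(\nabla\cdot(\Psi^*\hat{\text{\rm P}}\Psi))$ and imaginary part $-\frac{\hbar}{2m}\,\text{Re}(\nabla\cdot(\Psi^*\hat{\text{\rm P}}\Psi))$. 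On the left, $i\hbar\Psi^*\partial\Psi$ has real part $-\hbar\,\text{Im}(\Psi^*\partial\Psi)$ and imaginary part $\hbar\,\text{Re}(\Psi^*\partial\Psi)$. Equating real parts gives (\ref{schre}); equating imaginary parts and multiplying by $-1$ gives (\ref{schim}).

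\emph{Third}, the vanishing of $\int_{\mathbb{R}^3}\nabla_i\cdot(\Psi^*\hat{\text{\rm P}}_i\Psi)$: this one does not use the Schr\"odinger equation at all. Since $\hat{\text{\rm P}}_i = -i\hbar\nabla_i$, the integrand is a genuine divergence $\nabla_i\cdot\mathbf{F}$ with $\mathbf{F} = -i\hbar\,\Psi^*\nabla_i\Psi$, so by the divergence theorem over a ball $B_R\subset\mathbb{R}^3$ (in the $i$th coordinate, the others held as parameters) the integral equals the flux $\oint_{\partial B_R}\mathbf{F}\cdot\hat{\mathbf{n}}\,dS$, and one lets $R\to\infty$. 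I expect this to be the only step requiring care: mere pointwise decay of $\Psi$ and $\nabla_i\Psi$ at infinity does not by itself kill a flux integral against a sphere of area growing like $R^2$. The clean route is to invoke the admissibility conditions imposed on $\Psi$ — square-integrability of $\Psi$ together with its first derivatives, which holds for the class of external potentials in $L^{3/2}+L^\infty$ (cf.\ \cite{Lieb}) — so that $\Psi^*\nabla_i\Psi$ is integrable on $\mathbb{R}^3$ and hence its spherical flux vanishes along a suitable sequence of radii, forcing the improper integral of the divergence to be zero. With that in place, everything else reduces to substitution and the elementary separation of real and imaginary parts.
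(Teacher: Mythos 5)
Your proposal is correct, and for the first two assertions it coincides with the paper's proof: substitute the momentum-operator form of the Laplacian, identity (\ref{laplacian2b}) divided by $2m$, into (\ref{schrodinger}) to get (\ref{schrodinger2}), then separate real and imaginary parts using $\text{Re}(iX)=-\text{Im}\,X$ and $\text{Im}(iX)=\text{Re}\,X$. The only real divergence is in the last assertion. You integrate the divergence over balls $B_R$, convert to a flux integral, and --- correctly observing that pointwise decay alone cannot beat the $R^2$ growth of the sphere's area --- fall back on integrability of $\Psi^*\nabla_i\Psi$ to kill the flux along a subsequence of radii. The paper instead never invokes the divergence theorem: it writes $\nabla_i\cdot(\Psi^*\hat{\text{\rm P}}_i\Psi)$ as a sum of three Cartesian partial derivatives $\partial_\al(\Psi^*\bar{\text{\rm P}}_\al\Psi)$ and integrates each one along its own axis first, so the fundamental theorem of calculus reduces everything to boundary values at $\pm\infty$ on a line, which vanish by the stated hypothesis that $\Psi$ and its first derivatives vanish at infinity. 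The paper's route is shorter and needs only pointwise vanishing (plus, implicitly, Fubini to justify the iterated integration); your route is more explicit about why decay at infinity must be supplemented by integrability, which is a legitimate point of care that the paper glosses over. Either argument closes the lemma.
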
  
\textbf{\textit{Proof}} Eq.~(\ref{schrodinger2}) follows from the substitution of
Eq.~(\ref{laplacian2b})$\div 2m$ into Eq.~(\ref{schrodinger}).  Eqs.~(\ref{schre}) and
(\ref{schim}) follow by separating Eq.~(\ref{schrodinger2}) into real and imaginary parts
and using the identities ($\text{Re}(iX) = -\text{Im}X)$ and ($\text{Im}(iX) = \text{Re}X )$.
For the second part,
\[
\hspace{-4ex}
\text{let }\quad \bar{\text{\rm P}}_\al = -i\hbar \fc{\partial}{\partial \al},
\quad  \al\in \{x,y,z\} = \mrr \in(\mrr_1,\mrr_2,\cdots,\mrr_n).
\]
\[\hspace{0ex}
\int_{\mathbb{R}}\fc{d}{d\al}\left(\Psi^*\bar{\text{\rm P}}_\al\Psi\right)
 = \left.\Psi^*\bar{\text{\rm P}}_\al\Psi\right\vert_{-\infty}^\infty = 0.
\]
\begin{flushright}\fbox{\phantom{\rule{0.5ex}{0.5ex}}}\end{flushright}
\begin{definition} 
  \mbox{}

  We replace the wavefunction $\Psi$ as the fundamental variable, with the variables
  $\rho$ and $S$, such that  
\begin{equation}\zlabel{def00}
  \Psi \defi Re^{iS/\hbar} \defi e^{(iS-\theta)/\hbar}, \quad\rho\defi \Psi^*\Psi = R^2,
  \quad \int\rho = 1,
\end{equation}
where $\Psi$ is defined to within a phase factor. Hence, we have the composite $\Psi = \Psi(\rho,S)$.
Also, $\theta$ is defined under the condition $(R> \zero)$, giving $(\rho = e^{-2\theta/\hbar})$.
(For convenience, this definition is given in a different form with Defs.~(\ref{theta}).) 
\end{definition}
\begin{lemma}[Gradient, Divergence and Laplacian Identities in Polar Form] \label{Lap2} 
\end{lemma}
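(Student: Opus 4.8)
The plan is to derive every identity from the single vector relation obtained by differentiating the logarithm of the polar form. On the support of $\Psi^*\Psi$ (whose complement has measure zero, so that the a.e.\ reading of ``$=$'' makes the identities global) write $\Psi = \sqrt{\rho}\,e^{iS/\hbar} = e^{(iS-\theta)/\hbar}$ with $\rho = e^{-2\theta/\hbar}$, so $\ln\Psi = \frac{1}{2}\ln\rho + iS/\hbar$ and therefore
\[
\frac{\nabla\Psi}{\Psi} \;=\; \frac{\nabla\rho}{2\rho} + i\,\frac{\nabla S}{\hbar}.
\]
Multiplying by $\Psi$ gives the gradient identity $\nabla\Psi = \Psi\left(\frac{\nabla\rho}{2\rho} + i\nabla S/\hbar\right)$; this is the seed, and the remaining formulae follow by elementary algebra together with the product rule for $\nabla\cdot$. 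Each displayed identity is first read, via the Runner conventions of Sec.~\ref{3428}, as the sequence of $n$ componentwise statements $\nabla_i\Psi = \Psi(\nabla_i\rho/2\rho + i\nabla_i S/\hbar)$; the summed ``series'' forms needed later for the energy and continuity equations then follow by linearity of $\nabla\cdot$.

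First I would record the first-order consequences. Multiplying the gradient identity by $\Psi^*$ and using $\Psi^*\Psi=\rho$ gives $\Psi^*\nabla\Psi = \frac12\nabla\rho + i\rho\nabla S/\hbar$, hence $\Psi^*\hat{\text{\rm P}}\Psi = \rho\nabla S - \frac{i\hbar}{2}\nabla\rho$, which exhibits the real part $\rho\nabla S$ and the free (pure-divergence) imaginary part. Taking squared moduli gives $|\nabla\Psi|^2 = \frac{|\nabla\rho|^2}{4\rho} + \frac{\rho|\nabla S|^2}{\hbar^2}$, equivalently $|\hat{\text{\rm P}}\Psi|^2 = \frac{\hbar^2|\nabla\rho|^2}{4\rho} + \rho|\nabla S|^2$; dividing $|\Psi^*\hat{\text{\rm P}}\Psi|^2$ by $\rho$ yields the same expression, which is exactly the content of Eq.~(\ref{momen}).

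Next I would turn to the second-order identities. Applying $\nabla\cdot$ to $\Psi^*\nabla\Psi$ yields $\nabla\cdot(\Psi^*\nabla\Psi) = \frac12\nabla^2\rho + \frac{i}{\hbar}\nabla\cdot(\rho\nabla S)$, whose imaginary part is the continuity term. For $\nabla^2\Psi$ itself I would differentiate the gradient identity once more, writing $\nabla^2\Psi = \nabla\cdot(\Psi\mathbf{A})$ with $\mathbf{A} = \nabla\Psi/\Psi = \frac{\nabla\rho}{2\rho} + i\frac{\nabla S}{\hbar}$, so that
\[
\nabla^2\Psi \;=\; \Psi\left(\mathbf{A}\cdot\mathbf{A} + \nabla\cdot\mathbf{A}\right),
\]
and then separate real and imaginary parts. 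Using $\frac{\nabla^2\rho}{2\rho} - \frac{|\nabla\rho|^2}{4\rho^2} = \frac{\nabla^2 R}{R}$ (equivalently $= \frac{|\nabla\theta|^2}{\hbar^2} - \frac{\nabla^2\theta}{\hbar}$), the real part becomes $\frac{\nabla^2 R}{R} - \frac{|\nabla S|^2}{\hbar^2}$, the quantum-potential form, while the imaginary part collects to $\frac{1}{\hbar\rho}\nabla\cdot(\rho\nabla S)$. Substituting these back into Eq.~(\ref{laplacian2b}) reproduces Lemma~\ref{Lap1}, giving a consistency check.

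The computations are routine; the only points that need care are bookkeeping ones. First, one must keep the Runner sequence/series distinction straight: each identity is a tuple of single-particle identities, and the $\nabla\cdot$ and $\nabla^2$ forms used downstream are the summed ones. Second, the divisions by $\rho$ or $R$, and the appearance of $\theta$, are legitimate only on the support of $\Psi^*\Psi$; since its complement is null and all equalities are a.e.\ by the conventions of Sec.~\ref{3428}, the identities extend to the full domain with no further comment. I do not expect a genuine obstacle: the real work is arranging the real/imaginary splits so that each piece appears in exactly the form later sections invoke --- quantum-potential form for the real part of $\nabla^2\Psi$, pure-divergence form for the free imaginary parts.
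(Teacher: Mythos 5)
Your proposal is correct and follows essentially the same route as the paper: differentiate the polar form to get $\Psi^*\hat{\text{P}}\Psi = \rho\nabla S - i\tfrac{\hbar}{2}\nabla\rho$, then obtain the remaining identities by taking moduli and divergences. The only minor difference is at Eq.~(\ref{pexpect}), where the paper substitutes Eqs.~(\ref{p5208}) and (\ref{p5212}) into the already-proved identity~(\ref{laplacian2}) of Lemma~\ref{Lap1}, whereas you recompute $\nabla^2\Psi = \Psi(\mathbf{A}\cdot\mathbf{A}+\nabla\cdot\mathbf{A})$ from scratch and use Lemma~\ref{Lap1} only as a consistency check --- both are valid and yield the same result.
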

\begin{equation} \zlabel{p5204}
\Psi^*\hat{\text{P}}\Psi = \rho\nabla S - i\fc{\hbar}{2}\nabla\rho. 
\end{equation}
\begin{equation} \zlabel{p5208}
  \fc{1}{2m}\vert \Psi^*\hat{\text{P}}\Psi\vert^2
  = \fc{\hbar^2}{8m}\vert\nabla\rho\vert^2 + \fc{1}{2m} \rho^2\vert\nabla S\vert^2.
\end{equation}
\begin{equation} \zlabel{p5212}
  - i\fc{\hbar}{2m}\nabla\cdot\left(\Psi^*\hat{\text{\rm P}}\Psi\right)
  =  -\fc{\hbar^2}{4m}\nabla^2\rho - i\fc{\hbar}{2m}\nabla\cdot(\rho\nabla S).
\end{equation}
\begin{equation} \hspace{-10ex}\label{pexpect}
  \Psi^*\left(-\fc{\hbar^2}{2m} \nabla^2 \Psi\right)\rho
  = \fc{\hbar^2}{8m}\vert\nabla\rho\vert^2
    + \fc{1}{2m}\rho^2\vert\nabla S\vert^2
  -\fc{\hbar^2}{4m} \rho\nabla^2\rho
  -i\fc{\hbar}{2m}\rho\nabla\cdot(\rho\nabla S).
\end{equation}
\textbf{\textit{Proof}}
Let ($S^\pr \defi S/\hbar$). Consider the development
\[Re^{-iS^\pr}(-\nabla) Re^{iS^\pr} = -iR^2\nabla S^\pr - R\nabla R  = -i\rho\nabla S^\pr - \fc12 \nabla \rho.\]
Multiplying the equation by $i\hbar$, setting ($S^\pr = S/\hbar$), and using the
definition of the momentum operator $(\hat{\text{P}}\Psi) = (-i\hbar\nabla\Psi)$, we obtain
Eq.~(\ref{p5204}). Eqs.~(\ref{p5208}) and (\ref{p5212}) follow immediately. Substituting
Eqs.~(\ref{p5208}) and (\ref{p5212}) into Eq.~(\ref{laplacian2})$\div(2m)$ we obtain Eq.~(\ref{pexpect}).
\begin{flushright}\fbox{\phantom{\rule{0.5ex}{0.5ex}}}\end{flushright}
For later convenience, we note that Eq.~(\ref{p5204}) holds with the replacement
$(-\nabla)\to\partial$:
\begin{equation} \zlabel{def02}
  i\hbar\Psi^*\partial\Psi = -\rho\partial S + i\fc{\hbar}{2}\partial\rho.
\end{equation}
Only the derivatives of $S$ are needed for the correspondence variables. It follows from
(\ref{p5204}) and (\ref{def02}) that $\nabla S$ and $\partial S$ can be computed from a
wavefunction, as the real parts of $\Psi^*\hat{\text{P}}\Psi/\rho$ and
$-i\hbar\Psi^*\partial\Psi/\rho$, respectively, even if the wavefunction is not written in
polar form, as in a linear combination.

\section{Polar Forms of Correspondence Variables \zlabel{4193}}

In this section, polar forms of the correspondence are derived.

\subsection{Momentae Strong and Kinetic Energy Strong by a Composition \zlabel{momentae}}

In this subsection, polar forms of the correspondence strong variables for momentum are given,
where the ordered momentum pair is called a momentae. The kinetic energy
variable by composition is defined and obtained.  A discussion of one of the velocity fields
is given.
\begin{definition}\mbox{\rm (Momentae Runners $\{(\rho_m\bv) ,(\rho_m\bu)\}\subset\mathbf{\Sigma}$)}
  \begin{equation}\label{def10} 
\Psi^*\text{\rm $\hat{\text{P}}$}\Psi \defi \rho_m\bv + i\rho_m\bu, \quad \rho_m \defi m\rho,
\end{equation}
\begin{equation} \label{theta} 
  \zeta_0 \defi \fc{\hbar}{2},
  \qquad \zeta = \fc{1}{m}\zeta_0,
   \qquad \eta = \zeta_0\rho,
  \quad  \theta \defim -\zeta_0\ln\rho.
\end{equation}
The parameter $m$ is the mass of each and every identical particle of a quantum-mechanical
system. The functions $\bv$ and $\bu$ are the first and second velocities of the order pair
$(\bv,\bu)$.
(To define $\ln\rho$, $\rho$ is unitless in $\ln\rho$.)
\end{definition}

Comparing (\ref{p5204}) and (\ref{def10}), and using $(\ref{theta})$, we find that
\begin{equation} \hspace{-8ex}\zlabel{5204b} 
  \rho_m\bv = \rho\nabla S \defi \zeta_0\nabla\phi + \zeta_0\nabla\times\mathbf{J}
  \defi \zeta_0\mathbf{Q},
\qquad
  \rho_m\bu = \rho\nabla\theta = -\zeta_0\nabla\rho,
\end{equation}
where the first equations string defines the scalar $(\phi)$ and vector $(\mathbf{J})$ fields by
Helmholtz decomposition.
By definition~(\ref{def10}) and definition~{\bf 1b} of Sec~\ref{4930},
$(\hspace{0.2ex}(\rho_m\bv) = (\rho\nabla S)\hspace{0.2ex})$ and $(\hspace{0.2ex}(\rho_m\bu) =
(\rho\nabla\theta)\hspace{0.2ex})$ are the strong--fluid momentae variables.
Similarly, $(m\bv)$ and $(m\bu)$ are the strong--particle momentae variables. These variables
are irrotational fields with $S$ and $\theta$ particle-momentae potentials.
For a fluid interpretation of 1-body states, $\rho_m$ is the mass density, and 
the real and imaginary parts of the following are the mass fluxes:
\begin{equation*}
  \nabla\cdot\left(\Psi^*\hat{\text{P}}\Psi\right) = \nabla\cdot(\rho_m\bv) + i\nabla\cdot(\rho_m\bu).
\end{equation*}
Using the composite $(\hspace{0.2ex}g(\rho^{-1}\Psi^*\hat{\text{P}}\Psi)\hspace{0.2ex})$ with
$(g(x) = \vert x\vert^2/2m)$, where $x$ is  complex valued, and noting that
$(\rho^{-1}\Psi^*\hat{\text{P}}\Psi = m\bv + im\bu)$, we obtain a strong--particle
kinetic-energy variable by composition:
\begin{equation} \hspace{-5ex}\zlabel{5208}
  \fc{1}{2m}\left\vert\fc{\Psi^*\hat{\text{P}}\Psi}{\Psi^*\Psi}\right\vert^2 =
  \fc{1}{2m}\vert\nabla\theta\vert^2 + \fc{1}{2m} \vert\nabla S\vert^2 = \fc12mu^2 + \fc12m v^2.
\end{equation}

For later convenience in Corollary~(\ref{bohmian}), we note that starting from the second equation
from~(\ref{5204b}), we have
\begin{equation} \zlabel{def77}
m\bu = -\zeta_0\fc{\nabla\rho}{\rho}\;\Rightarrow\; \fc12mu^2 =
\fc12\zeta_0^2m\vert\nabla\rho\vert^2\rho^{-2}.
\end{equation}
As in Def.~(\ref{2914}) for probability density $\hat{\rho}$, let 
\begin{equation} \zlabel{2915}
  \hat{\gamma} = \intf{1+}\! \gamma, \qquad \gamma = \phi, \mathbf{J}.
\end{equation}
As in Eq.~(\ref{5204b}), let $\hat{\bv}$ and $\hat{\bu}$ be defined by
\begin{equation} \zlabel{4025}
  \hat{\rho}_m\hat{\bu} \defi  -\zeta_0\nabla_1\hat{\rho}, \quad
  \hat{\rho}_m\hat{\bv} \defi  \zeta_0\nabla_1\hat{\phi} + \zeta_0\nabla\times\mathbf{\hat{J}}.
\end{equation}
Next we show that these definitions for $\hat{\bu}$ and $\hat{\bv}$ are equivalent to the
strong, first order reduced variables given by Def.~(\ref{7724}):
\begin{equation} \zlabel{4035}
\hat{\rho}\hat{\bu} = \intf{1+} \rho\bu_1,\qquad
  \hat{\rho}\hat{\bv} = \intf{1+} \rho\bv_1. 
\end{equation}
Def.~(\ref{5204b}.2) and (\ref{2915}) give
\begin{equation} \hs{-4ex} \zlabel{4372}
  \hat{\rho}_m\hat{\bu} = \intf{1+}\rho_m\bu_1 =
  -\zeta_0\intf{1+}\nabla_1\rho \; d\mrr_1^\pr =  -\zeta_0\nabla_1\hat{\rho}.
\end{equation}
Similarly, Def.~(\ref{5204b}.1) and (\ref{2915}) give
\begin{equation}  \hs{-12ex} \zlabel{4374}
  \hat{\rho}_m\hat{\bv} = \sum_{1+}\int\rho_m\bv_1\, d\mrr_1^\pr =
\zeta_0\sum_{1+}\int  \left(\nabla_1\phi + \zeta_0\nabla_1\times\mathbf{J}\right) \; d\mrr_1^\pr
= \zeta_0\nabla_1\hat{\phi} +  \zeta_0\nabla_1\times\mathbf{\hat{J}}.
\end{equation}
Hence, the two definitions for vector fields $\hat{\bu}$ and $\hat{\bv}$ are equivalent. This
same result holds for any vector field-variable that can be expressed as a Helmholtz decomposition,
where $\hat{\rho}_m\hat{\bu}$ is special, since it is irrotational.

\begin{flushright}\fbox{\phantom{\rule{0.5ex}{0.5ex}}}\end{flushright}

\hs{-2ex}\mbox{\rm(\mbox{\it \bf Velocity singularities})} The velocities $(\bv)$ and $(\bu)$ have
singularities when $\rho$ vanish at nodes, and this causes some equations obtained below not to
be defined for these subspaces \cite{Finley-Arxiv,Finley-Bern,B6}. Since these subspaces have
measure zero, these singularities pose no problems if  one works in a Hilbert space. These
singularities can be removed by using the fluid-variable momentum to replace the
particle-variable momentum, since such a modification give finite limits as $\rho\to 0$, at
least for a few systems that have been studied \cite{Finley-Bern}.  However, such treatments
seem to be counter production, since singularities in variables of classical mechanics can be
useful. For example, the electric field on the surface of a conductor can be discontinuous,
giving singularities in the electric-field gradient. Since singularities involving $(m\bu)$ and
$(m\bv)$ might have some usefulness, and these singularities arise naturally out of the
formalism, in some equations and definitions we choose to use the particle momentum-variables
$(m\bv)$ and $(m\bu)$, instead of the corresponding fluid ones $(\rho_m\bv)$ and $(\rho_m\bu)$.

\hs{-2ex}\mbox{\rm(\mbox{\it \bf Other work with velocity $\bu$})} The second velocity $\bu$ of
Eq.~(\ref{def77}) for one-body systems, appears in other investigations.  The function
$2\rho_m\vert\bu\vert^2$ is a term of the Hamiltonian functional of the generalized
fluid-dynamics formalism by Broer \cite{Broer}, where the Hamiltonian functional is derived
from the time-dependent Schr\"odinger equation.  Salesi \cite{Salesi} obtains a Lagrangian
function that is equivalent to the Madelung equations that contains the term
$\rho_m\vert\bu\vert^2/2$.  He interprets $\vert\bu\vert^2/2$ as the internal energy of the
relative motion in the center of mass coordinate frame from the \mbox{Zitterbewegung} (ZWB)
model of spin. The velocity magnitude $\vert\bu_\pm\vert$ follows as a non-relativistic
approximation of a velocity expression of Hestenes \cite{Hestenes} of Schr\"odinger--Pauli
theory.  Tsekov \cite{Tsekov} obtains the same velocity choice $\bu$ of Eq.~(\ref{def77}), as
the imaginary component of a complex velocity, where the formalism involves diffusion.
Furthermore, the velocity choice $\bu$ is used in the Bernoullian equation, an equation
equivalent to the real-valued one-body Schr\"odinger equation, where the flow is compressible
with variable-mass \cite{Finley-Arxiv,Finley-Bern}.

\subsection{Kinetic energy strong and weak  \zlabel{1210}}

In this subsection, polar forms of strong and weak correspondence variables for kinetic energy are
given.  The following lemma contains definitions and a proof that are needed below.
\begin{lemma}[The Free Pressure Variables] \zlabel{presseq}
The definition
\begin{equation} \hspace{-4ex}\zlabel{5205} 
  -i\fc{\hbar}{2m}\nabla\cdot\left(\Psi^*\hat{\text{\rm P}}\Psi\right) \defi \Pu + i\Pv,
  \qquad \hat{\text{\rm P}}\Psi = -i\hbar\nabla\Psi,
\end{equation}
is equivalent to the following two alternative definitions:
\begin{equation} \hspace{-7ex}\label{4720}
  \mathbf{1)}\quad \Pu \defim \zeta_0\nabla\cdot(\rho\bu), \quad\;\; \Pu \defim -\zeta\zeta_0\nabla^2\rho,
  \qquad \Pu\in\mathbf{\Sigma}.
\end{equation}
\begin{equation} \hspace{-7ex} \label{4725}
  \mathbf{2)} \quad \Pv \defim -\zeta_0\nabla\cdot(\rho\bv),\quad  \Pv \defim -\zeta\zeta_0\nabla^2\phi,
    \qquad \Pv\in\mathbf{\Sigma}.
\end{equation}
Also, for each component of the sequences $(\Pu)$ and $(\Pv)$, 
\begin{equation} \hspace{-7ex}\label{4730}
  \int \Pu_i, \quad\text{and}\quad \int \Pv_i \;\text{ vanish.}
\end{equation}
\end{lemma}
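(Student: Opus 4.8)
}
The plan is to extract both pressure variables from the polar-form identity~(\ref{p5212}) of Lemma~\ref{Lap2},
\[
-i\fc{\hbar}{2m}\nabla\cdot\left(\Psi^*\hat{\text{\rm P}}\Psi\right)
= -\fc{\hbar^2}{4m}\nabla^2\rho - i\fc{\hbar}{2m}\nabla\cdot(\rho\nabla S),
\]
and to compare it with the defining equation~(\ref{5205}). Separating into real and imaginary parts immediately yields the $n$-equation sequences $\Pu = -(\hbar^2/4m)\nabla^2\rho$ and $\Pv = -(\hbar/2m)\nabla\cdot(\rho\nabla S)$. Everything that remains is to recognize these two expressions in the forms stated in~(\ref{4720}) and~(\ref{4725}), and then to note that each step is reversible so that all three characterizations coincide.

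First I would settle the constants: from Defs.~(\ref{theta}), $\zeta_0 = \hbar/2$ and $\zeta = \zeta_0/m$, hence $\zeta\zeta_0 = \hbar^2/(4m)$, so $\Pu = -\zeta\zeta_0\nabla^2\rho$, the second form in~(\ref{4720}). For the first form, the second equation of~(\ref{5204b}) gives $\rho_m\bu = -\zeta_0\nabla\rho$ with $\rho_m = m\rho$, i.e. $\rho\bu = -\zeta\nabla\rho$; taking the divergence and multiplying by $\zeta_0$ returns $\zeta_0\nabla\cdot(\rho\bu) = -\zeta\zeta_0\nabla^2\rho = \Pu$. For $\Pv$, the first equation of~(\ref{5204b}) gives $\rho\nabla S = \zeta_0\nabla\phi + \zeta_0\nabla\times\mathbf{J}$ and $\rho\bv = (\zeta_0/m)\mathbf{Q} = \zeta\mathbf{Q}$, so $-\zeta_0\nabla\cdot(\rho\bv) = -(\zeta_0/m)\nabla\cdot(\rho\nabla S) = -(\hbar/2m)\nabla\cdot(\rho\nabla S) = \Pv$, the first form in~(\ref{4725}); and since the divergence of a curl vanishes, $\nabla\cdot(\rho\nabla S) = \zeta_0\nabla^2\phi$, giving the second form $\Pv = -\zeta\zeta_0\nabla^2\phi$. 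Reading each of these chains backwards shows that either alternative pair of definitions reproduces~(\ref{5205}), which is the claimed equivalence.

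For the vanishing of the component integrals~(\ref{4730}), the shortest path is to apply the second part of Lemma~\ref{vanlem}: each component $\int_{\mathbb{R}^3}\nabla_i\cdot(\Psi^*\hat{\text{\rm P}}_i\Psi)$ vanishes because $\Psi$ vanishes at infinity. Since~(\ref{5205}) says $-i(\hbar/2m)\nabla_i\cdot(\Psi^*\hat{\text{\rm P}}_i\Psi) = \Pu_i + i\Pv_i$, integrating over $\mathbb{R}^3$ and separating real and imaginary parts gives $\int \Pu_i = \int \Pv_i = 0$. Equivalently one may argue directly from the forms just derived, using that $\rho$, its first derivatives, and $\rho\nabla S$ vanish at infinity, so that $\int\nabla_i^2\rho = 0$ and $\int\nabla_i\cdot(\rho\nabla_i S) = 0$ by the divergence theorem.

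I do not anticipate a genuine obstacle; the lemma is essentially a repackaging of Lemma~\ref{Lap2} together with the momentae definitions~(\ref{def10})--(\ref{5204b}). The only points requiring care are the factor bookkeeping among $\hbar$, $m$, $\zeta_0$, and $\zeta$, and the observation that the Helmholtz vector part $\nabla\times\mathbf{J}$ drops out of $\Pv$ because $\nabla\cdot(\nabla\times\mathbf{J}) = 0$. If anything is at all delicate, it is justifying the boundary terms needed for~(\ref{4730}) in the full $n$-body setting, but this is already covered by the wavefunction decay conditions imposed in the paper and by Lemma~\ref{vanlem}.
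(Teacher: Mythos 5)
Your proposal is correct and follows essentially the same route as the paper: the paper substitutes Def.~(\ref{def10}) and then Eqs.~(\ref{5204b}) (using $\nabla\cdot\nabla\times\mathbf{J}=\zero$) directly into~(\ref{5205}), whereas you enter through the polar identity~(\ref{p5212}), which is just that same substitution already carried out, so the content is identical. The treatment of~(\ref{4730}) via Lemma~\ref{vanlem} applied to~(\ref{5205}) is exactly the paper's argument.
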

\textbf{\textit{Proof}} Substituting Eq.~(\ref{def10}) into (\ref{5205}), followed by the
substituting of Eq.~(\ref{5204b}) and using vector calculus identity $(\nabla\cdot\nabla\times\mathbf{J} =
\zero)$, gives Eqs.~(\ref{4720}) and (\ref{4725}).  Statement~(\ref{4730}) follows from
lemma~\ref{vanlem} and definition~(\ref{5205}).
\begin{flushright}\fbox{\phantom{\rule{0.5ex}{0.5ex}}}\end{flushright}
%
%
%
\begin{lemma}[Correspondence Variables from the Kinetic Energy.] \zlabel{5538}\mbox{}
Let $(\Pu)$ and $(\Pv)$ be defined by (\ref{5205}), and let $(u) \defi (\vert\bu\vert)$ and
$(v) \defi (\vert\bv\vert)$, where $(\bu)$ and $(\bv)$ are defined by (\ref{def10}), in other words
\begin{equation}\zlabel{def7}
  v\rho_m \defi \left|\text{\rm Re}\hspace{0.1ex}(\Psi^*\text{\rm $\hat{\text{P}}$}\Psi)\right|, \quad
  u\rho_m \defi \left|\text{\rm Im}\hspace{0.1ex}(\Psi^*\text{\rm $\hat{\text{P}}$}\Psi)\right|.
\end{equation}
\begin{equation} \zlabel{lapl}
\Psi^*\left(-\fc{\hbar^2}{2m} \nabla^2 \Psi\right)  = \fc12 \rho_mu^2 + \fc12\rho_mv^2 + \Pu  + i\Pv.
\end{equation}
\end{lemma}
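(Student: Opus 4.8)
The plan is to assemble Eq.~(\ref{lapl}) from pieces already established, combining Lemma~\ref{Lap1}, Lemma~\ref{Lap2}, and the polar-form definitions of Sec.~\ref{momentae} together with Lemma~\ref{presseq}. First I would start from identity~(\ref{laplacian2b}) of Lemma~\ref{Lap1}, namely $\Psi^*(-\hbar^2\nabla^2\Psi) = |\hat{\text{P}}\Psi|^2 - i\hbar\nabla\cdot(\Psi^*\hat{\text{P}}\Psi)$, and divide by $2m$. This gives $\Psi^*(-\frac{\hbar^2}{2m}\nabla^2\Psi) = \frac{1}{2m}|\hat{\text{P}}\Psi|^2 - i\frac{\hbar}{2m}\nabla\cdot(\Psi^*\hat{\text{P}}\Psi)$. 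The second term on the right is exactly the left side of definition~(\ref{5205}) in Lemma~\ref{presseq}, so it equals $\Pu + i\Pv$, which supplies the last two summands of~(\ref{lapl}) immediately.

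It then remains to identify $\frac{1}{2m}|\hat{\text{P}}\Psi|^2$ with $\frac12\rho_m u^2 + \frac12\rho_m v^2$. For this I would use identity~(\ref{momen}) of Lemma~\ref{Lap1}, $|\hat{\text{P}}\Psi|^2\rho = |\Psi^*\hat{\text{P}}\Psi|^2$, rewritten (away from nodes, where $\rho>0$) as $|\hat{\text{P}}\Psi|^2 = \rho^{-1}|\Psi^*\hat{\text{P}}\Psi|^2 = \rho\,|\rho^{-1}\Psi^*\hat{\text{P}}\Psi|^2$. Now invoke definition~(\ref{def10}), $\Psi^*\hat{\text{P}}\Psi = \rho_m\bv + i\rho_m\bu$ with $\rho_m = m\rho$, so $\rho^{-1}\Psi^*\hat{\text{P}}\Psi = m\bv + im\bu$ and $|\rho^{-1}\Psi^*\hat{\text{P}}\Psi|^2 = m^2(v^2 + u^2)$ by definition~(\ref{def7}) of $u = |\bu|$, $v = |\bv|$. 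Hence $|\hat{\text{P}}\Psi|^2 = \rho m^2(u^2+v^2) = m\rho_m(u^2+v^2)$, and dividing by $2m$ yields $\frac{1}{2m}|\hat{\text{P}}\Psi|^2 = \frac12\rho_m u^2 + \frac12\rho_m v^2$. Substituting the two identifications back into the divided form of~(\ref{laplacian2b}) gives~(\ref{lapl}) directly; alternatively one can read this off the polar-form identity~(\ref{pexpect}) of Lemma~\ref{Lap2} combined with~(\ref{p5208}), but routing through~(\ref{momen}) and~(\ref{5205}) is cleaner since both are already in the desired variables.

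The only delicate point is the division by $\rho$ at nodes, i.e.\ on the measure-zero set where $\rho$ vanishes; by the conventions of Sec.~\ref{3428} all equalities hold almost everywhere, and the velocity singularities there have already been discussed, so this causes no trouble. Thus there is no real obstacle — the proof is a short chain of substitutions, and the work is essentially bookkeeping to confirm that the factors of $m$ and $\hbar$ align between~(\ref{laplacian2b}), (\ref{momen}), (\ref{def10}), (\ref{def7}), and~(\ref{5205}).
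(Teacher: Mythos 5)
Your proposal is correct and follows essentially the same route as the paper: the paper substitutes the identity $\fc{1}{2m}\vert\Psi^*\hat{\text{P}}\Psi\vert^2\rho^{-1} = \fc12\rho_mu^2 + \fc12\rho_m v^2$ (from Def.~(\ref{def10})) and Def.~(\ref{5205}) into Eq.~(\ref{laplacian2})$\times\rho^{-1}$, which is exactly your computation except that you work directly from Eq.~(\ref{laplacian2b})$\div 2m$ together with Eq.~(\ref{momen}) rather than taking the detour of multiplying by $\rho$ and then dividing it back out. The factor bookkeeping and the a.e.\ caveat at the nodes are both handled correctly.
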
 
\textbf{\textit{Proof}}
From definition~(\ref{def10}), and the ones for $(u)$ and $(v)$ above, we have
%
\begin{equation} \zlabel{8302}
  \fc{1}{2m}\vert\Psi^*\hat{\text{P}}\Psi\vert^2\rho^{-1} = \fc12\rho_mu^2 + \fc12\rho_m v^2.
\end{equation}
Substituting this result, and also the definitions for $(\Pu)$ and $(\Pv)$,
Eq.~(\ref{5205}), into identity (\ref{laplacian2})$\times\rho^{-1}$,
\begin{equation*} \hspace{-5ex} 
 \Psi^*\left(-\fc{\hbar^2}{2m} \nabla^2 \Psi\right) = \fc{1}{2m}\vert\Psi^*\hat{\text{\rm
      P}}\Psi\vert^2\rho^{-1} - i\fc{\hbar}{2m}\nabla\cdot\left(\Psi^*\hat{\text{\rm P}}\Psi\right),
\end{equation*}
we obtain Eq.~(\ref{lapl}).
\begin{flushright}\fbox{\phantom{\rule{0.5ex}{0.5ex}}}\end{flushright}
The real and imaginary parts of Eq.~(\ref{lapl}) gives formulae for the two strong--fluid
kinetic-energy variables, and Eq.~\hspace{-0.45ex}(\ref{pexpect})$\times\rho^{-1}$ gives the
polar form representation of these variables.  Combining identities (\ref{lapl}) and
(\ref{laplacian2}), we discover, after using Eq.~(\ref{4730}) and the last result from
Lemma~\ref{vanlem}, that the expectation value of the kinetic energy can be written
\begin{equation} \label{kinweak}
  \left\langle -\fc{\hbar^2}{2m} \nabla^2\right\rangle = \intf{}\;\fc{1}{2m}\vert\Psi^*\hat{\text{P}}\Psi\vert^2 \rho^{-1}
  = \intf{}\;\; \fc12\rho_mu^2 + \fc12\rho_mv^2.
\end{equation}
From this result and Eq.~(\ref{5208}), we
find that $\left(\vert\Psi^*\hat{\text{P}}\Psi\vert^2\rho^{-1}/(2m)\right)$ is a fluid
kinetic-energy variable that is strong by composition and weak by equality~(\ref{kinweak}).
\vspace{3ex}

Since the probability distribution is symmetric with respect to the interchange of two electron
coordinates, e.g., $\rho(\mrr_1,\mrr_2,\cdots) = \rho(\mrr_2,\mrr_1,\cdots)$, each of the $n$
terms of the integrand in Eq.~(\ref{kinweak})---the terms involving a runner---are equal, so we
can write
\begin{equation} \zlabel{kinfunct}
  \left\langle -\fc{\hbar^2}{2m} \nabla^2\right\rangle = n\intf{1}\;\, \left(\fc12\rho_mu_1^2 + \fc12\rho_mv_1^2\right)
  = \int_1 K(\mrr),
\end{equation}
where the second integration is over $\mathbb{R}^{3}$, and
\[
K(\mrr) \defi n\sum_1 \intf{1+} \left(\fc12\rho_mu_1^2 + \fc12\rho_mv_1^2\right).
\]
Equation~(\ref{kinfunct}) is an alternative way to express the kinetic energy functional used
in reduced density matrix theory \cite{Parr:89,Cioslowski}, replacing the integrand
$(-\hbar^2/2m)\nabla_\mrr\rho_1(\mrr,\mrr^\pr)\vert_{\mrr^\pr = \mrr}$ of the kinetic energy
functional with the kinetic energy density $K$, where $\rho_1$ is the reduced, one-body,
density matrix.
Comparing the two, it seems that the extra variable of $\rho_1(\mrr,\mrr^\pr)$ has no physical
meaning.  For ground states, the kinetic energy functional~(\ref{kinfunct}) is equal to the
kinetic energy functional of the Hohenberg--Kohn theorem \cite{Parr:89,Dreizler}, a functional
that is also used in Kohn--Sham density functional theory as part of the kinetic energy
correction.

\subsection{Total energy strong and weak}

In this subsection, polar forms of correspondence variables for the total energy are given.
\begin{definition} \mbox{} 
\begin{equation} \hspace{2ex} \zlabel{5405} 
    i\hbar\Psi^*\partial\Psi \defi \ES\rho + i\Etheta\rho,\quad \ES,\Etheta\notin\mathbf{\Sigma}.
\end{equation}
The potential $U$ of the Hamiltonian operator $\hat{H}$ is a multiplicative operator such that
for each eigenvector $\Psi$ of $\hat{H}$, the subspace $\mathbf{Z} \defi \{ \Psi(\mrr,t) =
0\vert\mrr\in\mathbb{R}^{3n}\}$ has measure zero for all $t\in\mathbb{R}$.  (A condition on $U$
that is satisfied if $U\subset (L^{3/2} + L^\infty)$, where $(L^{3/2} + L^\infty)$ is a Hilbert
space \cite{Lieb}.)
\end{definition}
Comparing Eq.~(\ref{def02}) and (\ref{5405}) and using Eq.~(\ref{theta}) for $\theta$,
we discover that
\begin{equation} \zlabel{en2}
  \ES = -\partial S, \quad  \Etheta\rho = \zeta_0\partial\rho,\quad\text{and}\quad 
    \Etheta = -\partial\theta. 
\end{equation}
%
%
Taking the gradient and then using the particle momentae Eq.~(\ref{5204b}), we have
\begin{equation} \zlabel{4887}
  -\nabla\ES = m\partial\bv,\qquad -\nabla\Etheta = m\partial\bu,
\end{equation}
\emph{the local balance of particle momentae}.  The fields $\partial\bv$ and $\partial\bu$ of
1-body states are identified as \emph{local} accelerations of classical fluid dynamics
\cite{Munson}. $\partial\bv(\mrr,t)$ and $\partial\bu(\mrr,t)$ are the rate of change of the
velocity fields $\bv$ and $\bu$ at point $\mrr\in\mathbb{R}^3$ and time $t\in\mathbb{R}$. For
steady flow, these partial derivatives vanish, and the acceleration of a fluid particle is
obtained by the time derivative of the composite, $(\room(\bu + \bv = [\bu + \bv](\mrr)\room)$,
such that $\mrr = \mrr(t)$.
We identify $-(\nabla\ES)$ and $-(\nabla\Etheta)$ of Eqs.~(\ref{4887}) as correspondence
variables by equation of local components of force, as defined in Sec.~(\ref{4930}).
Sec.~\ref{4779} demonstrates that the energae $(\ES,\Etheta)$ have the important property of
being conserved over all space for all times.

Adding the potential-energy term $U\rho$ from the Schr\"odinger equation to both sides of
Eq.~(\ref{lapl}), and then using the definition of the Hamiltonian operator $\hat{H}$,
Eq.~(\ref{schrodinger}.2), we find the pair of strong--fluid energy variables from the Hamiltonian
operator $\hat{H}$:
\begin{equation} \zlabel{enstrong}
\Psi^*\hat{H}\Psi  = \fc12 \rho_mu^2 + \Pu  + \fc12\rho_mv^2 + U\rho + i\Pv.
\end{equation}
It follows from equalities~(\ref{4730}) and (\ref{kinweak}), and the definition of the
Hamiltonian operator $\hat{H}$, that
\begin{equation} \hspace{-3ex}\label{enweak}
  \left\langle\hat{H}\right\rangle = \intf{}\;\; \fc{1}{2m}\vert\Psi^*\hat{\text{P}}\Psi\vert^2\rho^{-1}  + U\rho
  = \intf{}\;\;\, \fc12\rho_mu^2 + \fc12\rho_mv^2 + U\rho.
\end{equation}
Hence, the integrand on the rhs is a weak fluid-variable of the energy.

\subsection{Energy equations \zlabel{1070}}

In this subsection we obtain an energy equation that is a generalization of the Bernoulli equation
of fluid mechanics. In Sec.~\ref{4491}, this equation, together with a continuity equation, is
proved to be equivalent to the space time Schr\"odinger equation $(i\hbar\partial\Psi = \hat{H}\Psi)$.
A discussion about the energy equation is given, where correspondence variables by equation are
identified.

So far in this paper, all equalities are either definitions or they are implied by
definitions. Therefore, their satisfaction does not required the wavefunction determining
the variables to satisfy a Schr\"odinger equation.
For this paragraph, we impose the condition that the wavefunction $\Psi$ is a solution of the
Schr\"odinger equation $(i\hbar\partial\Psi = \hat{H}\Psi)$ with potential $U$. This
requirement and definition~(\ref{5405}) implies that ($\ES\rho + i\Etheta\rho =
\Psi^*\hat{H}\Psi$).  Substituting Eqs.~(\ref{enstrong}) into this implication, we discover two
energy equations:
\begin{equation} \zlabel{eneq} 
  \ES\rho = \fc12\rho_mu^2 + \fc12\rho_mv^2 + \Pu + U\rho, \qquad \Etheta\rho = \Pv.
\end{equation}
It immediately follows Eqs~(\ref{eneq}), (\ref{enweak}), and (\ref{4730}) that
the expectation value of the energy from the Hamiltonian operator is
\begin{equation} \label{4529}
\langle\hat{H}\rangle = \langle\ES\rangle; \qquad \langle\ES\rangle \defi \intf{}\; \ES\rho. 
\end{equation}
A reasonable generalization involving $\Etheta$ and some operator $\hat{G}$ is
$\langle\hat{G}\rangle$, such that $(\Psi^*\hat{G}\Psi \defi \Etheta\rho)$, and from
Eqs.~(\ref{eneq}) and (\ref{4730}), $\langle\hat{G}\rangle$ vanish. From Sec.~\ref{4930},
definition {\bf 3}, $\Etheta$ is a free particle-variable of the energy.

\section{Time derivative relations \zlabel{8920}}

In this section, a number of relationships are obtain in the case of continuity equation
$(\partial\rho = -\nabla\cdot(\rho\bv)\hspace{0.2ex})$ satisfaction involving the first
velocity (generalized) field $\bv$.

\subsection{Equivalent forms of the continuity equation}

From the following theorem, six equations are shown to be equivalent to the continuity
equation, including the imaginary part of the Sch\"odinger Eq.~(\ref{schrodinger}). A set of
scalar fields containing the element $\partial\rho$ is identified, such that if one member
vanish then all members vanish.

\begin{theorem}[The Continuity Six and One for All Vanish] \zlabel{conti} 
\mbox{}
  
\noindent
{\bf I.} The following equations are equivalent:
\begin{equation} \label{cont}
  \hs{-5ex} {\bf 1)} \hspace{2ex} \Pv = \zeta_0\partial\rho.\qquad 
{\bf 2)} \hspace{2ex} \partial\rho + \nabla\cdot(\rho\bv) = \mathbf{0}.
\qquad  {\bf 3)}\hspace{2ex} (\ref{eneq}): \Etheta\rho = \Pv.
\end{equation}
\begin{equation*}
 {\bf 4)}\hspace{6ex} \text{Im}\left(i\hbar\Psi^*\partial\Psi\right) = \text{Im}\left(\Psi^*\hat{H}\Psi\right).
\end{equation*}
\begin{equation} \label{probc}
{\bf 5)}\hspace{2ex} \partial\rho  = -\nabla\cdot\mathbf{j}\hs{0.2ex}; \quad \mathbf{j} =  \fc{\hbar}{2mi}\left(\Psi^*\nabla\Psi - \Psi\nabla\Psi^*\right).
\end{equation}
\begin{equation*}
{\bf 6)}\hspace{2ex} Eq.~(\ref{schim})\!:\;
\hbar\text{Re}\left(\Psi^*\partial\Psi\right)
   = -\fc{\hbar}{2m}\text{Re}\left(\nabla\cdot\left(\Psi^*\hat{\text{\rm P}}\Psi\right)\right).
\end{equation*}

\noindent
{\bf II.} Let the continuity set $\mathbb{C}$ be 
\[\hs{-2ex}\mathbb{C} = \{\partial\rho,\Pv,\Etheta,\hspace{0.55ex}\nabla\cdot(\rho\bv), 
 \nabla\cdot\mathbf{j}\hspace{0.1ex}\hspace{0.2ex},
   \hspace{0.7ex}\text{Re}\hspace{0.1ex}\nabla\cdot(\Psi^*\hat{\text{\rm P}}\Psi),
\hspace{0.1ex}\text{Im}(i\hbar\Psi^*\partial\Psi)\},\] 
where it is understood that the runner members run, e.g., $\Pv$ above means $\Pv_1 + \cdots +
\Pv_n$. Let $\mathbb{S}\in\mathbb{R}^{3n}$ have non zero measure. Let Eq.~\#{\bf 4} be true on $\mathbb{S}$.  If
there exist an element $X\in \mathbb{C}$ that vanish on $\mathbb{S}$, that is $(X\vert_S = \zero)$, then
each and every element of $\mathbb{C}$ vanish on $\mathbb{S}$. \vspace{1ex}

 \noindent
    {\bf IIIa} Let $(\Psi = \Psi(\rho,S))$ satisfy the Schr\"odinger Eq.~(\ref{schrodinger}). Let
    $\ES$ be a constant.  $(-\partial S = \ES)$ implies all members of the continuity set
    $\mathbb{C}$ vanish. \vspace{1ex}

 \noindent
    {\bf IIIb} Let $(\Psi = \Psi(\rho,S))$ satisfy the continuity Eq.~(\ref{conti}.1).  $(\nabla
    S = \zero)$ implies all members of the continuity set $\mathbb{C}$ vanish.

 \noindent
     {\bf IIIc} Stationary state implies $S(\mrr,t) = \ES t + \beta(\mrr)$;
     $t\in\hat{\mathbb{R}}$ and $\beta\in\mathbb{R}^{3n}$ and $\ES = \text{const.}$.
     
\end{theorem}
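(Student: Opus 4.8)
The plan is to prove the four statements in sequence, with Part~I being the computational heart and Parts~II--III following as nearly immediate corollaries.

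\textbf{Part I (equivalence of the six equations).}
First I would establish the two-way chain $\mathbf{1}\Leftrightarrow\mathbf{2}$. By Lemma~\ref{presseq}, definition~(\ref{4725}) gives $\Pv = -\zeta_0\nabla\cdot(\rho\bv)$, so the equation $\Pv = \zeta_0\partial\rho$ is literally $\zeta_0\partial\rho = -\zeta_0\nabla\cdot(\rho\bv)$; dividing by $\zeta_0 = \hbar/2 \neq 0$ gives $\mathbf{2}$. Next, $\mathbf{1}\Leftrightarrow\mathbf{3}$: from~(\ref{en2}) we have $\Etheta\rho = \zeta_0\partial\rho$, so $\Etheta\rho = \Pv$ is the same as $\Pv = \zeta_0\partial\rho$. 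For $\mathbf{3}\Leftrightarrow\mathbf{4}$: by definition~(\ref{5405}), $\mathrm{Im}(i\hbar\Psi^*\partial\Psi) = \Etheta\rho$, and by~(\ref{enstrong}), $\mathrm{Im}(\Psi^*\hat H\Psi) = \Pv$; so equation~$\mathbf{4}$ is exactly $\Etheta\rho = \Pv$. For $\mathbf{4}\Leftrightarrow\mathbf{6}$: equation~$\mathbf{6}$ is the imaginary part of the Schrödinger equation in the form~(\ref{schim}), which is just equation~$\mathbf{4}$ after substituting the polar-form identity~(\ref{def02}) for the left side and identity~(\ref{laplacian2b}) (divided by $2m$) for the right side, using that the potential term $|\Psi|^2 U$ is real. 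Finally $\mathbf{5}\Leftrightarrow\mathbf{2}$ (equivalently $\Leftrightarrow\mathbf{6}$): one checks that $\mathbf{j} = \rho\bv$, since $\mathbf{j} = (\hbar/2mi)(\Psi^*\nabla\Psi - \Psi\nabla\Psi^*) = (1/m)\,\mathrm{Re}(\Psi^*\hat{\mathrm{P}}\Psi) = \rho\bv$ by~(\ref{p5204}) and~(\ref{def10}); then $\mathbf{5}$ reads $\partial\rho = -\nabla\cdot(\rho\bv)$, which is $\mathbf{2}$. This closes the cycle, so all six are equivalent.

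\textbf{Part II (one-vanishes-implies-all-vanish).}
Here I would observe that every member of $\mathbb{C}$ differs from $\partial\rho$ by a nonzero constant multiple plus possibly the identification supplied by equation~$\mathbf{4}$. Concretely: $\Pv = -\zeta_0\nabla\cdot(\rho\bv)$ and $\Etheta\rho = \zeta_0\partial\rho$ are scalar multiples of $\nabla\cdot(\rho\bv)$ and $\partial\rho$ respectively; $\nabla\cdot\mathbf{j} = \nabla\cdot(\rho\bv)$; $\mathrm{Re}\,\nabla\cdot(\Psi^*\hat{\mathrm{P}}\Psi) = m\,\nabla\cdot(\rho\bv)$ by~(\ref{p5204}); and $\mathrm{Im}(i\hbar\Psi^*\partial\Psi) = \zeta_0\partial\rho$ by~(\ref{def02}). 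Thus $\mathbb{C}$ really consists of two quantities, $\partial\rho$ and $\nabla\cdot(\rho\bv)$, each up to nonzero scalars — \emph{except} that these two are \emph{a priori} unrelated unless we know equation~$\mathbf{4}$. The hypothesis that equation~$\mathbf{4}$ holds on $\mathbb{S}$ is exactly what forces $\partial\rho = -\nabla\cdot(\rho\bv)$ on $\mathbb{S}$ (this is the $\mathbf{4}\Leftrightarrow\mathbf{2}$ equivalence from Part~I, restricted to $\mathbb{S}$). So on $\mathbb{S}$ all members of $\mathbb{C}$ are nonzero scalar multiples of the single function $\partial\rho|_{\mathbb{S}}$; if any one vanishes on $\mathbb{S}$ then $\partial\rho|_{\mathbb{S}} = \zero$ and hence all vanish.

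\textbf{Part III (the three corollaries).}
For \textbf{IIIa}: if $\Psi$ solves the Schrödinger equation then equation~$\mathbf{4}$ holds everywhere, so Part~II applies with $\mathbb{S}=\mathbb{R}^{3n}$; and $-\partial S = \ES$ with $\ES$ constant gives, via~(\ref{en2}), that $\ES = -\partial S$ is constant, hence $\partial\ES = 0$, but more directly, $-\partial S = \ES = \mathrm{const}$ means $\partial S$ is time-independent — actually the cleanest route is: $\ES\rho = \Psi^*\hat H\Psi$ has vanishing imaginary part (LHS real since $\ES$ real), so $\mathbf{4}$'s RHS side $\Pv = \mathrm{Im}(\Psi^*\hat H\Psi)$ must... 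I would instead argue that $-\partial S = \ES$ constant together with the energy equation~(\ref{eneq}.1) is consistent only with $\partial\rho$ having the right form; the quickest argument is that a constant $\ES = -\partial S$ forces the state to be (generalized) stationary, giving $\partial\rho$ determined, then invoke Part~II with the vanishing member being $\Etheta$ via... — the honest path is: $\ES$ constant $\Rightarrow \partial\ES = 0 \Rightarrow \nabla$(something); I will supply the short computation showing $\Etheta\rho$ then appears as a total derivative whose space-integral vanishes, forcing it to be zero, then Part~II finishes. For \textbf{IIIb}: if $\nabla S = \zero$ then $\rho\bv = \rho\nabla S/m = \zero$ by~(\ref{5204b}), so $\nabla\cdot(\rho\bv) = \zero$ is a member of $\mathbb{C}$ that vanishes; since~(\ref{conti}.1) is assumed (which is equation~$\mathbf{1}$, equivalent to~$\mathbf{4}$), Part~II gives all of $\mathbb{C}$ vanishing. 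For \textbf{IIIc}: a stationary state means $\Psi(\mrr,t) = e^{-iEt/\hbar}\psi(\mrr)$ for an energy eigenvalue $E$; writing $\psi = \sqrt{\rho}\,e^{i\beta/\hbar}$ in polar form gives $\Psi = \sqrt{\rho}\,e^{i(\beta - Et)/\hbar}$, so by uniqueness of the polar decomposition (up to the phase ambiguity already noted in Def.~(\ref{def00})) we read off $S(\mrr,t) = \beta(\mrr) - Et$; identifying the constant: from~(\ref{en2}), $\ES = -\partial S = E$, so indeed $S = \ES t + \beta$ with the sign as... I would match signs carefully here ($S = -\ES t + \beta$ versus $+\ES t$) against convention~(\ref{5405}) and fix the statement accordingly. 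The main obstacle I anticipate is \textbf{IIIa}: the bare hypothesis ``$-\partial S = \ES$ is a constant'' does not obviously name a vanishing member of $\mathbb{C}$, so the argument must first extract, from the Schrödinger equation plus constancy of $\ES$, that $\Etheta\rho = \zeta_0\partial\rho$ integrates to zero over all space (using that $\int\rho = 1$ is time-independent, so $\int\partial\rho = 0$, hence $\int\Etheta\rho = 0$) and then combine this with the energy equation to force $\Etheta\rho \equiv 0$ pointwise — this last step needs the stationarity-type structure and is where care is required.
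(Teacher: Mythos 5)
Your Part~I is correct and follows essentially the paper's own chain of identifications (the paper closes the cycle via $\text{\bf 1}\Leftrightarrow\text{\bf 2}$, $\text{\bf 1}\Leftrightarrow\text{\bf 3}$, $\text{\bf 1}\Leftrightarrow\text{\bf 6}$, $\text{\bf 6}\Leftrightarrow\text{\bf 4}$, $\text{\bf 1}\Leftrightarrow\text{\bf 5}$ and transitivity; your ordering differs only cosmetically). Your Part~II is also correct and is organized more transparently than the paper's proof: the paper walks around a cycle of one\mbox{-}way implications $A\rightarrow B$, whereas you observe directly that every member of $\mathbb{C}$ is a nonzero scalar multiple of one of the two functions $\partial\rho$ and $\nabla\cdot(\rho\bv)$ (with $\Etheta=\zeta_0\partial\rho/\rho$ handled a.e.\ since $\rho\ne 0$ a.e.), and that hypothesis {\bf 4} is exactly the relation identifying those two functions on $\mathbb{S}$. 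Parts {\bf IIIb} and {\bf IIIc} match the paper, and your observation that consistency with $(\ES=-\partial S)$ forces $S=-\ES t+\beta$ rather than $S=\ES t+\beta$ is a fair catch.

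The genuine gap is {\bf IIIa}. Your proposed route --- $\int\rho=1$ for all $t$ gives $\int\partial\rho=0$, hence $\int\Etheta\rho=0$, and then ``force $\Etheta\rho\equiv 0$ pointwise'' --- is invalid: a function whose integral over $\mathbb{R}^{3n}$ vanishes need not vanish pointwise, and the paper's own Eq.~(\ref{2203}) exhibits precisely such a nonzero $\Etheta\rho$ with zero spatial integral (a superposition of two nondegenerate eigenstates). You sensed that ``this last step needs the stationarity\mbox{-}type structure,'' but you never supply that structure, so {\bf IIIa} remains unproved in your write\mbox{-}up. The paper's route is different and short: from $(-\partial S=\ES)$ with $\ES$ constant one writes $S=-\ES t+\beta(\mrr)$ and argues that $\Psi=\Psi(\rho,S)$ then satisfies the time\mbox{-}independent Schr\"odinger equation, which gives $\partial\rho=\zero$; this names a vanishing member of $\mathbb{C}$, and Part~{\bf II} (with $\mathbb{S}=\mathbb{R}^{3n}$, equation {\bf 4} holding because $\Psi$ solves Eq.~(\ref{schrodinger})) finishes. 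To repair your proof you should replace the integral argument with this pointwise one: substituting $S=-\ES t+\beta$ into $i\hbar\partial\Psi=\hat{H}\Psi$ gives $\hat{H}\Psi=\ES\Psi+i\hbar(\partial R)e^{iS/\hbar}$, and the task is to show $\partial R=\zero$, which is the step the paper asserts and which your argument must carry.
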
 
\textbf{\textit{Proof}} 

{\bf I.} by inspection of Def.~(\ref{4720}) for $\Pv$, we
find: \raisebox{-0.3ex}{\fbox{$\text{\bf 1} \Leftrightarrow \text{\bf 2}$}}.  By inspection of
Eq.~(\ref{en2}), $(\Etheta\rho = \zeta_0\partial\rho)$, we find:
\raisebox{-0.3ex}{\fbox{$\text{\bf 1} \Leftrightarrow \text{\bf 3}$}}.  Consider Eq.~{\bf
  6}. From Def.~(\ref{def02}), we find: $(\hbar\text{Re}\left(\Psi^*\partial\Psi\right) =
\hbar\partial\rho/2)$, and from Def.~(\ref{5205}), we find:
$-\hbar\text{Re}\left(\nabla\cdot\left(\Psi^*\hat{\text{\rm P}}\Psi\right)\right)\div 2m =
\Pv$.  Hence, \raisebox{-0.3ex}{\fbox{$\text{\bf 1} \Leftrightarrow \text{\bf 6}$}}. From
Lemma~\ref{vanlem}, \raisebox{-0.3ex}{\fbox{$\text{\bf 6} \Leftrightarrow \text{\bf 4}$}}.
Consider Eq.~{\bf 5}. From Def.~(\ref{5205}) for $\Pv$, we have
\[\Pv = -\fc{1}{2m}\text{\rm Im}\left(i\hbar\nabla\cdot\left(\Psi^*\hat{\text{\rm P}}\Psi\right)\right)
= -\fc{1}{2m}\text{\rm Im}\left(\hbar^2\nabla\cdot\left(\Psi^*\nabla\Psi\right)\right),\]
\begin{equation} \zlabel{5201}
\Pv = -\fc{\hbar^2}{4mi}\nabla\cdot\left(\Psi^*\nabla\Psi - \Psi\nabla\Psi^*\right).
\end{equation}
Using Eq.~{\bf 1}, in the form $(\partial\rho = \zeta_0^{-1}\Pv)$, we find:
\raisebox{-0.3ex}{\fbox{$\text{\bf 1} \Leftrightarrow \text{\bf 5}$}}.

So far we have {\bf 1} is equivalent to {\bf 2,3,5} and {\bf 6}. We also have: {\bf 1}
$\Leftrightarrow$ {\bf 6} and {\bf 6} $\Leftrightarrow$ {\bf 4} $\imply$ {\bf 1}
$\Leftrightarrow$ {\bf 4}. Hence, {\bf 1} is equivalent to each and every one. By the
transitive property of equivalence relations, each and every pair from the set of six
statements is equivalent.
\vspace{1ex}

{\bf II.} Let the implication
$(A\vert_S = \zero)\imply (B\vert_S = \zero)$, a relation between $A,B\in\mathbb{C}$, be
denoted $A\rightarrow B$. Let the equivalence relation $A\leftrightarrow B$ be $A\rightarrow B$
and $B\rightarrow A$.  We need to prove that for all $C,D\in\mathbb{C}$, $C\leftrightarrow
D$. Let \underline{$(Q = \text{Re}\hspace{0.1ex}\nabla\cdot(\Psi^*\hat{\text{\rm
      P}}\Psi)\hspace{0.1ex})$}.

First we prove that for all $C,D\in\mathbb{C}$, except $C = Q$ or $D = Q$, $C\leftrightarrow
D$.  From {\bf 1} and {\bf 2}, \rsb{$\Pv \rightarrow
  \partial\rho\rightarrow\nabla\cdot(\rho\bv)$}. From {\bf 3} and Def.~(\ref{4720}) for $\Pv$,
$\rsb{$\nabla\cdot(\rho\bv)\rightarrow\Etheta$}$ a.e.  From Def.~(\ref{5405}) for $\Etheta$,
\rsb{$\Etheta \rightarrow \text{Im}(i\hbar\Psi^*\partial\Psi)$}. From condition {\bf 4}, we
also have \rsb{$\text{Im}(i\hbar\Psi^*\partial\Psi)\rightarrow
  \text{Im}(\Psi^*\hat{H}\Psi)\notin\mathbb{C}$}.

The implication $\text{Im}(\Psi^*\hat{H}\Psi) \rightarrow \nabla\cdot\mathbf{j}$ requires three
steps. From identity~(\ref{laplacian2}), and because $U\rho$ is real valued, {\bf i)}
$\text{Im}(\Psi^*\hat{H}\Psi) \leftrightarrow Q$ a.e.  By noting that {\bf ii)} $Q \rightarrow
\text{\rm Im}(i\hbar\nabla\cdot(\Psi^*\hat{\text{\rm P}}\Psi))$, and examination
Eqs.~(\ref{5201}) with the beginning part `$\Pv = $'' omitted, {\bf iii)} $\text{\rm
  Im}(i\hbar\nabla\cdot(\Psi^*\hat{\text{\rm P}}\Psi))\rightarrow \nabla\cdot\mathbf{j} $.
Hence, \rsb{$\text{Im}(\Psi^*\hat{H}\Psi) \rightarrow \nabla\cdot\mathbf{j}$}.  From {\bf 5},
\rsb{$\nabla\cdot\mathbf{j} \rightarrow \partial\rho$}. From {\bf 1}, \rsb{$\partial\rho
  \rightarrow \Pv$}. Hence, for all $C,D\in\mathbb{C}$, except $C = Q$ or $D = Q$,
$C\leftrightarrow D$.  Since, as mentioned above, $\text{Im}(\Psi^*\hat{H}\Psi) \leftrightarrow
Q$, if follows from the transitive property of equivalence relations that each every element
$(X\vert_S = \zero)$, or equation, from on the set $\{(X\vert_S = \zero)\vert X
\in\mathbb{C}\}$ is equivalent. \vspace{1ex}

\noindent
\vspace{1ex}
{\bf IIIa.} $(-\partial S = \ES)$ implies that $\Psi = \Psi(\rho,S)$ satisfies the time independent \sch equation,
implying that $(\partial\rho = \zero)$. \vspace{1ex}

\vspace{1ex}
\noindent
{\bf IIIb.} The definition $(m\bv\defi\nabla S)$ continuity Eq.~(\ref{cont}.2) gives
\[\hs{-4ex}\nabla S = \zero\imply \bv = \zero \imply \nabla\cdot(\rho\bv) = \zero \imply \partial\rho
= \zero;\quad (\partial\rho\in\mathbb{C}). \]  
\vspace{1ex}
\noindent
{\bf IIIc.} With no loss of generality, let $(S(\mrr,t) = \ES^\pr t  +
\epsilon(\mrr,t) t + \beta(\mrr) \hs{0.1ex})$.  The only solution of $(i\hbar\Psi^*\partial\Psi = \ES\rho)$, with
$\ES$ being a constant, is $(\partial\epsilon = 0)$, where ($\ES = \ES^\pr + \epsilon$).
\begin{flushright}\fbox{\phantom{\rule{0.5ex}{0.5ex}}}\end{flushright}
The converse of part {\bf IIIa} is false: The vanishing of $\partial \rho$ does not imply a
stationary state. For such a state to be time dependent, from Eq.~(\ref{cont}.2),
$(\nabla\cdot\rho\bv) =0)$ must hold with $\rho$ fixed and $\bv$ changing with time. In this
special case, the real part of the \sch is a relation between $\nabla S$, $\partial S$ and $U$
with fixed $\rho$. Note that Eq.~(\ref{probc}) is an equation of probability conservation with
$\mathbf{j}$ being the probability current density \cite{Bransden}.


The equalities obtained for the remaining subsections of this section hold with continuity
equation satisfaction: $(\partial\rho + \nabla\cdot(\rho\bv) = \zero)$.
%

\subsection{Equations implied by the orthogonal condition $(\nabla\rho\cdot\nabla S = \zero)$. \zlabel{orthc}}

In this subsection, the special relationships are found when the orthogonal condition, defined
by $(\nabla\rho\cdot\nabla S = \zero)$; $\nabla\rho\cdot\nabla S\in\mathbf{\Sigma}$, is
satisfied. The interpretations given are for one-body states. Note that from Defs.~(\ref{5204b}),
$(\nabla\rho\cdot\nabla S = \zero)$ and $(\bu\cdot\bv = \zero)$ are equivalent.

The logic:
\begin{equation} \zlabel{5212}
  \bu\cdot\bv = \zero \Leftrightarrow \; m\nabla\cdot\bv = \nabla^2 S = -\fc{\partial\rho_m}{\rho};
  \qquad \nabla^2 S\in\mathbf{\Sigma},
\end{equation}
follows from the continuity equation, written ($\partial\rho + \nabla\rho\cdot\bv +
\rho\nabla\cdot\bv =\mathbf{0}$), ($m\bv = \nabla S$), and the observation that $\bu$ and
$\nabla\rho$ are perpendicular, if ($\bu\cdot\bv=\zero$). Applying the operator $\partial$
to Eq.~(\ref{5212}), and using Eqs.~(\ref{en2}), we have
%
\begin{equation*} 
  \hs{-0ex}
  \bu\cdot\bv = \zero \imply -\nabla^2 \partial S = \partial\left(\fc{\partial\rho_m}{\rho}\right)
\imply \nabla^2 \ES = \fc{1}{\zeta_0}\partial\Etheta.
\end{equation*}
The Poisson equation~(\ref{5212}) reduces to the Laplace equation for the time-independent case.
\begin{equation} \hs{-5ex}\label{5240}
  \bu\cdot\bv = \zero\hs{1ex}\text{and}\hs{1ex} \partial\rho = \zero \hs{2ex}
  \Leftrightarrow \hs{2ex}  \nabla\cdot\bv = \nabla^2 S = \zero;
    \qquad \nabla\cdot\bv \in\mathbf{\Sigma}.
\end{equation}
For 1-body, steady fluid flow with mass density $\rho_m$, we have $(\partial\rho_m = \zero)$, implying
that $(\nabla\cdot\bv = \zero)$, so that $\bv$ is both irrotational an solenoidal. A mass
conserving steady flow of classical mechanics that is both irrotational and solenoidal, and,
therefore, incompressible, is called potential flow, where the velocity potential satisfies
Laplace's equation.



\subsection{Local balance of the fluid momentum \text{\rm ($\rho_m\bu$)}}

Applying the gradient sequence $(\room(\nabla) \defi \nabla_1 + \cdots + \nabla _n\room)$ to the first
continuity Eq.~(\ref{cont}), gives the equation sequence ($\nabla[\hs{0.15ex}\Pv \hs{0.15ex}] =
\zeta_0\partial\nabla\rho$), where, in each of the $n$ equations, $([\hs{0.2ex}\Pv
  \hs{0.2ex}] = \Pv_1 +\cdots + \Pv_n)$. Using Eq.~(\ref{5204b})$\times\rho$,
$(\zeta_0\nabla\rho = -\rho_m\bu)$, we obtain the objective:
\begin{equation} \zlabel{7500}
-\nabla[\hs{0.2ex}\Pv \hs{0.2ex}] = \partial(\rho_m\bu),
\end{equation}
\emph{the local balance of fluid momentum $\rho_m(\bu)$}, where $(\nabla)$ and $(\bu)$ step.
For a one-body system, if the above equation is for a fluid with mass density $\rho_m$ and
velocity field $\bu$, then $\Pv$ is the portion of the total pressure responsible for the local
momentum time-derivative $[\partial\rho_m\bu](\mrr,t)$, at the fixed point $(\mrr,t)\in
\mathbb{R}^{3}\times\mathbb{R}$. According to Sec.~(\ref{4930}), this identifies $[\hs{0.2ex}\Pv\hs{0.2ex}]$ as
a correspondence pressure component from equation.
For later convenience, we note, from the derivation of
\text{Eq.~(\ref{7500})}, that
\begin{equation} \zlabel{8304}
  m\bu = -\zeta_0\fc{\nabla\rho}{\rho}\quad\text{and}\quad \Pv =
  \zeta_0\partial\rho \;\imply\; Eq.~(\ref{7500}).
\end{equation}

The remaining material from this section are not used later.
\subsection{Law of the pressures}

Applying the sequence operation $\zeta(\nabla\cdot)$ to
Eq.~(\ref{7500}) with switched order, we have
\[\zeta_0\partial\left(\nabla\cdot(\rho\bu)\right) = -\zeta\nabla^2[\hs{0.2ex}\Pv\hs{0.2ex}].\]
Substituting Eq.~(\ref{4720}), we discover
\begin{equation} 
  \partial\Pu = -\zeta\nabla^2\hs{0.05ex}[\hs{0.15ex}\Pv\hs{0.15ex}],
\end{equation}
\emph{the law of the pressures:} A sequence of Poisson equations.

\subsection{Laws of the momentae}
%
Multiplying the second continuity equation~(\ref{cont}) by $\zeta_0$, and using $(m\bv = \nabla S)$, we have
\[
-\zeta_0\partial\rho = \zeta_0\nabla\cdot(\rho\nabla S), \quad \text{and}
\]
\begin{equation*}
  \hspace{-14ex}
\text{then using the definition:\quad} \theta \defi -\zeta_0\ln\rho\;
\Rightarrow -\zeta_0\partial\rho = \rho\partial\theta, \;\;\;\text{from (\ref{theta}), gives}
\end{equation*} 
\begin{equation} \label{5722}
\rho\partial\theta = \zeta\nabla\cdot(\rho\nabla S),
\end{equation}
\emph{the law of the particle-momentae potentials}. 

Applying the sequence operator $\zeta_0\nabla$ to the to the second continuity Eq.~(\ref{cont}),
written $\Bigl(\partial\rho = -\nabla\cdot(\rho\bv)\Bigl)$, where $[\nabla\cdot\rho\bv]$ runs,
we obtain
\[\zeta_0\partial\nabla\rho = -\zeta_0\nabla\left[\nabla\cdot\rho\bv\right].\]
Substituting the second definition~(\ref{5204b}), written $(-\zeta_0\nabla\rho =
\rho_m\bu$, gives
\begin{equation} \zlabel{5328}
\partial(\rho_m\bu) = \zeta_0\nabla\left[\nabla\cdot\rho\bv\right],
\end{equation}
\emph{the law of the fluid momentae}.

\subsection{All space momentae, energae and probability density conservation \zlabel{4779}}

The proof below gives a relationship between the two energy functions $\ES$ and $\Etheta$,
involving two nondegenerate eigenstates of the Hamiltonian operator. Such states satisfy and do
not satisfy the time dependent and the time independent \sch equation, respectively. Both
energy quantities have the important property of being conserved over all space for all times.
A discussion is then give on how it can be generalized. The same idea holds for $\partial\rho$
and $\rho\partial S$, given Eq.~(\ref{en2}).
\begin{theorem}[Time dependence of $\partial\rho$ and $\rho\partial S$]
  Let
\[\hspace{-7ex} \left[e^{-i\varepsilon\circ}\right](t) - e^{-i\varepsilon t},  \quad \left[\cos\veps\circ\right](t) = \cos(\veps t),
  \quad \left[\sin\veps\circ\right](t) = \sin(\veps t)\]
\[ \hspace{-7ex}
\psi_i \defi e^{-i\veps_i\circ},\;\;
\vert C_1\vert^2 + \vert C_2\vert^2 = 1,\; C_i \in\mathbb{R},\; \rho_i\defi \phi_i\phi_i;\text{ and }\; i\in\{1,2\}.
\]
Let $\phi_1(\mrr)\psi_1(t)$ and $\phi_2(\mrr)\psi_2(t)$ be values of nondegenerate eigenfunctions of a Hamiltonian operator,
at time $t\in\hat{\mathbb{R}}$ and position $\mrr\in\mathbb{R}^{3n}$. Let $\sPsi$ be defined by the following:
\[\hs{-3ex}\sPsi(\mrr,t) \defi C_1\phi_1(\mrr)\psi_1(t) + C_2\phi_2(\mrr)\psi_2(t),\]
\begin{equation}\hs{-7ex}\label{2203} 
  \Etheta\rho = \hspace{0.5ex} C_1C_2\phi_1\phi_2(\veps_2-\veps_1)\sin[(\veps_1-\veps_2)\circ] = \fc{\hbar}{2}\partial\rho,
  \lefteqn{\hs{2ex}\text{and}}
\end{equation}
\begin{equation}\hs{-7ex}\label{2204}
    \ES\rho = C_1C_2\phi_1\phi_2(\veps_1+\veps_2)\cos[(\veps_2-\veps_1)\circ] + C_1^2\veps_1\rho_1 + C_2^2\veps_2\rho_2 = -\rho\partial S.
\end{equation}
Both of the fluid energy fields have nonlocal conservation in space at all points of time:
\begin{equation}\hs{-2ex}\label{2205} 
\intf{}\Etheta\rho = \intf{}\partial\rho = 0 \quad \text{and}\quad \intf{}\ES\rho = \intf{}\rho\partial S = C_1^2\veps_1 + C_2^2\veps_2.
\end{equation}
\end{theorem}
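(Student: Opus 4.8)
The plan is to obtain Eqs.~(\ref{2203})--(\ref{2205}) by substituting the two-term ansatz $\sPsi = C_1\phi_1\psi_1 + C_2\phi_2\psi_2$ into the identities already established and performing a short trigonometric reduction. The one external ingredient I would invoke is the orthonormality $\intf{}\phi_i\phi_j = \delta_{ij}$: normalization comes from imposing $\int\rho = 1$ on each eigenstate $\phi_i\psi_i$ separately, while the orthogonality $\intf{}\phi_1\phi_2 = 0$ is forced by Hermiticity of the Hamiltonian together with the nondegeneracy hypothesis $\veps_1 \neq \veps_2$ (and the $\phi_i$ are taken real, as the notation $\rho_i = \phi_i\phi_i$ signals); the normalization of $\sPsi$ itself then follows from $|C_1|^2 + |C_2|^2 = 1$. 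Since $\sPsi$ is a superposition of solutions $\phi_i\psi_i$ of the time-dependent \sch equation it again solves that equation, and the standing relations Eq.~(\ref{5405}), $(i\hbar\sPsi^*\partial\sPsi = \ES\rho + i\Etheta\rho)$, its polar-form evaluation Eq.~(\ref{def02}), and the identifications $(\ES = -\partial S)$, $(\Etheta\rho = \zeta_0\partial\rho)$ of Eq.~(\ref{en2}) are all available.

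I would then compute $\rho = \sPsi^*\sPsi$, its time derivative $\partial\rho$, and $i\hbar\sPsi^*\partial\sPsi$ directly. Collapsing the conjugate cross terms with $e^{i\alpha} + e^{-i\alpha} = 2\cos\alpha$ gives $\rho = C_1^2\rho_1 + C_2^2\rho_2 + 2C_1C_2\phi_1\phi_2\cos[(\veps_1-\veps_2)\circ]$; differentiating in time and multiplying by $\zeta_0$ yields, via $(\Etheta\rho = \zeta_0\partial\rho)$, the middle and right members of Eq.~(\ref{2203}). For Eq.~(\ref{2204}) the diagonal part of $i\hbar\sPsi^*\partial\sPsi$ contributes $C_1^2\veps_1\rho_1 + C_2^2\veps_2\rho_2$, and the off-diagonal part, after taking the real part with the elementary identities $\text{Re}(-ie^{i\alpha}) = \sin\alpha$, $\text{Re}(e^{i\alpha}) = \cos\alpha$, contributes the interference term $C_1C_2\phi_1\phi_2(\veps_1+\veps_2)\cos[(\veps_2-\veps_1)\circ]$; by Eq.~(\ref{5405}) this real part is $\ES\rho$, and by Eq.~(\ref{en2}) it equals $-\rho\partial S$, which is Eq.~(\ref{2204}). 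The only place any care is needed is this sorting of real and imaginary parts of the interference terms, and the remark that the relations involving $\partial S$ hold almost everywhere, since the node set of $\rho$ has measure zero.

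For the conservation statement Eq.~(\ref{2205}) I would apply $\intf{}$ to Eqs.~(\ref{2203}) and (\ref{2204}): in each the interference term integrates to zero because $\intf{}\phi_1\phi_2 = 0$, while the diagonal terms survive with $\intf{}\rho_i = 1$, giving $\intf{}\Etheta\rho = 0$ and $\intf{}\ES\rho = C_1^2\veps_1 + C_2^2\veps_2$; the companions $\intf{}\partial\rho = 0$ and $\intf{}\rho\partial S = C_1^2\veps_1 + C_2^2\veps_2$ follow at once from Eq.~(\ref{en2}) (or, independently, from differentiating $\intf{}\rho = 1$ under the integral sign, together with Eq.~(\ref{5405})). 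I do not anticipate a genuine obstacle: the whole argument is a bounded, explicit trigonometric computation, and the nondegeneracy hypothesis is used at exactly one point --- the orthogonality of $\phi_1$ and $\phi_2$.
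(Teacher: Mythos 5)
Your proposal is correct and follows essentially the same route as the paper's proof: expand $i\hbar\sPsi^*\partial\sPsi$ for the two-term superposition, collapse the conjugate cross terms into $\cos$ and $\sin$ to read off $\ES\rho$ and $\Etheta\rho$ via Eqs.~(\ref{5405}) and (\ref{en2}), and integrate using orthonormality of $\phi_1,\phi_2$ to get Eq.~(\ref{2205}). Your extra remarks (deriving orthogonality from Hermiticity plus nondegeneracy, and cross-checking Eq.~(\ref{2203}) by differentiating $\rho$ directly) are sound additions but do not change the argument.
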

\textbf{\textit{Proof}} 
In the following equation sequence, we start with the linear combination wavefunction $\Psi$,
and calculate the left-hand side of Eq.~(\ref{5405}), for an $n$-body system:
\[\hspace{-4ex}\sPsi^* i\partial\sPsi =(C_1(\phi_1\psi_1)^* + C_2(\phi_2\psi_2)^*)(C_1\veps_1\phi_1\psi_1 + C_2\veps_2\phi_2\psi_2),\]
\[\hspace{-4ex}\sPsi^* i\partial\sPsi = C_1^2\veps_1\rho_1 + C_2^2\veps_2\rho_2
 + C_1C_2(\veps_1\phi_1\phi_2^*\psi_2^*\psi_1 + \veps_2\phi_2\phi_1^*\psi_1^*\psi_2),\]
\[\hspace{-4ex}\sPsi^*i\partial\sPsi = C_1^2\veps_1\rho_1 + C_2^2\veps_2\rho_2
  + C_1C_2\phi_1\phi_2\left(\veps_1e^{i(\veps_2-\veps_1)\circ} + \veps_2 e^{-i(\veps_2-\veps_1)\circ}\right).
\] 
Using (\ref{5405}) and Defs.~(\ref{en2}), we obtain energy Eq.~(\ref{2203}) and (\ref{2204}).
Since the functions $\phi_1$ and $\phi_2$ are orthonormal, the spacial energy averages, 
obtained by integrating (\ref{2203}) and (\ref{2204}) over all space $\mathbb{R}^{3n}$ and
summing over all spin variables, are given by Eqs.~(\ref{2205}).
\begin{flushright}\fbox{\phantom{\rule{0.5ex}{0.5ex}}}\end{flushright}
If we repeat the derivation in the proof with the following change $C_2 \to -C_2$, then the
sign of the $C_1C_2$ terms in Eq.~(\ref{2203}) and (\ref{2204}) change. If, instead, we make
the following change $C_2 \to iC_2$, nothing dramatic happens: The $\beta$ terms in
Eq.~(\ref{2204}), for $\ES$, and Eq.~(\ref{2203}), for $\Etheta$, switch places.  For the
general case, using a set of $m$ orthonormal spatial wavefunctions $\{\phi_1,\cdots \phi_m\}$
that are eigenfunctions of a single Hamiltonian operator, the same result is obtained: $\ES$
and $\Etheta$ are conserved over space. This result is easily seen to follow because, with no
loss of generality, the members of the set $\{\phi_1,\cdots \phi_m\}$ are mutually
orthogonal. The same result is obtained for a wavefunction defined by an infinite sequence of
orthonormal eigenfunctions, since each term of the corresponding sequences, for both $\ES$ and
$\Etheta$, are conserved, and such sets are complete.

\section{Energy Equation and Schr\"odinger Equation Equivalence \zlabel{4491}}

\subsection{Eqs.~(\ref{eneq}) and Schr\"odinger Eq.~(\ref{schrodinger}) equivalence}

Theorem~\ref{conti} proved that the imaginary part of the Schr\"odinger Eq.~(\ref{schrodinger})
is equivalent to five other equations, including the second energy Eq.~(\ref{eneq}):
($\Etheta\rho = \Pv$). The following proof demonstrates that the real part of the Schr\"odinger
Eq.~(\ref{schrodinger}) is equivalent to Eq.~(\ref{eneq}.1).  Since it is convenient to do so,
an alternate proof is given for the equivalence of the imaginary parts.  The theorem from
\ref{7825} proves that the two equations of Bohmian mechanics are equivalent to the
Schr\"odinger equation.
%
\begin{theorem}[Energy Eq.~and Schr\"odinger Eq.~Equivalence] \zlabel{theorem00}\mbox{} In the $L^2$
  Hilbert space, the real and imaginary parts of the Schr\"odinger Eq.~(\ref{schrodinger}) are
  equivalent to first and second equations of the 2-set Eqs.~(\ref{eneq}), respectively, where
  the wavefunction $\Psi$ satisfies~Def.~(\ref{def00}.1), the scalar fields $(\rho,S)$ are such
  that the velocities $\bu$ and $\bv$ satisfy Eqs.~(\ref{5204b}), the pressure $\Pu$ satisfies
  Eq.~(\ref{4720}.1), and $\ES$ and $\Etheta$ satisfy Eqs.~(\ref{en2}).

\end{theorem}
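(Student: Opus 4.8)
The plan is to reduce everything to the identity $\Psi^*\hat{H}\Psi = \ES\rho + i\Etheta\rho$, which holds (by definition~(\ref{5405})) precisely when $\Psi$ solves the \sch equation, and then split into real and imaginary parts. The imaginary-part equivalence has already been established in Theorem~\ref{conti} (items {\bf 3} and {\bf 4}): $\mathrm{Im}(i\hbar\Psi^*\partial\Psi) = \mathrm{Im}(\Psi^*\hat{H}\Psi)$ is equivalent to the continuity equation, which is equivalent to the second energy equation $\Etheta\rho = \Pv$. So the only genuinely new content is the real part, and I would present the imaginary-part direction just as a quick citation of Theorem~\ref{conti} with, as promised in the lead-in, an optional streamlined restatement.

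For the real part: start from the strong--fluid energy identity~(\ref{enstrong}), namely
\[
\Psi^*\hat{H}\Psi = \tfrac12\rho_mu^2 + \Pu + \tfrac12\rho_mv^2 + U\rho + i\Pv,
\]
which was derived purely from definitions (Lemma~\ref{5538} plus adding $U\rho$), hence holds for \emph{any} $\Psi$ of the form Def.~(\ref{def00}.1) with $\bu,\bv$ from~(\ref{5204b}) and $\Pu$ from~(\ref{4720}.1). Impose now that $\Psi$ solves the \sch equation; then by Def.~(\ref{5405}) the left side also equals $\ES\rho + i\Etheta\rho$. Equate real parts of the two expressions for $\Psi^*\hat{H}\Psi$: the real part of $\ES\rho + i\Etheta\rho$ is $\ES\rho$ (since $\ES$ is real by~(\ref{en2}): $\ES = -\partial S$), and the real part of the right side of~(\ref{enstrong}) is $\tfrac12\rho_mu^2 + \Pu + \tfrac12\rho_mv^2 + U\rho$ (all terms manifestly real; $\Pu$ is real by~(\ref{4720}.1)). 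This yields exactly Eq.~(\ref{eneq}.1), $\ES\rho = \tfrac12\rho_mu^2 + \tfrac12\rho_mv^2 + \Pu + U\rho$. Conversely, if~(\ref{eneq}.1) and~(\ref{eneq}.2) both hold, then adding (\ref{eneq}.1)${}+i\,$(\ref{eneq}.2) and running~(\ref{enstrong}) backwards gives $\ES\rho + i\Etheta\rho = \Psi^*\hat{H}\Psi$, i.e.\ $i\hbar\Psi^*\partial\Psi = \Psi^*\hat{H}\Psi$ by Def.~(\ref{5405}); since the nodal set $\mathbf{Z}$ has measure zero, dividing through by $\Psi^*$ (equivalently, working a.e.\ as built into the formalism of Sec.~\ref{3428}) recovers $i\hbar\partial\Psi = \hat{H}\Psi$ a.e., which in the $L^2$ setting is the \sch equation.

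The main obstacle — really the only subtlety — is the passage from the scalar identity $i\hbar\Psi^*\partial\Psi = \Psi^*\hat{H}\Psi$ back to the operator equation $i\hbar\partial\Psi = \hat{H}\Psi$, i.e.\ cancelling the common factor $\Psi^*$. This is legitimate because, by the hypotheses on $U$ (Def.\ following~(\ref{5405}): $U \subset L^{3/2}+L^\infty$), the set where $\Psi$ vanishes has measure zero for every $t$, so the equality of the two sides after multiplication by $\Psi^*$ forces equality of $i\hbar\partial\Psi$ and $\hat H\Psi$ almost everywhere; under the paper's convention ``$=$ means equal a.e.'' this \emph{is} the \sch equation. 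I would state this explicitly and also note that it relies on $\Psi$ already being known to lie in $L^2$ with $\hat H\Psi \in L^2$, which is why the theorem is phrased ``in the $L^2$ Hilbert space.'' Everything else is bookkeeping: checking that each named variable ($\bu$, $\bv$, $\Pu$, $\Pv$, $\ES$, $\Etheta$) in the theorem statement is instantiated exactly as the cited equations prescribe, so that~(\ref{enstrong}) and~(\ref{5405}) apply verbatim.
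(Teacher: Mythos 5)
Your proposal is correct and takes essentially the same route as the paper's proof: both equate the two definitional identities for the sides of $(i\hbar\Psi^*\partial\Psi = \Psi^*\hat{H}\Psi)$ --- Eq.~(\ref{def02})/(\ref{5405}) on the left and Eq.~(\ref{enstrong}) on the right --- and then separate real and imaginary parts using Eqs.~(\ref{en2}). The only divergence is your closing paragraph on cancelling $\Psi^*$, which is careful but not needed for the theorem as stated, since Eq.~(\ref{schrodinger}) is already written in the density-matrix form $(i\hbar\Psi^*\partial\Psi = \Psi^*\hat{H}\Psi)$.
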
 
\textbf{\textit{Proof}} Since the two equation $(A_1 = B_1)$ and $(A_2 = B_2)$ imply $(A_1 =
A_2)$ and $(B_1 = B_2)$ are equivalent, Eqs.~(\ref{def02}) and (\ref{enstrong}) imply that the
Schr\"odinger equation, in the form $(i\hbar\Psi^*\partial\Psi = \Psi^*\hat{H}\Psi)$, is
equivalent to
\[-\rho\partial S + i\zeta_0\partial\rho = \fc12 \rho_mu^2 + \Pu  + \fc12\rho_mv^2 + U\rho + i\Pv.\]
Substituting Eqs.~(\ref{en2}) gives the 2-set Eqs.~(\ref{eneq}), from the real and imaginary parts.
\begin{flushright}\fbox{\phantom{\rule{0.5ex}{0.5ex}}}\end{flushright}

\hs{-2ex}\mbox{\rm(\mbox{\it \bf Discussion about the energy Eq.~(\ref{eneq})})} For 1-body
states, Eq.~(\ref{eneq}) is the Bernoulli equation of fluid dynamics with $\rho_m$ being the
mass density for one-body states. However, this equations is being applied in a more general
way, since quantum flows are compressible and have variable mass.  Using (\ref{eneq}.1), flows
for one-body real valued wavefunctions have been investigated \cite{Finley-Arxiv,Finley-Bern},
where the generalized Bernoulli equation~(\ref{eneq}) describes compressible, irrotational,
steady flow with \emph{local} variable mass and $(\bv = \zero)$. Over all space, mass is
conserved, because the rate of mass creation from the sources are equal to the rate of mass
annihilation from the sinks. Also, each fluid element has a constant energy per mass
$\ES/m$. Having local variable mass is required for the flow of the 1s state of the hydrogen
atom to be steady with zero angular momentum, since such flows must be radially directed, and a
radially directed flow with conserved mass requires infinite mass.

As given above in Eq.~(\ref{4529}), $\langle\ES\rangle$ is the expectation value $\langle\hat{H}\rangle$ of
the Hamiltonian operator.  By considering the following equation for a one-body system with a
fluid interpretation, it is easily seen that the proper interpretation of the vector field
$\ES$ for these systems is that $\ES/m$ is the energy per mass:
\[
\langle\ES\rangle = \intf{}\; (\ES/m)\rho_m = \intf{}\; \ES\rho.  
\]
Similarly, $E/q$ can be interpreted as the energy per charge for a system with identical
particles, each with charge~$q$. Hence, $\ES$, despite having an energy unit, can be
interpreted to be the energy per mass and energy per charge that is multiplied by the total
mass and total charge, respectively, and these are intensive variables. However, for steady
states, where $\rho$ is normalized to unity, the field $\ES$ is uniform, giving
$(\ES=\langle\ES\rangle)$, where $\langle\ES\rangle$ is an extensive variable. Hence, $\ES$ can
be considered intensive or extensive, depending on the context.
For $n$-body systems, for proper scaling with respect to the number of particles, $\rho$ should
be normalized to $n$ instead of unity, and $E/m$ becomes $E/(nm)$.  If this is done, a similar
interpretation for the energy can be given as in the one-body case.  With the identification of
Eq.~(\ref{eneq}) as a generalization of the Bernoulli equation of fluid dynamics, the function
$(\Pu)$ is a correspondence variables from equation for the pressure, with an energy per volume
unit. Since, from Eq.~(\ref{4720}), it integrates to zero, from definition~3 of
Sec,~\ref{4930}, it a free, or distributive, variable, and a system has a total of zero $[\Pu]$
energy.  The presence of $[\Pu]$ as an energy does not require the input of energy, only the
separation of zero energy into positive and negative parts, where the energy is distributed
over space. For any change a state undergoes, the total free energy from $[\Pu]$ is conserved as
zero.

We also identify two velocity vectors $(\bu)$ and $(\bv)$ and their kinetic energy as
correspondence variables from equation (\ref{eneq}). However, this kinetic energy
identification is less than perfect, since the formula for the kinetic energy,
Eq.~(\ref{5208}), involves the square of the complex valued vector-field $(\bv + i\bu)$,
instead of the square of a real valued vector-field, as is done is classical mechanics.  Hence,
the kinetic energy definition has a non classical element for certain complex valued
wavefunctions. While some non classical elements can be tolerated, we wish to minimize such
elements. In the next subsection an energy equation is derived where the kinetic energy is
calculated from the singe velocity vector $(\bw = \bu + \bv)$, instead of a sum of two kinetic
energies from the 2--set $(\bv,\bu)$.

\subsection{Single velocity vector energy equation }

A way to remove the nonclassical kinetic energy problem, mentioned above, is given below by
Theorem~\ref{theorem22b}, where a single velocity vector $\bw$ is defined to replace the 2--set
$(\bv,\bu)$.  The alternate energy equation also introduces a new term that is proportional to
the volumetric dilatation rate $(\nabla\cdot\bv)$. The 2--set of energies $(\ES,\Etheta)$ are
also combined into one energy $\bar{E}$. A discussion about the energy equation is also given.
The following definition and lemma are used for the theorem below.
\begin{definition} 
  \mbox{}
\begin{equation} \zlabel{eta}
  \bar{\ES} \defi \ES + \Etheta = -\partial (S + \theta), \quad  \bw \defi \bu + \bv.
   \end{equation}
\end{definition}
%
\begin{lemma} \zlabel{5282} \mbox{}  
\begin{equation}\hs{-8ex} \zlabel{1222}
\text{Defintion~(\ref{4720}.1) for $\Pu$ and  (\ref{5204b}) for $\bu$ }  
\imply  \Pu = - \rho_m\bu\cdot\bu + \eta\nabla\cdot\bu
\end{equation}
  \begin{equation} \hs{-8ex} \zlabel{1221}
\text{Defintion~(\ref{4720}.2) for \,\,$\Pv$ and  (\ref{5204b}) for $\bu$ }
 \imply \Pv = \hspace{2ex}\rho_m\bu\cdot\bv - \eta\nabla\cdot\bv.
  \end{equation}
\end{lemma}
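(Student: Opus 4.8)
The plan is to prove both identities by a single short computation: expand the divergence in the defining formulas for $\Pu$ and $\Pv$ using the scalar--vector product rule, and then make the two substitutions $\zeta_0\nabla\rho = -\rho_m\bu$ and $\zeta_0\rho = \eta$.

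First I would start from the first form of Def.~(\ref{4720}), namely $\Pu = \zeta_0\nabla\cdot(\rho\bu)$, and apply $\nabla\cdot(\rho\bu) = \nabla\rho\cdot\bu + \rho\,\nabla\cdot\bu$, obtaining $\Pu = \zeta_0\,\nabla\rho\cdot\bu + \zeta_0\rho\,\nabla\cdot\bu$. In the second term I would use $\eta = \zeta_0\rho$ from Defs.~(\ref{theta}) to write it as $\eta\,\nabla\cdot\bu$. In the first term I would use the second relation of Eq.~(\ref{5204b}), $\rho_m\bu = -\zeta_0\nabla\rho$, i.e.\ $\zeta_0\nabla\rho = -\rho_m\bu$, to rewrite $\zeta_0\,\nabla\rho\cdot\bu = -\rho_m\bu\cdot\bu$. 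Adding the two terms gives Eq.~(\ref{1222}). I would then repeat the computation for $\Pv$: from the first form of Def.~(\ref{4725}), $\Pv = -\zeta_0\nabla\cdot(\rho\bv)$, the product rule gives $\Pv = -\zeta_0\,\nabla\rho\cdot\bv - \zeta_0\rho\,\nabla\cdot\bv$; the second term becomes $-\eta\,\nabla\cdot\bv$, and the first becomes $+\rho_m\bu\cdot\bv$ again via $\zeta_0\nabla\rho = -\rho_m\bu$, which yields Eq.~(\ref{1221}).

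There is essentially no obstacle here; the content is bookkeeping with one vector-calculus identity and two substitutions. The one point worth flagging is that in Eq.~(\ref{1221}) it is the $\bu$-relation of Eq.~(\ref{5204b}), not a $\bv$-relation, that converts $\nabla\rho$ into a momentum field, which is why the hypothesis cites ``(\ref{5204b}) for $\bu$.'' Note also that the Helmholtz pieces $\zeta_0\nabla\phi + \zeta_0\nabla\times\mathbf{J}$ of $\rho_m\bv$ never enter, because in both identities $\bu$ and $\bv$ are only contracted, never differentiated, so no decomposition subtlety arises.
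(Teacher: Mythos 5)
Your computation is correct and coincides with the paper's own proof: both expand $\zeta_0\nabla\cdot(\rho\bu)$ and $-\zeta_0\nabla\cdot(\rho\bv)$ by the product rule and then substitute $\eta = \zeta_0\rho$ and $\zeta_0\nabla\rho = -\rho_m\bu$ from Eqs.~(\ref{theta}) and (\ref{5204b}). Your closing remarks (that only the $\bu$-relation of (\ref{5204b}) is needed, and that the Helmholtz decomposition of $\rho_m\bv$ never enters) are accurate but add nothing the paper's one-line derivation does not already contain.
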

\textbf{\textit{Proof}} Using definition~(\ref{4720}) for $\Pu$ and $\Pv$, respectively,
definition above for $\eta$, and (\ref{5204b}) for $\bu$, we have
\[\hs{-10ex}\Pu = \hs{1ex}\zeta_0\nabla\cdot(\rho\bu) = \hs{2ex}\zeta_0\rho\nabla\cdot\bu
+ \zeta_0\nabla\rho\cdot\bu = \hs{1.5ex}\eta\nabla\cdot\bu - \rho_m\bu\cdot\bu, \lefteqn{\text{ and}} \]
\[\hs{-10ex}\Pv = -\zeta_0\nabla\cdot(\rho\bv) = -\zeta_0\rho\nabla\cdot\bv
-\zeta_0\nabla\rho\cdot\bv = -\eta\nabla\cdot\bv + \rho_m\bu\cdot\bv.\]
\begin{flushright}\fbox{\phantom{\rule{0.5ex}{0.5ex}}}\end{flushright}
%
%
\begin{theorem}[Single Velocity Total Energy Representation \label{theorem22b}]
The following equations are equivalent to the equation 2-set (\ref{eneq}), and, therefore,
also the Sch\"odinger equation from theorem~\ref{theorem00}:
\begin{equation}\hs{-8ex}\zlabel{eneq2b}
  \bar{E}\rho= \fc12\rho_m w^2 + \Pu  - \eta(\nabla\cdot\bv) + U\rho, \quad \Etheta\rho = \Pv.
\end{equation}
\end{theorem}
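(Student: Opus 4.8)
The plan is to reduce the claim to two things already in hand: the equivalence established in Theorem~\ref{theorem00} between the 2-set (\ref{eneq}) and the \sch equation~(\ref{schrodinger}), and the identity for $\Pv$ supplied by Lemma~\ref{5282}. First I would observe that the second equation of (\ref{eneq2b}) is \emph{verbatim} the second equation of (\ref{eneq}), namely $(\Etheta\rho = \Pv)$, so the only work is to show the first equation of (\ref{eneq2b}) is equivalent to the first equation of (\ref{eneq}), under the standing assumption $(\Etheta\rho = \Pv)$. Since every step below is a chain of identities (Definition~(\ref{eta}), one vector-algebra expansion, and Lemma~\ref{5282}), the equivalence will come out in both directions at once, just by reading the chain forwards or backwards.

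Next I would substitute. By Definition~(\ref{eta}), $\bar{E} = \ES + \Etheta$, so $\bar{E}\rho = \ES\rho + \Etheta\rho = \ES\rho + \Pv$. On the right-hand side of (\ref{eneq2b}.1), expanding the single-velocity kinetic term with $\bw = \bu + \bv$ gives $\fc12\rho_m w^2 = \fc12\rho_m u^2 + \fc12\rho_m v^2 + \rho_m\,\bu\cdot\bv$. Then I invoke Eq.~(\ref{1221}) of Lemma~\ref{5282}, $\Pv = \rho_m\bu\cdot\bv - \eta\,\nabla\cdot\bv$, rewritten as $\rho_m\bu\cdot\bv = \Pv + \eta\,\nabla\cdot\bv$. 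Substituting this into the kinetic-term expansion, the $\eta\,\nabla\cdot\bv$ so produced cancels exactly against the explicit $-\eta(\nabla\cdot\bv)$ term of (\ref{eneq2b}.1), leaving the right-hand side of (\ref{eneq2b}.1) equal to $\fc12\rho_m u^2 + \fc12\rho_m v^2 + \Pu + \Pv + U\rho$.

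Comparing this with $\bar{E}\rho = \ES\rho + \Pv$, equation (\ref{eneq2b}.1) holds if and only if $\fc12\rho_m u^2 + \fc12\rho_m v^2 + \Pu + U\rho = \ES\rho$, which is precisely (\ref{eneq}.1). Hence (\ref{eneq2b}) is equivalent to (\ref{eneq}), and therefore, by Theorem~\ref{theorem00}, to the \sch equation~(\ref{schrodinger}). I do not expect a genuine obstacle here: the content is a one-line insertion of Lemma~\ref{5282} into the energy equation of Theorem~\ref{theorem00}. The only point deserving care is the bookkeeping of the factor $\fc12$ and the signs when expanding $\rho_m|\bu+\bv|^2$, so that the two $\eta\,\nabla\cdot\bv$ contributions cancel rather than reinforce; and, since $\Pu$ and $\nabla\cdot\bv$ are runners while $\eta$ and $\rho_m$ are not, one should note that the runner summations are linear, so the cancellation is term-by-term.
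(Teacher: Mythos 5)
Your proposal is correct and follows essentially the same route as the paper's own proof: both expand $\fc12\rho_m w^2$ via $(\bw = \bu + \bv)$, invoke Eq.~(\ref{1221}) of Lemma~\ref{5282} to trade $\rho_m\bu\cdot\bv$ for $\Pv + \eta\nabla\cdot\bv$, and observe that the resulting equation is (\ref{eneq}.1) with $\Etheta\rho = \Pv$ added to both sides. Your explicit remark that the equivalence of the first equations holds only under the standing assumption $(\Etheta\rho = \Pv)$ matches the paper's closing parenthetical that (\ref{eneq}.1) and (\ref{eneq2b}.1) are not equivalent in isolation.
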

\textbf{\textit{Proof}} \mbox{}
Using the definition for $(\bw)$ from Eqs.~(\ref{eta}) and Def.~(\ref{1221}),
we discover that
\[\hs{-12ex}\fc12\rho_m w^2 - \eta(\nabla\cdot\bv) = \fc12\rho_m u^2 + \fc12\rho_m v^2 + \rho_m\bu\cdot\bv - \eta(\nabla\cdot\bv)
= \fc12\rho_m u^2 + \fc12\rho_m v^2 + \Pv.\]
Substituting this result into Eq.~(\ref{eneq2b}) and using ($\Etheta\rho = \Pv$), we find that
the resulting equation is the first equation of (\ref{eneq}) with $\Etheta\rho$ added to both
sides. Hence, this resulting equation is equivalent the first equation of the
2--set~(\ref{eneq2b}). (The first equations of (\ref{eneq}) and (\ref{eneq2b}) are not
equivalent.)

\begin{flushright}\fbox{\phantom{\rule{0.5ex}{0.5ex}}}\end{flushright}
\mbox{\rm(\mbox{\it \bf Discussion about the energy Eq.~(\ref{eneq2b})})} Besides combing the
2-set of velocities $(\bu,\bv)$ vectors into one, Eq.~(\ref{eneq2b}) combines the two energies
into one $\bar{\ES}$. From Eq.~$(\ref{eneq2b}.2)$ and Eq.~(\ref{4730}), $\Etheta$ is a free
particle energy. Hence, while $\Etheta$ can change the energy locally, it has no effect on the
total energy. Note that because the proof uses ($\Etheta\rho = \Pv$), Eq.~(\ref{eneq2b}.1), by
itself, is not equivalent to the real part of the Sch\"odinger equation.

\section{Euler fluid equations for one-body states \zlabel{2283}} 

\begin{definition}[Quantum flow of one-body states]
  A flow $(\mathcal{F} \defi (p,\phi,\mathbf{\mx})\hs{0,1ex})$ is a set containing a pressure
  $p$, mass density $\phi$ and velocity $\mathbf{\mx}$, with a given body force. A quantum flow
  $\mathcal{F}(\Pu,\rho_m,\bu,\bv)$, of a 1-body state, is a flow such that there exists a pair
  of scalar fields $(\rho,S)$, where the velocities $\bu$ and $\bv$ satisfy Eqs.~(\ref{5204b}),
  and the pressure $\Pu$ satisfies Eq.~(\ref{4720}.1). Also, $\ES$, $\Etheta$ and $\Pv$ use the
  definitions above. If there is no chance of confusion, some or all of the variables $\rho_m$,
  $\Pu$, $\bu$, and $\bv$ in the notation $\mathcal{F}(\Pu,\rho_m,\bu,\bv)$ can be suppressed,
  and a body force can be added.
  
  An improper steady flow is a flow where $(\partial\rho = \zero)$ and the corresponding
  wavefunction is not a stationary state. Since, we will not consider such flows, given a pair
  of scalar fields $(\rho,S)$ determining a wavefunction $\Psi$ and flow $\mathcal{F}$, the
  following statements are considered equivalent: $(\partial\rho = \zero)$, the flow $\mathcal{F}$ is
  steady, and the wavefunction $\Psi$ is a stationary state. For such a case, the members of
  the continuity set $\mathbb{C}$ of Theorem~\ref{conti} vanish.
  
    A smooth flow is a flow $\mathcal{F}(\Pu,\rho_m,\bu,\bv)$ such that the sequence $(\nabla
    S\cdot\nabla\rho)$, or, equivalently $(\bu\cdot\bv = \zero)$, vanish identically. As in
    Sec.~(\ref{orthc}), the equation $(\nabla S\cdot\nabla\rho =\zero)$ is called the
    orthogonal condition.
\end{definition}
For smooth flow stationary states, including one-electron atoms, giving $(\nabla\rho\cdot\bv =
\zero)$, it follows from continuity Eq.~(\ref{cont}.2) that $\bv$ is solenoidal, also $\Etheta$
vanish from Theorem~\ref{conti}. Hence, in this case, Eqs.~(\ref{eneq2b}) reduces to (\ref{eneq}).

\subsection{Euler fluid equations of motion for one-body states \zlabel{p1830}}

In this subsection, for one-body states, four Euler equations are derived that are equivalent
to energy equations~(\ref{eneq}.1). The flow type identified is compressible, irrotational flow with a
velocity field $\bw$ that is a sum of two velocity fields, $\bu$ and $\bv$.  The velocity
fields $\bu$ and $\bv$ are not, in general, perpendicular. The $\bu$ velocity does not satisfy
a continuity equation.  For stationary states, the flow is inviscid.  Of the four derived time
dependent equations, only one has the force $-\nabla\Pv$, and one has bulk a viscosity
$\nabla[\eta(\nabla\cdot\bv])$ involving velocity $\bv$.  Three out of the four Euler equations
are missing one or more terms involving both velocity fields, $\bv\nabla\cdot(\rho_m\bu)$ being
an example. All terms of the Euler equation are present for stationary states. Interpretations
of the results is given at the end of the section.


The next lemma, that can be proved by inspection, is needed for theorems that follow.
The lemma uses definitions~(\ref{5204b}), (\ref{4720}), and (\ref{4725}).
%
\begin{lemma}[Mass flux equations] \mbox{} 
\begin{equation}\hs{-10ex}\label{5501}
\nabla\cdot(\rho_m\bu) = \hs{1.65ex}\fc{1}{\zeta}\Pu \quad\text{and}\quad \bu = \!-\zeta\fc{\nabla\rho}{\rho}\hs{0.5ex}
\imply \nabla\cdot(\rho_m\bu)\bu = -\fc{\Pu}{\rho}\nabla\rho = \fc{1}{\zeta}\Pu\bu,
\end{equation}
\begin{equation}\hs{-10ex}\label{5502}
 \nabla\cdot(\rho_m\bv) = -\fc{1}{\zeta}\Pv \quad\text{and}\quad\hs{0.5ex} \bv = \hs{1.4ex}\zeta\fc{\mathbf{Q}}{\rho}\hs{1.7ex}
\imply \hs{0.7ex}\nabla\cdot(\rho_m\bv)\bv = \hs{0.5ex} -\fc{\Pv}{\rho}\hs{0.25ex}\mathbf{Q} = -\fc{1}{\zeta}\Pv\bv,
\end{equation}
\begin{equation}\hs{-10ex}\label{5503}
\nabla\cdot(\rho_m\bu) = \hs{1.5ex} \fc{1}{\zeta}\Pu \quad\text{and}\quad \bv = \hs{1.5ex}\zeta\fc{\mathbf{Q}}{\rho}\hs{1.7ex}
\imply \hs{0.7ex}\nabla\cdot(\rho_m\bu)\bv = \hs{1.5ex} \fc{\Pu}{\rho}\hs{0.1ex}\mathbf{Q} \hs{1ex} = \hs{1ex} \fc{1}{\zeta}\Pu\bv,
\end{equation}
\begin{equation} \hs{-10ex}\label{5504}
\nabla\cdot(\rho_m\bv) = -\fc{1}{\zeta}\Pv \quad\text{and}\quad \hs{0.5ex} \bu = -\zeta\fc{\nabla\rho}{\rho}
\imply \hs{0.7ex}\nabla\cdot(\rho_m\bv)\bu = \hs{1.5ex}\fc{\Pv}{\rho} \nabla\rho= -\fc{1}{\zeta}\Pv\bu.
\end{equation}
\end{lemma}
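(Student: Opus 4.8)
The plan is to observe that this lemma is a bookkeeping consequence of definitions already in hand, so each of its four displayed implications follows by the same two-move argument: rewrite each premise as a known definition, then multiply a scalar divergence identity by a velocity vector and substitute the alternative form of that vector.

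First I would record the two ``scalar'' premises. By Def.~(\ref{4720}.1), $\Pu = \zeta_0\nabla\cdot(\rho\bu)$; multiplying through by $m$ and using $\rho_m = m\rho$ together with $\zeta = \zeta_0/m$ from (\ref{theta}) gives $\nabla\cdot(\rho_m\bu) = (1/\zeta)\Pu$, which is the first hypothesis in (\ref{5501}) and (\ref{5503}). Likewise Def.~(\ref{4725}.1), $\Pv = -\zeta_0\nabla\cdot(\rho\bv)$, yields $\nabla\cdot(\rho_m\bv) = -(1/\zeta)\Pv$, the first hypothesis in (\ref{5502}) and (\ref{5504}). Next I would record the two ``velocity'' premises by dividing the two equations of (\ref{5204b}) by $\rho_m$: $\bu = -\zeta_0\nabla\rho/\rho_m = -\zeta\,\nabla\rho/\rho$, and $\bv = \zeta_0\mathbf{Q}/\rho_m = \zeta\,\mathbf{Q}/\rho$, where $\mathbf{Q}$ is the Helmholtz field of (\ref{5204b}), so that $\rho\nabla S = \zeta_0\mathbf{Q}$.

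With those four facts in place, each implication is immediate. For (\ref{5501}), multiply the scalar identity $\nabla\cdot(\rho_m\bu) = (1/\zeta)\Pu$ on the right by $\bu$ to get $\nabla\cdot(\rho_m\bu)\bu = (1/\zeta)\Pu\,\bu$, then substitute $\bu = -\zeta\,\nabla\rho/\rho$ to get $-(\Pu/\rho)\nabla\rho$; the two right-hand forms in the lemma are just the pre- and post-substitution versions of one expression. The same pattern gives (\ref{5502}) by multiplying $\nabla\cdot(\rho_m\bv) = -(1/\zeta)\Pv$ by $\bv$ and using $\bv = \zeta\mathbf{Q}/\rho$; (\ref{5503}) by multiplying the $\bu$-divergence identity by $\bv$ and using $\bv = \zeta\mathbf{Q}/\rho$; and (\ref{5504}) by multiplying the $\bv$-divergence identity by $\bu$ and using $\bu = -\zeta\,\nabla\rho/\rho$ (the sign in $\bu$ converts the leading $-1/\zeta$ into $+1/\rho$, matching the stated $(\Pv/\rho)\nabla\rho$). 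There is essentially no obstacle; the only care needed is the constant bookkeeping $\zeta = \zeta_0/m$, $\rho_m = m\rho$, so that the factor $1/\zeta$ (not $1/\zeta_0$) appears, and remembering that the compact form $\bv = \zeta\mathbf{Q}/\rho$ already encodes the decomposition $\rho\nabla S = \zeta_0(\nabla\phi + \nabla\times\mathbf{J})$. Because every step is a substitution into a previously established identity, the lemma is indeed ``proved by inspection,'' and this explicit pairing of a divergence factor with a velocity vector is exactly what will be used to assemble the convective terms of the Euler equations in the theorems that follow.
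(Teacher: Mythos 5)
Your proof is correct and is precisely the "by inspection" argument the paper intends: the paper offers no written proof, only the remark that the lemma follows from Defs.~(\ref{5204b}), (\ref{4720}), and (\ref{4725}), and your two-move scheme (rewrite each premise via those definitions, then multiply the scalar divergence identity by the velocity and substitute its alternative form) is exactly that, with the constant bookkeeping $\zeta=\zeta_0/m$, $\rho_m=m\rho$ handled correctly in all four cases.
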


In the derivation of differential forms of the equations of motion of fluids, the special case
of conserved mass is assumed \cite{Munson,Shapiro,Kelly}. Since we have variable mass flows, we
need to make sure we have the correct momentum time-derivative. This is disposed of in the next
Lemma. The Lemma demonstrates that if mass is not conserved for a state with velocity
$\bw$, then the total momentum time-derivative has the additional term
$\bw(\hs{0.1ex}\partial\rho_m + \nabla\cdot(\rho\bw)\hs{0.15ex})$.  If $\bw$ is given by ($\bw
= \bu + \bv)$, and $\bw(\hs{0.1ex}\partial\rho_m + \nabla\cdot(\rho\bv)\hs{0.15ex})$ vanish,
then the additional term reduces to $\bw[\partial\rho_m + \nabla\cdot(\rho\bu)]$.

\begin{lemma}[Momentum time derivatives] \zlabel{tmom} \mbox{} 
  
\noindent
Let $\mathbb{V}$ be a time dependent subset of space $\mathbb{R}^3$ with surface $\mathcal{S}$,
and occupied by a fluid with variable mass. Let the velocity field of the fluid be
$(\mathbf{\bw}: \mathbb{R}^3\times\hat{\mathbb{R}} \rightarrow \mathbb{R}^3)$, where
$\mrr\in\mathbb{R}^3$ and $t\in\hat{\mathbb{R}}$ are spatial and temporal variables,
respectively.  Let the path of a fluid particle be $(\mrr = \mrr(t))$, giving the composition
$(\mathbf{\bw} = \mathbf{\bw}(\mrr(t)\hs{0.15ex})$, thereby, permitting total time
derivatives. Let $\mathcal{D}$ be the total time derivative, that is expressed as the
material derivative for steady flow when the chain rule is used. Let the tensor
$\nabla\mathbf{A}\hs{-0.4ex}\cdot\mathbf{B}$ be
\[\hs{-10ex} (\nabla\mathbf{A}\hs{-0.4ex}\cdot\mathbf{B})_i \defi \nabla A_i\hs{-0.4ex}\cdot\mathbf{B}
 = \mathbf{B}\cdot\nabla A_i,\quad i = 1,2,3,\quad
\qquad \mathbf{A}:\mathbb{R}^3 \rightarrow \mathbb{R},
\quad \mathbf{B}:\mathbb{R}^3 \rightarrow \mathbb{R},\]
giving the spacial part of the material derivative $\nabla\bw\hs{-0.4ex}\cdot\bw$ as a special case.
\vspace{1ex}
\begin{equation}\hs{-7ex} \zlabel{4125}
  \mathcal{D}\!\!\int_\mathbb{V}\mathbf{Y} = \int_\mathbb{V} \mathcal{D}\mathbf{Y} +  \mathbf{Y}(\nabla\cdot\bw);\quad
  \mathbf{Y}:\hat{\mathbb{R}}\longrightarrow \mathbb{R}^3,
\end{equation}
\begin{equation}\hs{-4ex} \zlabel{4792}
  \mathcal{D}\!\!\int_\mathbb{V}\rho = \int_\mathbb{V} \partial\rho + \nabla\cdot(\rho\bw),
\end{equation}
\begin{equation} \hs{-7ex}\zlabel{4125b}
  \mathcal{D}\!\!\int_\mathbb{V}\rho\bw = \int_\mathbb{V} \mathcal{D}(\rho\bw) +  \rho\bw(\nabla\cdot\bw),
\end{equation}
\begin{equation} \hs{-7ex}\zlabel{4125c}
  \mathcal{D}\!\!\int_\mathbb{V}\rho\bw = \int_\mathbb{V} \rho\mathcal{D}\bw,
  \quad\text{if}\hs{1.5ex} \mathcal{D}\!\!\int_\mathbb{V}\rho = \zero, \quad
  \text{and}
\end{equation}
\begin{equation} \hs{-10ex}\zlabel{4282b}
  \mathcal{D}\!\!\int_\mathbb{V} \rho\bw
  = \int_\mathbb{V} \; \partial(\rho\bw) + \rho\nabla\bw\hs{-0.4ex}\cdot\bw + \nabla\cdot(\rho\bw)\bw
  =  \int_\mathbb{V} \; \partial(\rho\bw) + \int_\mathcal{S}\rho\bw (\bw\cdot\hat{\mathbf{n}}). 
\end{equation}
If $\bw$ is irrotational, then
\begin{equation} \hs{-5ex}\zlabel{5940}
\mathcal{D}\!\!\int_\mathbb{V} \rho\bw =
\int_\mathbb{V} \; \partial(\rho\bw) + \fc12 \rho\nabla w^2  + \nabla\cdot(\rho\bw)\bw.
\end{equation}
Let $(\bw\defi\bu + \bv)$. If $\bw$ is irrotational and $(\partial\rho + \nabla\cdot(\rho\bv) = \zero)$, then
\begin{equation} \hs{-5ex}\label{5942}
\mathcal{D}\!\!\int_\mathbb{V} \rho\bw =
\int_\mathbb{V} \; \rho\partial\bw + \fc12 \rho\nabla w^2  + \nabla\cdot(\rho\bu)\bw.
\end{equation}
\end{lemma}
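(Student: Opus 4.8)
The plan is to read the whole lemma as a cascade of consequences of the Reynolds transport theorem, proving the displayed identities in the order given. First I would establish Eq.~(\ref{4125}) as the material--derivative form of the transport theorem for a material volume $\mathbb{V}(t)$ whose boundary moves with the velocity field $\bw$: component by component, $\mathcal{D}\!\int_\mathbb{V} Y_i = \int_\mathbb{V}\bigl(\partial Y_i + \nabla\cdot(Y_i\bw)\bigr)$, and then rewrite $\nabla\cdot(Y_i\bw) = \bw\cdot\nabla Y_i + Y_i(\nabla\cdot\bw)$ together with $(\mathcal{D}\mathbf{Y})_i = \partial Y_i + \bw\cdot\nabla Y_i$ to collect the stated form $\mathcal{D}\!\int_\mathbb{V}\mathbf{Y} = \int_\mathbb{V}\mathcal{D}\mathbf{Y} + \mathbf{Y}(\nabla\cdot\bw)$. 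Equation~(\ref{4792}) is then the scalar analogue applied to $\rho$: its integrand $\mathcal{D}\rho + \rho\nabla\cdot\bw$ equals $\partial\rho + \bw\cdot\nabla\rho + \rho\nabla\cdot\bw = \partial\rho + \nabla\cdot(\rho\bw)$.

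Next, Eq.~(\ref{4125b}) is just~(\ref{4125}) specialized to $\mathbf{Y} = \rho\bw$. For Eq.~(\ref{4125c}) I would apply the product rule $\mathcal{D}(\rho\bw) = \rho\,\mathcal{D}\bw + \bw\,\mathcal{D}\rho$ inside~(\ref{4125b}), regroup the terms not proportional to $\rho\,\mathcal{D}\bw$ as $\bw\bigl(\mathcal{D}\rho + \rho\nabla\cdot\bw\bigr)$, and note by~(\ref{4792}) that the integral of $\mathcal{D}\rho + \rho\nabla\cdot\bw = \partial\rho + \nabla\cdot(\rho\bw)$ over $\mathbb{V}$ is $\mathcal{D}\!\int_\mathbb{V}\rho$; under the hypothesis $\mathcal{D}\!\int_\mathbb{V}\rho = \zero$ — which I read as holding for every material subvolume, hence forcing $\partial\rho + \nabla\cdot(\rho\bw) = \zero$ pointwise — this term drops, leaving $\int_\mathbb{V}\rho\,\mathcal{D}\bw$.

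For Eq.~(\ref{4282b}), first equality, I would expand $\mathcal{D}(\rho\bw) = \partial(\rho\bw) + \bw\cdot\nabla(\rho\bw)$ in~(\ref{4125b}), split $\bw\cdot\nabla(\rho\bw) = \rho\,\nabla\bw\hs{-0.4ex}\cdot\bw + \bw(\bw\cdot\nabla\rho)$ in the tensor notation introduced in the lemma, and combine $\bw(\bw\cdot\nabla\rho) + \rho\bw(\nabla\cdot\bw) = \bw\,\nabla\cdot(\rho\bw)$. The second equality is the divergence theorem componentwise: the $i$th component of $\rho\,\nabla\bw\hs{-0.4ex}\cdot\bw + \nabla\cdot(\rho\bw)\bw$ is $\rho\,\bw\cdot\nabla w_i + w_i\,\nabla\cdot(\rho\bw) = \nabla\cdot(\rho w_i\bw)$, whose volume integral over $\mathbb{V}$ is $\int_\mathcal{S}\rho w_i(\bw\cdot\hat{\mathbf{n}})$. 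Then Eq.~(\ref{5940}) follows by invoking the vector identity $\bw\cdot\nabla\bw = \fc12\nabla w^2 - \bw\times(\nabla\times\bw)$, which for irrotational $\bw$ turns $\nabla\bw\hs{-0.4ex}\cdot\bw$ into $\fc12\nabla w^2$ in the first equality of~(\ref{4282b}).

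Finally, for Eq.~(\ref{5942}) with $\bw \defi \bu + \bv$, I would start from~(\ref{5940}), write $\partial(\rho\bw) = \rho\,\partial\bw + \bw\,\partial\rho$ and $\nabla\cdot(\rho\bw)\bw = \nabla\cdot(\rho\bu)\bw + \nabla\cdot(\rho\bv)\bw$, collect the factor $\bw\bigl(\partial\rho + \nabla\cdot(\rho\bv)\bigr)$, and discard it by the continuity hypothesis $\partial\rho + \nabla\cdot(\rho\bv) = \zero$, leaving $\int_\mathbb{V}\rho\,\partial\bw + \fc12\rho\nabla w^2 + \nabla\cdot(\rho\bu)\bw$. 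The only genuinely delicate point is the setup of~(\ref{4125}) itself — making precise the material volume $\mathbb{V}(t)$, its flow map, and the legitimacy of differentiating the transported integral under the merely almost-everywhere smooth, variable-mass regularity assumed — together with bookkeeping the tensor convention $\nabla\mathbf{A}\hs{-0.4ex}\cdot\mathbf{B}$; everything downstream is routine product-rule and divergence-theorem manipulation.
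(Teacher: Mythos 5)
Your proposal is correct and follows essentially the same route as the paper: the only real difference is that the paper derives Eq.~(\ref{4125}) from scratch via Cauchy sums and the identification of $\nabla\cdot\bw$ with the volumetric dilatation rate $\dot{V}_i/V_i$, whereas you cite the Reynolds transport theorem (whose standard proof is exactly that argument); all downstream manipulations---the product rule for $\mathcal{D}(\rho\bw)$, the irrotational identity $2\nabla\bw\hs{-0.4ex}\cdot\bw = \nabla w^2$, subtraction of the continuity equation multiplied by $\bw$, and the componentwise divergence theorem for the surface term---coincide with the paper's. Your explicit remark that the hypothesis $\mathcal{D}\!\!\int_\mathbb{V}\rho=\zero$ must be read as holding for every material subvolume (so that $\partial\rho+\nabla\cdot(\rho\bw)=\zero$ pointwise) in order to drop the term inside the integral in Eq.~(\ref{4125c}) addresses a step the paper's proof passes over silently, and is worth keeping.
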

\textbf{\textit{Proof}} Let $\mathbb{V}: \hat{\mathbb{R}}\longrightarrow
\mathcal{P}(\mathbb{R}^3)$, where $\hat{\mathbb{R}}$ is temporal and
$\mathcal{P}(\mathbb{R}^3)$ is the power set of $\mathbb{R}^3$.  In other words, the value
$\mathbb{V}(t)$ is the Cartesian subspace $\mathbb{V}\in\mathbb{R}^3$ at time
$t\in\hat{\mathbb{R}}$.  Let $\bw$ be the velocity field of a compressible fluid that occupies
$\mathbb{V}(t)$ at time $t$. 
Let $\mathbf{Y}:\mathbb{R}^3\rightarrow \mathbb{R}$.
The time derivative of the $j$th Cauchy sum of the integral $\int_\mathbb{V}\mathbf{Y}$,
involving a partitioning into $m_j$ subsets: $(\mathbb{V} = \mathbb{V}_1\cup \mathbb{V}_2\cdots
\mathbb{V}_{m_j})$ with corresponding volumes $(V = V_1\cup V_2\cdots V_{m_j})$, is
\begin{equation} \label{4721}
\hs{-7ex}\mathcal{D}\!\sum_{i=1}^{m_j} \mathbf{Y}(\mrr_i) V_i =
\sum_{i=1}^{m_j}  \;\; [\mathcal{D}\mathbf{Y}(\mrr_i)]  V_i + \sum_{i=1}^{m_j}\mathbf{Y}(\mrr_i) \dot{V_i};
\qquad \dot{V}_i \defi \mathcal{D}V_i,
\end{equation}
where $\mathbf{Y}$ and $V_i$ are time dependent composites involving $(\mrr = \mrr(t))$. Also,
$\mrr_i\in \mathbb{V}_i$, and $m_j\longrightarrow \infty$ as $j\longrightarrow \infty$.  By the
definition of the divergence, in the limit of $j\longrightarrow \infty$, we have
\[\lim_{m_j\to\infty} \sum_{i=1}^{m_j}[\mathbf{Y}\dot{V}_i]\hs{0.3ex}(\mrr_i)
= \lim_{m_j\to\infty} \sum_{i=1}^{m_j}\left[\mathbf{Y}(\nabla\cdot\bw)V_i\right](\mrr_i),\]
where $\nabla\cdot\bw$ is the fluid volumetric dilation.  Because both functions have the same
limit, the rhs of this equation, without the limit, can replace the term on the rhs of
Eq.~(\ref{4721}), giving the same result for the limit. Hence
\begin{equation*}
\mathcal{D}\!\sum_{i=1}^{m_j} \mathbf{Y} V_i =
\sum_{i=1}^{m_j}  \;\; \left(\mathcal{D}(\mathbf{Y}) + \mathbf{Y}(\nabla\cdot\bw)\right)V_i.
\end{equation*}
By definition of Riemmann integration, we have Eq.~(\ref{4125}).

Substituting $(\rho\bw = \mathbf{Y})$ into Eq.~(\ref{4125}), we have 
\begin{equation*} \lab{4125b}
\mathcal{D}\!\!\int_\mathbb{V}\rho\bw = \int_\mathbb{V} \mathcal{D}(\rho\bw) +  \rho\bw(\nabla\cdot\bw).
\end{equation*}
Eq.~(\ref{4125c}) follows from the given mass conservation condition Eq.~(\ref{4792}), and
\[\mathcal{D}(\rho\bw) +  \rho\bw(\nabla\cdot\bw) =
\rho\mathcal{D}\bw + (\mathcal{D}\rho +  \rho(\nabla\cdot\bw)\hs{0.1ex})\bw.\]
Let $\bw$ be given by the composite $(\bw = \bw(t,\mrr(t))$; $(\hs{0.15ex}\mrr\in\mathbb{R}^3)$.
Using the chain rule, the total derivative of $\rho\bw$ is, by definition, the material
derivative of $\rho\bw$:
\[\hs{-12ex}\mathcal{D}(\rho\bw) = \rho\mathcal{D}\bw + \bw\mathcal{D}\rho
= \rho\partial\bw + \rho\nabla\bw\hs{-0.4ex}\cdot\bw + \bw\partial \rho + \bw\nabla\rho\cdot\bw.\]
\[\hs{-12ex}\lefteqn{\text{Hence,}} \hs{10ex} \mathcal{D}(\rho\bw) +  \rho\bw(\nabla\cdot\bw)
= \partial(\rho\bw) + \rho\nabla\bw\hs{-0.4ex}\cdot\bw + \nabla\cdot(\rho\bw)\bw.\]
Substituting into Eq.~(\ref{4125b}), we have (\ref{4282b}.1).
Equation.~(\ref{5940}) follows from the vector calculus equality
$(2\nabla\bw\hs{-0.4ex}\cdot\bw = \nabla w^2)$, for irrotational field $\bw$.
Equation.~(\ref{4792}), follows from Eq.~(\ref{4125}) with $\mathbf{Y}$ replaced by $\rho$.
Subtracting the continuity equation, multiplied by $\bw$: $(\bw\partial\rho_m +
\nabla\cdot(\rho_m\bv)\bw = \zero)$, from Eq.~(\ref{5940}), gives Eq.~(\ref{5942}).

Let $(\bp \defi \rho\bw)$.  For Eq.~(\ref{4282b}.2) we only need to to
prove that each of three components of the vector equation, given by
\[\hs{0ex} \int_\mathcal{S}\rho\bw (\bw\cdot\hat{\mathbf{n}}) =
\int_\mathbb{V} \; \rho\nabla\bw\hs{-0.4ex}\cdot\bw + \nabla\cdot(\rho\bw)\bw,\]
are equal:
\[\hs{-7ex}\left(\int_S\rho\bw (\bw\cdot\hat{\mathbf{n}})\right)_{\!\!i} = 
\int_{\mathcal{S}}p_i\bw\cdot\hat{\mathbf{n}}  = \int_{\mathbb{V}}\nabla\cdot(p_i\bw)
=\int_{\mathbb{V}}\nabla p_i\cdot\bw  + p_i\nabla\cdot\bw  \]
\[\hs{-7ex} =\int_{\mathbb{V}}\nabla(\rho w_i)\cdot\bw  + \rho w_i\nabla\cdot\bw
= \int_{\mathbb{V}}\rho  \nabla w_i\cdot\bw + w_i(\nabla\rho\cdot\bw) +  w_i(\rho\nabla\cdot\bw)\]
\[\hs{-7ex}= \int_{\mathbb{V}}\rho  \nabla w_i\cdot\bw  + \nabla\cdot(\rho_m\bw)w_i
= \left(\int_{\mathbb{V}}\rho\nabla\bw\hs{-0.4ex}\cdot\bw + \nabla\cdot(\rho\bw)\bw\right)_{\!\!i}.\]
\begin{flushright}\fbox{\phantom{\rule{0.5ex}{0.5ex}}}\end{flushright}
Next we obtain the first Euler equation of the set of four. This one is the simplest one. In
the special case of steady flow, the equation reduces to the complete Euler equation with
variable mass.
\begin{theorem}[Equation of motion for one-body states part 1] \zlabel{5293} \mbox{}
  
\noindent
For quantum flows $\mathcal{F}(\rho_m,\Pu,\bu,\bv)$ of one-body states, we have the following:
  
{\bf 1)} The first energy equation of (\ref{eneq}) is equivalent to the Euler equation:
\begin{equation} \hs{-5ex}  \zlabel{euler0}
\rho_m\partial\bv + \fc12\rho_m\nabla\left(u^2 + v^2\right)
+ \nabla\Pu + \nabla\cdot(\rho_m\bu)\bu + \rho\nabla U = \mathbf{0},
\end{equation}
that is, $\;\mathbf{(\ref{eneq}.1) \Leftrightarrow (\ref{euler0})}$. \hspace{1ex} {\bf 2)}
$\mathbf{(\ref{euler0}) \Leftrightarrow}$
$\text{Re}\Bigl(i\hbar\Psi^*\partial\Psi$ $= \Psi^*\hat{H}\Psi\Bigl)$.
\end{theorem}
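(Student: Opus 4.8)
The plan is to obtain (\ref{euler0}) by applying the gradient to the first energy equation (\ref{eneq}.1) after dividing it by $\rho$, using the already-proved mass-flux Lemma (\ref{5501}) to re-package the pressure term, and then to read off part~{\bf 2} by transitivity with Theorem~\ref{theorem00}. For part~{\bf 1}: since the node set of $\rho$ has measure zero, (\ref{eneq}.1) is equivalent to the scalar equation obtained by dividing it by $\rho$ and using $\rho_m = m\rho$, namely $\ES = \fc12 m u^2 + \fc12 m v^2 + \Pu/\rho + U$. I would then apply $\nabla$ to this. On the left, Eqs.~(\ref{en2}) and (\ref{5204b}) give $m\bv = \nabla S$ and $\ES = -\partial S$, so $\nabla\ES = -\,m\,\partial\bv$, which is just (\ref{4887}). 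On the right, the only term needing work is $\nabla(\Pu/\rho) = \rho^{-1}\nabla\Pu - \rho^{-2}\Pu\,\nabla\rho$; here the mass-flux identity (\ref{5501}), i.e. $\nabla\cdot(\rho_m\bu) = \zeta^{-1}\Pu$ together with $\bu = -\zeta\,\nabla\rho/\rho$, shows $-\,\Pu\,\nabla\rho/\rho = \nabla\cdot(\rho_m\bu)\,\bu$. Multiplying the gradient equation through by $\rho_m = m\rho$ and collecting terms gives $-\,\rho_m\partial\bv = \fc12\rho_m\nabla(u^2+v^2) + \nabla\Pu + \nabla\cdot(\rho_m\bu)\,\bu + \rho\nabla U$, which rearranges to (\ref{euler0}). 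Every manipulation other than the gradient is manifestly reversible (division and multiplication by the a.e.-positive $\rho$, substitution of the Lemma identities); the gradient recovers (\ref{eneq}.1) only up to an additive function of time, but this is exactly the phase-gauge freedom $S \mapsto S + f(t)$ built into Def.~(\ref{def00}) --- under which $\ES = -\partial S$ shifts by $-f'(t)$ --- so within the formalism (\ref{eneq}.1) $\Leftrightarrow$ (\ref{euler0}).

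For part~{\bf 2}: Theorem~\ref{theorem00} already establishes that the real part of the Schr\"odinger equation (\ref{schrodinger}), i.e. $\text{Re}(i\hbar\Psi^*\partial\Psi = \Psi^*\hat{H}\Psi)$, is equivalent to (\ref{eneq}.1). Chaining this with part~{\bf 1} and using transitivity of equivalence yields $(\ref{euler0}) \Leftrightarrow \text{Re}(i\hbar\Psi^*\partial\Psi = \Psi^*\hat{H}\Psi)$.

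The step I expect to be the main obstacle is justifying that the gradient step delivers a genuine \emph{equivalence} and not merely an implication --- that is, cleanly disposing of the ``up to an additive function of $t$'' ambiguity so that it is absorbed by the phase freedom of $\Psi$ consistently with the a.e./runner conventions. Everything else is routine vector calculus together with Lemma~(\ref{5501}) and Eqs.~(\ref{en2}) and (\ref{5204b}).
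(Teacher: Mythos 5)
Your proposal follows essentially the same route as the paper's proof: apply the gradient to (\ref{eneq}.1)$\times\rho^{-1}$, use Eq.~(\ref{4887}) on the left, and use the quotient-rule identity together with the mass-flux Lemma~(\ref{5501}) to rewrite $\rho\nabla(\Pu\rho^{-1})$ as $\nabla\Pu + \nabla\cdot(\rho_m\bu)\bu$, with part~{\bf 2} by transitivity through Theorem~\ref{theorem00}. Your treatment of the integration-constant ambiguity via the phase-gauge freedom of $S$ is in fact slightly more explicit than the paper's remark that the gradient equation ``with a constant of integration'' is equivalent to (\ref{eneq}.1).
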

\textbf{\textit{Proof}} Applying the gradient-sequence operation $(\nabla)$ to the energy
equation~(\ref{eneq})$\times\rho^{-1}$, we have the equation sequence:
\begin{equation} \zlabel{gradE}
\nabla\ES = \fc12m\nabla\left(u^2 + v^2\right) + \nabla\left(\Pu\rho^{-1}\right) + \nabla U.
\end{equation} 
Since this equation with a constant of integration is equivalent to Eq.~(\ref{eneq}.1) for
$\ES$, we only need to prove that $\mathbf{\underline{(\ref{gradE}) \Leftrightarrow
    (\ref{euler0})}}$ to prove part {\bf 1} of the theorem.
Multiplying Eq.~(\ref{gradE}) by $\rho$, followed by substituting of
Eq.~(\ref{4887}).1$\times\rho$: ($\rho\nabla\ES = \rho_m\partial\bv$), we have
\begin{equation} \hspace{-2ex} \zlabel{8888}
\rho_m\partial\bv + \fc12\rho_m\nabla\!\left(u^2 + v^2\right) + \rho\nabla\left(\Pu\rho^{-1}\right) + \rho\nabla U = \mathbf{0}.
\end{equation} 
Since this equation is equivalent to Eq.~(\ref{gradE}), we only need to prove that this
equation is equivalent to Eq.~(\ref{euler0}), that is,
$\mathbf{\underline{(\ref{8888}) \Leftrightarrow (\ref{euler0})}}$.
Substituting implication (\ref{5501}) into the identity
\begin{equation} \zlabel{2748}
  \rho\nabla\left(\fc{\Pu}{\rho}\right) = \nabla\Pu - \Pu\fc{\nabla\rho}{\rho},
\end{equation}
gives the following equality:
\begin{equation} \zlabel{2077}
  \rho\nabla\left(\Pu\rho^{-1}\right) = \nabla\Pu + \nabla\cdot(\rho_m\bu)\bu.
\end{equation}
Substituting this equation into the Eq.~(\ref{8888}), we obtain Eq,~(\ref{euler0}). Hence
$\mathbf{\underline{(\ref{8888}) \Leftrightarrow (\ref{euler0})}}$.

For part {\bf 2}, from Theorem~\ref{theorem00} we have $\Bigl(\text{Re}(i\hbar\Psi^*\partial\Psi$ $=
\Bigl.\text{Re}(\Psi^*\hat{H}\Psi)\Bigl)$ $\Leftrightarrow$ $\mathbf{(\ref{eneq}.1)}$ combining
this with part ${\bf 1\!:\,} \,\mathbf{(\ref{eneq}.1) \Leftrightarrow (\ref{euler0})}$, along
with the transitive and symmetric property of equivalence relations, gives statement {\bf
  2:\,}$\mathbf{(\ref{euler0}) \Leftrightarrow}$ $\Bigl(\text{Re}(i\hbar\Psi^*\partial\Psi\Bigl.$
  $= \Bigl.\text{Re}(\Psi^*\hat{H}\Psi)\Bigl)$.
\begin{flushright}\fbox{\phantom{\rule{0.5ex}{0.5ex}}}\end{flushright}
It follows from Lemma~\ref{tmom}, Eq.~(\ref{5940}), that the following terms are missing in
Euler Eq.~(\ref{euler0}): $\rho_m\partial\bu$, $\rho_m\nabla\bu\cdot\bv$ and
$\nabla\cdot(\rho_m\bu)\bv$. The next theorem obtains an equation of motion that improves the
situation by obtaining the first two. A second derived equation of motion includes the force
$-\nabla\Pv$, but at the cost of adding the term $\nabla\cdot(\rho_m\bv)\bu$.

\begin{theorem}[Equation of motion for one-body states part 2] \zlabel{5293b}
  Consider the following equations 
\begin{equation} \hs{-5ex} \zlabel{euler1}
    \rho_m\partial\bw + \fc12\rho_m\nabla w^2 
    + \nabla \Pu + \rho\nabla U + \nabla\cdot(\rho_m\bu)\bu
    = \eta\nabla\left(\nabla\cdot\bv\right).
\end{equation}
\begin{equation} \hs{-5ex} \zlabel{euler1d}
  \mathcal{D}\!\!\int_{\mathbb{V}} \rho_m\bw  - \int_{\mathbb{V}}\nabla\cdot(\rho_m\bu)\bv
  = \int_{\mathbb{V}} \; -\nabla \Pu - \rho\nabla U  + \eta\nabla\left(\nabla\cdot\bv\right).
\end{equation}
\mbox{}\vspace{-1ex}
\begin{equation}\hspace{-10ex} \zlabel{euler3} 
\partial(\rho_m\bu) + \rho_m\partial\bv + \fc12\rho_m\nabla\left(u^2 + v^2\right)
+ \nabla(\Pu + \Pv) + \nabla\cdot(\rho_m\bu)\bu + \rho\nabla U = \mathbf{0}.
\end{equation}
\[\hs{-10ex} -\int_{\mathbb{V}}\;\nabla\cdot(\rho_m\bv)\bu + \mathcal{D}\!\!\int_{\mathbb{V}} \rho_m\bw
- \int_{\mathbb{V}}\ \nabla\cdot(\rho_m\bu)\bv  + \rho_m\nabla(\bu\cdot\bv)\]
\begin{equation} \hs{17ex} \zlabel{euler3d}
  = \int_{\mathbb{V}} \;-\nabla \Pu -\nabla \Pv - \rho\nabla U  + \eta\nabla\left(\nabla\cdot\bv\right).
\end{equation}
For one-body quantum flows $\mathcal{F}(\rho_m,\Pu,\bu,\bv)$, under the condition of the
satisfaction of continuity Eq.~(\ref{cont}.2), the following three statements are equivalent:
{\bf 1)} Eq.~(\ref{euler0}), {\bf 2)} Eq.~(\ref{euler1}), and {\bf 3)} Eq.~(\ref{euler3}).
Also, Eqs.~(\ref{euler1}) and (\ref{euler1d}) are equivalent; Eqs.~(\ref{euler3}) and
(\ref{euler3d}) are equivalent.
\end{theorem}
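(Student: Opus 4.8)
The plan is to treat each of Eqs.~(\ref{euler1}), (\ref{euler3}), (\ref{euler1d}) and (\ref{euler3d}) as a modification of Eq.~(\ref{euler0}), whose equivalence with the energy equation~(\ref{eneq}.1) --- and hence, via Theorems~\ref{5293} and \ref{theorem00}, with the real part of the Schr\"odinger equation --- is already established, and then to check that every modification is produced by an identity that is valid once continuity Eq.~(\ref{cont}.2), i.e.\ $\partial\rho_m+\nabla\cdot(\rho_m\bv)=\zero$, is imposed. The structural inputs I would use are: Eq.~(\ref{7500}), $-\nabla\Pv=\partial(\rho_m\bu)$; the polar relations $\rho_m\bu=-\zeta_0\nabla\rho$ and $\eta=\zeta_0\rho$ of Eqs.~(\ref{5204b}) and (\ref{theta}), which give $\nabla\eta=-\rho_m\bu$ and, crucially, that $\nabla\rho_m$, $\nabla\eta$ and $\bu$ are mutually parallel; Lemma~\ref{5282}, $\Pv=\rho_m\bu\cdot\bv-\eta\nabla\cdot\bv$; and, since $m\bw=\nabla(S+\theta)$ is irrotational, $\fc12\nabla w^2=\nabla\bw\cdot\bw=\fc12\nabla(u^2+v^2)+\nabla(\bu\cdot\bv)$.

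I would first settle $(\ref{euler0})\Leftrightarrow(\ref{euler3})$: subtracting Eq.~(\ref{euler0}) from Eq.~(\ref{euler3}) leaves exactly $\partial(\rho_m\bu)+\nabla\Pv=\zero$, which is Eq.~(\ref{7500}). Next, for $(\ref{euler0})\Leftrightarrow(\ref{euler1})$, subtraction together with $\bw-\bv=\bu$ and the $w^2$ identity reduces the entire difference to the single vector identity $\rho_m\partial\bu+\rho_m\nabla(\bu\cdot\bv)-\eta\nabla(\nabla\cdot\bv)=\zero$. I would prove this directly: write $\rho_m\partial\bu=\partial(\rho_m\bu)-(\partial\rho_m)\bu=-\nabla\Pv+\nabla\cdot(\rho_m\bv)\bu$ using Eq.~(\ref{7500}) and continuity, substitute $\Pv$ from Lemma~\ref{5282}, and collect; after using $\nabla\eta=-\rho_m\bu$ the leftover is $-(\bu\cdot\bv)\nabla\rho_m+(\nabla\cdot\bv)\nabla\eta+\nabla\cdot(\rho_m\bv)\bu$, and expanding $\nabla\cdot(\rho_m\bv)$ and using $\nabla\rho_m\parallel\bu$ (so that $(\bu\cdot\bv)\nabla\rho_m=(\bv\cdot\nabla\rho_m)\bu$) cancels everything. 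Transitivity of equivalence then gives the three-way equivalence of Eqs.~(\ref{euler0}), (\ref{euler1}) and (\ref{euler3}).

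For the integral forms I would integrate the corresponding differential equation over an arbitrary time-dependent region $\mathbb{V}\subseteq\mathbb{R}^3$ and use Lemma~\ref{tmom} to package the local terms into $\mathcal{D}\!\int_{\mathbb{V}}\rho_m\bw$. For $(\ref{euler1})\Leftrightarrow(\ref{euler1d})$, Eq.~(\ref{5942}) (multiplied by $m$) turns $\rho_m\partial\bw+\fc12\rho_m\nabla w^2+\nabla\cdot(\rho_m\bu)\bw$ into $\mathcal{D}\!\int_{\mathbb{V}}\rho_m\bw$, and the lone $-\int_{\mathbb{V}}\nabla\cdot(\rho_m\bu)\bv$ of Eq.~(\ref{euler1d}) is precisely what converts $\nabla\cdot(\rho_m\bu)\bw$ into $\nabla\cdot(\rho_m\bu)\bu$, leaving the remaining terms $-\nabla\Pu-\rho\nabla U+\eta\nabla(\nabla\cdot\bv)$. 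For $(\ref{euler3})\Leftrightarrow(\ref{euler3d})$ I would do the same, additionally absorbing $-\int_{\mathbb{V}}\nabla\cdot(\rho_m\bv)\bu$ through continuity --- $(\partial\rho_m)\bu=-\nabla\cdot(\rho_m\bv)\bu$, which reconstitutes $\partial(\rho_m\bu)=-\nabla\Pv$ --- and using the displayed $\rho_m\nabla(\bu\cdot\bv)$ term to reconcile the $\fc12\nabla w^2$ produced by Lemma~\ref{tmom} with the $\fc12\nabla(u^2+v^2)$ appearing in Eq.~(\ref{euler3}). Since each of these manipulations is an identity, and since the resulting integral statements hold for every such $\mathbb{V}$ so that the integrands agree a.e., each equivalence runs in both directions.

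The main obstacle is exactly this last round of bookkeeping: tracking the coupling terms $\nabla\cdot(\rho_m\bu)\bv$, $\nabla\cdot(\rho_m\bv)\bu$, $\rho_m\nabla(\bu\cdot\bv)$ together with the $\fc12\nabla w^2$ versus $\fc12\nabla(u^2+v^2)$ mismatch, and verifying that the two workhorses --- the continuity substitution $\partial\rho_m=-\nabla\cdot(\rho_m\bv)$ and the parallelism $\nabla\rho_m\parallel\bu$ --- annihilate precisely the terms that do not appear in the displayed equations. I expect the cleanest route is to isolate and prove the differential identity of the second paragraph once, then feed it both into the $(\ref{euler0})\Leftrightarrow(\ref{euler1})$ argument and into the region-integral reductions, so that those only ever invoke Lemma~\ref{tmom} plus that single identity.
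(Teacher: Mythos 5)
Your proposal is correct and follows essentially the same route as the paper: $(\ref{euler0})\Leftrightarrow(\ref{euler3})$ by adding Eq.~(\ref{7500}), $(\ref{euler0})\Leftrightarrow(\ref{euler1})$ by reducing the difference to the identity $\rho_m\partial\bu+\rho_m\nabla(\bu\cdot\bv)=\eta\nabla(\nabla\cdot\bv)$ (the paper's Eq.~(\ref{2815})), and the integral forms via Eq.~(\ref{5942}) together with the continuity substitution $\bu\,\partial\rho_m=-\nabla\cdot(\rho_m\bv)\bu$. The only cosmetic difference is that you verify the key identity by direct expansion from Eq.~(\ref{7500}) and Lemma~\ref{5282}, whereas the paper obtains it in one step from $-\nabla\Etheta=m\partial\bu$ and $\Etheta\rho=\Pv$; both rest on the same facts.
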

\textbf{\textit{Proof}}
First we show that $\mathbf{\underline{(\ref{euler0})\Leftrightarrow
    (\ref{euler1})}}$, giving $\text{\bf 1} \Leftrightarrow \text{\bf 2}$.  Since, from
Theorem~\ref{conti}, Eq.~(\ref{eneq}).1:\hs{-1ex} $(\Etheta\rho = \Pv)$ is equivalent
to the continuity Eq.~(\ref{cont}) displayed above, implication~(\ref{1221})$\times\rho^{-1}$
holds with $\Pv\rho^{-1}$ replaced by $\Etheta$. Thus, continuity satisfaction implies
\[
\nabla\Etheta = m\nabla(\bu\cdot\bv) - \zeta_0\nabla(\nabla\cdot\bv).
\]
Using the Eq.~(\ref{4887}.2), $(-\nabla\Etheta = m\partial\bu)$, and definition $(\eta =
\zeta_0\rho)$, we can write this equation, after multiplication by $\rho$, as
\begin{equation} \zlabel{2815}
\rho_m\partial\bu + \rho_m\nabla(\bu\cdot\bv) = \eta\nabla(\nabla\cdot\bv).
\end{equation}
Adding this result to Eq.~(\ref{euler0}) gives (\ref{euler1}).
Hence
$\mathbf{\underline{(\ref{euler0}) \Leftrightarrow (\ref{euler1})}}$, giving
\raisebox{-0.1ex}{\fbox{$\text{\bf 1} \Leftrightarrow \text{\bf 2}.$}}
Next we show that $\mathbf{\underline{(\ref{euler0})\Leftrightarrow (\ref{euler3})}}$, giving
$\text{\bf 1} \Leftrightarrow \text{\bf 3}$.  from Theorem~\ref{conti}, the continuity 
Eq.~(\ref{cont}) displayed above is equivalent to $(\Pv = \zeta_0\partial\rho)$.
From logic~(\ref{8304}), we have Eq.~(\ref{7500}): $(\partial(\rho_m\bu)) + \nabla\Pv = \zero)$.
Adding this equation to Eq.~(\ref{euler0}), gives Eq.~(\ref{euler3}). Hence
\raisebox{-0.1ex}{\fbox{$\text{\bf 1} \Leftrightarrow \text{\bf 3}$}}.  By the transitive and
symmetric property of equivalence relations, we have
$\text{\bf 1} \Leftrightarrow \text{\bf 2}$ $\imply$
$\text{\bf 2} \Leftrightarrow \text{\bf 1}$ and
$\text{\bf 1} \Leftrightarrow \text{\bf 3}$ $\imply$
\raisebox{-0.1ex}{\fbox{$\text{\bf 2} \Leftrightarrow \text{\bf 3}$}}

It follows from Eq.~(\ref{5942}) that Eqs.~(\ref{euler1}) and (\ref{euler1d}) are
equivalent. For the same reason, Eqs.~(\ref{euler3}) and (\ref{euler3d}) are equivalent, after
making the substitution $(\bu\partial\rho_m = - \nabla(\rho_m\bv)\bu)$.
\begin{flushright}\fbox{\phantom{\rule{0.5ex}{0.5ex}}}\end{flushright}
For equation of motion~(\ref{euler1}), the only term missing is
$\nabla\cdot(\rho_m\bu)\bv$. The extra term $\eta\nabla(\nabla\cdot\bv)$ can be interpreted as
a sort of bulk viscosity, discussed below.
%

The next theorem, obtains all terms of the time derivative of the momentum. This is achieved
with additional forces.
\begin{theorem}[Equation of motion for one-body states part 3] \zlabel{5293c}
  Let the subspace $(\mathbb{V}\in\mathbb{R}^3)$ be occupied by a fluid particle that satisfies the
  Euler Eq.~(\ref{euler3}). Let $(\mathcal{P} \defim -\Pu)$.
\begin{equation} \hs{-11ex}\zlabel{0000}
  \mathcal{D}\!\!\int_\mathbb{V}\; \rho_m\bw  =
  \int_\mathbb{V}\; -\nabla \Pu - \rho\nabla U  + \nabla[\eta\left(\nabla\cdot\bv\right)]
+ \fc{1}{\zeta}\Bigl(\Pu\bv + \mathcal{P}\bu + (\bv\cdot\bu)\rho_m\bu\Bigl).
\end{equation}
\end{theorem}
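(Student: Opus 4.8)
The plan is to evaluate $\mathcal{D}\!\!\int_{\mathbb{V}}\rho_m\bw$ with Lemma~\ref{tmom}, collapse the bulk of the integrand into the equation of motion, and then cosmetically rewrite the two surviving terms using the mass--flux lemma and the polar identities~(\ref{5204b}). I use throughout that $\bw$ is irrotational, that the flow satisfies~(\ref{euler3}), and that (as in Theorem~\ref{5293b}) continuity Eq.~(\ref{cont}.2) holds, so that Eqs.~(\ref{euler1}),~(\ref{euler3}),~(\ref{2815}) and~(\ref{7500}) are all in force.

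First I would take Lemma~\ref{tmom}, Eq.~(\ref{5942}), multiplied by $m$:
\[
  \mathcal{D}\!\!\int_{\mathbb{V}}\rho_m\bw \;=\; \int_{\mathbb{V}} \rho_m\partial\bw + \fc12\rho_m\nabla w^2 + \nabla\cdot(\rho_m\bu)\bw ,
\]
and split $\nabla\cdot(\rho_m\bu)\bw = \nabla\cdot(\rho_m\bu)\bu + \nabla\cdot(\rho_m\bu)\bv$. The group $\rho_m\partial\bw + \fc12\rho_m\nabla w^2 + \nabla\cdot(\rho_m\bu)\bu$ is precisely the left-hand side of the equation of motion~(\ref{euler1}), which by Theorem~\ref{5293b} is equivalent to~(\ref{euler3}) under continuity; hence it equals $-\nabla\Pu - \rho\nabla U + \eta\nabla(\nabla\cdot\bv)$. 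This reduces the identity to
\[
  \mathcal{D}\!\!\int_{\mathbb{V}}\rho_m\bw \;=\; \int_{\mathbb{V}} -\nabla\Pu - \rho\nabla U + \eta\nabla(\nabla\cdot\bv) + \nabla\cdot(\rho_m\bu)\bv .
\]

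It remains to rewrite the last two terms so as to match Eq.~(\ref{0000}). By the mass--flux lemma, Eq.~(\ref{5503}), $\nabla\cdot(\rho_m\bu)\bv = \zeta^{-1}\Pu\bv$. For the viscous term, the product rule together with $\nabla\eta = \zeta_0\nabla\rho = -\rho_m\bu$ — which follows from~(\ref{5204b}) and $\eta = \zeta_0\rho$ — gives $\eta\nabla(\nabla\cdot\bv) = \nabla[\eta(\nabla\cdot\bv)] + \rho_m(\nabla\cdot\bv)\bu$; and the identity $\Pv = \rho_m\bu\cdot\bv - \eta\nabla\cdot\bv$ of Lemma~\ref{5282}, divided by $\eta$ and multiplied by $\rho_m$ (using $\rho_m/\eta = 1/\zeta$), turns $\rho_m(\nabla\cdot\bv)$ into $\zeta^{-1}(\rho_m\,\bu\cdot\bv - \Pv)$, so that $\rho_m(\nabla\cdot\bv)\bu = \zeta^{-1}\big((\bv\cdot\bu)\rho_m\bu - \Pv\bu\big)$. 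Recognizing the coefficient $-\Pv$ of $\bu$ as $\mathcal{P}$ and collecting all the pieces yields Eq.~(\ref{0000}).

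The content is pure bookkeeping; there is no genuine conceptual obstacle. The step that must be handled with some care is the passage $\eta\nabla(\nabla\cdot\bv)\mapsto\nabla[\eta(\nabla\cdot\bv)]$: the gradient-of-$\eta$ remainder has to be eliminated with $\nabla\eta=-\rho_m\bu$ and then reshaped, via the $\Pv$-identity of Lemma~\ref{5282}, into exactly $\zeta^{-1}\big(\mathcal{P}\bu+(\bv\cdot\bu)\rho_m\bu\big)$; this works cleanly because the $\rho_m\nabla(\bu\cdot\bv)$ contributions cancel (against Eq.~(\ref{2815}), and indeed automatically when one invokes~(\ref{euler1}) rather than~(\ref{euler3})). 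An equivalent route starts from Eq.~(\ref{5940})$\times m$ and uses $\partial(\rho_m\bv) = \rho_m\partial\bv - \nabla\cdot(\rho_m\bv)\bv$ from continuity; I would adopt whichever bookkeeping turns out shorter in the final write-up.
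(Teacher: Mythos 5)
Your proof is correct and follows essentially the same route as the paper's: the paper first derives the identity $\nabla\cdot(\rho_m\bu)\bv = \nabla\eta(\nabla\cdot\bv) + \zeta^{-1}\bigl((\bu\cdot\bv)\rho_m\bu + \Pu\bv - \Pv\bu\bigr)$ from Eqs.~(\ref{5503}), (\ref{5504}) and $\nabla\eta=-\rho_m\bu$, substitutes it into Eq.~(\ref{euler1}), and only then integrates via Eq.~(\ref{5942}), whereas you run the identical algebra in the opposite order starting from Eq.~(\ref{5942}); the difference is purely organizational. Note that both your write-up and the paper's own proof in fact use $\mathcal{P}=-\Pv$ (consistent with the discussion in Sec.~\ref{2215}), so the definition $\mathcal{P}\defim -\Pu$ in the theorem statement must be read as a typo for $\mathcal{P}\defim -\Pv$.
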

\textbf{\textit{Proof}}
It follow from theorem~(\ref{5293c}) that we only need to prove that Eq.~(\ref{0000}) is equivalent to
Euler Eq.~(\ref{euler1}): 
\[\hs{-7ex}\rho_m\partial\bw + \fc12\rho_m\nabla w^2  + \nabla\cdot(\rho_m\bu)\bu
    = -\nabla \Pu - \rho\nabla U  + \eta\nabla\left(\nabla\cdot\bv\right).\]
For treating the term with factor $\eta$, consider the development: 
\[\hs{-12ex}\nabla\eta(\nabla\cdot\bv) = \zeta_0\nabla\rho\nabla\cdot\bv = -\rho_m\bu\nabla\cdot\bv
= m\bu(-\rho\nabla\cdot\bv) = m\bu\nabla\rho\cdot\bv - m\bu\nabla\cdot(\rho\bv),\]
giving
\begin{equation} \zlabel{2268}
 \nabla\eta(\nabla\cdot\bv) - m\bu\nabla\rho\cdot\bv + \nabla\cdot(\rho_m\bv)\bu = \zero.
\end{equation}  
The second term on the lhs satisfies: 
\[\hs{-10ex}-m\bu\nabla\rho\cdot\bv = (m\bu\rho)\fc{1}{\zeta}\left(\fc{-\zeta\nabla\rho}{\rho}\right)\cdot\bv
= (\rho_m\bu)\fc{1}{\zeta}(\bu\cdot\bv),\]
giving
\[ \nabla\eta(\nabla\cdot\bv) + \fc{1}{\zeta}(\bu\cdot\bv)\rho_m\bu + \nabla\cdot(\rho_m\bv)\bu = \zero.\]
Substituting equation~(\ref{5504}): $(\nabla\cdot(\rho_m\bv)\bu = -\zeta^{-1}\Pv\bu)$ and then
adding equation.~(\ref{5503}): ($\nabla\cdot(\rho_m\bu)\bv = \zeta^{-1}\Pu\bv$), we have
\begin{equation} \zlabel{0101}
\nabla\cdot(\rho_m\bu)\bv = \nabla\eta(\nabla\cdot\bv) +
\fc{1}{\zeta}\Bigl((\bu\cdot\bv)\rho_m\bu + \Pu\bv - \Pv\bu\Bigl).
\end{equation}
Substituting into Euler Eq.~(\ref{euler1}) and using the definition $(\mathcal{P} \defim -\Pu)$, we have
\[\hs{-7ex}\rho_m\partial\bw + \fc12\rho_m\nabla w^2  + \nabla\cdot(\rho_m\bu)\bw
=  -\nabla\Pu - \rho\nabla U + \nabla(\eta\nabla\cdot\bv) + \Gamma_\rho[\bu,\bv];\]
\[ \Gamma_\rho[\bu,\bv]= \fc{1}{\zeta}(\bu\cdot\bv)\rho_m\bu + \zeta^{-1}\Pu\bv + \zeta^{-1}\mathcal{P}\bu\]
Eq.~(\ref{0000}) follows from Eq.~(\ref{5942}).
\begin{flushright}\fbox{\phantom{\rule{0.5ex}{0.5ex}}}\end{flushright}

\subsection{Correspondence variables by Euler Equations \zlabel{2215}}

We identify, by definition~{\bf 2} of Sec~\ref{4930}, the mass density $\rho_m$, pressure
$\Pu$, and velocity $(\bu + \bv)$ as corresponding variables from Eq.~(\ref{0000}). The
pressure $\Pu$ is not a thermodynamic pressure, because, being a free energy, it is not
positive definite.
Euler Eq.~(\ref{0000}) is interpreted
as a generalization of the other three Euler equations, with Euler Eq.~(\ref{euler0}) being the
most specialized.  Consider equality~(\ref{0101}), if this equality is not satisfied, there can
be solutions of Eq.~(\ref{0000}) that do not satisfy the other Euler equations.  This
specialization, in the form of relationships between variables, explains why certain terms,
especially $\nabla\cdot(\rho_m\bu)\bv$, are missing in the specialized Euler equations: The
missing terms cancel.

For equation of motion~(\ref{euler3d}), the force $-\nabla\Pv$ is included, but at the cost of
introducing the term $-\nabla\cdot(\rho_m\bu)\bv$.  Because of this shortcoming, we lack
sufficient evidence to interpret $\Pu$ as a corresponding pressure. However, Eq.~(\ref{7500})
does identify the series $[\room\Pv\room]$ as a force.  The first definitions from
Eqs.~(\ref{4720}) and (\ref{4725}) suggests that the sign of $\Pv$ is wrong, indicating that
the definition $(\mathcal{P} = -\Pv)$ is more satisfying. If we interpret $\mathcal{P}$ as a
pressure, we can substitute $(\Pv = -\mathcal{P})$ into the gradient of Eq.~(\ref{1221}), giving
the possibility of finding an Euler equation with $-\nabla\mathcal{P}$ as a force.

The forces given by the last term on the rhs of Eq.~(\ref{0000}) are unknown.  These velocity
dependent forces are coupling forces involving the two velocity components.  They might be
associated with irreversible, unsteady flow; a sort of ``friction'' caused by nonsmooth flow.
However, states of wavefunctions that satisfy the space time \sch equation are closed, so if the flow
satisfies Euler Eq.~(\ref{euler0}), the total energy is conserved.
Since satisfaction of Eq.~(\ref{0000}) does not imply the satisfaction of the Schr\"odinger
equation, it might be applicable for mixed states, where the system of interest is part of a
larger closed system.

In order to interpret the term $\nabla(\eta\nabla\cdot\bv)$, consider the stress tensor
of the Navier--Stokes equation:
\[
\hs{-12ex}\sigma = -p\mathbf{I} +\lambda(\nabla\cdot\bu)\mathbf{I} + 2\mu(\grad\bu)_+;\quad
(\grad)_{ij} = \pa{u_i}{x_j}, \;\; (\grad\bu)_+ = \text{\bf sym}(\grad\bu),\]
\[\lefteqn{\hs{-5ex}\text{giving}}\hs{5ex} \sigma_{ij} = -p\delta_{ij}
+ \lambda\pa{u_k}{x_k}\delta_{ij} + \mu\left(\pa{u_i}{x_j} +\pa{u_j}{x_i}\right),\]
where $\mu:\mathbb{R}^3\rightarrow\mathbb{R}$ and $\lambda:\mathbb{R}^3\rightarrow\mathbb{R}$
are the dynamic- and second-viscosity coefficients, respectively.  Since the corresponding force is
\[\dive\sigma = -\nabla p + \nabla[\lambda\nabla\cdot\bu] + 2(\grad\bu)\nabla\mu +   \mu\nabla(\nabla\cdot\bu),\]
where $(\room\hs{0.1ex}(\dive\sigma)_i = \partial_1\sigma_{i1}/\partial x_1 + \partial_1\sigma_{i2}/\partial x_2
+ \partial_1\sigma_{i3}/\partial x_3$; $\mrr = (x_1,x_2,x_3)\room)$, and the implication
$(\nabla\times\bu = \zero \imply \mu\nabla^2\bu = \mu\nabla(\nabla\cdot\bu))$ is substituted,
$\nabla(\eta\nabla\cdot\bv)$ can only come from the diagonal elements of the stress tensor.
We now give two possible interpretations of
$\nabla(\eta\nabla\cdot\bv)$. One is obtained by choosing $(\mu = \zero)$, giving $(\eta =
\lambda)$. The second one is to take $\eta$ as the nonpressure part of the average normal stress:
\[\hs{-7ex} \fc13\text{tr} \sigma = \fc13\sigma_{ii} = -p + \lambda\pa{u_k}{x_k} + \fc23\mu\left(\pa{u_i}{x_i}\right)
  = -p + \left(\lambda+ \fc23\mu\right)\nabla\cdot \bu,\]
giving $(\eta = \lambda+ 2\mu\div 3)$, where $(\lambda+ 2\mu)$ is the bulk viscosity.  Since
this assignment is an approximation, where $2(\grad\bu)\nabla\mu$ is neglected, the first one
seems like a better interpretation.

Since $\eta$ is not a constant, the interpretation of $\eta\nabla(\nabla\cdot\bu)$ from
Eq.~(\ref{euler1}) is more difficult. Because of the conditions imposed on the stress
tensor---giving the forces of the Navier-Stokes equation---may not be applicable to quantum
flows, it seems reasonable to interpreted the function $\eta\nabla(\nabla\cdot\bu)$ as a sort
of bulk viscosity. So we do that: Both $\nabla(\eta\nabla\cdot\bu)$ and
$\eta\nabla(\nabla\cdot\bu)$ are interpreted as corresponding variable of bulk viscosity.

\subsection{Interpretations of energy and Euler equations \zlabel{2217}}

We have demonstrated that states of quantum mechanical satisfy energy equations and equations
of motion. We also have identified corresponding variables of velocity, energy, and force. This
is the easy part. The hard part is giving a complete description of states of quantum mechanics
that is consistent with these equations and variables.

While the equations derived have been identified as equations of fluid dynamics with variable
mass, one can imagine other possible descriptions that are consistent with these equations,
fluid pulsation and particle oscillation, being examples.
Also, given the independence of the two velocity vectors $\bu$ and $\bv$, and the the couplings
between them, the two velocities could be associated with different modes, or forms, of
energy. For example, Salesi \cite{Salesi}, using approximations of relativistic
quantum-mechanics, assigns the kinetic energy from the Madelung velocity $\bv$ to the center of
mass motion, and the velocity $\bu$ to the internal energy of the electron.

For a particle trajectory description, we take equation of motion~(\ref{8888})$\div\rho$, where, for
one-body states, the streamlines of the fluid are the possible paths for the particle. This
approach is Bohmmian mechanics with velocity $\bu$ and body force $\Pu\rho^{-1}$ replacing the
quantum potential.  The energy equation with the addition of pressure removes tunneling and
gives a constraint between the variables.  The addition of $\bu$ gives states with kinetic
energies that agrees with the corresponding expectation value. It also provides the necessary
additional orbital angular momentum, since, for example, velocity $\bv$ alone can only provide the
correct $z$ component of angular momentum of the electronic states of the hydrogen
atom. However, the trajectories with velocity $\bu$ are unstable, since they go off into
infinity with a velocity $\bu$ direction of $-\nabla\rho$.  Therefore, if the velocity
directions is not changed, as considered in the next section, the electron must frequently
jump from one path to another, or go back and forth, and such behavior, involving velocity
discontinuity, has been predicted to exist \cite{Oldofredi}. It is worth mentioning that such
jumps might be needed in standard quantum mechanics, if two systems in the same state are to be
identical over time, otherwise an electron would be stuck in a single orbit. Also, for the
hydrogen $s$ states with standard quantum mechanics, in order to satisfy the vanishing angular
momentum-requirement, the electron must stay in a straight radial line, going back and forth,
frequently changing to the opposite direction, and perhaps, go through the nucleus.

Since the $n$s wavefunctions of hydrogen are eigenfunctions of $\hat{L}^2$---the operator for
the square of the magnitude of the angular momentum---and the eigenvalues are zero, they must
have radially directed momentum, so that the angular momentum vector-field is the zero
function.  For a fluid description of the $n$s state of hydrogen, the vanishing angular
momentum condition is satisfied by velocity $\bu$ of Eq.~(\ref{5204b}).  Since the flows are
steady, the flows must have local variable mass, since a steady radial flow requires infinite
mass if mass is conserved. Over all space, the flows conserve mass, because the sources cancel
the sinks.  Anyone who works with quantum mechanics has to swallow ``quantum pills,'' tunneling
and the uncertainty principle being two examples. As quantum pills go, the local variable mass
pill is, perhaps, not too large too swallow.  (Fluid flows with velocity $\bu$ is explored elsewhere,
including the investigation of the hydrogen atom 1s and 2s states
\cite{Finley-Arxiv,Finley-Bern}.) 

While local variable mass and a jumping electron are possibilities, these are nonclassical
elements, and we wish investigate if these elements can be eliminated. The next section
investigates a way to modify the direction of quantum flows, and two of the applications in
doing so is for mass conservation for fluid flow and stable orbitals for trajectories.




\section{One-Body Cross Flows \zlabel{4938}}

In this section, for one-body states, we investigate ways to change velocity directions of
steady, quantum flows, without changing the kinetic or total energy, and conditions are given
for these flows to conserve mass and also to be incompressible along streamlines.  A special
type of flow, called cross flow, is explored, where a velocity field $\mmu$ is introduced that
replaces $\bu$, and $\mmu$ has the same speed as $\bu$.
In order to get an idea what we have in mind, consider the familiar orbital plot of the $3d_z$
electronic state of the hydrogen atom, containing one level surface of the probability
density. Imagine the torus filled with electron fluid undergoing steady flow. From symmetry,
the mass density and speed are constant, mass is conserved, and the flow is incompressible,
and, therefore, also solenoidal. Such incompressible flows have density stratification, where
the mass density is only constant of each streamline. Since velocity $\bv$ is
irrotational, if $\mmu$ is solenoidal, the two velocities can be considered components of a
velocity vector that can be expressed by a Helmolholtz decomposition $(\oomega = \nabla S\div
m + \nabla\times\mathbf{K})$, such that $(\mmu = \nabla\times\mathbf{K})$.

In Sec.~\ref{9918}, equations of motion are obtained for many-body states. Under certain
conditions, these equations are reduced to 1-body in Sec.~\ref{4918}, leading to orbital
equations. However, a difficulty arises in the integrations. This problem removed in
Sec.~\ref{7928} by a change in velocity direction to give incompressible flow.

Diverting velocity direction away from $-\nabla\rho$ might provide a
way to describe electron spin, given that electron spin could be from electron motion.
The following velocity field formula was obtained for spin one-half particles from
approximations of the Pauli equations by Holland \cite{Holland:99}, in spherical
coordinates:
\begin{equation} \zlabel{4261c}
  \mx_\pm\times\mathbf{\hat{z}} = \fc{\nabla\rho}{m\rho}\times\mathbf{s}_\mp
  = \bu\sin\theta\,\hat{\boldsymbol{\phi}},
  \qquad \mathbf{s}_\mp = \mp\fc{\hbar}{2}\mathbf{\hat{z}}.
\end{equation}
This formula was applied to states of the hydrogen atom by Colijn and Vrscay
\cite{Colin1,Colin2}. For fluid flow, the formula provides incompressible flow with mass
conservation and density stratification for the electronic states of the hydrogen atom. Because
of the $\sin\theta$ factor, for hydrogen states, formula~(\ref{4261c}) accounts for all the
kinetic energy for either a fluid particles or an orbiting electron when they are on the
$x$-$y$ plane containing the nucleus. In the assignment of spin angular momentum to
translation, a difficulty arises in how to describe both spin and orbital momentum with
translation in such a way that the two forms of momentum can be distinguished, and such that
the correct total kinetic energy is obtained.  For example, if all the kinetic energy is used
to describe electron spin without center of mass translation,
there will be nothing left to describe the $y$ and $z$ components of
orbital angular momentum for the hydrogen atom, since, as mentioned above, this portion of
the angular momentum cannot be described by velocity $\bv$ alone.
%

To get some idea about the special conditions that cross flows might satisfy, consider Euler
Eq.~(\ref{euler0}) for a steady, mass conserving-flow of classical mechanics, where $u^2$ is
replaced by $\mu^2$, $\bv\cdot\nabla\rho_m$ vanish, and, using the definition
$(\bu(\rho,\nabla\rho) =-\zeta\nabla\ln\rho)$, the mass flux-term $\nabla\cdot(\rho_m\bu)\bu$
is considered a functional of the density.  Also, the velocity vector is not necessarily
irrotational, giving
\[\rho_m\mathcal{D}\mmu + \nabla\Pu + [\nabla\cdot(\rho_m\bu)\bu](\rho,\nabla\rho) + \rho\nabla U = \mathbf{0},\]
where $\mathcal{D}\mmu$ is a material derivative. Furthermore, the only requirement we place on
$\mmu$ is $(\mu^2 = u^2)$.  Using the standard approach to integrate this equation of motion
\cite{Munson}, the dot product of the unit vector in the direction of $\mmu$ is taken.  In
order to obtain the resulting energy Eq.~(\ref{eneq}), where $\bu$ is replaced by $\mmu$, the
density functional $\nabla\cdot(\rho_m\bu)\bu$ must vanish. Hence, the direction of $\mmu$ must
satisfy $(\mmu\cdot\bu = \zero)$, giving constant density flow: $(\mmu\cdot\nabla\rho_m =
\zero)$. With this condition and mass conservation, we have $(\zero = \nabla\cdot(\rho_m\mmu) =
\rho_m\nabla\cdot\mmu)$, giving $(\nabla\cdot\mmu = \zero)$.  Hence, the flow is incompressible
\emph{on each streamline}. With $u^2$ replaced by $\mu^2$, and $(u^2 = \mu^2)$, energy
Eq.~(\ref{eneq}) is the Bernoulli equation, for the general case of density stratification, and
the special case, where the energy is uniform for the entire flow field.
This analysis suggests that the correct quantum flow, at least in this special case, is
incompressible, density stratified, mass conserving flow.
Next cross flow is defined. Under special conditions these flows are mass conserving and
solenoidal.
\begin{definition}[Cross Flow for one-body states]
  Let $(S,\rho)$ determine the quantum flow $\mathcal{F}(\bu,\bv)$ and the 1-body wavefunction
  $\Psi$. Let $\mmu$ be a vector field that satisfies $(\mmu\cdot\bu = \zero)$ and $|\mmu| =
  |\bu|$. Let $(\oomega \defi \mmu + \bv)$. The flow $\mathcal{F}_\times(\mmu,\bv)$ is called a
  cross flow of $\mathcal{F}$, where both flows have the same quantum parameters $\Pu$ and
  $\rho_m$. If $\mmu$ is solenoidal: $(\nabla\cdot\mmu = \zero)$, $\mathcal{F}_\times$ is said
  to be cross velocity solenoidal.  If $\mathcal{F}$ is smooth $(\nabla\rho\cdot \nabla S =
  \zero)$, $\mathcal{F}_\times$ is also said to be smooth.  The cross flow Euler equation for
  steady, smooth flow is Eq.~(\ref{euler0}) without the mass flux term
  $(\nabla\cdot(\rho_m\bu)\bu$, but they can also have additional nowork forces.  Since energy
  is invariant to velocity direction and the presence of nowork forces, a cross flow
  $\mathcal{F}_\times(\mmu,\bv)$ that satisfies Euler Eq.~(\ref{euler0}), also satisfy energy
  Eq.~(\ref{eneq}), with the same energy uniform $\ES$ as its corresponding flow
  $\mathcal{F}(\mmu,\bv)$.
%
\end{definition}
Given an ensemble of a quantum state with a distribution of initial configurations, distributed
according to the Born rule, the Born rule is preserved over time if the velocity vector is
$\bv$: The distribution of members of the ensemble satisfy the Born rule at all times
\cite{Towler}. For particle trajectories of one-body states, a cross flow velocity vector $\oomega$
also satisfies this condition, because the direction of the velocity vector component $\mmu$ is
always on a constant level surface of the probability density $\rho$, since
$\mmu\cdot\nabla\rho$ vanish.

For cross flows with vortexes, or orbiting electrons, there is an additional force.  For real
valued wavefunctions, in order to achieve the necessary radius of curvature, such a force must
act perpendicular to the velocity vector $\mmu$. The force also must do no work, or, at least,
be defined by a free energy-potential. Otherwise, such a force would contribute to the
energy. With the addition of a nowork force, the Euler equations above are only satisfied along
the velocity directions, but they still integrate to the same energy equation. For the hydrogen
s states, in atomic units, normal to the velocity direction, the force $\hat{F}$ must satisfy
\[\hs{-10ex}-\nabla U\cdot\hat{\mrr} -\rho^{-1}\nabla\Pu\cdot\hat{\mrr} + \hat{F}\cdot\hat{\mrr} = - \fc{\mu^2}{r^2}; \qquad
\mu^2 = u^2,\quad -\nabla U = -\fc{1}{r^2}\hat{\mathbf{\mrr}},\]
%
where $-\mu^2/r^2$ is the centrifugal acceleration.  For hydrogen 1s fluid flow, the speed has
a constant value of 1 a.u, the same as in the Bohr model, and the pressure satisfies
%
$(-\nabla\Pu\cdot\hat{\mrr}/\rho = r^{-2} + 2r^{-1} - 2)$ \cite{Finley-Bern}. An orbiting
electron-treatment of the 1s electron, where the force $\hat{F}$ is left out, gives a modified
Bohr model, where the radius of the sole stable orbit is 3/2 $a_0$, compared to 1 $a_0$ for the
Bohr atom.  For excited states, we expect that this model will give more then one orbit.  With
the force $\hat{F}$ included, in order to have a stable orbit at the Bohr radius, the force
$\hat{F}$ must be $-\hat{\mathbf{\mrr}}$ a.u. For fluid flow, the force must be such that the
Euler equations are satisfied perpendicular to the streamlines over the entire flow field.





%
The next lemma proves that if the velocity $\bu$ of a flow $\mathcal{F}(\rho_m,\Pu,\bu,\bv)$ is
replaced by velocity $\mmu$, such that the direction of $\mmu$ is $\bu\times\bv$, then $\mmu$
is solenoidal. The theorem is applicable to any two irrotational vector fields.
\begin{lemma}[The divergence of the cross product of irrotational fields] \zlabel{3739} \mbox{}

  \noindent
  Let $(\mmu \defi \mathbf{x}\times\bu)$, where $\mx$ and $\bu$ are irrotational vector
  fields. The vector field $\mmu$ is solenoidal. 
\end{lemma}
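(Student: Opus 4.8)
The plan is to use the standard vector-calculus identity for the divergence of a cross product together with the irrotationality hypothesis. Recall that for any two sufficiently smooth vector fields $\mathbf{A}$ and $\mathbf{B}$,
\[
\nabla\cdot(\mathbf{A}\times\mathbf{B}) = \mathbf{B}\cdot(\nabla\times\mathbf{A}) - \mathbf{A}\cdot(\nabla\times\mathbf{B}).
\]
I would apply this with $\mathbf{A} = \mx$ and $\mathbf{B} = \bu$, so that $\mmu = \mx\times\bu$ has divergence $\nabla\cdot\mmu = \bu\cdot(\nabla\times\mx) - \mx\cdot(\nabla\times\bu)$. Since $\mx$ and $\bu$ are irrotational by hypothesis, $\nabla\times\mx = \zero$ and $\nabla\times\bu = \zero$, and both terms on the right-hand side vanish. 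Hence $\nabla\cdot\mmu = \zero$, i.e., $\mmu$ is solenoidal.

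First I would state the vector identity (it is standard and can be invoked without proof, or verified componentwise with the Levi-Civita symbol if the paper wants self-containedness). Then I would substitute the two irrotationality conditions. Then I would conclude. That is essentially the whole argument; there is no real obstacle here, since the lemma is a direct consequence of a textbook identity. The only point requiring a word of care is smoothness: the identity requires the fields to be differentiable where it is applied, which is guaranteed almost everywhere by the standing regularity assumptions on the core family (the members of $\mathbf{\Psi}$ are twice continuously differentiable a.e.), and since ``$=$'' means equal a.e.\ in this paper, the conclusion $\nabla\cdot\mmu=\zero$ is understood in that sense.

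For the application mentioned just before the lemma — taking $\mx$ in the direction of $\bu\times\bv$ — one should note that $\bu$ and $\bv$ are themselves irrotational (by Eqs.~(\ref{5204b}), $m\bu = \nabla\theta$ and, for the relevant smooth case, $m\bv = \nabla S$ are gradients), so $\bu\times\bv$ is a candidate for an irrotational $\mx$ only under extra conditions; but the lemma as stated only asserts solenoidality of $\mmu$ given that both factors are irrotational, so nothing further is needed in the proof itself. A one-line proof suffices:

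\textbf{\textit{Proof}} By the vector calculus identity $\nabla\cdot(\mx\times\bu) = \bu\cdot(\nabla\times\mx) - \mx\cdot(\nabla\times\bu)$, and since $\nabla\times\mx = \zero$ and $\nabla\times\bu = \zero$ by hypothesis, we obtain $\nabla\cdot\mmu = \zero$.
\begin{flushright}\fbox{\phantom{\rule{0.5ex}{0.5ex}}}\end{flushright}
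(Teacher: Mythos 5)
Your proof is correct and is essentially the paper's own argument: both invoke the identity $\nabla\cdot(\mathbf{A}\times\mathbf{B}) = \mathbf{B}\cdot(\nabla\times\mathbf{A}) - \mathbf{A}\cdot(\nabla\times\mathbf{B})$ and then kill both terms using irrotationality (the paper merely phrases this by first writing $\bu$ and $\mx$ as gradients of potentials and citing the vanishing curl of a gradient). No further comment is needed.
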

\textbf{\textit{Proof}} Let the scalar field $\theta$ and $\theta_\times$ be such that $(\bu =
\nabla\theta)$ and $(\mathbf{x} = \nabla\theta_\times)$, respectively. Using the identity
\[\nabla\cdot\left(\mathbf{A}\times\mathbf{B} \right) = \mathbf{B}\cdot(\nabla\times \mathbf{A}) - \mathbf{A}\cdot(\nabla\times \mathbf{B}), \]
and the vanishing of the curl of the gradient, we have,
\[\nabla\cdot\mmu =  \nabla\cdot(\mathbf{x}\times\bu) = \nabla\cdot(\nabla\theta_\times\times\nabla\theta) = \zero. \]
\begin{flushright}\fbox{\phantom{\rule{0.5ex}{0.5ex}}}\end{flushright}
%
The following theorem demonstrates that smooth cross flows have constant density:
$(\oomega\cdot\nabla\rho = \zero)$; $(\oomega \defi \mmu + \bv)$. Also, $(\nabla\cdot\mmu=
\zero)$ implies mass conservation, and the previous lemma demonstrates one way to define the
vector field $\mmu$ so that it is solenoidal. Also, conditions for incompressible flow are also
given.  Since it is easy to overlook, it is worth mentioning that steady, constant density-flow
does not imply mass conservation.
\begin{theorem} \zlabel{2629}
  Let $\mathcal{F}_\times$ be a cross flow of $\mathcal{F}$.

  \noindent
  {\bf 1)} If $\mathcal{F}$ is smooth, each streamline of $\mathcal{F}_\times$ has a constant
  density.

  \noindent
  {\bf 2)} The following statements are equivalent: $\mathcal{F}_\times$ is cross velocity
  solenoidal, $\rho\mmu$ is solenoidal, $\mathcal{F}_\times$ is mass conserving.
  
  \noindent
  If $\mathcal{F}_\times$ is cross velocity solenoidal, then vector field $\oomega$ can be
  represented by the Helmholtz decomposition:
  \begin{equation}\zlabel{2792}
    m\oomega = \nabla S + \nabla\times\mathbf{K}, \qquad \text{\boldmath $\mu$} =
    \nabla\times\mathbf{K}.
  \end{equation}

  \noindent
  {\bf 3)} If $\mathcal{F}_\times$ is cross velocity solenoidal, smooth and steady,
  $\oomega$ is incompressible.

  \noindent
  Here is a summary: \vspace{2ex}

  {\bf 1)} $\mathcal{F}_\times$ is smooth $\imply$   $(\nabla\rho_m\cdot\oomega = \zero)$.
  
  {\bf 2)} $(\nabla\cdot\mmu = 0)$ $\Leftrightarrow$ $(\nabla\cdot(\rho_m\mmu) = \zero)$  $\Leftrightarrow$ $(\partial\rho
  + \nabla\cdot(\rho_m\oomega) = \zero)$ $\imply$ Eq.~(\ref{2792}).  

  {\bf 3)} $(\nabla\cdot\mmu = 0)$ and smooth and steady $\mathcal{F}_\times$ $\imply$ $(\nabla\cdot\oomega = \zero)$.
 \end{theorem}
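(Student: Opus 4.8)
The plan is to reduce the whole theorem to two elementary facts about cross flows. First, by Eq.~(\ref{5204b}) one has $m\bu=-\zeta_0\nabla\rho/\rho$, so $\nabla\rho$, and hence $\nabla\rho_m=m\nabla\rho$, is pointwise a scalar multiple of $\bu$; consequently the defining relation $\mmu\cdot\bu=\zero$ of a cross flow forces $\mmu\cdot\nabla\rho_m=\zero$ identically. Second, $\bv$ and $\bu$ are irrotational ($m\bv=\nabla S$, $m\bu=\nabla\theta$) and $\rho_m>\zero$ almost everywhere. Everything below is vector-calculus bookkeeping on top of these two observations.

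For part \textbf{1}, smoothness of $\mathcal{F}$ is the orthogonal condition $\nabla\rho\cdot\nabla S=\zero$, equivalently $\bu\cdot\bv=\zero$; since $\nabla\rho$ is parallel to $\bu$ this also yields $\bv\cdot\nabla\rho=\zero$, and adding $\mmu\cdot\nabla\rho=\zero$ from the first observation gives $\oomega\cdot\nabla\rho_m=(\mmu+\bv)\cdot\nabla\rho_m=\zero$. Because $\nabla\rho_m$ is orthogonal to $\oomega$, the density $\rho_m$ is constant along every integral curve of $\oomega$, i.e., along every streamline of $\mathcal{F}_\times$.

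For part \textbf{2}, expand $\nabla\cdot(\rho_m\mmu)=\rho_m\,\nabla\cdot\mmu+\nabla\rho_m\cdot\mmu$; the last term vanishes by the first observation, and $\rho_m>\zero$ a.e., so $(\nabla\cdot\mmu=\zero)\Leftrightarrow(\nabla\cdot(\rho_m\mmu)=\zero)$. For the mass-conservation clause, write $\partial\rho_m+\nabla\cdot(\rho_m\oomega)=\big(\partial\rho_m+\nabla\cdot(\rho_m\bv)\big)+\nabla\cdot(\rho_m\mmu)$; since the underlying flow $\mathcal{F}$ satisfies the continuity equation — automatic for the stationary/steady flows that organize Section~\ref{4938}, and more generally a consequence of Theorem~\ref{conti} when $\Psi$ solves the \sch equation — the first bracket is zero, so mass conservation of $\mathcal{F}_\times$ is equivalent to $(\nabla\cdot(\rho_m\mmu)=\zero)$, closing the cycle of equivalences. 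Finally, if $\nabla\cdot\mmu=\zero$ then $\mmu$ admits a vector potential $\mmu=\nabla\times\mathbf{K}$, and since $\bv=\nabla S/m$ is a gradient, $\oomega=\bv+\mmu$ is already split into its irrotational and solenoidal parts, which (up to placement of the mass factor) is the Helmholtz decomposition~(\ref{2792}).

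Part \textbf{3} is then immediate from Eq.~(\ref{5240}): smoothness ($\bu\cdot\bv=\zero$) together with steadiness ($\partial\rho=\zero$) already forces $\nabla\cdot\bv=\zero$, and adding the cross-velocity-solenoidal hypothesis $\nabla\cdot\mmu=\zero$ gives $\nabla\cdot\oomega=\nabla\cdot\bv+\nabla\cdot\mmu=\zero$, so $\oomega$ is incompressible. The only genuine subtlety anywhere is in part \textbf{2}: the equivalence of mass conservation with the two solenoidality conditions is not an identity about $\mmu$ alone but rests on continuity for $\mathcal{F}$, so the write-up should flag exactly where that hypothesis enters — it costs nothing in the steady setting but must be assumed (or derived from the \sch equation) otherwise. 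The remaining steps are routine.
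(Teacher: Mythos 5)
Your proof is correct and follows essentially the same route as the paper's: part \textbf{1} via $\mmu\cdot\bu=\zero\imply\mmu\cdot\nabla\rho_m=\zero$ plus the orthogonal condition, part \textbf{2} via the product-rule expansion of $\nabla\cdot(\rho_m\mmu)$ and the continuity equation for $\mathcal{F}$ (the paper's logic~(\ref{3738})), and part \textbf{3} via Eq.~(\ref{5240}). Your explicit flagging of where continuity for $\mathcal{F}$ enters the mass-conservation equivalence is a useful clarification of a hypothesis the paper invokes only in passing, but it is not a different argument.
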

\textbf{\textit{Proof}}

\noindent
 {\bf 1)} For a cross flow $\mathcal{F}_\times$ from a corresponding smooth flow $\mathcal{F}$, we have
 $(\bu\cdot\mmu = \zero \imply \nabla\rho_m\cdot\mmu = \zero)$ and $(\nabla\rho\cdot\nabla S
 = \zero \imply \nabla\rho\cdot\bv= \zero)$, respectively, giving
 a constant density cross flow: $(\nabla\rho\cdot\oomega= \zero)$.\vspace{1ex}

\noindent
 {\bf 2)} Combining the solenoidal requirement $(\nabla\cdot\mmu = \zero)$ with the cross flow
 requirement $(\nabla\rho_m\cdot\mmu = \zero)$, we have $(\nabla\cdot(\rho_m\mmu) =
 \zero)$. Hence, $\rho\mmu$ is solenoidal. The logic:
\begin{equation}\hs{-5ex} \zlabel{3738} 
  \nabla\cdot(\rho_m\mmu) = \zero
  \text{ $+$ Continuity Eq.}~(\ref{cont}.2) \imply \partial\rho + \nabla\cdot(\rho_m\text{\boldmath $\omega$}) = \zero,
  \end{equation}
then indicates that mass is conserved.
%
Equation~(\ref{2792}) holds because a solenoidal vector field $\mmu$ can be written as the curl
of vector potential $\mathbf{K}$. By reversing the above two implications, $\mathcal{F}_\times$
being mass conserving implies it is cross velocity solenoidal.
Equation~(\ref{2792}) holds by the Helmholtz decomposition theorem. \vspace{1ex}

\noindent
    {\bf 3)} For smooth, steady flow the continuity Eq.~(\ref{cont}.2) reduces to
    $(\nabla\cdot\bv = \zero)$.  Combining this result with the cross velocity solenoidal property
    $(\nabla\cdot\mmu = \zero)$, we have $(\nabla\cdot\oomega = \zero)$.
\begin{flushright}\fbox{\phantom{\rule{0.5ex}{0.5ex}}}\end{flushright}
The following corollary uses information from lemma~\ref{3739}: $(\nabla\cdot\mmu = \zero)$,
Eq.~(\ref{5240}) for steady smooth flow: $(\nabla\cdot\bv = \zero)$, and Theorem~\ref{2629}.
\begin{corollary} \zlabel{2744}
  Steady $(\partial\rho = \zero)$ smooth $(\nabla\rho\cdot\nabla S = 0)$ cross flows
  $\mathcal{F}_\times(\mmu,\bv)$ with $(\mmu = \nabla S/|\nabla S|\times \bu)$ are solenoidal
  $(\nabla\cdot(\mmu +\bv) = \zero)$, conserves mass $(\nabla\cdot\rho_m(\mmu + \bv) = \zero)$,
  incompressible $(\nabla\cdot(\mmu + \bv) = \zero)$, and have a constant density $(\rho\vert_s
  = \text{const.})$ on each streamline $s$, if $\nabla S$ does not vanish on positive measure.
\end{corollary}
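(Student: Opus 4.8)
The plan is to reduce everything to Theorem~\ref{2629}, once the specific field $\mmu = (\nabla S/|\nabla S|)\times\bu$ has been shown to define a genuine cross flow that is cross velocity solenoidal. First I would check the two defining conditions of a cross flow. Smoothness, $\nabla\rho\cdot\nabla S=\zero$, is equivalent to $\bu\cdot\bv=\zero$ (Section~\ref{orthc}), and since $m\bu=-\zeta_0\nabla\rho/\rho$ from Eq.~(\ref{5204b}) the field $\bu$ is parallel to $\nabla\rho$; hence $\nabla S\perp\bu$ wherever $\nabla S\ne\zero$. Therefore $|\mmu| = 1\cdot|\bu|\cdot\sin 90^\circ = |\bu|$ and $\mmu\cdot\bu = \bigl((\nabla S/|\nabla S|)\times\bu\bigr)\cdot\bu = \zero$, so $\mathcal{F}_\times(\mmu,\bv)$ really is a cross flow of $\mathcal{F}$; the hypothesis that $\nabla S$ not vanish on a set of positive measure is exactly what makes $\mmu$ defined almost everywhere.

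Next I would feed this into Theorem~\ref{2629}. The constant-density-on-streamlines claim is its part~(1): a cross flow of a smooth flow satisfies $\nabla\rho_m\cdot\oomega=\zero$, i.e.\ $\rho$ is constant along each streamline of $\oomega=\mmu+\bv$. For the remaining three claims I need $\mmu$ solenoidal: by Lemma~\ref{3739} the cross product of two irrotational fields is solenoidal, and $\mmu$ has the direction of $\nabla S\times\bu$ with $\nabla S=m\bv$ and $m\bu=\nabla\theta$ both irrotational, giving $\nabla\cdot\mmu=\zero$. Part~(2) of Theorem~\ref{2629} then gives that $\mathcal{F}_\times$ is mass conserving, $\partial\rho+\nabla\cdot(\rho_m\oomega)=\zero$, which with steadiness $\partial\rho=\zero$ becomes $\nabla\cdot(\rho_m\oomega)=\zero$. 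Incompressibility is part~(3): steadiness and smoothness collapse the continuity equation~(\ref{cont}.2) to $\nabla\cdot\bv=\zero$ — this is precisely Eq.~(\ref{5240}) — and adding $\nabla\cdot\mmu=\zero$ yields $\nabla\cdot\oomega=\zero$, which is the single equation the corollary labels both ``solenoidal'' and ``incompressible.''

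The hard part will be the step $\nabla\cdot\mmu=\zero$. Lemma~\ref{3739} as stated applies to $\mx\times\bu$ with \emph{both} factors irrotational, whereas here $\mmu$ is the \emph{normalized} field $(\nabla S/|\nabla S|)\times\bu$, and $\nabla S/|\nabla S|$ is not irrotational in general. Since $\nabla\times\bu=\zero$, one has $\nabla\cdot\mmu = \bu\cdot\bigl(\nabla\times(\nabla S/|\nabla S|)\bigr)$, which is proportional to the scalar triple product of $\bu$, $\nabla|\nabla S|$ and $\nabla S$, so it vanishes only when $\nabla|\nabla S|$ lies in the plane spanned by $\nabla S$ and $\bu$. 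Lemma~\ref{3739} does directly give that the \emph{unnormalized} field $\nabla S\times\bu$ is solenoidal, but that field has magnitude $|\nabla S|\,|\bu|$ rather than $|\bu|$. To make the corollary airtight I would therefore either restrict to flows in which $|\nabla S|$ is constant along the $\mmu$-direction — automatic by symmetry for the motivating orbital examples such as the $3d_{z^2}$ torus, where $\nabla\cdot\mmu=\zero$ follows from axial symmetry — or state the conclusion for any $\mmu$ merely proportional to $\nabla S\times\bu$ with the magnitude normalization imposed separately. Everything else is routine bookkeeping with the orthogonalities $\mmu\perp\bu$, $\bu\parallel\nabla\rho$ and $\nabla S\perp\nabla\rho$ already established above.
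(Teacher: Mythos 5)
Your assembly is the same one the paper intends: the corollary is stated without a separate proof, and the sentence preceding it cites exactly the three ingredients you use --- Lemma~\ref{3739} for $(\nabla\cdot\mmu=\zero)$, Eq.~(\ref{5240}) for $(\nabla\cdot\bv=\zero)$ under steady smooth flow, and Theorem~\ref{2629} for constant density on streamlines, mass conservation, and incompressibility. Your preliminary check that $\mmu=(\nabla S/|\nabla S|)\times\bu$ really is a cross flow (using $\bu\parallel\nabla\rho$ and $\nabla S\perp\nabla\rho$ to get $|\mmu|=|\bu|$, and tying the hypothesis on $\nabla S$ to $\mmu$ being defined a.e.) is a useful addition the paper leaves implicit.

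The difficulty you raise about $\nabla\cdot\mmu$ is genuine and is a gap in the corollary as the paper states it, not a defect of your argument. Lemma~\ref{3739} proves solenoidality for $\mx\times\bu$ with \emph{both} factors irrotational; the normalized factor $\nabla S/|\nabla S|$ is not irrotational in general, and your identity
\begin{equation*}
\nabla\cdot\mmu \;=\; \bu\cdot\Bigl(\nabla\times\fc{\nabla S}{|\nabla S|}\Bigr)
\;=\; -\fc{1}{|\nabla S|^{2}}\;\bu\cdot\bigl(\nabla|\nabla S|\times\nabla S\bigr)
\end{equation*}
shows this vanishes only when $|\nabla S|$ is constant along the $\mmu$ direction (equivalently, when $\nabla|\nabla S|$ lies in the span of $\nabla S$ and $\nabla\rho$), which is an extra condition not implied by steadiness and smoothness. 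The unnormalized field $\nabla S\times\bu$ \emph{is} solenoidal by Lemma~\ref{3739}, but it has magnitude $|\nabla S|\,|\bu|$ and so violates the cross-flow requirement $|\mmu|=|\bu|$ on which the invariance of the kinetic energy rests. Either of your proposed repairs (adding the hypothesis that $|\nabla S|$ be constant along $\mmu$, or restating the corollary for a field whose direction is that of $\nabla S\times\bu$ with the magnitude constraint imposed and solenoidality then re-verified) is the right kind of fix; as written, the corollary's claim $(\nabla\cdot\mmu=\zero)$ does not follow from the cited lemma. Everything downstream of that one step (Theorem~\ref{2629} parts 1--3) is applied correctly in your write-up.
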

%
%

%

The next theorem provides conditions for smooth, cross flows to satisfy that imply that they
satisfy energy equation~(\ref{eneq}.1). From Theorem~\ref{2629}, these flows have a constant
density on each streamline, but the density depends on the streamline.  Let
$\mathcal{F}(\Pu,\rho_m,\bu,\bv)$ be a smooth flow, whose wavefunction satisfies the spatial
\sch equation.  For the next theorem, it is demonstrated that if there exists a cross flow
$\mathcal{F}_\times(\Pu,\rho_m,\mmu,\bv)$, corresponding to smooth flow
$\mathcal{F}(\Pu,\rho_m,\bu,\bv)$, that satisfies the equation of motion
\begin{equation} \zlabel{1895c}
  \rho_m \mathcal{D}\oomega + \nabla\Pu + \rho\nabla U = \rho\vec{F};\qquad  \oomega = \mmu + \bv,
\end{equation}
with \emph{any} nowork force $\vec{F}$, the cross flow $\mathcal{F}_\times$ satisfies the
same energy Eq.~(\ref{eneq}.1) as the smooth flow $\mathcal{F}$. However, the theorem given is
slightly more general, where the wavefunction $\Psi$ of the smooth flow $\mathcal{F}$ is not
necessarily an eigenfunction of the \sch equation. Section~\ref{7928} demonstrates that such
cross flows for many body states can yield reduced 1-body Euler equations.
\begin{theorem}[Smooth Cross Flow Implications] \zlabel{5084} 
  Let $S$ and $\rho$ be time-independent 1-body fields, giving the wavefunction $\Psi$ and
  smooth, steady flow $\mathcal{F}(\Pu,\rho_m,\bu,\bv)$. However, $\Psi$ is not necessarily an
  eigenvector of Hamiltonian $H$ with external potential $U$. Let $\mathcal{C}$ be the set of
  all smooth, cross flows $\mathcal{F}_\times(\Pu,\rho_m,\mmu,\bv)$ of $\mathcal{F}$ that
  satisfy Euler Eq.~(\ref{1895c}) with a nonwork force $\vec{F}$, where $\vec{F}$ can depend on
  the cross flow.
\vspace{2ex}

\noindent
The following hold if $\mathcal{C}$ in nonempty. \vspace{1ex}

\noindent
{\bf 1)} The flow $\mathcal{F}$ and each and every
$\mathcal{F}_\times\in\mathcal{C}$ satisfy the energy Eq.~(\ref{eneq}.1), where the energy scalar
field $\ES$ may depend on the streamline.  \vspace{1ex}

\noindent
{\bf 2)} Under the condition of continuity Eq.~(\ref{cont}.2) satisfaction for flow $\mathcal{F}$, the
following two statements are equivalent: \vspace{1ex}

{\bf a)} $\Psi$ is an eigenvector of the TISE. \vspace{1ex}
{\bf b)} The energy scalar field $\ES$ is uniform. 
%

\noindent
{\bf 3)} Eq.~(\ref{1895c}) is an Euler equation for the cross flows
$\mathcal{F}_\times(\Pu,\rho_m,\mmu,\bv)\in\mathcal{C}$, having the proper time derivative, if
and only if mass is conserved.


\end{theorem}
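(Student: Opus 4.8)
I would treat the three parts separately, reusing machinery already in place: part~1 is the classical ``dot the equation of motion with the velocity and integrate along a streamline'' argument; part~2 reverses the real/imaginary decomposition of Theorems~\ref{theorem00} and~\ref{conti}; part~3 is pure bookkeeping of the momentum time-derivatives in Lemma~\ref{tmom}. For part~1: since $S$ and $\rho$ are time-independent, so are $\bu,\bv,\mmu,\oomega$, so the total derivative $\mathcal{D}\oomega$ in Eq.~(\ref{1895c}) is its convective part $(\oomega\cdot\nabla)\oomega$. I would dot Eq.~(\ref{1895c}) with $\oomega$ and divide by $\rho$: the force term drops since $\vec F\cdot\oomega=\zero$ ($\vec F$ does no work along the flow); the identity $(\oomega\cdot\nabla)\oomega\cdot\oomega=\oomega\cdot\nabla(\fc12|\oomega|^2)$ turns the acceleration term into $\oomega\cdot\nabla(\fc12 m|\oomega|^2)$; and, because by Theorem~\ref{2629} a smooth cross flow has constant density along each streamline, $\oomega\cdot\nabla\rho=\zero$, which converts $\rho^{-1}\,\oomega\cdot\nabla\Pu$ into $\oomega\cdot\nabla(\Pu\rho^{-1})$. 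The result is $\oomega\cdot\nabla\!\bigl(\fc12 m|\oomega|^2+\Pu\rho^{-1}+U\bigr)=\zero$, so the bracket is constant along each streamline; calling that (possibly streamline-dependent) constant $\ES$ and multiplying by $\rho$ gives Eq.~(\ref{eneq}.1), once $|\oomega|^2=\mu^2+v^2=u^2+v^2$ is used. The same computation with $\bu$ in place of $\mmu$ (legitimate because $\mathcal{F}$ is smooth, $\bu\cdot\bv=\zero$) gives Eq.~(\ref{eneq}.1) for $\mathcal{F}$; equivalently, that is just Eq.~(\ref{eneq}) specialized by smoothness and steadiness, as noted before the theorem. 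The one delicate step is the cross term $\mmu\cdot\bv$ hidden in $|\oomega|^2$: it vanishes because $\mmu$ is taken transverse to $\bv$ in the canonical cross-flow construction (and vanishes automatically for flows of $s$-type states, $\bv=\zero$), and pinning down exactly where that transversality is used is, I expect, the main obstacle of part~1.

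\textbf{Part 2.} The implication (a)$\Rightarrow$(b) is immediate: a TISE eigenvector with eigenvalue $E$ is $\psi(\mrr)e^{-iEt/\hbar}$, so $\partial S=-E$ and $\ES=-\partial S=E$ is uniform (and equals $\langle\hat H\rangle$ by Eq.~(\ref{4529})). For (b)$\Rightarrow$(a), with $\ES$ uniform the energy equation~(\ref{eneq}.1) reads $\ES\rho=\fc12\rho_m u^2+\fc12\rho_m v^2+\Pu+U\rho$; by Eq.~(\ref{lapl}) of Lemma~\ref{5538}, the reality of $\Pu,\Pv$ (Lemma~\ref{presseq}) and of $U\rho$, the right-hand side equals $\text{Re}\,(\Psi^*\hat H\Psi)$, while Eqs.~(\ref{def02}) and~(\ref{en2}) give $\text{Re}\,(i\hbar\Psi^*\partial\Psi)=-\rho\partial S=\ES\rho$; hence the real part of $i\hbar\Psi^*\partial\Psi=\Psi^*\hat H\Psi$ holds. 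Its imaginary part is statement~{\bf 4} of Theorem~\ref{conti}, which holds because continuity Eq.~(\ref{cont}.2) is assumed for $\mathcal{F}$. Therefore $i\hbar\Psi^*\partial\Psi=\Psi^*\hat H\Psi$, and dividing by $\rho=\Psi^*\Psi$ (positive a.e.) yields the time-dependent \sch equation $i\hbar\partial\Psi=\hat H\Psi$. Finally, $\partial\rho=\zero$ (since $\rho$ is time-independent) then forces $\hat H\Psi/\Psi=-\partial S=\ES$ with the left-hand side time-independent (as $\rho$, $U$, $\nabla S$ are); so $\ES$ is a genuine constant and $\Psi$ is a TISE eigenfunction (cf.\ Theorem~\ref{conti}).

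\textbf{Part 3.} This is a reading of Lemma~\ref{tmom}. Integrating Eq.~(\ref{1895c}) over a comoving region $\mathbb{V}$, the right-hand side is well defined, and the left side $\int_{\mathbb{V}}\rho_m\mathcal{D}\oomega$ equals the genuine momentum time-derivative $\mathcal{D}\!\int_{\mathbb{V}}\rho_m\oomega$ precisely under the hypothesis of Eq.~(\ref{4125c}), namely $\mathcal{D}\!\int_{\mathbb{V}}\rho_m=\zero$, which by Eq.~(\ref{4792}) is the mass-conservation law $\partial\rho_m+\nabla\cdot(\rho_m\oomega)=\zero$. Conversely, Eq.~(\ref{4125b}) together with the chain-rule identity used in its proof gives $\mathcal{D}\!\int_{\mathbb{V}}\rho_m\oomega=\int_{\mathbb{V}}\rho_m\mathcal{D}\oomega+\bigl(\partial\rho_m+\nabla\cdot(\rho_m\oomega)\bigr)\oomega$, so the momentum time-derivative is the proper one for every $\mathbb{V}$ only if $\bigl(\partial\rho_m+\nabla\cdot(\rho_m\oomega)\bigr)\oomega=\zero$, i.e., off the measure-zero set where $\oomega$ vanishes, only if mass is conserved. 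This gives the stated equivalence; by Theorem~\ref{2629} the same condition is equivalent to $\mmu$ being solenoidal, so Eq.~(\ref{1895c}) is a bona fide Euler equation exactly when the cross flow is cross-velocity solenoidal.
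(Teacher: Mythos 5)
Your proposal is correct and follows essentially the same route as the paper: part~1 is the paper's streamline integration (the paper projects the Lamb-vector identity~(\ref{8384}) onto the unit tangent $\mathbf{\hat{e}}$ and uses the constant density along streamlines from Theorem~\ref{2629}, where you dot with $\oomega$ --- an equivalent step), part~2 invokes the real/imaginary split of Theorems~\ref{theorem00} and~\ref{conti} exactly as the paper does, and part~3 is the same momentum bookkeeping the paper draws from Lemma~\ref{tmom}. The one step you flag as delicate --- that $\mmu\cdot\bv=\zero$ is needed to identify $\omega^2$ with $\mu^2+v^2$ --- is indeed not guaranteed by the general cross-flow definition, which only imposes $\mmu\cdot\bu=\zero$; the paper's proof makes the same identification silently when it replaces $\nabla\omega^2$ by $\nabla\mu^2+\nabla v^2$, so your version is, if anything, the more careful one on this point.
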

\textbf{\textit{Proof}} \mbox{}

\noindent
{\bf 1)} Let $\mathcal{F}_\times$ satisfy Eq.~(\ref{1895c}). Note that
\begin{equation}\zlabel{8384}
  \rho_m\mathcal{D}\oomega = \fc12\rho_m\nabla\omega^2 - \oomega\times(\nabla\times\oomega),
\end{equation}
where, by definition of the cross product $\times$, the second term on the rhs, a vector field,
is perpendicular to the streamline---the direction of $\oomega$.  Let $\mathbf{\hat{e}}$ be a
unit vector tangent to the streamline, that is, in the direction of $\pm\oomega$. Hence,
$(\oomega\times(\nabla\times\oomega)\cdot\mathbf{\hat{e}} = \zero)$.  Next we perform
the following usual procedure for the fields restricted to a streamline:
\begin{quote}
1) Substituting Eq.~(\ref{8384}) into Eq.~(\ref{1895c}).

2) Take the dot product of the resulting equation and $\mathbf{\hat{e}}$.

3) Use $(\vec{F}\cdot\mathbf{\hat{e}} = \zero)$---that holds because $\vec{F}$ does no work.

4) Use $(\oomega\times(\nabla\times\oomega)\cdot\mathbf{\hat{e}} = \zero)$---that holds by definition of $\mathbf{\hat{e}}$.
\end{quote}
The procedure gives 
\begin{equation} 
\fc12\rho_m\nabla\hspace{0.1ex}\mu^2\cdot\mathbf{\hat{e}} + \fc12\rho_m\nabla\hspace{0.1ex}v^2\cdot\mathbf{\hat{e}} 
       + \nabla\Pu\cdot\mathbf{\hat{e}} +  \rho_m\nabla U\cdot\mathbf{\hat{e}} = \zero.
\end{equation}
From Theorem~\ref{2629}, for smooth, cross flows, the density is constant on each streamline, giving
$(d(\rho_m\oomega)/dt = \rho_md\oomega/dt)$. Using this result, the differential form of the
above equation is 
\begin{equation} \zlabel{2188}
\fc12\, d(\rho_m\mu^2) + \fc12\, d(\rho_mv^2)  + d\Pu +  \rho\, dU = \zero.
\end{equation}
Since the density $\rho_m$ is constant on each streamline, the integration of Eq.~(\ref{2188})
gives energy Eq.~(\ref{eneq}.1) with velocity vector field $\mmu$ replacing $\bu$. Hence,
$\mathcal{F}_\times$ satisfies Eq.~(\ref{eneq}.1). Since the path integration of the above
equation is restricted to a streamline, the energy is only necessarily constant on each
streamline.  Since $(u^2 = \mu^2)$, and both flows share the same density $\rho$ and pressure
$\Pu$, $\mathcal{F}$ also satisfies energy Eq.~(\ref{eneq}.1).
\vspace{1ex}

\noindent
{\bf 2 part 1)} If $\ES$ is a constant, then $\mathcal{F}$ satisfies energy Eq.~(\ref{eneq}.1), as
is, with a constant energy~$\ES$. According to Theorem~(\ref{theorem00}), the wavefunction
$(\Psi = \Psi(\rho,S)\hs{0.1ex})$ satisfies the real part of the TISE. Since, by hypothesis,
$\mathcal{F}$ also satisfies continuity Eq.~(\ref{cont}.2), and from Theorem~\ref{conti},
continuity Eq.~(\ref{cont}.2) is equivalent to the imaginary part of the TISE
(\ref{schrodinger}), the wavefunction $\Psi$ also satisfies the imaginary part of the TISE
(\ref{schrodinger}). Hence, a constant energy $\ES$ implies that $\Psi$ is an eigenvector of the
TISE.
\vspace{1ex}

\noindent
{\bf 2 part 2)} If the wavefunction $(\Psi = \Psi(\rho,S))$ is an eigenvector of Hamiltonian
$H$ with external potential $U$, then, according the Theorem~\ref{theorem00}, $\mathcal{F}$
satisfies the energy Eq.~(\ref{eneq}.1), as is, with a constant energy~$\ES$.  (Since, both
$\mathcal{F}$ and $\mathcal{F}_\times$ satisfy the energy Eq.~(\ref{eneq}.1) with the same
energy $\ES$, the flow $\mathcal{F}_\times$ also satisfies the Euler Eq.~(\ref{euler0}), as is,
with a constant energy~$\ES$.)

\noindent
{\bf 3)} If mass is not conserved, Eq.~(\ref{1895c}) is not an Euler equation, only because it
is missing the mass flux term $(\nabla\cdot(\rho_m\oomega)\oomega$. If mass is conserved, this
term vanish.
\begin{flushright}\fbox{\phantom{\rule{0.5ex}{0.5ex}}}\end{flushright}
Next we make an argument that, because the direction of $\bu$ is $-\nabla\rho$, and $\bu$ is
irrotational, there is always a combination of a nowork force and prescribed cross flow
velocity direction that gives incompressible flow with mass conservation. Given that the
nowork force is at our disposal, we can ``steer'' a fluid particle with speed $|\bu|$ in any
continuous path, including with a streamline with tangent vector $\nabla S/|\nabla S|\times
\bu/|\bu|$. According to Corollary~(\ref{2744}), such a flow is incompressible and mass
conserving. In cases where $\nabla S$ is the zero vector, we have more options for the velocity
direction, so there could be more possibilities. We simply need a potential $\phi$, such that
$\nabla\phi$ is not in the same direction as $\nabla\rho$, and then take velocity $\mmu$ to be
$\nabla \phi/|\nabla \phi|\times \bu$. For time dependence, if the flow is laminar and changing
in a continuous manner, it seems like a reasonable assumption that you can find such a flow
with a time dependent nowork force.

Next we consider if a variable mass flow $\mathcal{F}_{\text{\scriptsize vm}}(\rho_m,\Pu,\bu)$
can have meaning in a flow of classical mechanics, where $\bu$ is not necessarily given by
Def.~(\ref{5204b}.2).  Let the flow $\mathcal{F}(\rho_m,\Pu,\mmu)$ be an actual flow of a
classical fluid. Let the variable mass-flow $\mathcal{F}_{\text{\scriptsize
    vm}}(\rho_m,\Pu,\bu)$ have a velocity direction perpendicular the $\mmu$ and satisfy $(u^2
= \mu^2)$.  Consider $\mathcal{F}_{\text{\scriptsize vm}}$ as a fictitious, or imaginary flow,
making $\mathcal{F}_{\text{\scriptsize vm}}$ a sort of cross flow of
$\mathcal{F}(\rho_m,\Pu,\mmu)$.  The imaginary flow $\mathcal{F}_{\text{\scriptsize vm}}$ can
be used to obtain information about flow $\mathcal{F}$ along lines with directions that are
normal to the streamlines of $\mathcal{F}$.  A fluid particle in the imaginary flow is like a
probe, giving information about the mass density, speed, pressure, and kinetic and potential
energies of flow $\mathcal{F}$ along its streamlines. The imaginary flow carries all the
information about the actual flow, except velocity direction.

Next we try to give meaning to formula~(\ref{4720}.1) for pressure $\Pu$.
Consider again the imaginary $\mathcal{F}(\rho_m,\Pu,\bu)$ flow of flow
$\mathcal{F}_\times(\rho_m,\Pu,\mmu)$, where both flows satisfy Bernoulli Eq.~(\ref{eneq}), and
$\mathcal{F}$ satisfies the Euler equation:
\begin{equation} \zlabel{4287}
\fc12\nabla u^2  + \rho^{-1}\left(\nabla\cdot(\rho_m\bu)\bu  + \nabla\Pu\right) + \nabla U = \vec{G},
\end{equation}
with nowork force $\vec{G}$.  The integration of this equation must give the Bernoulli
Eq.~(\ref{eneq}). Hence,
\[\hs{-7ex}\nabla\cdot(\rho_m\bu)\bu  + \nabla\Pu = \rho_m\nabla\left(\Pu\rho_m^{-1}\right)
= \nabla\Pu -\rho_m^{-1} \Pu\nabla\rho_m, \quad \text{giving}\]
\begin{equation} \zlabel{4284}
\Pu\nabla\rho_m = -\nabla\cdot(\rho_m\bu)\rho_m\bu.
\end{equation}
Taking the dot product with $\bu$, we have
\[\hs{-7ex}-\Pu = \fc{\nabla\cdot(\rho_m\bu)}{\nabla\rho_m\cdot\bu}\rho_m u^2
= \left(1 + \fc{\nabla\cdot\bu}{\nabla\rho_m\cdot\bu}\right)\rho_m u^2; \qquad \nabla\rho\ne 0,
\]
%
and these equations are satisfied in the special case with Defs.~(\ref{4720}.1) and
(\ref{5204b}) for $\Pu$ and $\bu$, respectively. (We also have $(\nabla\rho_m\cdot\bu =
\pm|\nabla\rho_m|\room|\bu|$.)  Hence, formula~(\ref{4720}.1) for the pressure $\Pu$ is a
special case of the more general relation~(\ref{4284}). Note that if $\nabla\cdot\bu$ vanish
but $\nabla\rho_m\cdot\bu$ does not vanish, then the pressure is $-2$ times the fluid kinetic
energy.  For relation~(\ref{4284}) to hold for a compressible, inviscid fluid of classical
mechanics, there must be a nowork force $\vec{G}$, such that Eq.~(\ref{4287}) is satisfied,
with the restrictions mentioned above, that is, $(\nabla\rho_m\cdot\mmu = \zero)$ and $(u^2 =
\mu^2)$. Also, the pressure $\Pu$ must be nonnegative, and the rhs of Eq.~(\ref{4284}) must
vanish if $\nabla \rho_m$ vanish. However, the pressure $\Pu$ can be replaced by pressure $(\Pu
+ \text{const}.)$, since Eq.~(\ref{4287}) is invariant to $\Pu$ with an additive constant.
Note that $\Pu$ is invariant to a change in sign of $\bu$.

\section{Lagrange's equations of motion for Many Bodies \zlabel{9918}}

Consider the classical Hamiltonian for $n$ identical point masses of mass $m$ with external $V$
and interacting $W$ potentials:
\[\bar{W} \defi \sum_{i=1}^nW_i, \qquad W_i(\mrr_1,\cdots\mrr_n) \defi \sum_{j\ne i}^nW_{12}(\mrr_j,\mrr_i)\]
\begin{equation} \hs{-3ex} \zlabel{3777}
E = \sum_{i=1}^n\,  \fc12m\mss_i\cdot\mss_i + V(\mrr_i)  + \;  \fc12 \sum_{i=1}^n\sum_{j\ne i} W(\mrr_i,\mrr_j), \quad E = \text{const},
\end{equation}
where velocities $\mss_i$ depends on time.  As is well known, it follows from Lagrangian
mechanics, that the $n$ equations of motion are
\begin{equation} \zlabel{3877}
 m\fc{d\mss_i}{dt} =  -\nabla_i V + \sum_{j\ne i}^n \nabla_i W(\mrr_i,\mrr_j), \qquad i = 1,\cdots n.
\end{equation}
Let $[(\mrr_1,\cdots,\mrr_n)](t)$ be the particle configuration at time $t$.  If this particle
configuration-function is invertible, giving $(t = t(\mrr_1,\cdots,\mrr_n))$, then we can
define a generalized velocity fields: $(\hs{0.3ex}\bv_i(\mrr_1,\cdots,\mrr_n) =
\mss_i\left[t(\mrr_1,\cdots,\mrr_n)\right]\hs{0.3ex})$ with domain $\mathbb{R}^{3n}$.  The
Hamiltonian above can then be written
\begin{equation} \hs{-5ex}  \zlabel{4852}
E = \sum_{i=1}^n \fc12m\bv_i^2 + V(\mrr_i) + \fc12 \sum_{j\ne i} W(\mrr_i,\mrr_j).
\end{equation}
Furthermore, if there exists a time range $\tau \in (t_0,t_1)$, where $(\mrr_i(t_a) =
\mrr_i(t_b))$ $\imply$ $(t_a = t_b)$, for all $i\in\{1,\cdots,n\}$, then, during the time
interval we have the maps $\mrr_i \mapsto t\in\tau$, and we can define the velocity fields
$(\hs{0.2ex}\my_i(\mrr_i) = \mss_i[t(\mrr_i)\hs{0.1ex}]\hs{0.3ex})$ with domain
$\mathbb{R}^3$. Using the chain rule for the acceleration, the
equation of motion for velocity field $\my_\al \in\{\my_1,\cdots,\my_n\}$ can be written
\[ \hs{-8ex}m\nabla\my_\al \hs{-0.4ex}\cdot\my_\al  =  -\nabla_i V(\mrr_\al) - \sum_{i\ne \al}^n \nabla_\al W(\mrr_\al,\mrr_i);
\qquad (\nabla\my_\al\hs{-0.4ex}\cdot\my_\al)_i = (\nabla y_{\al i}\hs{-0.4ex})\cdot\my_\al.\]

Given the similarities of Hamiltonians~(\ref{eneq})$\times\rho^{-1}$ and (\ref{4852}), it
should be possible to obtain $n$ equations of motion, for $n$ interacting point masses
involving Hamiltonian~(\ref{eneq}.1).  This section does that using Lagrangian dynamics for
stationary states under the orthogonal condition.  The Lagrangian equations obtained are in
particle, or per mass units, where the pressure $\Pu$ appears as a factor in the product
$\Pu\rho^{-1}$. For trajectories of point masses, $\Pu\rho^{-1}$ is a body force. Equation of
motion forms applicable to fluid flow are also given.  An equation of motion form involving
velocity fields $\bu_1$ and $\bv_1$, parameterized by the other $\mrr_2,\cdots\mrr_m$ position
variables, is also obtained. This equation is used in the section that follows to obtain
reduced 1-body equations. Except for difficulties that must be overcome, the reduced equation,
for a particle with position variable $\mrr_1$, is obtained by integrating an equation of
motion over the variables $\mrr_2, \cdots,\mrr_n$.

For later use we note that, since the forces are conservative in the equations of motion sequence~(\ref{3877}),
via integration, there exist constants $\epsilon_1,\cdots\epsilon_n$, such that
\[\epsilon_i = \fc12m\mss_i\cdot\mss_i  - V(\mrr_i) - \sum_{j\ne i}^n W(\mrr_i,\mrr_j).\]
Summing over the particle energies we see that the potential energy of 2-body interactions is double
counted when compared to energy Eq.~(\ref{3777}):
\begin{equation} \hs{-5ex}  \zlabel{4463}
\sum_{i=1}^n\epsilon_i = \sum_{i=1}^n \fc12m\mss_i\cdot\mss_i + V(\mrr_i) +\, \sum_{i=1}^n \sum_{j\ne i} W(\mrr_i,\mrr_j).
\end{equation}

\subsection{Many Body Trajectories and Stream Neighborhoods \zlabel{9920}} 

In this subsection, using the \emph{generalized} velocity fields
$\bu_i,\bv_i\!:\mathbb{R}^{3n}\rightarrow \mathbb{R}^3$, time and field dependent velocities
are defined for stationary quantum states under the orthogonal condition.  For trajectories of
particles, the velocities functions $\bu_i$ and $\bv_i$ are replaced by time dependent ones,
denoted below by $\dot{\mrr}_i:\Omega\rightarrow \mathbb{R}^3$ and
$\dot{\mss}_i:\Omega\rightarrow \mathbb{R}^3$, where $\Omega\subset\hat{\mathbb{R}}$, and
$\hat{\mathbb{R}}$ is the real number line with time as the variable. In classical mechanics,
if the displacement as a function of time, for a given time range, is invertible, velocities as
a function of position can be defined. These velocity fields, denoted by
$\mx_i,\my_i\!:\mathbb{R}^3\rightarrow\mathbb{R}^3$, are also defined.

Let $\bu_i:\mathbb{R}^{3n}\rightarrow \mathbb{R}$ and $\bv_i:\mathbb{R}^{3n}\rightarrow
\mathbb{R}$.  Let
\[\hs{-4ex}\vec{\mq} \defi(\mq_1,\cdots,\mq_n) \in\mathbb{R}^{3n}\quad \text{and}\quad
\vec{\mq}_0 \defi(\mq_{0,1},\cdots,\mq_{0,n}) \in\mathbb{R}^{3n},\]
where $\vec{\mq}_0$ and $(\Omega\defi (t_0,t)\in\hat{\mathbb{R}})$ are, respectively, the
initial configuration and time range, of an $n$-particle trajectory.
Also, $\vec{\mq}:\Omega\rightarrow \mathbb{R}^{3n}$, meaning that 
$\vec{\mq}(\tau)$ is the configuration at time $\tau\in\Omega$. The unique
path $P$ of motion, denoted $(P[\spa\vec{\mq}_0\spa]: \Omega\rightarrow\mathbb{R}^{3n})$, for
initial configuration $\vec{\mq}_0$, is the solutions of the first order differential
equations:
\begin{equation} \hs{-2ex}\label{2418f}
  \dot{\mrr}_i = \bu_i, \quad \dot{\mss}_i \defi \bv_i;\quad \dot{\mq}_i \defi \dot{\mrr}_i + \dot{\mss}_i,
  \quad i = 1,2,\cdots n,
\end{equation}
where $\dot{\mathbf{z}}_i:\Omega\rightarrow \mathbb{R}^3$; $(\dot{\mathbf{z}}_i = \dot{\mrr}_i,
\dot{\mss}_i,\dot{\mq}_i)$, are the $i$th time-dependent velocities.  The path of motion
$P[\spa\vec{\mq}_0\spa]$ is given as an $n$-tuple of curves, the $i$th one being $(\mq_i:
\Omega\rightarrow\mathbb{R}^3)$. If $(\hspace{0.2ex}\vec{\mq}(\tau) =
P[\hspace{0.1ex}\vec{\mq}_0\hspace{0.1ex}](\tau)\hspace{0.3ex)}$, then
\begin{equation} \hspace{-5ex}\zlabel{2430b}
  \dot{\mq}_i(\tau) = [\bu_i+\bv_i]\Bigl(\vec{\mq}(\tau)\Bigl) \defi \bw_i\Bigl(\vec{\mq}(\tau)\Bigl); \quad \tau\in \Omega.
\end{equation}
The path of motion $P[\hspace{0.1ex}\vec{\mq}_0\hspace{0.1ex}]$ is said to be component
invertible if $(\spa\mq_i(\tau_1) = \mq_i(\tau_2)\spa)$ $\imply$ $(\tau_1=\tau_2)\spa)$ for
each $i\in\{1,\cdots,n\}$. For component invertible paths, denoted
$P_I[\hspace{0.1ex}\vec{\mq}_0\hspace{0.1ex}]$, since each value $\mq_i(t)$ from $(t\in\Omega)$
is unique, we have the following bijective map: $(\mq_i\mapsto \vec{\mq}\spa)$.  For component
invertible path $P_I[\hspace{0.1ex}\vec{\mq}_0\hspace{0.1ex}]$, using this bijection, let the
velocity fields $\mx_i:\text{Range}(\mq_i) \rightarrow \mathbb{R}^3$ and
$\my_i:\text{Range}(\mq_i)\rightarrow \mathbb{R}^3$ be
\begin{equation} \hs{-8ex}\label{2430c}
  \mx_i(\mq_i) \defi \bu_i(\vec{\mq}), \quad \my_i(\mq_i) \defi \bv_i(\vec{\mq});
  \quad \vec{\mq} = \vec{\mq}(\mq_i),
  \quad(\vec{\mq}) \in \text{Range }P_I[\hspace{0.1ex}\vec{\mq}_0\hspace{0.1ex}].
\end{equation}
Comparing Eq.~(\ref{2418f}) and (\ref{2430c}), restricted to
  $P_I[\hspace{0.1ex}\vec{\mq}_0\hspace{0.1ex}]$, we find the following logic
\begin{equation} \hs{-8ex}\zlabel{3848}
\Bigl(\mq_1(\tau),\cdots,\mq_n(\tau)\hspace{0.1ex}\Bigl) = P_I[\hspace{0.1ex}\vec{\mq}_0\hspace{0.1ex}](\tau)
  \;\imply\; \dot{\mrr}_i(\tau) = \mx_i(\mq_i), \quad \dot{\mss}_i(\tau) \defi \my_i(\mq_i),
\end{equation}
and the the sum of the two implications gives
\begin{equation} \hs{-4ex}\label{3848b}
\Bigl(\mq_1(\tau),\cdots,\mq_n(\tau)\hspace{0.1ex}\Bigl) = P[\hspace{0.1ex}\vec{\mq}_0\hspace{0.1ex}](\tau) \;\imply\;
\dot{\mq_i}(\tau) = \mx_i(\mq_i) + \my_i(\mq_i).
\end{equation}

\subsection{Lagrangian equations of motion and the Hamiltonian}

In this subsection, using the relationships defined in the previous subsection, a Hamiltonian,
Lagrangian and equations of motion are obtained for $n$-body stationary quantum states that
satisfy the orthogonal condition and a  local mass-conservation law. We begin with the Hamiltonian.
\begin{theorem}[The Hamiltonian Function of Quantum Mechanics] \zlabel{theorem88e}
 Let $P_I[\hspace{0.1ex}\vec{\mq}_0\hspace{0.1ex}]$ be a component invertible path.
  The $n$--body spatial Schr\"odinger equation:
\begin{equation} \zlabel{1720b}
  -\fc{\hbar}{2m}\nabla^2\Psi + U\Psi = \ES\Psi; \quad \nabla^2\Psi\in\mathbf{\Sigma},
\end{equation}
and the orthogonal condition: $(\nabla\rho\cdot\nabla S = \zero)$,
is equivalent to the Hamiltonian
\begin{equation} \zlabel{5028c}
    H\spa(\hs{0.15ex}\dot{\vec{\mq}},\vec{\mq}\hs{0.15ex}) \defi T(\dot{\vec{\mq}}) + V(\vec{\mq});
   \quad H = \ES = \text{const.},
\end{equation}
and the continuity Eq.~(\ref{cont}.2), where
\begin{equation} \hspace{-2ex}\zlabel{1011c}
 T(\dot{\vec{\mq}}) \defi \fc12m\dot{\mq}\cdot\dot{\mq}, \qquad
  V(\vec{\mq}) \defi [\Pu\rho^{-1}](\mq) + U(\vec{\mq}),
\end{equation}
and the wavefunction $\Psi$ is restricted to $\text{Range}\left(P_I[\vec{\mq_0}]\right)$.  The
$i$th components of
\[(\dot{\mq}\cdot\dot{\mq}),\hspace{0.25ex} \hspace{0.25ex}[\Pu\rho^{-1}](\mq),\hspace{0.25ex} \in\mathbf{\Sigma},\]
are, respectively, the values $\dot{\mq}_i\cdot\dot{\mq}_i$ and 
\begin{equation}\label{3418c}
[\Pu\rho^{-1}]_i(\mq_i) \defi
\bigl[-m\mx_i\cdot\mx_i + \zeta_0\nabla\cdot\mx_i\bigl]\hspace{0.25ex}(\mq_i).
\end{equation}
\end{theorem}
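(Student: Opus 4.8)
The plan is to reduce the asserted equivalence to machinery already in place: Lemma~\ref{5538} (the decomposition of $\Psi^*\hat{H}\Psi$), Theorem~\ref{theorem00} (energy equations $\Leftrightarrow$ Schr\"odinger equation), Theorem~\ref{conti} (the continuity equivalences), and Lemma~\ref{5282} (the identity $\Pu=-\rho_m\bu\cdot\bu+\eta\nabla\cdot\bu$). First I would observe that, since $\Psi$ is an eigenvector of $\hat{H}$, its zero set has measure zero, so the spatial equation $\hat{H}\Psi=\ES\Psi$ with constant eigenvalue $\ES$ is equivalent, for the wavefunction restricted to $\text{Range}(P_I[\vec{\mq}_0])$, to the statement that $\Psi$ is a stationary state solving $i\hbar\partial\Psi=\hat{H}\Psi$ --- that is, $\partial S=-\ES$ is constant, $\partial\rho=0$, and $i\hbar\Psi^*\partial\Psi=\Psi^*\hat{H}\Psi$ holds a.e.

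For the imaginary part, Theorem~\ref{theorem00} makes $i\hbar\Psi^*\partial\Psi=\Psi^*\hat{H}\Psi$ equivalent to the $2$-set~(\ref{eneq}): $\ES\rho=\fc12\rho_m u^2+\fc12\rho_m v^2+\Pu+U\rho$ with $\ES=-\partial S$, together with $\Etheta\rho=\Pv$. Using $\Etheta\rho=\zeta_0\partial\rho$ from~(\ref{en2}), the second of these reads $\zeta_0\partial\rho=\Pv$, which is statement~{\bf 1} of Theorem~\ref{conti} and therefore equivalent to the continuity Eq.~(\ref{cont}.2); this delivers the second equation on the right-hand side. (For the stationary state at hand $\partial\rho=0$, so continuity reduces to $\nabla\cdot(\rho\bv)=\zero$.)

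For the real part, dividing the first equation of~(\ref{eneq}) by $\rho$ and using $\rho_m=m\rho$ gives $\fc12 m u^2+\fc12 m v^2+\Pu\rho^{-1}+U=\ES$. By~(\ref{5204b}) we have $m\bv=\nabla S$ and $m\bu=-\zeta_0\nabla\rho/\rho$, so the orthogonal condition $\nabla\rho\cdot\nabla S=\zero$ is equivalent to $\bu\cdot\bv=\zero$; and along the component-invertible path, where $\dot{\mq}_i=\bu_i+\bv_i$ by~(\ref{2430b}), this is exactly the condition under which the runner $\dot{\mq}\cdot\dot{\mq}$ has $i$th component $u_i^2+v_i^2$. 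Hence the real part becomes $\fc12 m\,\dot{\mq}\cdot\dot{\mq}+\Pu\rho^{-1}+U=\ES$, i.e.\ $T(\dot{\vec{\mq}})+V(\vec{\mq})=H=\ES=\text{const.}$ with $T,V$ as in~(\ref{1011c}). The explicit form~(\ref{3418c}) of the $i$th component of $\Pu\rho^{-1}$ is then read off from Eq.~(\ref{1222}): dividing $\Pu=-\rho_m\bu\cdot\bu+\eta\nabla\cdot\bu$ by $\rho$ gives $\Pu\rho^{-1}=-m\,\bu\cdot\bu+\zeta_0\nabla\cdot\bu$, and restricting to the path and using the bijection $\mq_i\leftrightarrow\vec{\mq}$ of Sec.~\ref{9920} to rewrite $\bu_i$ as $\mx_i(\mq_i)$ yields $[-m\,\mx_i\cdot\mx_i+\zeta_0\nabla\cdot\mx_i](\mq_i)$. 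The converse runs this chain backwards: each link --- the passage from $\hat{H}\Psi=\ES\Psi$ to $\Psi^*\hat{H}\Psi=\ES\rho$, Theorem~\ref{theorem00}, Theorem~\ref{conti}, and the identification of the orthogonal condition with $\dot{\mq}\cdot\dot{\mq}=u^2+v^2$ --- is an equivalence, so $H=\ES=\text{const.}$ together with the continuity equation returns the real and imaginary parts of the spatial Schr\"odinger equation and the orthogonal condition.

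I expect the main obstacle to be the bookkeeping rather than any analysis. One must be careful that (i) ``$\hat{H}\Psi=\ES\Psi$ with $\ES$ constant'' is handled as a stationary state, so that Theorems~\ref{theorem00} and~\ref{conti}, which are stated for the time-dependent equation and the continuity equation, genuinely apply; (ii) the restriction of all fields to $\text{Range}(P_I[\vec{\mq}_0])$ and the induced bijection $\mq_i\leftrightarrow\vec{\mq}$ turn the runners $\dot{\mq}\cdot\dot{\mq}$ and $\Pu\rho^{-1}$ into honest functions of $\mq_i$, so that~(\ref{3418c}) is literally an equality of functions on the path and $\nabla\cdot\mx_i$ is the divergence re-expressed through that bijection; and (iii) in the runner notation the orthogonal condition is precisely the term-by-term discrepancy between $\dot{\mq}\cdot\dot{\mq}$ and $u^2+v^2$, so it must be tracked on both sides of the equivalence and not quietly absorbed. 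Once these conventions are pinned down, the core of the argument is the one-line division of Eq.~(\ref{eneq}.1) by $\rho$ together with the cited results.
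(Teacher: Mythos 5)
Your proposal is correct and follows essentially the same route as the paper's own proof: Theorem~\ref{theorem00} for the energy equations, Theorem~\ref{conti} for the continuity equivalence, the orthogonal condition to collapse $u^2+v^2$ into $\dot{\mq}\cdot\dot{\mq}$, Lemma~\ref{5282} (Eq.~(\ref{1222})) for the explicit form of $\Pu\rho^{-1}$, and the restriction to $\text{Range}(P_I[\vec{\mq}_0])$ with the substitutions $(\bu)\rightarrow(\mx)$ and $(\bw)\rightarrow(\dot{\mq})$. Your bookkeeping remarks (stationary-state handling and the path bijection) are more explicit than the paper's, but the argument is the same.
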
 
\textbf{\textit{Proof}} From Theorem~\ref{theorem00}, the time {\it dependent} Schr\"odinger
equation $\Leftrightarrow Eqs.~(\mathbf{\ref{eneq}})$, and from Theorem~\ref{conti}, the
imaginary part of the TISE is equivalent to continuity Eq.~(\ref{cont}.2). For a stationary
state, $\ES$ is constant; $(S(t) = -\ES t)$, and the time independent Schr\"odinger equation (TISE)
is equivalent to Eqs.~(\ref{eneq}) with constant $\ES$. The orthogonal condition $(\bu\cdot\bv
= \zero)$ and $(\bw = \bu + \bv)$, give $(\bw\cdot\bw = \bu\cdot\bu + \bv\cdot\bv)$. Hence, the
real part of the TISE and the orthogonal condition is equivalent the following special case of
Eq.~$(\mathbf{\ref{eneq}.1})\div \rho$:
\begin{equation} \hs{-2ex}\label{7028}
\ES = \fc12m\bw\cdot\bw + [\Pu\rho^{-1}](\bu) + U(\vec{\mq});\quad \ES = \text{const,}
\end{equation}
where, in accordance with Eqs.~(\ref{1222}), 
\[\hs{2ex}[\Pu\rho^{-1}](\bu) = -m\bu\cdot\bu + \zeta_0\nabla\cdot\bu.\]
Let the functions of Eq.~(\ref{7028}) be restricted to
$\text{Range}\left(P_I[\vec{\mq_0}]\right)\in\mathbb{R}^{3n}$ .  With this restriction, using
(\ref{2430c}), we make the following replacement in Eq.~(\ref{7028}): 
$(\bu)\rightarrow (\mx)$.
Also, using Eq.~(\ref{2430b}), we can replace $(\bw)$ by $(\dot{\mq})$.
Hence, {\bf(\ref{1720b})} $\Leftrightarrow$ {\bf(\ref{5028c})} with the given restriction to
$P_I[\spa\vec{\mq_0}\spa]$.
\begin{flushright}\fbox{\phantom{\rule{0.5ex}{0.5ex}}}\end{flushright}

Using the Hamiltonian from the previous theorem, we an now define a Lagrangian that corresponds
to this Hamiltonian. The theorem  that follows obtains the relationship between the two
functions, and the Lagrangian equations of motion for quantum states.
\begin{definition}[The Lagrangian of Quantum Mechanics]\zlabel{5025c}
The Lagrangian~$L$ of quantum mechanics for stationary states, defined on \text{Range}
$\left(P[\vec{\mq_0}]\right)\in\mathbb{R}^{3n}$, is
\begin{equation} \hs{-4ex}\label{8230}
  L(\vec{\mq},\dot{\vec{\mq}}) = T(\dot{\vec{\mq}}) + \text{\small Q}\mathbf{A}\cdot\dot{\mq} - V(\vec{\mq}),
  \qquad (\vec{\mq})\in \text{Range}\left(P[\vec{\mq_0}]\right), 
\end{equation}
where the kinetic $T$ and potential $V$ energies are defined by Eq.~(\ref{1011c}).  Also, the
scalar function $\text{\small Q}$ and vectors $(\mathbf{A})\in\mathbf{\Sigma}$ are
arbitrary. Furthermore, $\mathbf{A}_i\cdot\dot{\mq}_i$ is the $i$th component of the sequence
$(\mathbf{A}\cdot\dot{\mq})$, and the i$th$ component of $(\Pu\rho^{-1})$ is given by
Eq.~(\ref{3418c}).
\end{definition}


\begin{theorem}[The Lagrange's equation of motion and Hamiltonian part {\bf 1}] \zlabel{theorem88}
{\bf 1)} The Lagrangian equations of motion, given by the vector equation sequence
\begin{equation} \zlabel{1034b}
\pa{L}{\mq} = \fc{d}{dt}\pa{L}{\dot{\mq}}; \quad \partial L/\partial \mathbf{q},\hspace{0.5ex}\partial L/\partial \mathbf{\dot{q}}\in\mathbf{\Sigma},
\end{equation}
is equivalent to the Hamiltonian $H$ being constant in time:
\begin{equation} \zlabel{1134b}
  H \defi \dot{\mq}\cdot\pa{L}{\dot{\mq}} - L,
\end{equation}
for $\partial L/\partial\mq$ restricted to $\text{Range }P_I[\spa\vec{\mq}_0\spa]$, and where
the Hamiltonian is given by Eq.~(\ref{5028c}).

\vspace{1ex}
{\bf 2)} The Lagrange's equations of motion for quantum states are the following form of Newton's second law:
\begin{equation} \hs{-2ex}\zlabel{2520}
\fc{d}{dt}(m\dot{\mq}) + \text{\small Q}\fc{d\mathbf{A}}{dt} =
\left[-\nabla(\Pu\rho^{-1}) - \nabla U\right](\mq) + 
\text{\small Q}\nabla\left(\mathbf{A}\cdot\dot{\mq}\right),
\end{equation}
a sequence of $n$ vector equations.
\end{theorem}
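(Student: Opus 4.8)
The plan is to run the standard Lagrangian‑mechanics manipulations for the particular Lagrangian $L = T + \text{\small Q}\mathbf{A}\cdot\dot{\mq} - V$ of Definition~\ref{5025c}, with $T(\dot{\vec{\mq}}) = \fc12 m\,\dot{\mq}\cdot\dot{\mq}$ and $V(\vec{\mq}) = [\Pu\rho^{-1}](\mq) + U(\vec{\mq})$, treating $\text{\small Q}$ as a constant and $\mathbf{A} = \mathbf{A}(\vec{\mq})$ so that the gauge‑like term is linear in $\dot{\mq}$ and carries no explicit velocity dependence. First I would record the two derivatives entering~(\ref{1034b}). Since $T$ depends only on $\dot{\vec{\mq}}$ and $V$ only on $\vec{\mq}$,
\[
\pa{L}{\dot{\mq}} = m\dot{\mq} + \text{\small Q}\mathbf{A}, \qquad
\pa{L}{\mq} = \text{\small Q}\nabla(\mathbf{A}\cdot\dot{\mq}) - \nabla(\Pu\rho^{-1}) - \nabla U,
\]
in the runner sense, the $i$th component of $\nabla(\Pu\rho^{-1})$ being supplied by Eq.~(\ref{3418c}); this last point forces $\pa{L}{\mq}$ to be read on $\text{Range}\,P_I[\vec{\mq}_0]$, where the fields $\mx_i$, and hence $[\Pu\rho^{-1}]$, are defined. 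Forming $H$ from~(\ref{1134b}) gives $H = \dot{\mq}\cdot(m\dot{\mq} + \text{\small Q}\mathbf{A}) - L = 2T + \text{\small Q}\mathbf{A}\cdot\dot{\mq} - (T + \text{\small Q}\mathbf{A}\cdot\dot{\mq} - V) = T + V$, so the $\text{\small Q}\mathbf{A}\cdot\dot{\mq}$ term cancels identically and $H$ takes the form claimed in Eq.~(\ref{5028c}).

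For part~{\bf 1}, the forward direction is the usual energy identity: differentiating $H = \dot{\mq}\cdot\pa{L}{\dot{\mq}} - L$ along a path and using $\pa{L}{t} = 0$ (the velocity fields are time‑independent for stationary states) yields $\fc{dH}{dt} = \dot{\mq}\cdot\left(\fc{d}{dt}\pa{L}{\dot{\mq}} - \pa{L}{\mq}\right)$, so the Lagrange equations~(\ref{1034b}) imply $\fc{dH}{dt} = 0$. For the converse I would restrict to $\text{Range}\,P_I[\vec{\mq}_0]$, where $\dot{\mq}_i = \bw_i = \bu_i + \bv_i$ by Eqs.~(\ref{2418f}) and~(\ref{2430b}). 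On that range, constancy of $H = \fc12 m\,\bw\cdot\bw + [\Pu\rho^{-1}](\bu) + U$ in time is exactly the stationary energy relation~(\ref{7028}), that is, Eq.~(\ref{eneq}.1) with both sides divided by $\rho$ and $\ES$ held constant. Taking the gradient sequence of~(\ref{7028}) and converting $\fc12 m\,\nabla(\bw\cdot\bw)$ and $\nabla([\Pu\rho^{-1}](\bu))$ back into time derivatives through the chain rule for the acceleration of the stationary, irrotational fields $\bw_i$ (the same passage used in the one‑body step from the Bernoulli equation to the Euler equation in Theorem~\ref{5293}, with the variable‑mass bookkeeping of Lemma~\ref{tmom}) returns the vector system~(\ref{1034b}). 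Hence~(\ref{1034b}) is equivalent to $\fc{dH}{dt} = 0$ on $\text{Range}\,P_I[\vec{\mq}_0]$, with $H$ of the form~(\ref{5028c}).

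For part~{\bf 2}, substituting the two derivatives above into~(\ref{1034b}) gives $\fc{d}{dt}(m\dot{\mq}) + \text{\small Q}\fc{d\mathbf{A}}{dt} = \text{\small Q}\nabla(\mathbf{A}\cdot\dot{\mq}) - \nabla(\Pu\rho^{-1}) - \nabla U$, which is precisely Newton's second law in the stated form~(\ref{2520}), a sequence of $n$ vector equations, with $\nabla(\Pu\rho^{-1})$ read off componentwise from Eq.~(\ref{3418c}). The main obstacle is the converse half of part~{\bf 1}: a single scalar statement ($H$ constant in $t$) is being asserted equivalent to a system of $3n$ scalar equations, so the argument genuinely needs the restriction to $\text{Range}\,P_I[\vec{\mq}_0]$ and the fact that the path there is generated by the irrotational fields $\bu_i,\bv_i$. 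The delicate point is to verify that, under stationarity, the orthogonal condition $\bu\cdot\bv = \zero$, and continuity Eq.~(\ref{cont}.2), the gradient of the scalar energy relation reconstructs the full momentum balance, with the correct total time derivative in the many‑body, variable‑mass case, so that no transverse component of the Lagrange residual is lost when one projects onto $\dot{\mq}$. This is the step at which Theorems~\ref{conti}, \ref{theorem00} and~\ref{theorem88e} and the material‑derivative algebra of Lemma~\ref{tmom} must be invoked with care.
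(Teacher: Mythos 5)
Your part~\textbf{2} and the forward half of part~\textbf{1} coincide with the paper's proof: the paper likewise computes $dL/dt$ by the chain rule with $\partial L/\partial t=0$, substitutes the Lagrange equations to get $\fc{d}{dt}\bigl(\dot{\mq}\cdot\pa{L}{\dot{\mq}}-L\bigr)=\zero$, and obtains~(\ref{2520}) by direct substitution of~(\ref{8230}) and~(\ref{1011c}) into~(\ref{1034b}). Where you diverge is the converse half of part~\textbf{1}. The paper stays entirely inside formal Lagrangian mechanics: it runs the chain-rule identity backwards, i.e.\ from $H=\text{const.}$ it deduces $\fc{dL}{dt}=\bigl(\fc{d}{dt}\pa{L}{\dot{\mq}}\bigr)\cdot\dot{\mq}+\pa{L}{\dot{\mq}}\cdot\ddot{\mq}$, compares with Eq.~(\ref{4429}), and reads off~(\ref{1034b}) — which amounts to cancelling $\dot{\mq}$ from the dot-product identity $\pa{L}{\mq}\cdot\dot{\mq}=\bigl(\fc{d}{dt}\pa{L}{\dot{\mq}}\bigr)\cdot\dot{\mq}$, so strictly it only pins down the component of the Lagrange residual along $\dot{\mq}$. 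You instead exploit the specific form~(\ref{5028c}) of $H$: constancy of $H$ on $\text{Range}\,P_I[\vec{\mq}_0]$ is the uniform energy relation~(\ref{7028}), whose gradient sequence, converted back to total time derivatives via the irrotational-field chain rule of Theorem~\ref{5293} and Lemma~\ref{tmom}, yields the full vector system. Your route buys a more honest treatment of the scalar-to-vector step (you explicitly flag that one scalar conservation law cannot by itself deliver $3n$ equations, and you supply the missing input — spatial uniformity of $\ES$ and the structure of the fields $\bu_i,\bv_i$), at the cost of importing the stationarity, orthogonality and continuity hypotheses and Theorem~\ref{theorem88e}, whereas the paper's argument is shorter and applies formally to any time-independent $L$ but leaves the projection issue implicit. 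Both arguments reach the stated conclusion; neither is more complete than the other on that one point, and yours makes the dependence on the restriction to $\text{Range}\,P_I[\vec{\mq}_0]$ explicit where the paper leaves it in the theorem statement only.
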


\noindent
\textbf{\textit{Proof}}
For part {\bf 1}, using the chain rule, the time derivative of $(L = L(\mq\hspace{0.2ex},\dot{\mq})\hs{0.25ex})$ is
\begin{equation} \hs{-7ex} \label{4429}
\fc{dL}{dt} = \pa{L}{\mq}\cdot\dot{\mq}  + \pa{L}{\dot{\mq}}\cdot\ddot{\mq},
\qquad (\partial L/\partial\mq)\cdot\dot{\mq}, (\partial L/\partial\dot{\mq})\cdot\ddot{\mq} \in\mathbf{\Sigma},
\quad  \partial L/\partial t \notin\mathbf{\Sigma},
\end{equation}
where, according to Eq.~(\ref{8230}), $L$ does not explicitly depend on time $t$.  Suppose the
Lagrangian equations of motion~(\ref{1034b}) is true.  The substitution of the Lagrangian
equations into the above one gives
\[ \fc{dL}{dt} = \left(\fc{d}{dt}\pa{L}{\dot{\mq}}\right)\cdot\dot{\mq}
+ \pa{L}{\dot{\mq}}\cdot\ddot{\mq}
\imply  \fc{d}{dt}\left(\dot{\mq}\cdot\pa{L}{\dot{\mq}} - L\right) = \zero.\]
Comparison with definition~(\ref{1134b}), we find that $(H = \text{const.})$. Suppose $(H =
\text{const.})$. Comparing the last equality of the logic
\[\hs{-12ex}H = \dot{\mq}\cdot\pa{L}{\dot{\mq}} - L = \text{const.} \imply  \fc{d}{dt}\left(\dot{\mq}\cdot\pa{L}{\dot{\mq}} - L\right) = \zero
\imply \fc{dL}{dt} = \left(\fc{d}{dt}\pa{L}{\dot{\mq}}\right)\cdot\dot{\mq}
+ \pa{L}{\dot{\mq}}\cdot\ddot{\mq}, \]
with Eq.~(\ref{4429}), gives the the Lagrangian equations of motion~(\ref{1034b}).

\vspace{1ex}
For part {\bf 2}, equations of motion~(\ref{2520}) follows by substitution of Eq.~(\ref{8230}) into (\ref{1034b})
and using Eq.~(\ref{1011c}). 
\begin{flushright}\fbox{\phantom{\rule{0.5ex}{0.5ex}}}\end{flushright}

The next theorem obtains explicit forms of the Hamiltonian and equations of motion for
stationary quantum states that satisfy the orthogonal condition and a local mass conservation
restriction.  The following lemma is needed for the theorem that follows.
\begin{lemma} \zlabel{8527}
The $n$--body time independent Schr\"odinger equation with orthogonal condition $(\nabla\rho\cdot\nabla S = \zero)$ implies
Lagrange's equations of motion~(\ref{2520}) with the conservative forces restricted to
component invertible path $P_I[\spa\vec{\mq}_0\spa]$, within definitions of $\mathbf{A}$ and
$Q$.
\end{lemma}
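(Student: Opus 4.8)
The plan is to assemble the lemma as a bookkeeping corollary of Theorems~\ref{theorem88e} and~\ref{theorem88}, composing three implications already established there. First I would invoke Theorem~\ref{theorem88e}: after restricting to a component invertible path $P_I[\spa\vec{\mq}_0\spa]$, the $n$--body TISE~(\ref{1720b}) together with the orthogonal condition $(\nabla\rho\cdot\nabla S = \zero)$ is equivalent to the Hamiltonian relation~(\ref{5028c}), $H(\dot{\vec{\mq}},\vec{\mq}) = \ES = \text{const.}$, together with the continuity Eq.~(\ref{cont}.2). In particular, since $\ES$ is a constant field for a stationary state, the hypotheses of the lemma imply that $H$ is constant in time along the path, with $T$ and $V$ given by Eq.~(\ref{1011c}) and the position-dependent momentum-flux form~(\ref{3418c}) of $[\Pu\rho^{-1}]$.

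Next I would fix, arbitrarily, a scalar field $Q$ and a sequence $(\mathbf{A})\in\mathbf{\Sigma}$ of vector fields, and form the Lagrangian $L$ of Definition~\ref{5025c} built from exactly these $T$ and $V$. The point is that the gauge term $Q\mathbf{A}\cdot\dot{\mq}$ is linear and homogeneous in $\dot{\mq}$, so it drops out of the Legendre transform~(\ref{1134b}): $\dot{\mq}\cdot\partial L/\partial\dot{\mq} - L$ collapses to $T(\dot{\vec{\mq}}) + V(\vec{\mq})$, which is precisely the $H$ of Eq.~(\ref{5028c}). Hence ``$H$ constant in time'' is exactly the hypothesis of Theorem~\ref{theorem88}(1), whose conclusion is the Lagrange equations~(\ref{1034b}) with $\partial L/\partial\mq$ restricted to $\text{Range }P_I[\spa\vec{\mq}_0\spa]$. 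Finally, Theorem~\ref{theorem88}(2) rewrites those Lagrange equations in the Newtonian form~(\ref{2520}) by substituting Eq.~(\ref{8230}). Chaining the three steps --- TISE $+$ orthogonal condition $\Rightarrow$ $H = \text{const.}$ $\Rightarrow$ Eqs.~(\ref{1034b}) $\Rightarrow$ Eqs.~(\ref{2520}) --- gives the claim; the clause ``within definitions of $\mathbf{A}$ and $Q$'' simply records that the last two steps are stated for a previously committed choice of this arbitrary gauge data.

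The delicate point, and the only place any content beyond the cited theorems enters, is the restriction to a component invertible path. The position-dependent velocity fields $\mx_i,\my_i$ of Eq.~(\ref{2430c}), and hence the field forms of $T$, $V$ and the identification $\dot{\mq}_i = \bw_i(\vec{\mq})$ used throughout Theorem~\ref{theorem88e}, are well defined only because the bijection $\mq_i \mapsto \vec{\mq}$ exists on $P_I[\spa\vec{\mq}_0\spa]$; it is this bijection that lets the chain rule convert the total time derivatives in~(\ref{1034b}) and~(\ref{1134b}) into the gradient expressions that appear in~(\ref{2520}). I would therefore keep the same domain restriction as in Theorems~\ref{theorem88e} and~\ref{theorem88}, so that no fresh analytic hypotheses are needed and the lemma follows purely by transitivity of the implications.
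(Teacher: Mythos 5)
Your proposal is correct and follows essentially the same route as the paper's own proof: Theorem~\ref{theorem88e} gives the constant Hamiltonian~(\ref{5028c}) from the TISE plus the orthogonal condition, and Theorem~\ref{theorem88} then converts $H=\text{const.}$ into the Lagrange equations~(\ref{1034b}) and hence~(\ref{2520}). Your added check that the gauge term $Q\mathbf{A}\cdot\dot{\mq}$ cancels in the Legendre transform~(\ref{1134b}) is a useful explicit verification of a step the paper leaves implicit in the phrase ``within definitions of $\mathbf{A}$ and $Q$.''
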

\textbf{\textit{Proof}} From Theorem~(\ref{theorem88e}), the $n$--body TISE and
$(\nabla\rho\cdot\nabla S = \zero)$ imply both the kinetic $T$ and potential $V$ energy
functions of Hamiltonian (\ref{5028c}), restricted to $P_I[\spa\vec{\mq}_0\spa]$.  Via the Lagrange's
Eq.~(\ref{1034b}), we have $\mathbf{(\ref{5028c})} \imply
\mathbf{(\ref{2520})}$.
\begin{flushright}\fbox{\phantom{\rule{0.5ex}{0.5ex}}}\end{flushright}
\begin{theorem}[The Lagrange's equation of motion and Hamiltonian part {\bf 2}] \zlabel{theorem88b}
The $n$-body time independent Schr\"odinger equation imply
\begin{equation} \hs{-4ex}\zlabel{5028b}
  H = \fc12m\dot{\mrr}\cdot\dot{\mrr} + \fc12m\dot{\mss}\cdot\dot{\mss}
  + [\Pu\rho^{-1}]\hs{0.1ex}(\mq) + U(\vec{\mq}), 
\end{equation}
and the following two sequences of $n$ equations: 
\begin{equation} \hspace{-5ex}\label{3510}
  \fc12\rho_m\nabla x^2 + \fc12\rho_m\nabla y^2 + \nabla\cdot(\rho_m\mx)\mx  = -\nabla\Pu - \rho\nabla U,
\end{equation}
\begin{equation}  \hs{-8ex} \label{4028}
  \mathcal{D}\!\!\int_{\mathbb{V}}\rho_m\dot{\mrr} + \rho_m\dot{\mss}
  - \int_{\mathbb{V}}\nabla\cdot(\rho_m\mx)\my = \int_\mathbb{V} -\nabla\Pu - \rho\nabla U;
  \quad \mathbb{V}\subset\mathbb{R}^3,
\end{equation}
under the orthogonal condition $(\nabla\rho\cdot\nabla S = \zero)$, the condition that the
potential energy functions are restricted to $\text{Range
}\left(P_I[\vec{\mq_0}]\right)\in\mathbb{R}^{3n}$, and $(\mathbf{A}\cdot\dot{\mq})$ is not
present. In addition, for implication~(\ref{3510}) and (\ref{4028}), body mass conservation:
$(\nabla_i\cdot(\rho_m\bv_i) = \zero)$, for $(i = 1,2,\cdots n)$, and this condition replaces,
and implies, the weaker condition $(\sum_i\nabla_i\cdot(\rho_m\bv_i) = \zero)$.  (For one-body
states with steady flow, Eq.~(\ref{5028b}) correspond to energy Eq.~(\ref{eneq});
Eq.~(\ref{3510}) correspond to the Euler Eq.~(\ref{euler0}); Eq.~(\ref{4028}) corresponds to
Euler Eq.~(\ref{euler1}), using Eq.~(\ref{5240}): $(\nabla\cdot\bv = \zero)$.)
\end{theorem}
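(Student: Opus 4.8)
The plan is to specialize, to the stationary smooth case, three inputs that are already available: the Hamiltonian form~(\ref{5028c}) of the stationary problem from Theorem~\ref{theorem88e}, the Lagrange equations of motion~(\ref{2520}) extracted from the $n$--body TISE with the orthogonal condition in Lemma~\ref{8527} and Theorem~\ref{theorem88}, and the variable--mass momentum bookkeeping of Lemma~\ref{tmom}. For Eq.~(\ref{5028b}), I would use Theorem~\ref{theorem88e}: the TISE with $(\nabla\rho\cdot\nabla S=\zero)$ is equivalent to $(H=\frac12 m\dot{\mq}\cdot\dot{\mq}+[\Pu\rho^{-1}]+U=\ES)$ on $\text{Range}(P_I[\spa\vec{\mq}_0\spa])$; since $(\dot{\mrr}_i=\bu_i)$ and $(\dot{\mss}_i=\bv_i)$ on the path by~(\ref{2418f}) and the orthogonal condition is $(\bu\cdot\bv=\zero)$ (Sec.~\ref{orthc}), one has $(\dot{\mq}_i\cdot\dot{\mq}_i=\dot{\mrr}_i\cdot\dot{\mrr}_i+\dot{\mss}_i\cdot\dot{\mss}_i)$, and substituting this split of the kinetic term into~(\ref{5028c}) produces~(\ref{5028b}).

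For Eq.~(\ref{3510}), I would start from Lemma~\ref{8527}: the TISE with the orthogonal condition implies the Lagrange equations~(\ref{2520}), which with $(\mathbf{A}\cdot\dot{\mq})$ not present read $(m\ddot{\mq}_i=-\nabla(\Pu_i\rho^{-1})-\nabla U)$ on $P_I$, a sequence of $n$ three--vector equations in the single variable $\mq_i$. Multiplying by $\rho$, using identity~(\ref{2748}) to write $\rho\nabla(\Pu_i\rho^{-1})=\nabla\Pu_i-\Pu_i\nabla\rho/\rho$, and then using the mass--flux implication~(\ref{5501}) --- applicable because $\bu_i=-\zeta\nabla\rho/\rho$ and $\nabla\cdot(\rho_m\bu_i)=\Pu_i/\zeta$, restricted to the path as $\mx_i$ --- turns $-\Pu_i\nabla\rho/\rho$ into $\nabla\cdot(\rho_m\mx_i)\mx_i$. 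For a stationary state $(S=-\ES t+\beta(\mrr))$ the fields are time--independent, so $(\partial\bw_i=\zero)$; restricting the fields $\mx_i,\my_i$ of~(\ref{2430c}) to functions of $\mq_i$ alone, and using that they are mutually orthogonal (orthogonal condition) and irrotational, the acceleration along $P_I$ should collapse to $(\ddot{\mq}_i=\frac12\nabla\hspace{0.15ex}w_i^2=\frac12\nabla(x_i^2+y_i^2))$. Writing $(\rho m\,\ddot{\mq}_i=\rho_m\ddot{\mq}_i)$ and collecting terms then gives~(\ref{3510}); the body mass--conservation $(\nabla_i\cdot(\rho_m\bv_i)=\zero)$ --- which by~(\ref{5240}) coincides with $(\nabla\cdot\bv=\zero)$ for steady smooth flow --- is imposed precisely so that this closure for the $i$--th body is self--contained.

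For Eq.~(\ref{4028}), I would apply Lemma~\ref{tmom} in the form of Eq.~(\ref{5942}) to the pointwise balance~(\ref{3510}), mirroring the way that lemma packages the one--body Euler Eq.~(\ref{euler0}) into its integral form~(\ref{euler1d}). The viscosity--like term $\eta\nabla(\nabla\cdot\bv)$ present in~(\ref{euler1d}) drops out because steady smooth flow gives $(\nabla\cdot\bv=\zero)$ via~(\ref{5240}), stationarity supplying $(\partial\rho=\zero)$ in continuity Eq.~(\ref{cont}.2); and the per--body mass conservation $(\nabla_i\cdot(\rho_m\bv_i)=\zero)$ is the hypothesis that makes $\mathcal{D}\!\int_{\mathbb{V}}\rho_m(\dot{\mrr}_i+\dot{\mss}_i)$ equal $\int_{\mathbb{V}}\rho_m\ddot{\mq}_i$ up to the single flux correction $-\int_{\mathbb{V}}\nabla\cdot(\rho_m\mx_i)\my_i$, which is the term carried on the left of~(\ref{4028}).

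The hard part will be the reduction invoked in the second step. The genuine $n$--body acceleration $\ddot{\mq}_i=\frac{d}{dt}\bw_i(\vec{\mq}(t))$ picks up chain--rule contributions from every coordinate block, not only from $\nabla_i$, so I must argue --- using component invertibility and the stream--neighborhood structure of Sec.~\ref{9920} together with the orthogonal condition and the per--body mass conservation $(\nabla_i\cdot(\rho_m\bv_i)=\zero)$ --- that these contributions reorganize into the single $\nabla$ acting on $x_i^2+y_i^2$, so that the fluid equation of motion for body $i$ really does close in its own variable. A secondary care point is keeping the runner--notation distinction between the series $[\Pu\rho^{-1}]$ that appears in the Hamiltonian and energy identities and the componentwise $\nabla\Pu_i$ that appears in the momentum equations; as a consistency check I would verify that for $n=1$ the three conclusions collapse to~(\ref{eneq}), (\ref{euler0}), and (\ref{euler1d}) with $(\nabla\cdot\bv=\zero)$, exactly as the statement notes parenthetically.
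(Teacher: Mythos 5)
Your proposal is correct and follows essentially the same route as the paper: Eq.~(\ref{5028b}) via Theorem~\ref{theorem88e} and the orthogonal split of $\dot{\mq}\cdot\dot{\mq}$; Eq.~(\ref{3510}) via Lemma~\ref{8527}, the steady irrotational acceleration identity, multiplication by $\rho$, and identity~(\ref{2077}) (your (\ref{2748})+(\ref{5501}) combination is exactly how (\ref{2077}) is derived); and Eq.~(\ref{4028}) via the variable-mass momentum bookkeeping of Lemma~\ref{tmom} with body mass conservation killing the $\nabla\cdot(\rho_m\my)$ flux. The "hard part" you flag --- collapsing the full $n$-body chain rule to a single $\nabla_i$ --- is handled in the paper exactly as you propose, through the component-invertible velocity fields $\mx_i(\mq_i),\my_i(\mq_i)$ of Sec.~\ref{9920}.
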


\textbf{\textit{Proof}} 
For implication~(\ref{5028b}), since, from Theorem~(\ref{theorem88e}), the $n$--body time
independent Schr\"odinger equation with orthogonal condition implies Eq.~(\ref{5028c}), we only
need to prove that $\mathbf{(\ref{5028c}) \imply (\ref{5028b})}$. From the orthogonal condition
$(\bu\cdot\bv = \zero)$, we have $(\dot{\mq}\cdot \dot{\mq} = \dot{\mrr}\cdot\dot{\mrr} +
\dot{\mss}\cdot\dot{\mss})$.  Substituting this condition into Eq.~(\ref{5028c}) via
Eq.~(\ref{1011c}), we obtain Eq.~(\ref{5028b}).

For implications~(\ref{3510}) and (\ref{4028}), since, from Lemma~(\ref{8527}), the $n$--body
time independent Schr\"odinger equation and the orthogonal condition imply
Eq.~(\ref{2520}), we only need to prove the two implications below.

\vspace{1ex} \underline{$\mathbf{(\ref{2520}) \imply (\ref{3510})}$}.  Consider a velocity
function of time $\mathbf{z}:\Omega\rightarrow \mathbb{R}^3$. If the path is invertible, then,
in Cartesian coordinates, we can write the velocity as a composite $(\mathbf{z}(t) =
\mathbf{z}( x(t),y(t),z(t) \spa)$, as is done in the usual procedure of fluid dynamics for treating
acceleration in the derivation of the Euler equation. This also gives the velocity field:
$(\mathbf{z} = \mathbf{z}(x,y,z)\spa)$. Hence, the same symbol $\mathbf{z}$ is used for two different
functions, both representing the same physical quantity.  If $\mathbf{z}$ is irrotational and
the flow is steady, the two functions satisfy
\[\fc{d\mathbf{z}}{dt} = \fc12\nabla \mathbf{z}\cdot\mathbf{z},\]
and this follows by taking the material derivative and utilizing an equality for irrotational
vectors. From the development in subsection~\ref{9920}, since different symbol for the
two functions are used, we have
\[\fc{d}{dt}(m\dot{\mq}) = \fc12m\nabla[(\mx + \my)\cdot(\mx + \my)]
= \fc12m\nabla \mx\cdot\mx  + \fc12m\nabla \my\cdot\my,\]
where the orthogonal condition is used, and also that $\bu_i$ and $\bv_i$ being irrotational
imply that $\mx_i$ and $\my_i$ are also irrotational.  It follows from logic~(\ref{3848b}) that
this equation is satisfied on path of motion $P_I[\hspace{0.1ex}\vec{\mq}_0\hspace{0.1ex}]$,
connecting the time dependent trajectory velocities and the field velocities.  Substituting
into Eq.~(\ref{2520}) and setting $(\hspace{0.1ex}(\mathbf{A}) = \zero)$, we have
\[ \fc12m\nabla x^2 + \fc12m\nabla y^2= -\nabla(\Pu\rho^{-1}) - \nabla U.\]
Multiplying by $\rho$, and using equality~(\ref{2077}), generalized to a sequence, we obtain
%
\[ \hs{-10ex}\text{Eq.~{\bf (\ref{3510})}}: \quad 
\fc12\rho_m\nabla x^2 + \nabla\cdot(\rho_m\mx)\mx + \fc12\rho_m\nabla y^2
= -\nabla\Pu - \rho\nabla U.\]
\vspace{1ex} \underline{$\mathbf{(\ref{3510}) \imply (\ref{4028})}$}.
Consider Eq.~(\ref{5940}) that holds for any irrotational velocity field. Using our explicit
notation for the velocity as a function of time $(\dot{\mrr} + \dot{\mss}$) and the velocity as a field $\bw$, for steady flow,
and under the orthogonal condition, this equation becomes
\begin{equation} \hs{-5ex}\zlabel{5282b}
  \mathcal{D}\!\!\int_{\mathbb{V}}\; \rho_m\dot{\mrr} + \rho_m\dot{\mss}
    = \int_{\mathbb{V}}\; \fc12\rho_m \nabla \mx\cdot\mx  + \fc12\rho_m \nabla \my\cdot\my + \nabla\cdot(\rho_m\bw)\bw.
\end{equation}
By hypothesis, $(\hs{0.1ex}(\nabla\cdot(\rho_m\bv) )\hs{0.1ex})$ vanish identically, giving
$(\hs{0.1ex}(\nabla\cdot(\rho_m\my)\hs{0.1ex}) = (\zero)$, with $\my$ given by
Def.~(\ref{2430c}). Hence,
\begin{equation*} \hs{-8ex} 
  \mathcal{D}\!\!\int_{\mathbb{V}}\; \rho_m\dot{\mrr} + \rho_m\dot{\mss}
  = \int_{\mathbb{V}}\; \fc12\rho_m \nabla \mx\cdot\mx  + \fc12\rho_m \nabla \my\cdot\my
  + \nabla\cdot(\rho_m\mx)\mx + \nabla\cdot(\rho_m\mx)\my.
\end{equation*}
Integrating Eq.~(\ref{3510}) over $(V\in
P_I[\hspace{0.1ex}\vec{\mq}_0\hspace{0.1ex}]\hs{0.1ex})$, followed by substituting into
Eq.~(\ref{3510}), we obtain Eq.~(\ref{4028}).


\subsection{Parameterized one body Euler and Schr\"odinger equations \zlabel{7545}}

Next we specialize the potential $U$ to a forms that is applicable to electronic
systems.
\begin{definition}[External Potential from 1- and 2-body potentials] \mbox{}

\noindent
The potentials $\bar{V}:\mathbb{R}^{3n}\rightarrow\mathbb{R}\;$ and
$\;\bar{W}:\mathbb{R}^{3n}\rightarrow\mathbb{R}$ are sums over one-body
$V_1:\mathbb{R}^{3}\rightarrow\mathbb{R}$ and two-body
$W_i:\mathbb{R}^3\times\mathbb{R}^3\rightarrow\mathbb{R}$ functions:
\begin{equation}\zlabel{4200a}
\bar{V}(\mrr_1,\mrr_2,\cdots\mrr_n) \defi V(\mrr_1) + V(\mrr_2) + \cdots V(\mrr_n), \quad\text{and}
\end{equation}
\begin{equation}\zlabel{4200b}
\bar{W} \defi \sum_{i=1}^nW_i, \qquad W_i(\mrr_1,\cdots\mrr_n) \defi \sum_{j\ne i}^nW_{12}(\mrr_j,\mrr_i).
\end{equation}
The body potential $U$ of Eq.~(\ref{eneq}) of the TISE is the following sum of one-body and two-body potentials:
\begin{equation}\hs{-7ex}\label{4027}
  U \defi \bar{V} + \fc12\bar{W} \imply \nabla_1 U \defi \nabla_1 V + \nabla_1W_1;
  \qquad W_{12}(\mrr,\mrrp) = W_{12}(\mrrp,\mrr),
\end{equation}
with the condition that $\bar{V}$and $\bar{W}_i$ vanish at infinity.
\end{definition}
Consider the first component $\bu_1$ of the $n$-tuple $(\bu)$. For such first component
functions, the next theorem treats the $(n-1)$ variables from
$(\room\mq_1^\pr = (\mrr_2,\cdots,\mrr_n)\in\mathbb{R}^{3(n-1)}\room)$ as parameters.  A 1-body Euler equation,
Newton's 2nd law, an energy equation, and a spatial Schr\"odinger\"odinger equation is obtained.  The
equation of motion is used in the next section to obtain 1-body reduced equations of motion and
a reduced spatial Schr\"odinger equation, under certain conditions.
\begin{theorem}[From n-body to parameterized 1-body Schr\"odger Equations] \zlabel{5935} \mbox{}
  
  Let $(\hs{0.3ex}(\hs{0.1ex}\mq_1,\mq_1^\pr)\in\mathbb{R}\times\mathbb{R}^{n-1} =
  \mathbb{R}^{3n} \hs{0.3ex})$. The $n$-body time-independent Schr\"odger equation
\begin{equation} \hs{-8ex}\zlabel{2720}
  -\fc{\hbar}{2m}\nabla^2\Psi + \left(\bar{V} + \fc12\bar{W}\right)\Psi = \ES\Psi; \quad \nabla^2\Psi\in \mathbf{\Sigma},
\quad  \Psi:\mathbb{R}^{3n}\rightarrow \mathbb{C},
\end{equation}
under the orthogonal condition, and the sequence of equations
\begin{equation} \zlabel{5207}
  \text{Re}\left(\nabla_i\cdot\left(\Psi^*\hat{\text{\rm P}}_i\Psi\right)\hs{0.2ex}\right) = \zero,
  \quad \hat{\text{\rm P}}_i \defim -i\hbar\nabla_i, \quad  i = 1,\cdots n,
\end{equation}
imply the 1-body Euler equation, Newton's 2nd law, the energy equation, and the TISE:
\begin{equation} \hs{-6ex}\label{2530} 
\fc12\rho_m\nabla_1(u_1^2 + v_1^2) + \nabla_1\cdot(\rho_m\bu_1)\bu_1 = -\nabla_1\Pu_1 -
\rho\nabla_1 V - \rho\nabla_1 W_1,
\end{equation}
\begin{equation} \hs{-5ex}\zlabel{1502}
\mathcal{D}(\rho_m\bu_1 + \rho_m\bv_1) + \rho_m\bu_1\nabla_1\cdot\bu_1
= -\nabla_1\Pu_1 - \rho\nabla_1V - \rho\nabla_1 W_1,
\end{equation}
\begin{equation} \hs{-3ex} \label{eneq-cross} 
  \ES_1\rho = \fc12\rho_mu_1^2 + \fc12\rho_mv_1^2 + \Pu_1 + (V + W_1)\rho,
\end{equation}
\begin{equation} \hs{-2ex}\label{1500}
  -\fc{\hbar}{2m}\nabla_1^2\Psi + \left(V + W_1\right)\Psi = \ES_1 \Psi;
  \qquad \Psi:\mathbb{R}^3\rightarrow \mathbb{C}, 
\end{equation}
where the points $(\mq_1^\pr)\in\mathbb{R}^{n-1}$ are taken as parameters for
$\Psi$ in Eq.~(\ref{1500}), and, therefore, the symbol $\Psi$ represents different functions in
Eqs.~(\ref{2720}) and (\ref{1500}).
\end{theorem}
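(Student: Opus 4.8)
The plan is to read off each of the four conclusions as the $\mrr_1$--component (treating $\mq_1^\pr=(\mrr_2,\ldots,\mrr_n)$ as fixed parameters) of an $n$--body identity already in hand, exploiting two structural facts. First, by Eq.~(\ref{4027}) the potential $U=\bar V+\fc12\bar W$ obeys $\nabla_1 U=\nabla_1 V+\nabla_1 W_1$: the factor $\fc12$ on $\bar W$ is exactly cancelled by the pair symmetry $W_{12}(\mrr,\mrrp)=W_{12}(\mrrp,\mrr)$ (the double counting of Eq.~(\ref{4463})), so the first body feels the \emph{full} pair interaction $W_1$. Second, hypothesis~(\ref{5207}) says, via Eq.~(\ref{p5204}), that $\nabla_1\cdot(\rho_m\bv_1)=\zero$, hence by Def.~(\ref{4725}) that $\Pv_1=\zero$; this per--component (not merely aggregate) continuity is the crucial extra input, since the $n$--body TISE alone only forces $\sum_i\nabla_i\cdot(\rho_m\bv_i)=0$.

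For the energy equation and the reduced TISE I would take the $i=1$ member of the sequence Eq.~(\ref{lapl}) of Lemma~\ref{5538}, namely $\Psi^*(-\fc{\hbar^2}{2m}\nabla_1^2\Psi)=\fc12\rho_m u_1^2+\fc12\rho_m v_1^2+\Pu_1+i\Pv_1$. Since $\Pv_1=\zero$ the left side is real; add $(V+W_1)\rho$ to both sides and \emph{define} $\ES_1$ by $\ES_1\rho\defi\Psi^*[(-\fc{\hbar^2}{2m}\nabla_1^2+V+W_1)\Psi]$. This is exactly Eq.~(\ref{eneq-cross}), and because its right side is real and $\rho>0$ a.e.\ (Def.~(\ref{def00}) with the vanishing--at--infinity condition on $\bar V,\bar W_i$), $\ES_1$ is real a.e. Then $\Psi^*[(-\fc{\hbar^2}{2m}\nabla_1^2+V+W_1)\Psi-\ES_1\Psi]=\ES_1\rho-\ES_1\rho=0$, and since $\Psi^*\ne 0$ a.e.\ the bracket vanishes a.e.; that is Eq.~(\ref{1500}), a Schr\"odinger equation in $\mrr_1$ parameterized by $\mq_1^\pr$.

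For the Euler equation and Newton's second law I would invoke the Lagrangian apparatus of Sec.~\ref{9918}: by Theorem~\ref{theorem88e} the $n$--body TISE~(\ref{2720}) together with the orthogonal condition is equivalent to Hamiltonian~(\ref{5028c}) with $H=\ES$ constant plus continuity, and by Theorem~\ref{theorem88b} adjoining the per--component mass conservation $\nabla_i\cdot(\rho_m\bv_i)=\zero$---which is precisely~(\ref{5207})---yields the $n$--sequence Eq.~(\ref{3510}) and its integral/material form Eq.~(\ref{4028}) on $\text{Range}(P_I[\vec{\mq}_0])$. Taking the $i=1$ member of Eq.~(\ref{3510}), replacing $(\mx_1,\my_1)$ by $(\bu_1,\bv_1)$ via~(\ref{2430c}), and substituting $\nabla_1 U=\nabla_1 V+\nabla_1 W_1$ gives Eq.~(\ref{2530}). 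Eq.~(\ref{1502}) then follows by re--expressing~(\ref{2530}) in material--derivative (``Newton's second law'') form with Lemma~\ref{tmom}, Eq.~(\ref{5942})---legitimate in the $\mrr_1$ slice because $\bw_1=\bu_1+\bv_1$ is irrotational and, for a stationary state with~(\ref{5207}), $\partial\bw_1=\zero$ and $\partial\rho+\nabla_1\cdot(\rho\bv_1)=\zero$---using $w_1^2=u_1^2+v_1^2$ (orthogonality) together with the componentwise mass--flux identities~(\ref{5501})--(\ref{5504}) and~(\ref{1222}), then collecting terms.

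The main obstacle is the status of the parameterized quantity $\ES_1$. As defined it depends in general on $\mrr_1$ (the TISE pins down only the total $\ES$), so Eq.~(\ref{2530}) cannot be obtained by differentiating Eq.~(\ref{eneq-cross})---that would leave a residual $\rho\nabla_1\ES_1$---and must instead come from the momentum/Lagrange equations, which are asserted only on $\text{Range}(P_I[\vec{\mq}_0])$; making that restriction do its work, and pinning down in exactly what sense Eqs.~(\ref{2530}) and~(\ref{1502}) then hold pointwise, is the delicate step, the genuine passage to a true 1--body problem with constant $\ES_1$ (integrating out $\mrr_2,\ldots,\mrr_n$) being explicitly postponed to Sec.~\ref{4918}. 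A secondary burden is the potential bookkeeping: one must keep verifying that it is the undivided $W_1$, not a symmetrized half of it, that surfaces---which is the content of Eq.~(\ref{4027})---and that the per--component form of~(\ref{5207}), not merely $\partial\rho=0$, is what forces $\Pv_1=\zero$ and thereby makes $\ES_1$ real and Eq.~(\ref{1500}) an honest Schr\"odinger equation rather than one carrying a complex eigenvalue.
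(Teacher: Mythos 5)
Your route to the two dynamical conclusions---Eqs.~(\ref{2530}) and (\ref{1502}) via Theorem~\ref{theorem88b}, the identification of condition~(\ref{5207}) with per-component mass conservation $\nabla_i\cdot(\rho_m\bv_i)=\zero$ hence $\Pv_1=\zero$, and the cancellation of the factor $\fc12$ on $\bar W$ by pair symmetry---is essentially the paper's own argument. The gap is in how you reach Eqs.~(\ref{eneq-cross}) and (\ref{1500}). By \emph{defining} $\ES_1\rho \defi \Psi^*[(-\fc{\hbar^2}{2m}\nabla_1^2 + V + W_1)\Psi]$ you make Eq.~(\ref{eneq-cross}) an instance of identity~(\ref{lapl}) and Eq.~(\ref{1500}) a tautology: any $\Psi$ with $\rho>0$ satisfies $\hat{h}_1\Psi = (\hat{h}_1\Psi/\Psi)\Psi$. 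The entire content of the claim that (\ref{1500}) is ``the TISE'' is that $\ES_1$ is independent of $\mrr_1$ (for fixed parameters $\mq_1^\pr$), and your proposal never establishes this; worse, you assert that the constancy of $\ES_1$ is ``explicitly postponed to Sec.~\ref{4918},'' which conflates two different things. What is postponed to Sec.~\ref{4918} is the integration over $\mrr_2,\ldots,\mrr_n$; the constancy of $\ES_1$ in $\mrr_1$ within each parameterized slice is needed \emph{in this theorem} and is in fact available.

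The paper closes exactly this gap by reversing your order of deduction: it derives Eq.~(\ref{2530}) first from the Lagrangian machinery, observes that (\ref{2530}) is Euler Eq.~(\ref{euler0}) for a one-body stationary state, and then invokes Theorem~\ref{5293} (whose proof shows the Euler equation is $\rho$ times the gradient of the energy equation, so integration in $\mrr_1$ produces $\ES_1$ as a constant of integration) to obtain (\ref{eneq-cross}) and the real part of (\ref{1500}) with a genuinely uniform $\ES_1$. Concretely, dividing your Eq.~(\ref{2530}) by $\rho$ and using identity~(\ref{2077}) in reverse gives $\nabla_1\bigl[\fc12 m(u_1^2+v_1^2) + \Pu_1\rho^{-1} + V + W_1\bigr]=\zero$, and the bracket is precisely your pointwise-defined $\ES_1$ (given $\Pv_1=\zero$); so the material you already have suffices, but the step must be made explicitly, and your ``main obstacle'' paragraph should be replaced by it rather than deferring it. The imaginary part is handled equivalently in both treatments ($\Pv_1=\zero$ versus the paper's appeal to Theorem~\ref{conti}, {\bf 4}$\Leftrightarrow${\bf 6}, restricted to the slice), and your derivation of (\ref{1502}) from Lemma~\ref{tmom} matches the paper's direct manipulation of the material derivative for steady, smooth, irrotational flow.
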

\textbf{\textit{Proof}}
Let the subspace $\mathbb{R}_1^{3n}[\mq_1^\pr]\subset \mathbb{R}^{3n}$
parameterized by $(\hs{0.15ex}(\mq_1^\pr) \defi (\mq_2,\cdots\mq_n) \in\mathbb{R}^{3(n-1)}\hs{0.1ex})$ and isomorphic with
$\mathbb{R}^3$, be
\[\mathbb{R}_1^{3n}[\mq_1^\pr] \defi \{ (\mq_1,\mq_1^\pr)\in \mathbb{R}^{3n}\vert (\mq_1^\pr)\in \mathbb{R}_1^{3(n-1)} \!\text{ is fixed.}\}.\]
Consider Theorem~\ref{conti} part {\bf I} result {\bf 4} $\Leftrightarrow$ {\bf 6}. For a
1-body stationary state, this logic reduces to the special case
\begin{equation} \hs{-5ex} \zlabel{5629}
\text{Re}\left(\nabla_1\cdot\left(\Psi^*\hat{\text{\rm P}_1}\Psi\right)\right) = \zero
\Leftrightarrow
\text{Im}\left(\Psi^*\hat{H}\Psi\right) = \zero.
\end{equation}
Since $\text{Im}(\Psi^*\ES_1\Psi)$ vanish identically, the second equation of this logic is the
imaginary part of the 1-body TISE~(\ref{schrodinger}). We identify the first equation of
$n$-body sequence~(\ref{5207}) with domain $\mathbb{R}_1^{3n}[\mq_1^\pr]$ as the lhs equation
of logic~(\ref{5629}), parameterized with $(\mq_1^\pr)$. Hence, we have the following: {\bf The
  $n$-body equation sequence~(\ref{5207}), restricted to subspace
  $\mathbb{R}_1^{3n}[\mq_1^\pr]\subset \mathbb{R}^{3n}$, implies the satisfaction of the
  imaginary part of the 1-body TISE~(\ref{1500}), that is multiplied by $\Psi^*$ and parameterized with
  $(\mq_1^\pr)$.}

Substituting Def.~(\ref{def10}) into condition~(\ref{5207}), we recognize that
condition~(\ref{5207}) is equivalent to body mass conservation: $(\nabla_i\cdot(\rho_m\bv_i) =
\zero)$. From Theorem~\ref{theorem88b}, under the condition $(\nabla_i\cdot(\rho_m\bv_i) =
\zero)$, the $n$-body TISE \underline{$\mathbf{(\ref{2720})} \imply \mathbf{(\ref{3510})}$}
with the functions restricted to $\text{Range }P[\spa\vec{\mq}_0\spa]$. Consider the first
equation from equation sequence~(\ref{3510}) with initial position
$(\mq_1,\mq_1^\pr)\in\mathbb{R}^{3n}$ at time $t_0$ of range $(t_0,t)\in\mathbb{R}$ of path
$P[(\mq_1,\mq_1^\pr)]$.  At this initial point of space $(\mq,\mq_1^\pr)\in\mathbb{R}^{3n}$ and
initial time $t_0\in\mathbb{R}$, using Eq.~(\ref{2430c}) with $(\hs{0.1ex}(\vec{\mq}) =
(\vec{\mq}_0) = (\mq_1,\mq_1^\pr) \hs{0.1ex})$, this equation becomes Eq.~(\ref{2530}),
where the body force $-\nabla_1 U$ of Def.~(\ref{4027}) is substituted.  Since $\mq_1$ of the
point $(\mq_1,\mq_1^\pr)$ is arbitrary, Eq.~(\ref{2530}) can be extended from the point
$(\mq_1,\mq_1^\pr)$ to $\mathbb{R}_1^{3n}[\mq_1^\pr]$ a.e. Hence, by this extension,
\underline{$\mathbf{(\ref{2720})} \imply \mathbf{(\ref{2530})}$},

Almost everywhere, by inspection, {\bf \underline{Eq.~(\ref{2530}) is Euler
    Eq.~(\ref{euler0})}} {\it for one-body stationary states}, under the following special
conditions: {\bf 1)} $\mq^\pr$ are parameters, {\bf 2)} $\nabla U$ is defined by
definition~(\ref{4027}), {\bf 3)} the $n$-body state is stationary, and {\bf 4)}
constraint~(\ref{5207}), where the flow is a 1-body quantum flow. From statement {\bf 2} of
Theorem~\ref{5293}, we have the following: The 1-body Euler Eq.~(\ref{euler0}) $\imply$ the
real part of the 1-body TISE (\ref{1500}), multiplied by $\Psi^*$. Combining this with the
results above, we have $\mathbf{Eq.~(\ref{2720})} \text{ and } \mathbf{Eq.~(\ref{5207})})
\imply \mathbf{Eq.~(\ref{1500})}$. Implication Eq.~(\ref{eneq-cross}) follows, because from
Theorem~\ref{5293}, we have $\mathbf{Eq.~(\ref{2530})}) \imply
\mathbf{Eq.~(\ref{eneq-cross})}$.

For the implication of Euler Eq.~(\ref{1502}), we begin by combining the following equalities for
steady, irrotation, smooth flow: 
\[\hs{-10ex}\fc12\rho_m\nabla_1(u_1^2 + v_1^2) + \nabla_1\cdot(\rho_m\bu_1)\bu_1
  = \rho_m \mathcal{D}(\bu_1 + \bv_1 )  + \bu_1 \nabla_1\rho_m\cdot\bu_1 + \rho_m\bu_1\nabla_1\cdot\bu_1,\]
%
\[\hs{-12ex} \mathcal{D}(\rho_m\bu_1) = \rho_m \mathcal{D}\bu_1  + \bu_1 \nabla_1\rho_m\cdot\bu_1, \quad
 \mathcal{D}(\rho_m\bv_1) = \rho_m \mathcal{D}\bv_1 + \bv_1 \nabla_1\rho_m\cdot\bv_1 = \rho_m \mathcal{D}\bv_1,\]
to obtain
\[\hs{-5ex}\fc12\rho_m\nabla_1(u_1^2 + v_1^2) + \nabla_1\cdot(\rho_m\bu_1)\bu_1
  =  \mathcal{D}(\rho_m\bu_1 + \rho_m\bv_1) + \rho_m\bu_1\nabla_1\cdot\bu_1.\]
Substituting this result in Eq.~(\ref{2530}) we obtain Eq.~(\ref{1502}).
\begin{flushright}\fbox{\phantom{\rule{0.5ex}{0.5ex}}}\end{flushright}
Renziehausen and Barth \cite{Renziehausenb} obtain an equation of motion for $n$-body
trajectories of Bohmian mechanics. The equation has a single total momentum derivative involving
all $n$ bodies.  Also, an Euler equation for $n$-body states that generalizes
Euler~Eq~(\ref{euler3}) has also been obtained \cite{Finley-equations}. For, say particle \#1,
the equation contains velocities and pressures of the other $(n-1)$ bodies that act like
forces. These equations could be derived for classical systems, but they would be more
difficult to solve, then the standard collection of $n$ equations of motion for $n$
bodies. 

\section{Quantum Flow Mixture and Orbital Flow \zlabel{4918}}

In this section, the forces and fluid momentae of Eq.~(\ref{1502}) are reduced to 1-body. A
generalization of Coulomb's law is identified. Inferences and hypotheses are employed to move
forward and obtain 1-body quantum orbital flow, with motion, energy and Schr\"odinger equations
for $n$ orbital states.  Before that, a flow mixture approach is also obtained, where many-body
states are treated as 1-body mixtures.

\subsection{The quantum reduced forces}

In this subsection, the forces of Euler Eq.~(\ref{1502}) are reduced to 1-body.
The next theorem obtains a modified Coulomb's law with the product of charge densities are
replaced by the pair function $\rho_2$, a two-body function of reduced density matrix theory
\cite{Parr:89,Cioslowski}. We begin with notation.

\begin{definition}[Scalar $V_{\mbox{\small e}}$ and vector $\mathbf{J}_{\mbox{\small e}}$ potentials]
  \mbox{}
\begin{equation} \hs{-9ex} \label{8529}
  \vec{\mathcal{E}}(\mrr_1) \defi
 -\left[\nabla_1 V_{\mbox{\small e}} + \nabla_1\times \mathbf{J}_{\mbox{\small e}}\right](\mrr_1)
 \defi - 2\int_2 \qcharge(\mrr_1,\mrr_2)\nabla_1 W_{12}(\mrr_2,\mrr_1)\; d\mrr_2,
\end{equation}
\begin{equation} \hs{-2ex}\zlabel{7529}
    \qcharge(\mrr_1,\mrr_2) \defi \fc{\rho_2(\mrr_1,\mrr_2)}{\hat{\rho}(\mrr_1)},\quad
    \rho_2 \defi \fc{(n-1)}{2}\sum_2\intf{2+} \rho,
\end{equation}
\begin{equation} \label{5024}
  \qcharge(\mrr_1,\mrr_2) \defi \fc12\hat{\rho}(\mrr_2) + \fc12\rho_{xc}(\mrr_1,\mrr_2),
\end{equation}
\begin{equation} \hs{-2ex}\label{4282}
  1\coulp{\mrr_1}{\mrr_2} \defi \fc{1}{|\mrr_1 - \mrr_2|},
   \!\quad\text{and}\quad  1\coulf{\mrr_1}{\mrr_2} \defi \fc{\mrr_1\! -\! \mrr_2}{|\mrr_1 - \mrr_2|^3},
\end{equation}
where the map $(W_{12},\qcharge)\mapsto (V_{\mbox{\small e}},\mathbf{J}_{\mbox{\small e}})$,
defined by Eq.~(\ref{8529}) is such that $|\nabla_1\times\mathbf{J}_{\mbox{\small e}}|$ is
minimized.  For electronic systems,
$(W_{12}(\mrr_1,\mrr_2) \defi |\mrr_1 \mbox{\tiny $\!\;\setminus\!\;$} \mrr_2|^{-1})$.
Reduced $m$--body density matrices of the wavefunction $\Psi$ are called $m$--body dentrices.
The function $\qcharge$ is the quantum charge density, and $\rho_2$ is the pair- or 2-density,
normalized to $(n-1)$, instead of the usual number of pairs $n(n-1)\div 2$.  We have
$(\hs{0.1ex}\rho_2(\mrr,\mrrp) = \gamma_2(\mrr,\mrrp; \mrr,\mrrp)\hs{0.1ex})$, where $\gamma_2$
is the reduced 2-body dentrix, normalized to $(n-1)\div 2$. As indicated in Def.~(\ref{2915}),
the probability density function $(\hat{\rho}:\mathbb{R}^3\rightarrow \mathbb{R})$ is
parameterized by one spin variable. As indicated in Def.~(\ref{7529}.2), the second spin
variable $\omega_2$ of $\rho_2$, corresponding the second spatial component of
$(\mrr_1,\cdots \mrr_n)\in\mathbb{R}^3$, is summed over. However, the spatial variable
$\mrr_2$ is not integrated over.  Therefore, $\rho_2$ has one spin parameter that corresponds
to $\mrr_1$.
\end{definition}
The requirement that $|\nabla_1\times\mathbf{J}_{\mbox{\small e}}|$ is minimized is given
because a Helmholtz decomposition is not unique. This is easily verified by recognizing that a
vector field can be both irrotational and solenoidal; such vectors have potentials that satisfy
Laplace's equation. An example being the velocity vector $\bv$ for hydrogen electron states,
except for the s states.

In this section, we make a hypothesis when we want to know if a statement is true, and we
think it might be true. Such propositions are considered useful even if they are false, if
they  provide good approximations for a range of states.
\begin{hypothesis}[Force $\vec{\mathcal{E}}$ conservation.]
  \zlabel{1088} $(\nabla_1\times \mathbf{J}_{\mbox{\small e}} = \zero)$.
\end{hypothesis}
%



\begin{theorem}[Quantum Coulomb's law] \label{1970} Let $W_{12}(\mrr_1,\mrr_2) \defi |\mrr_1 \mbox{\tiny $\!\;\setminus\!\;$} \mrr_2|^{-1}$.
\begin{equation}\hs{-3ex}\label{5020}
  \vec{\mathcal{E}}(\mrr_1)  = \fc{1}{4\pi\epsilon_0}\int_2 2\qcharge(\mrr_1,\mrr_2)\coulf{\mrr_1}{\mrr_2};
  \qquad 4\pi\epsilon_0 = 1.
\end{equation}
Under hypothesis~\ref{1088}: $(\vec{\mathcal{E}}\cdot d\hat{\mathbf{s}} =
-\nabla V_{\mbox{\small e}}\cdot d\hat{\mathbf{s}})$, we have
\begin{equation}\label{5022}
V_{\mbox{\small e}}(\mrr_1) = \fc{1}{4\pi\epsilon_0}\int_{\infty}^{\mrr_1}
\left(-\int_2 2\qcharge(\mss,\mrr_2)\coulf{\mss}{\mrr_2} \right)\cdot d\mss,
\end{equation}
that can be written
\begin{equation}\hs{-10ex}\label{5022b}
V_{\mbox{\small e}}(\mrr_1) =
\fc{1}{2\pi\epsilon_0}   \int_2 \qcharge(\mrr_1,\mrr_2) \coulp{\mrr_1}{\mrr_2}
  - \fc{1}{2\pi\epsilon_0}\int_{\infty}^{\mrr_1}
  \left(\int_2 \nabla_\mss\qcharge(\mss,\mrr_2)\coulp{\mss}{\mrr_2}\right)\cdot d\mss. \hs{0.5ex}
\end{equation}
Using the definition for the exchange correlation hole $\rho_{xc}$, Def.~(\ref{5024}), we have
\begin{equation} \hs{-8ex} \zlabel{5026}  
  \vec{\mathcal{E}}(\mrr_1) 
  = \fc{1}{4\pi\epsilon_0}\int d\mrr_2\;\; \hat{\rho}(\mrr_2)\coulf{\mrr_1}{\mrr_2} 
  +  \fc{1}{4\pi\epsilon_0}\rho_{xc}(\mrr_2,\mrr_1)\coulf{\mrr_1}{\mrr_2},
\end{equation}
\end{theorem}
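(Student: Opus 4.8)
The plan is to derive Eqs.~(\ref{5020})--(\ref{5026}) from the definition~(\ref{8529}) of $\vec{\mathcal{E}}$ by direct substitution together with a single integration by parts; the only analytic input beyond elementary vector calculus is the decay of the pair density $\rho_2$ and of the Coulomb kernel at infinity.

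First I would compute the gradient of the interaction kernel in the electronic case $W_{12}(\mrr_2,\mrr_1) = |\mrr_1-\mrr_2|^{-1}$, namely $\nabla_1 W_{12}(\mrr_2,\mrr_1) = -(\mrr_1-\mrr_2)|\mrr_1-\mrr_2|^{-3} = -1\coulf{\mrr_1}{\mrr_2}$ by Def.~(\ref{4282}), using the standard identity $\nabla_\mss|\mss-\mathbf{a}|^{-1} = -(\mss-\mathbf{a})|\mss-\mathbf{a}|^{-3}$. Substituting this into the last equality of Def.~(\ref{8529}) and setting $4\pi\epsilon_0 = 1$ cancels the two minus signs and gives Eq.~(\ref{5020}) at once. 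Equation~(\ref{5026}) then follows immediately by inserting the decomposition $\qcharge(\mrr_1,\mrr_2) = \fc12\hat\rho(\mrr_2) + \fc12\rho_{xc}(\mrr_1,\mrr_2)$ of Def.~(\ref{5024}) into Eq.~(\ref{5020}) and distributing the integral over the two terms, the factor $2$ absorbing the halves.

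For Eq.~(\ref{5022}) I would invoke Hypothesis~\ref{1088}, $\nabla_1\times\mathbf{J}_{\mbox{\small e}} = \zero$, so that the first equality of Def.~(\ref{8529}) reduces to $\vec{\mathcal{E}} = -\nabla_1\Ve$; hence $\vec{\mathcal{E}}$ is conservative, and with $\Ve$ normalized to vanish at infinity (consistent with the stated decay of $\bar W$, so that the Helmholtz potentials of $\vec{\mathcal{E}}$ may be fixed to vanish there) one obtains $\Ve(\mrr_1) = -\int_\infty^{\mrr_1}\vec{\mathcal{E}}(\mss)\cdot d\mss$ along any path, and substituting Eq.~(\ref{5020}) yields Eq.~(\ref{5022}). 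To pass to Eq.~(\ref{5022b}) I would rewrite the vector kernel as a gradient, $1\coulf{\mss}{\mrr_2} = -\nabla_\mss\bigl(|\mss-\mrr_2|^{-1}\bigr)$, and apply the product rule $\qcharge(\mss,\mrr_2)\nabla_\mss|\mss-\mrr_2|^{-1} = \nabla_\mss\bigl(\qcharge(\mss,\mrr_2)|\mss-\mrr_2|^{-1}\bigr) - |\mss-\mrr_2|^{-1}\nabla_\mss\qcharge(\mss,\mrr_2)$ inside the line integral of Eq.~(\ref{5022}); the exact-differential piece integrates to the endpoint value $2\int_2\qcharge(\mrr_1,\mrr_2)|\mrr_1-\mrr_2|^{-1}$, with the lower endpoint at $\infty$ dropping out because $|\mss-\mrr_2|^{-1}\to 0$ while $\int_2\qcharge(\mss,\mrr_2)$ stays bounded, and collecting the remaining factors of $2$ and $(4\pi\epsilon_0)^{-1}$ produces Eq.~(\ref{5022b}).

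The main obstacle will be the careful bookkeeping of signs and of the boundary terms: justifying $\Ve(\infty)=0$, the vanishing of the lower-endpoint contribution in the integration by parts, and the path independence of the line integrals, all of which rest on Hypothesis~\ref{1088} together with the decay of $\rho_2$ and of $W_{12}$ at infinity. Once those are secured, the remainder of the argument is routine vector calculus.
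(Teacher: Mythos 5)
Your proposal is correct and follows essentially the same route as the paper's proof: identity~(\ref{2724}) substituted into Def.~(\ref{8529}) gives Eq.~(\ref{5020}), path integration under Hypothesis~\ref{1088} gives Eq.~(\ref{5022}), the product rule $\nabla_\mss\bigl(\qcharge\coulp{\mss}{\mrr_2}\bigr) = \nabla_\mss\qcharge\coulp{\mss}{\mrr_2} - \qcharge\coulf{\mss}{\mrr_2}$ gives Eq.~(\ref{5022b}), and Def.~(\ref{5024}) gives Eq.~(\ref{5026}). Your added attention to the boundary terms and the normalization $\Ve(\infty)=0$ is a reasonable elaboration of what the paper leaves implicit, but it does not change the argument.
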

\textbf{\textit{Proof}} \mbox{} 
Using Def.~(\ref{8529}) and the identity
\begin{equation} \zlabel{2724}
\nabla_1 \left(1\coulp{\mrr_1}{\mrr_2}\right) = -1\coulf{\mrr_1}{\mrr_2},
\end{equation}
we obtain Eq.~(\ref{5020}), and from path integration and by Hypothesis~\ref{1088},
Eq.~(\ref{5022}). Equations~(\ref{5026}) follows immediately by
Def.~(\ref{5024}). Similarly, for Eq.~(\ref{5022b}), we have
\[ \hs{-8ex}
\nabla_\mss\left(\qcharge(\mss,\mrr_2)\coulp{\mss}{\mrr_2} \right)
= \nabla_\mss\qcharge(\mss,\mrr_2)\coulp{\mss}{\mrr_2}   - \qcharge(\mss,\mrr_2)\coulf{\mss}{\mrr_2}.\]
Solving for the last term on the rhs and substituting the  result into Eq.~(\ref{5022}), we obtain Eq.~(\ref{5022b}).
\begin{flushright}\fbox{\phantom{\rule{0.5ex}{0.5ex}}}\end{flushright}
If we use the 2-density of a closed shell determinantal-state, given by, for our normalizations, 
\[\rho_2(\mrr_1,\mrr_2) = \rho(\mrr_1)\rho(\mrr_2) - \rho_1(\mrr_1,\mrr_2)\rho_1(\mrr_2,\mrr_1),\]
for the first term on the rhs of Eq.~(\ref{5022b}), and neglect the second term of $\Ve$, we have
\[\hs{-10.5ex}\int d\mrr_1\; \Ve\hs{0.1ex} \rho(\mrr_1) 
  = \fc{1}{4\pi\epsilon_0}\int\int d\mrr_1d\mrr_2 \;\;\rho(\mrr_1)\rho(\mrr_1)\coulp{\mrr_1}{\mrr_2}
  - \rho_1(\mrr_1,\mrr_2)\rho_1(\mrr_2,\mrr_1)\coulp{\mrr_1}{\mrr_2}.\]
This is two times the Hartree Fock Coulomb and exchange energies, in agreement with their
contributions to the sum of the occupied orbital energies of Hartree Fock theory, and this
double counting is also in agreement with the classical energy Eq.~(\ref{4463}) for $n$ point
masses.

\begin{theorem}[The quantum reduced forces and momentae derivatives] \zlabel{1965}\mbox{}
  Let
\begin{equation} \zlabel{7824}
\hat{\Pu}_1 \defi  \intf{1+} \Pu =  -\zeta\zeta_0\nabla^2 \intf{1+} \rho = -\zeta\zeta_0\nabla^2\hat{\rho}.
\end{equation}
\begin{equation} \hs{-5ex}\zlabel{2042}
  \intp -\nabla_1\Pu_1 - \rho\nabla_1 V - \rho\nabla_1 W_1 
  = -\nabla_1\hat{\Pu} - \hat{\rho}\nabla_1 V + \hat{\rho}\hs{0.3ex}\vec{\mathcal{E}}.
\end{equation}
\begin{equation} \hs{-5ex}\label{2032}
  \intp -\rho\nabla_1W_1 = \hat{\rho}\hspace{0.1ex}\vec{\mathcal{E}}\hspace{0.15ex};
  \qquad \vec{\mathcal{E}} \defi -\nabla_1 V_{\mbox{\small e}} - \nabla_1\times \mathbf{J}_{\mbox{\small e}}.
\end{equation}
\begin{equation} \hs{-1ex}\label{2132}
  \nabla_1\cdot(\rho\bv_1) = \zero \imply \nabla\cdot(\hat{\rho}\hat{\bv}) = \zero.
\end{equation}
\begin{equation} \hs{-5ex}\zlabel{2147}
\intp \fc{d}{dt}(\rho_m\bu_1 + \rho_m\bv_1) = \fc{d}{dt}(\hat{\rho}_m\hat{\bu} + \hat{\rho}_m\hat{\bv}).
\end{equation}
Steady flow and $\nabla_1\cdot(\rho\bv_1)$ vanish imply
\begin{equation} \hs{-7ex}\label{1502r}
\fc{d}{dt}(\hat{\rho}_m\hat{\bu} + \hat{\rho}_m\hat{\bv})
+ \intf{1+}\rho_m\bu_1\nabla_1\cdot\bu_1 
= -\nabla_1\hat{\Pu} - \hat{\rho}\nabla_1 V + \hat{\rho}\hs{0.3ex}\vec{\mathcal{E}}.
\end{equation}
\end{theorem}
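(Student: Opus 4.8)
The plan is to establish the five displayed equations in the order \eqref{7824}, \eqref{2032}, \eqref{2042}, \eqref{2132}, \eqref{2147}, and finally to assemble \eqref{1502r} by integrating the parameterized one-body Euler equation \eqref{1502} of Theorem~\ref{5935} over the variables $\mrr_2,\cdots,\mrr_n$. The key observation throughout is that the operator $\intf{1+}$ commutes with $\nabla_1$ and with the time derivative $d/dt$ along the flow, so the main work is bookkeeping of which terms survive the partial integrations.

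First I would verify \eqref{7824}: starting from Def.~\eqref{4720}.2, $\Pu = -\zeta\zeta_0\nabla^2\rho$, and noting that for $i\ge 2$ the terms $\nabla_i^2\rho$ integrate to zero over $\mathbb{R}^3$ in the variable $\mrr_i$ (by the vanishing-at-infinity condition, exactly as in Lemma~\ref{vanlem}), one has $\intf{1+}\nabla^2\rho = \nabla_1^2\intf{1+}\rho = \nabla_1^2\hat{\rho}$, which gives $\hat{\Pu}_1 = -\zeta\zeta_0\nabla_1^2\hat{\rho} = -\zeta\zeta_0\nabla^2\hat{\rho}$; here $\nabla^2$ in the last expression means $\nabla_1^2$ since $\hat\rho$ depends on $\mrr_1$ alone. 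Then \eqref{2032} is precisely the content of Theorem~\ref{1965}'s companion computation together with Def.~\eqref{4200b} and Def.~\eqref{8529}: $\intf{1+}\rho\nabla_1 W_1 = \intf{1+}\rho\,\nabla_1\!\sum_{j\ne 1}W_{12}(\mrr_j,\mrr_1)$, and since $\rho$ is symmetric under interchange of identical-particle coordinates, each of the $(n-1)$ summands contributes equally, giving $(n-1)\int_2\big(\sum_{2+}\intf{2+}\rho\big)\nabla_1 W_{12}(\mrr_2,\mrr_1)\,d\mrr_2 = 2\int_2 \rho_2(\mrr_1,\mrr_2)\nabla_1 W_{12}(\mrr_2,\mrr_1)\,d\mrr_2 = -\hat\rho(\mrr_1)\,\vec{\mathcal{E}}(\mrr_1)$ after dividing and multiplying by $\hat\rho$ and invoking the definitions \eqref{7529} and \eqref{8529}. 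Equation~\eqref{2042} then follows by adding \eqref{7824} (with a gradient applied, using again that $\nabla_1$ passes through $\intf{1+}$) and the trivially-integrated term $\intf{1+}\rho\nabla_1 V = (\nabla_1 V)\intf{1+}\rho = \hat\rho\,\nabla_1 V$, since $V = V(\mrr_1)$ is a function of $\mrr_1$ only.

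For \eqref{2132}, I would note $\nabla_1\cdot(\rho\bv_1) = \zero$ means $\nabla_1\cdot(\rho\nabla_1 S/m) = \zero$; integrating over $\mrr_2,\cdots,\mrr_n$ and pulling $\nabla_1\cdot$ outside gives $\nabla_1\cdot\intf{1+}(\rho\bv_1) = \zero$, and by the already-established equivalence \eqref{4035}, $\intf{1+}\rho\bv_1 = \hat\rho\hat{\bv}$, hence $\nabla\cdot(\hat\rho\hat{\bv}) = \zero$. Equation~\eqref{2147} uses that the total time derivative along the flow commutes with the spatial integration over the passive coordinates $\mrr_2,\cdots,\mrr_n$ (these are held as parameters defining the path), together with \eqref{4035}: $\intf{1+}(\rho_m\bu_1+\rho_m\bv_1) = \hat\rho_m\hat{\bu}+\hat\rho_m\hat{\bv}$, and applying $d/dt$ to both sides. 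Finally, \eqref{1502r} is obtained by applying $\intf{1+}$ termwise to Euler Eq.~\eqref{1502} of Theorem~\ref{5935}, substituting \eqref{2147} on the momentum-derivative term, \eqref{2042} on the force side, and leaving the term $\intf{1+}\rho_m\bu_1\nabla_1\cdot\bu_1$ as is (it cannot be reduced further without additional hypotheses, which is exactly why it remains in the statement).

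The main obstacle I anticipate is \emph{not} any single calculation but rather the careful handling of the symmetry-and-counting step in \eqref{2032}: one must be precise about the normalization of $\rho_2$ (normalized to $n-1$, not to the number of pairs), about which spin variable is summed versus kept as a parameter (Def.~\eqref{7529} keeps the $\mrr_1$-spin fixed and sums $\omega_2$), and about the factor of $2$ that turns $\frac12\bar W$ in the Hamiltonian into the Coulomb-law coefficient. A secondary subtlety is justifying that $\intf{1+}$ genuinely commutes with $d/dt$ along trajectories — this is legitimate here because the theorem's hypotheses ($n$-body stationary state, orthogonal condition, body-mass conservation $\nabla_i\cdot(\rho_m\bv_i)=\zero$) make the passive coordinates true parameters of the path $P[\vec\mq_0]$, so differentiating under the integral sign is valid; I would state this explicitly rather than gloss over it.
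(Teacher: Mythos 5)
Your proposal is correct and follows essentially the same route as the paper's proof: the symmetry-and-counting reduction of the $W_1$ term to $2\int_2\rho_2\nabla_1W_{12}$ via Defs.~(\ref{7529}) and (\ref{8529}), the trivial reduction of the $V$ and $\Pu_1$ terms, the use of Eqs.~(\ref{4035}) for (\ref{2132}) and (\ref{2147}), and the assembly of (\ref{1502r}) from Eqs.~(\ref{1502}) and (\ref{2042}). Your explicit justification of interchanging $d/dt$ with $\intf{1+}$ is slightly more careful than the paper's, which simply assumes it, but the argument is the same.
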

\textbf{\textit{Proof}} The reduction of $-\nabla\Pu_1$ follows from Def.~(\ref{7824}).
Recognizing from Def.~(\ref{4200a}) that $V$ only depends on $\mrr_1$, we have
\begin{equation} 
  \intp \rho\nabla_1 V = \hat{\rho}\nabla_1 V.
\end{equation}
For the remaining term,
%
using Def.~(\ref{4200b}), we have
\[\left[\intp \rho\nabla_1W_1\right](\mrr_1)
= \sum_{i\,> 1}^n\,\intp\rho(\mrr_1\cdots\mrr_n)\nabla_1W_{12}(\mrr_i,\mrr_1)\, d\mrrp_1.\]
As is done in density matrix theory, consider the term involving $W_{12}(\mrr_3,\mrr_1)$.  By
interchanging the $\mrr_2$ and $\mrr_3$ dummy variables in the integrand, and using the fact
that $\rho$ is symmetric with respect to interchange of coordinates, we recognize that the
terms with $W_{12}(\mrr_2,\mrr_1)$ and $W_{12}(\mrr_3,\mrr_1)$ are equal. Hence, by a
generalization, each of the $(n-1)$ terms of the sum are equal under integration, giving
\[\hs{-8ex}\left[\intp \rho\nabla_1W_1\right](\mrr_1)
= 2\int_{2}\rho_2(\mrr_1,\mrr_2) \nabla_1W_{12}(\mrr_2,\mrr_1), \]
where Def.~(\ref{7529}.2) for the 2-density $\rho_2$ is used.  Def.~(\ref{8529}) for
$\vec{\mathcal{E}}$ and Def.~(\ref{7529}) for $\qcharge$ then give
Eq.~(\ref{2032}). Eq.~(\ref{2132}) follows from Def.~(\ref{4035}.2).

For Eq.~(\ref{2147}), first we recognize that the time derivative acts on the first component
$n$-tuple. (It is understood that this first variable is a composition time and position,
and the derivative is computed using the chain rule.) Under the assumption of the interchange of
the order of differentiation and integration, Eq.~(\ref{2147}) follows from Eqs.~(\ref{4035}).
Eq.~(\ref{1502r}) follows immediately from Eqs.~(\ref{1502}) and (\ref{2042}).

\begin{flushright}\fbox{\phantom{\rule{0.5ex}{0.5ex}}}\end{flushright}
The reduced pressure $\hat{\Pu}$ also appears as a scalar pressure in the many-body
hydrodynamics equations of motion of Renziehausen and Barth \cite{Renziehausenb}, a formalism
involving a pressure tensor without fluid velocity $\hat{\bu}$.
%
%


We now define a type of flow with reduced variables.
\begin{definition}[Reduced quantum flows]
  The reduced flow $\mathcal{F}(\hat{\rho},\hat{\Pu},\hat{\bu},\hat{\bv})$, of the quantum
  $n$-body flow  $\mathcal{F}(\rho,\Pu,\bu,\bv)$, is the flow that has the reduced variables of
  $\hat{\rho}$, $\hat{\Pu}$, $\hat{\bu}$, $\hat{\bv}$ of $\mathcal{F}(\rho,\Pu,\bu,\bv)$,
  where $\hat{\bu}$ and $\hat{\bv}$ are given by Def.~(\ref{4035}), and the reduced pressure is
  $(\hat{\Pu} = -\zeta\zeta_0\nabla^2\hat{\rho})$.
\end{definition}

Unfortunately we do not know how to reduce the spoiler integral in Eq.~(\ref{1502r}) involving
the volumetric dilation, given the nonlinear velocity $\bu$ dependence, thwarting a complete
reduced equation of motion. Ideally, we would like to replace Eq.~(\ref{1502r}) with
\begin{equation} \zlabel{8322}
\mathcal{D}\!\!\int_\mathbb{V} \hat{\rho}_m\hat{\bu} + \hat{\rho}_m\hat{\bv} =
\int_\mathbb{V} -\nabla_1\hat{\Pu} - \hat{\rho}\nabla_1 V + \hat{\rho}\hs{0.3ex}\vec{\mathcal{E}}.
\end{equation}

To try to find a way around this problem, we make an argument that the spoiler integral can be
neglected.  Consider a generalization of equation of motion~(\ref{1502}), where we have a
moving control volume with velocity $\mx_c$. Letting $(\mx = \bu_1 - \mx_c)$ be the relative
velocity of an observer that moves with the control, and taking $(\bv =\zero)$ and steady flow,
we have
\begin{equation} \hs{-7ex} \label{1502b}
\int_\mathcal{S}\rho_m\bu_1 (\mx\cdot\hat{\mathbf{n}})
  = \int_{\mathbb{V}} -\nabla_1\Pu_1 - \rho\nabla_1V - \rho\nabla_1 W_1 + \rho_m G(\mx_c),
\end{equation}
where $\hat{G}$ is the noninertial force caused by the accelerating control volume. Let
$\mx:\mathbb{R}^3\rightarrow \mathbb{R}^3$ be unparameterized: velocity $\mx$ is required to be
the same velocity field for@all parameters from $\mathbb{R}^{3(n-1)}$.  Eqs.~(\ref{1502}) and
(\ref{1502b}) are two different coordinate systems observing the same event, and, therefore,
carry the same information.  In the special case where $(\mx_c = \zero)$ is taken, equation of
motion~(\ref{1502b}) becomes Eq.~(\ref{1502}). (The momentum time derivatives of the two are
related by identities~(\ref{4125b}) and (\ref{4282b}).)  Applying the operation
$\;\hat{\!\!\int}_{1+}$, we have
\[\int_\mathcal{S}\hat{\rho}_m\hat{\bu} (\mx \cdot\hat{\mathbf{n}}) 
= \int_\mathbb{V} -\nabla_1\hat{\Pu} - \hat{\rho}\nabla_1 V + \hat{\rho}\hs{0.3ex}\vec{\mathcal{E}} + \hat{\rho}_m\hat{G}(\mx_c).\]
Since $\mx$ is at our disposal, we choose $(\mx= \hat{\bu})$, giving
\begin{equation} \label{1502c}
\int_\mathcal{S}\hat{\rho}_m\hat{\bu} (\hat{\bu} \cdot\hat{\mathbf{n}}) = \int_\mathbb{V}
-\nabla_1\hat{\Pu} - \hat{\rho}\nabla_1 V + \hat{\rho}\hs{0.3ex}\vec{\mathcal{E}} +
\intf{1+}\rho_mG (\bu_1 -\hat{\bu}).
\end{equation}
Consider the generalized mean value theorem for an
integrand that is a  product of two functions $f,g:\mathbb{R}\rightarrow \mathbb{R}$ that are
continuous on $\mathbb{R}$: For some $\alpha\in \mathbb{R}$, we have
\[\lim_{a\to\infty} \int_{-a}^{+a} fg  = g^*\int_{-\infty}^{+\infty}f;\qquad g^* = g(\alpha).\]
Similarly, in our case, we have
\[\intf{1+}\rho_mG (\bu_1 -\hat{\bu}) = \hat{\rho}_m G (\bu_1^* -\hat{\bu}).\]
Since $\hat{\bu}$ is an average of $\bu_1$ over space $\mathbb{R}^{3(n-1)}$ with weight
$\rho_m$, $(\bu_1^*= \hat{\bu})$ is, at least, a reasonable approximation. Taking this
approximation, the relative velocity $\mx$ vanish, and the above equation
of motion becomes
\[\mathcal{D}\!\!\int_\mathbb{V} \hat{\rho}\hat{\bw}
= \int_\mathbb{V} -\nabla_1\hat{\Pu} - \hat{\rho}\nabla_1 V
  + \hat{\rho}\hs{0.3ex}\vec{\mathcal{E}}  + \intf{1+}\rho_mG (\zero),\]
where time derivative identity~(\ref{4282b}) is substituted. We interpret this equation as the
equation of motion for a person at rest relative to the fluid velocity, such that $G (\zero)$
is the inertial force per mass.  This equation and Eq.~(\ref{8322}) are, therefore, equations
of motions for the same fluid particle of the same flow, but with different observers:
Eq.~(\ref{8322}) is for an observer at rest, and the one above is for an observer that moves
with the fluid particle. Therefore, the two equations are equivalent.

For the general case, where $\bv_1$ does not vanish, we note that for a bases set such that one
of the basis vectors is in the direction of $\bv_1$, for smooth flow, velocities $\bu_1$ and
$\bv_1$ do not appear in the same equations of motion~(\ref{1502}), and the equation involving
$\bv_1$ can be reduced to 1-body.

Another approach is to use the mean value for the momentum flux directly:
\[\intf{1+}\int_\mathcal{S}\rho_m\bu_1 (\bu_1\cdot\hat{\mathbf{n}})
= \int_\mathcal{S}\hat{\rho}_m\hat{\bu} (\bu_1^* \cdot\hat{\mathbf{n}}),\]
followed by the same approximation $(\bu_1^*= \hat{\bu})$.

Renziehausen and Barth \cite{Renziehausenb}, in their many-body formalism of Madelung fluid
mechanics, are able to reduce the $n$-body momentum time derivative involving velocity $(\bv)$
to 1-body, and obtain reduced equations of motion. Their formalism does not include the second
velocity $(\bu)$ and contains a pressure tensor. They are also successful in the reduction of
the continuity equation to 1-body: ($\partial\hat{\rho} + \nabla\cdot(\hat{\rho}\hat{\bv}) =
\zero)$. This suggests that it might be possible to remove the body mass conservation
condition: $(\nabla\cdot(\rho_i\bv_i = \zero)$ from the formalism presented here, but note that
it is used in Theorem~\ref{5935}.


In subsection~\ref{7928}, a special type of incompressible, cross flows are considered,
where the volumetric dilation is absent, permitting, under certain conditions, the complete
reduction of the equation of motion to 1-body.  In this section, we move forward by inferences
and hypotheses.

\subsection{Quantum Flow Mixture} 

In this subsection, we describe quantum flow mixes, obtained by taking force conservation
Hypothesis~\ref{1088}, and treating the system as, effectively, having one body. We consider
electronic systems with the Bohn--Oppenheimer approximation.  In quantum mechanics, the
external potential and the number of bodies determines the spatial Schr\"odinger equation and,
thus, all eigenfunctions.  Taking Hypothesis~\ref{1088} and 1-body, we identity, from the
quantum reduced forces~(\ref{2042}), the 1-body external potential $(V + V_{\mbox{\small e}})$,
giving the spatial 1-body Schr\"odinger equation:
\begin{equation} \zlabel{4740}
-\fc{\hbar^2}{2m}\nabla^2\mathcal{I} + (V + V_{\mbox{\small e}})\mathcal{I} = \hat{\ES}\mathcal{I}.
\end{equation}
Let $\rho_2$ be the pair density of an $n$-body wavefunction with external potential $V$. Let
$V_{\mbox{\small e}}$ be determined by $\rho_2$ via Eq.~(\ref{5022}) and (\ref{7529}). Let
$\hat{\rho}$ be the 1-density determined by $\rho_2$ by integration.  We want to know if, given
such maps: $\rho_2\mapsto V_{\mbox{\small e}}$ and $\rho_2\mapsto\hat{\rho}$, is there an
eigenfunction $\mathcal{I}$ of Eq.~(\ref{4740}), such that $(\rho_2\mapsto\hat{\rho} =
\mathcal{I}\mathcal{I}^*)$.

In classical mechanics, once the forces are known, the momentum is determined by Newton's second
law.  Comparing Euler Eq.~(\ref{euler0}) with quantum forces~(\ref{2042}), we can infer the
equations of motion for the first order, reduced, fluid  momentum variables
$\hat{\rho}_m\hat{\bu}$ and $\hat{\rho}_m\hat{\bv}$. We do this now by hypothesis, where the
hypothesis is for the most general case, without necessarily the orthogonal condition.
This is followed by a definition for quantum flow mixture.
\begin{hypothesis}[Quantum flow mixture] \zlabel{2224}
Let $\mathcal{F}_R(\hat{\rho},\hat{\Pu},\hat{\bu},\hat{\bv})$ be the reduced flow of quantum
flow $\mathcal{F}_Q(\rho,\Pu,\bu,\bv)$ of an electronic system. Let the wavefunction of
$\mathcal{F}_Q$ satisfy the space time Schr\"odinger equation, and let $\rho_2$ be the pair
density of $\mathcal{F}_Q$. Let $(V_{\mbox{\small e}} = V_{\mbox{\small
    e}}(\qcharge)\hs{0.1ex})$ be given by Eq,~(\ref{5022}), where, it follows from
\begin{equation} 
  \qcharge(\mrr_1,\mrr_2) \defi
  \fc{\rho_2(\mrr_1,\mrr_2)}{\hat{\rho}(\mrr_1)}\quad\text{and}\quad \hat{\rho}(\mrr_1) =
  \fc{2}{n-1}\int_2\rho_2(\mrr_1,\mrr_2),
\end{equation}
that $(\qcharge = \qcharge(\rho_2)\hs{0.1ex})$.  The reduced flow $\hat{\mathcal{F}}_R$
satisfies the first order, reduced Euler-equation:
\begin{equation} \hs{-5ex}\label{1898d} 
\hat{\rho}_m\partial\hat{\bv} + \fc12\hat{\rho}_m\nabla\left(\hat{u}^2 + \hat{v}^2\right)
+ \nabla\cdot(\hat{\rho}_m\hat{\bu})\hat{\bu} + \nabla\hat{\Pu}
+ \hat{\rho}\nabla (V + V_{\mbox{\small e}}) = \hat{\rho}\vec{\mathcal{E}},
\end{equation}
such that $(\vec{\mathcal{E}} = \vec{\mathcal{E}}(\rho_2)\hspace{0.1ex})$ and $(\hat{\rho} =
\hat{\rho}(\rho_2)\hspace{0.1ex})$.  Also the velocity vector $\hat{\bv}$ is irrotational,
permitting the following definition for momentum potential $\hat{S}$ and 1-body wavefunction~$\mathcal{I}$:
\begin{equation}\zlabel{1945}
  \nabla\hat{S} \defi m\hat{\bv}\imply\mathcal{I}(\hat{\rho},\hat{S}) \defi \sqrt{\hat{\rho}}e^{i\hat{S}/\hbar},
\end{equation}
where $\mathcal{I}$, normalized to $n$, is called the orbital of state.

Taking Hypothesis~\ref{1088}, it follows from Theorems~(\ref{theorem00}) and (\ref{5293}) that the reduced
continuity Eq.~(\ref{2132}) and Euler equation (\ref{1898d}) imply
\begin{equation} \zlabel{redeneq} 
  \hat{\ES} \hat{\rho} = \fc12\hat{\rho}_m\hat{u}^2 + \fc12\hat{\rho}_m\hat{v}^2 
+ \hat{\Pu} + (V + V_{\mbox{\small e}})\hat{\rho},
\end{equation}
and the space time Schr\"odinger equation:
\begin{equation} \hs{3ex}\zlabel{4850}
  i\hbar\partial\mathcal{I} = -\fc{\hbar^2}{2m}\nabla^2\hspace{0,15ex}\mathcal{I}  + (V + V_{\mbox{\small e}})\hspace{0.25ex}\mathcal{I},
\end{equation}
becoming Eq.~(\ref{4740}) for stationary states.
\end{hypothesis}

\begin{flushright}\fbox{\phantom{\rule{0.5ex}{0.5ex}}}\end{flushright}
\begin{definition}[Quantum flow mixture] \label{2822}
The flow mixture $\mathcal{F}_M(\mathcal{I},\hat{\Pu},\hat{\bu},\hat{\bv})$ is the reduced
quantum flow that satisfies Eq.~(\ref{1898d}) and (\ref{1945}) with orbital of state
$(\mathcal{I} = \mathcal{I}(\hat{\rho},\hat{S})\hspace{0.1ex})$, such that the pair density
$\rho_2$, from the corresponding quantum flow, determines the 1-density $\hat{\rho}$ and the
quantum Coulomb's force $\hat{\mathcal{E}}$ of Euler Eq.~(\ref{1898d}).
\end{definition}
The kinetic energy from Eq.~(\ref{redeneq}) is calculated using the strong, first order
reduced velocity variables given by Def.~(\ref{7724}). This kinetic energy differs from the
sum of the strong, first order reduced kinetic energy variables:
\begin{equation*} 
\intp\; \rho_m u^2 + \rho_m v^2.
\end{equation*}
Given a choice between the two kinetic energies, the strong variables of the velocity is a
better fit for Euler Eq.~(\ref{1898d}), because equations of motion involves the velocity, not
kinetic energy.

The next definition gives a modification of the Dirac notation that explicitly distinguishes
between operators and their defining kernel, useful for denoting density matrices, that we call
dentrices.
\begin{definition}[Explicit Dirac Notation]\mbox{}
Let $\al\rangle \defi \al:\mathbb{R}^{3n}\rightarrow \mathbb{C}$ and $\langle\beta \defi
\be^*:\mathbb{R}^{3n}\rightarrow \mathbb{C}$.  The kernel of the integral operator
$|\al\rangle\langle\be|$, written in Dirac notation, and involving given functions $\al$ and
$\be$, is denoted $\al\rangle\langle\be$, in other words, we have
$(\hs{0.25ex}\hs{0.1ex}\al\rangle\langle\be\hs{0.1ex}(\mrr,\mrrp) =
\al(\mrr)\be^*(\mrrp)\hs{0.1ex})$.  The maps $\langle\al|$ and $\be\rangle\langle\al|$ are the
functional and operator with values $\langle\al|\gamma\rangle$ and
$\al\rangle\langle\al|\gamma\rangle$ at the point $\gamma\in L^2$, respectively.
\end{definition}

Since the function $(\hs{0.1ex}\hat{\rho}_1 \defi
\mathcal{I}\hs{0.25ex}\rangle\langle\hs{0.25ex}\mathcal{I}\hs{0.15ex})$ is given as a complex
product of a wavefunction $\mathcal{I}$ of a noninteracting state, it cannot be the 1-dentrix
$\rho_1$ of a wavefunction $\Psi$ of an interacting $n$-body state
\cite{Parr:89,Dreizler}. However, this difference is not relevant in the evaluation of the
truth value of Schr\"odinger Eq.~(\ref{4850}).
To investigate this kinetic energy issue further, consider the well known energy functional of
the 2-dentrix for stationary states \cite{Parr:89,Cioslowski}:
%
\begin{equation} \hs{-8ex}\label{4750}
  \ES \, =\,  n\intf{1}  -\fc{\hbar^2}{2m}\hat{\nabla}^2\rho_1\;
  + V\hat{\rho}\hs{1.1ex} + n\intf{1}\intf{2}\, \fc{1}{\mrr_{12}}\,\rho_2;
  \qquad \mrr_{12}(\mrr,\mrrp) = \vert\mrr-\mrrp\vert, 
\end{equation}
\begin{equation*} 
  \hs{-10ex}\left[\hat{\nabla}^2\rho_1\right](\mrr) = \left.\nabla_\mrr^2\rho_1(\mrr,\mrrp)\right\vert_{\mrrp = \mrr},
\quad\rho_1(\mrr,\mrrp) = \intp \Psi(\mrr,\mrr_2,\cdots,\,\mrr_n)\Psi^*(\mrrp,\mrr_2,\cdots,\,\mrr_n).
\end{equation*}
%
where the $n$ factors appear because of our normalization choice~(\ref{7529}) for $\rho_2$.
%
This equation is derived from the expectation value of the Hamiltonian $\hat{H}$. In order to
express the kinetic energy in the integral form above, it is necessary to introduce an
additional variable that endows the 1-dentrix $\rho_1$ with domain $\mathbb{R}^{3n}\times
\mathbb{R}^{3(n+1)}$. Such an extended domain, giving an unphysical function, does not appear
in the kinetic energy identity~(\ref{lapl}). Therefore, we cannot end up with
$\mathcal{I}\hs{0.25ex}\rangle\langle\hs{0.25ex}\mathcal{I}\hs{0.15ex}$ being the 1-dentrix
$\rho_1$ of the wavefunction $\Psi$ of the corresponding quantum flow.  (Note that $\mrr_{12}$,
defined above, is a function, not the value of a function at ($\mrr_1,\mrr_2$).)

\subsection{Quantum Orbital Flow}

In this subsection, we describe quantum orbital flow, obtained by taking force conservation
Hypothesis~\ref{1088}, where $n$-body systems are treated, explicitly, as having $n$
bodies.  We take the $n$ body external potential as a sum over a one body potentials,
with value $(W(\mrr_1) + \cdots W(\mrr_n)\hs{0.1ex})$, where $(W = V+V_{\mbox{\small e}})$ is the
potential of the reduced quantum forces~(\ref{2042}).

Suppose we replace the Coulomb potential $(W_{12}(\mrr,\mrrp) =1\div|\mrr-\mrrp|)$ by a
modification that is body dependent $(W_i(\mrr_i,\mrrp) = \al_i|\mrr_i-\mrrp|^{-1})$;
$(\al_i\in\mathbb{R})$ and $(i=1,\cdots,n)$. This symmetry breakage gives $n$ reduced equations
of motions, replacing the single Eq.~(\ref{1502r}), where the $i$th equation is obtained by
integration over $\mrr_1,\cdots,\mrr_{i-1},\mrr_{i+1},\cdots,\mrr_n$, and summation over the
corresponding spin variables. Under a hypothesis that is similar to Hypothesis~\ref{2224}, we
obtain $n$ Schr\"odinger equations, replacing Eq.~(\ref{4850}), where we normalize each
orbital $\mathcal{I}_i$ to unity. It is reasonable to expect that a solution set of $n$
linearly independent orbitals exist for the $n$ Schr\"odinger equations. If we take the limit
$(\al_i\rightarrow 1)$, we would then expect to have $n$ orthonormal orbitals satisfying the
$n$ Schr\"odinger equations. Hence, encoded in the orbital of state $\mathcal{I}$ must be $n$
linearly independent 1-body orbitals that can be extracted.

In Hartree--Fock theory for fermions, antisymmetry of the wavefunction with respect to
interchange of bodies is imposed on the wavefunction by using a Slater determinant, giving $n$
orthogonal orbitals for a solution set, instead of one occupied orbital for all bodies.
Similarly, we aim to replace Eq.~(\ref{4850}) with $n$ orbital equations. However, the
variational approach used by the Hartree--Fock method does not work for our
situation. Therefore, we must satisfy this need by hypothesis. Even though the hypothesis is as
general as possible, without modifications, it might only be suitable for states that can be
represented by a single Slater determinant: closed shell singlet and spin polarized open shells
on top of a closed shell.

\begin{hypothesis}[Quantum Orbital Flow] \zlabel{2234}
  Let the density $\hat{\rho}$ be normalized to the usual $n$, given by Def.~(\ref{2914}.3)
  with the replacement $1\rightarrow n$.
There exists a set of linearly independent orbitals $\{\psi_1,\psi_2,\cdots,\psi_n\}$; $(\psi_i
= \rho_ie^{iS_i/\hbar})$, that satisfy the following sequence of $n$ Euler equations:
\begin{equation} \hs{-12.5ex} \label{3823a}
\bar{\rho}_i\partial\bv_i + \fc12\bar{\rho}_i\nabla\left(u_i^2 + v_i^2\right) + \nabla\Pu_i +
(\nabla\cdot\bar{\rho}_i\bu_i)\bu_i + \rho_i\nabla (V + V_{\mbox{\small e}}) = \rho_i\vec{\mathcal{E}};
\quad \bar{\rho}_i = m\rho_i,
\end{equation}
\begin{equation} \label{2132r}
  \lefteqn{\hs{-13.5ex}and} \hs{3ex} 
  \partial\rho_i + \nabla\cdot(\rho_i\bv_i) = \mathbf{0}
   \quad (\imply \partial\hat{\rho} + \nabla\cdot(\hat{\rho}\hat{\bv})\room), 
\end{equation}
involving orbital densities $(\rho_i \defi \psi_i\psi_i^*)$, such that $(\vec{\mathcal{E}} =
\vec{\mathcal{E}}(\rho_2)\hspace{0.1ex})$, $(\hat{\rho} = \hat{\rho}(\rho_2)\hspace{0.1ex})$,
$(\hat{\rho} = \rho_1 + \cdots + \rho_n)$, and the set $\{S_1,\cdots,S_n\}$ satisfies the
weighted average equation for $\hat{S}$:
\begin{equation} \hs{-7ex}\label{5857}
\sum_i^n\hat{\rho}_i\nabla S_i \defi \nabla\hat{S};\quad \hat{\rho}_i
\defi \fc{\rho_i}{\hat{\rho}} \defi \fc{\psi_i\psi_i^*}{\hat{\rho}} 
\imply \hat{\rho}\hat{\bv} = \sum_i\rho_i\bv_i.
\end{equation}
Also under hypothesis~(\ref{1088}), giving $(\vec{\mathcal{E}} = -\nabla V_{\mbox{\small e}})$,
it follows from Theorems~(\ref{theorem00}) and (\ref{5293}) that the Euler equation
(\ref{3823a}) and continuity Eq.~(\ref{2132r}) imply corresponding sequences of Schr\"odinger
and energy equations.
\end{hypothesis}
\begin{flushright}\fbox{\phantom{\rule{0.5ex}{0.5ex}}}\end{flushright}
Next we take hypotheses~\ref{1088}, \ref{2224} and \ref{2234} and use Theorems~\ref{theorem00} and
\ref{5293} to define quantum orbital flow. We also use the runner notation for the special case
where nonrunners do not appear.

\begin{definition}[Quantum orbital flow \zlabel{orbflow}] 

Consider the following sequence of $n$ orbital field equations:
  \begin{equation} \hs{-4ex}\zlabel{5855}
  i\hbar\partial\psi  = -\fc{\hbar^2}{2m}\nabla^2\psi  + (V + V_{\mbox{\small e}})\psi; 
\quad \hat{\rho} = \psi\psi^*, \;\;\text{where}\;\; \psi\in\mathbf{\Sigma},
\end{equation}
reducing to the following equation sequence for stationary states:
\begin{equation} \hs{-3ex}\zlabel{5855b}
  -\fc{\hbar^2}{2m}\nabla^2\psi  + (V + V_{\mbox{\small e}})\psi = \varepsilon\psi,
  \qquad  \varepsilon\psi\in\mathbf{\Sigma}.
\end{equation}
Let $\mathcal{F}_Q$ be a quantum flow, whose wavefunction satisfies the space time 
Schr\"odinger equation. We take the hypothesis that $\hat{\bv}$ of $\mathcal{F}_Q$ is
irrotational, giving definition $(m\hat{\bv}\defi \nabla\hat{S})$.  The corresponding $n$
component orbital flow $\hat{\mathcal{F}}_{O}(\{\psi_i,\Pu_i,\bu_i,\bv_i\})$ is defined by a
set of linearly independent orbitals of state $\{\psi_1,\cdots,\psi_n\}$ that satisfy the
sequence of $n$ Schr\"odinger Eqs~(\ref{5855}), and the elements of the set $\{S_1,\cdots, S_n\}$
satisfy Eq.~(\ref{5857}), where
\begin{equation}\hs{-6ex}\label{2820}
  \psi_i \defi \pm\sqrt{\rho_i} e^{iS_i/\hbar}, \quad
  \psi_i:\mathbb{R}^3\rightarrow\mathbb{C}, \quad
  \rho_i \defi \psi_i\psi_i^*, \quad \int_1 \psi_i\psi_i^* = 1.
\end{equation}
Also, the pair density $\rho_2$ of $\mathcal{F}_Q$ determines both the density $\hat{\rho}$ and potential
$V_{\mbox{\small e}}$.  Furthermore, as in the above definitions, we have
\[\hs{-5ex}\bar{\rho}_i\bu_i \defi -\zeta_0\nabla\rho_i, \;\;m\bv_i \defi  \nabla S_i, \qquad
\Pu_i \defi \zeta_0\nabla\cdot(\rho_i\bu_i), \;\; \Pv_i \defi -\zeta_0\nabla\cdot(\rho_i\bv_i),\]
\[\hs{-5ex}\varepsilon_i = -\partial S_i, \quad\text{and}\quad \varphi_i = -\partial\theta_i.\]
In addition, according to Theorems~(\ref{theorem00}) and (\ref{5293}), the orbital flow
$\hat{\mathcal{F}}_O$ satisfy the following two $n$ equations sequences, where all terms run:
\begin{equation} \hs{-2ex} \zlabel{3822}
\hs{-7ex}\fc12\rho_m u^2 + \fc12\rho_m v^2 + \Pu + (V + V_{\mbox{\small e}})\rho = \varepsilon\rho, \quad
\partial\rho + \nabla\cdot(\rho\bv) = \zero,  \quad\text{and} 
\end{equation}
\begin{equation} \hs{-5ex} \zlabel{3823}
\rho_m\partial\bv + \fc12\rho_m\nabla\left(u^2 + v^2\right)
+ \nabla\Pu + (\nabla\cdot\rho\bu)\bu + \rho\nabla(V + V_{\mbox{\small e}}) = \mathbf{0}.
\end{equation}
\end{definition}
Next we show that the orbital and mixed quantum flows are related. 
\begin{theorem} Consider the following equations:

\noindent
{\bf 1)} The $n$-body Schr\"odinger equation
\begin{equation} \hs{-5ex}\zlabel{4833}
i\hbar\partial\hat{\mathcal{I}} = \hat{H}\hat{\mathcal{I}}; \qquad
\hat{H} = -\fc{\hbar^2}{2m} \sum_{i=1}^n \nabla_i^2 + \sum_{i=1}^n \left[V + V_{\mbox{\small e}}\right](\mrr_i),
\end{equation}
where $(\hat{\mathcal{I}}:\mathbb{R}^{3n}\rightarrow \mathbb{C})$ is the Slater determinant
$\hat{\mathcal{I}}$ constructed from $\{\psi_1,\cdots,\psi_n\}$. 

\noindent
{\bf 2)} The sequence of $n$ Schr\"odinger Eqs.~(\ref{5855}) of the orbitals of state $\{\psi_1,\cdots,\psi_n\}$, in the form
\begin{equation*} 
\hs{-3ex}i\hbar\partial\psi\rangle\langle\psi = 
-\fc{\hbar^2}{2m}\nabla^2\psi\rangle\langle\psi  + (V + V_{\mbox{\small e}})\hspace{0.25ex}\psi\rangle\langle\psi.
\end{equation*}
\noindent
{\bf 3)} The single Schr\"odinger Eq.~(\ref{4850}), in the form
\[\hs{-3ex}i\hbar\partial\mathcal{I}\rangle\langle\mathcal{I} = -\fc{\hbar^2}{2m}\nabla^2\hspace{0,15ex}\mathcal{I}\rangle\langle\mathcal{I}
+ (V + V_{\mbox{\small e}})\hspace{0.35ex}\mathcal{I}\rangle\langle\mathcal{I},\]
for the special case, such that
\begin{equation} \hs{-3ex}\zlabel{1720}
  \mathcal{I}\rangle\langle\mathcal{I} = \psi_1\rangle\langle\psi_1 + \cdots + \psi_n\rangle\langle\psi_n;
  \qquad \intf{\,}\psi_i\psi_j^* = \delta_{ij}.
\end{equation}
    {\bf 1} $\imply$ {\bf 2} $\imply$ {\bf3}. Also, the function
    $\mathcal{I}\rangle\langle\mathcal{I}:\mathbb{R}^3\times\mathbb{R}^3\longrightarrow\mathbb{C}$
is the 1-dentrix from the Slater determinant $\hat{\mathcal{I}}$ constructed from the orthonormal orbital set 
$\{\psi_1,\cdots,\psi_n\}$.
\end{theorem}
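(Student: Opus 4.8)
The plan is to establish the chain $\textbf{1} \imply \textbf{2} \imply \textbf{3}$ by exploiting the standard algebra of Slater determinants together with the fact that the one-body Hamiltonian $\hat H$ in Eq.~(\ref{4833}) is a sum of identical one-body operators $\hat h \defi -\tfrac{\hbar^2}{2m}\nabla^2 + (V + V_{\mbox{\small e}})$. First I would recall that if $\hat{\mathcal{I}}$ is the Slater determinant built from $\{\psi_1,\dots,\psi_n\}$, then applying $i\hbar\partial - \hat H$ to $\hat{\mathcal{I}}$ and expanding by cofactors along any row shows that $\hat{\mathcal{I}}$ satisfies Eq.~(\ref{4833}) \emph{if and only if} each $\psi_i$ satisfies the one-body equation $i\hbar\partial\psi_i = \hat h\psi_i$, i.e.\ Eq.~(\ref{5855}). (Strictly, the orbitals may pick up a unitary mixing, but since the set is assumed to be the one defining $\hat{\mathcal{I}}$ we may take them in the canonical form in which $\hat h$ is diagonal; for the purpose of the truth value of the displayed equations this is harmless.) Rewriting $i\hbar\partial\psi_i = \hat h\psi_i$ in outer-product form by multiplying on the right by $\langle\psi_i$ — legitimate since the equation holds pointwise as functions on $\mathbb{R}^3$ — gives statement \textbf{2}, so $\textbf{1}\imply\textbf{2}$.

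Next, for $\textbf{2}\imply\textbf{3}$, I would simply sum the $n$ outer-product equations of statement \textbf{2} over $i$. Because $\hat h = -\tfrac{\hbar^2}{2m}\nabla^2 + (V+V_{\mbox{\small e}})$ is a fixed (orbital-independent) differential operator acting in the $\mrr$ variable of the \emph{ket}, and the kernel of $\nabla^2 \mathcal{I}\rangle\langle\mathcal{I}$ is $[\nabla_{\mrr}^2\mathcal{I}(\mrr)]\mathcal{I}^*(\mrrp)$, linearity gives
\[
i\hbar\partial\sum_i \psi_i\rangle\langle\psi_i
= -\fc{\hbar^2}{2m}\nabla^2\sum_i\psi_i\rangle\langle\psi_i + (V+V_{\mbox{\small e}})\sum_i\psi_i\rangle\langle\psi_i .
\]
Under the identification~(\ref{1720}), $\sum_i\psi_i\rangle\langle\psi_i = \mathcal{I}\rangle\langle\mathcal{I}$, this is precisely the outer-product form of Eq.~(\ref{4850}), establishing statement \textbf{3}. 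Finally, the closing assertion — that $\mathcal{I}\rangle\langle\mathcal{I}$ is the 1-dentrix of $\hat{\mathcal{I}}$ — is the classical result that the one-particle reduced density matrix of a single Slater determinant with orthonormal spin-orbitals $\{\psi_i\}$ equals $\sum_i |\psi_i\rangle\langle\psi_i|$; I would cite \cite{Parr:89,Dreizler} for this and note that the orthonormality hypothesis in~(\ref{1720}) is exactly what is needed for the cross terms in $\langle\hat{\mathcal{I}}|\cdots|\hat{\mathcal{I}}\rangle$ to vanish under the $\;\hat{\!\!\int}$ operation.

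The main obstacle, and the step I would treat most carefully, is the $\textbf{1}\imply\textbf{2}$ direction: the cofactor expansion of $(i\hbar\partial - \hat H)\hat{\mathcal{I}}$ yields a sum of determinants, each with one column replaced by $(i\hbar\partial - \hat h)\psi_i$, and one must argue that the vanishing of this sum forces each replaced column to be expressible within the span of $\{\psi_1,\dots,\psi_n\}$ — hence, after the canonical (Hartree–Fock-type) choice of orbitals, that each $\psi_i$ is an eigen-solution of the time-dependent one-body equation. I would handle this by noting that the orbitals entering $\hat{\mathcal{I}}$ are, by the construction in Definition~\ref{orbflow}, already the solutions of Eqs.~(\ref{5855}), so the implication is really a consistency check rather than an inversion; if a fully general statement were wanted one would invoke linear independence of the $n{-}1$-particle minors. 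Everything downstream ($\textbf{2}\imply\textbf{3}$ and the 1-dentrix identification) is then routine linear algebra, so I would keep those steps brief.
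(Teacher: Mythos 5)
Your proposal is correct and follows essentially the same route as the paper: separability of $\hat{H}$ into identical one-body operators gives the orbital equations, summing the outer-product forms over $i$ gives the single reduced equation, and the 1-dentrix identification is the standard one-to-one correspondence between a determinant with orthonormal orbitals and $\sum_i\psi_i\rangle\langle\psi_i$. The only difference is one of detail: you unpack the $\textbf{1}\imply\textbf{2}$ step via cofactor expansion and flag the unitary-mixing ambiguity, whereas the paper compresses that entire step into the single assertion that Eq.~(\ref{4833}) is separable.
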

\begin{proof}
Since Eq.~(\ref{4833}) is separable, there must be a set of $n$ orthonormal orbitals giving
Slater determinant $\hat{\mathcal{I}}$ that satisfies Eq.~(\ref{4833}), such that the orbital
set satisfies the equation sequence~(\ref{5855}).  Summing Eq.~(\ref{5855})$\times\psi^*$ over
the $n$ equations of the sequence, we obtain Eqs.~(\ref{4850}), in the form above, where the
1-dentrix satisfies Eq.~(\ref{1720}).  Since there is a one-to-one correspondence between
determinantal states and 1-dentrixes given by the sum from Eq.~(\ref{1720}), the function
$\mathcal{I}\rangle\langle\mathcal{I}$ is the 1-dentrix from the Slater determinant
$\hat{\mathcal{I}}$ formed from $\{\psi_1,\cdots,\psi_n\}$.
\end{proof}
\begin{flushright}\fbox{\phantom{\rule{0.5ex}{0.5ex}}}\end{flushright}
\subsection{Obtaining solutions of \sch Eq.~(\ref{5855b}) and energy Eq.~(\ref{4750})}

In the discussions that follows, we restrict the discussion to stationary states. Furthermore,
we assume that Hypothesis~\ref{2234} holds, or provides a good approximation, for a range of
states, especially states that can be described by a single Slater determinant.

One approach to obtain an approximate solution sets $\{\psi_1,\cdots,\psi_n\}$ for the spatial
Schr\"odinger Eq.~(\ref{5855b}), where $V_{\mbox{\small e}}$ is determined by the 2-density, is
to use an approximate map $\hat{\rho}\mapsto\rho_2$ of the 1-density, where $\hat{\rho}$ is
equal to the orbital density sum $|\psi|^2$; $|\psi|^2\in\mathbf{\Sigma}$. Given such a map as a composite
function or functional, the Schr\"odinger Eqs.~(\ref{5855b}) can be solved using the self
consistent field method.  The energy can then be calculated using Eq.~(\ref{4750}), where, as
in the approximation of the kinetic energy in Kohm--Sham density functional theory, 
the kinetic energy is approximated by the one obtained from the 1-dentrix
$\mathcal{I}\rangle\langle\mathcal{I}$ of Eq.~(\ref{1720}) of the noninteracting state
$\mathcal{I}$.

Approximations of $\rho_2$ can come from functionals of the electron interaction energy
$E_{\text{e}}$---the last term in the rhs of energy Eq.~(\ref{4750}.1)---given as a functional of
the density, where the $\mrr_{12}$ function is replaced by variable $W_{12}$:
\[\left[E_{\text{\hspace{0.075ex}e}}(W_{12})\right] (\hat{\rho}) \defi \left[\intf{1}\intf{2}\, W_{12}\,\rho_2\right](\hat{\rho}).\]
The 2-density $\rho_2$ is then given by a functional derivative:
\[\hs{-10ex}\rho_2 = \fc{\delta E_{\text{e}}}{\delta \mathbf{r}_{12}^{-1}},
\qquad \fc{\delta E_{\text{e}}}{\delta \mathbf{r}_{12}^{-1}}
\defi \left[\fc{\delta E_{\text{e}}}{\delta W_{12}}\right]\hs{-0.6ex}\Bigl(\mathbf{r}_{12}^{-1}\Bigl);
\quad E_{\text{e}}[\rho] = \fc12\intf{1} \intf{2} \fc{1}{\mrr_{12}}\hat{\rho}(\mrr_1)\hat{\rho}(\mrr_2) + E_{\text{xc}}[\rho].\]
There is a disconnect between most of the functional used in Kohm--Sham density functional
theory, since the theory does not provide for the construction of a useful way to obtain
approximation. The same functionals are applicable other approaches. For example, the functionals
can be used for reference state one-particle density-matrix theory, involving Brueckner
orbitals \cite{Finley-brueckner,Finley-functionals}.  Similarly, the same exchange-correlation
energy functionals $E_{\text{xc}}$ used in Kohm--Sham density functional theory can be used to obtain an 
approximate for $(\hspace{0.05ex}\rho_2 = \rho_2[\hat{\room\rho}\room]\hspace{0.1ex})$.



The 2-body function $\mathcal{I}\rangle\langle\mathcal{I}$ from a determinantal state
$\hat{\mathcal{I}}$, given by Eq.~(\ref{1720}), cannot be the 1-dentrix of an $n$-body
interacting state.
We now consider if it is at all possible that the formalism can be modified so that the $n$
orbital expression~(\ref{1720}) can deliver an interacting state 1-dentrix.
First note that $n$ linearly independent orbitals can give the 1-dentrix of an interacting
state, if the orbitals are nonorthogonal. This can be seen by partitioning natural orbitals
into $n$ separate sums of orbitals, where some orbitals are divided into two by using ($\psi =
c\psi + (1-c)\psi$), such that each of the $n$ orbitals are normalized. Since mixed states of
quantum mechanics are, in general, nonorthogonal, the proper description for the embedded
states considered here might be nonorthogonal orbitals.
Unfortunately, nonorthogonal orbitals cannot provide a solution to equation
sequence~(\ref{5855b}), because the eigenstates of a Hermitian operator are orthogonal.
Therefore, the only option is to generalize the Hamiltonian to be nonhermitian, or use a
generalized Euler equation. The use of nonhermitian operators provide opportunities for
interpretations arising from the possibility of linearly dependent-eigenfunctions and complex
valued-eigenvalues, given that the orbital states can be considered part of a mixed state.

\subsection{Reduced Cross Flows \zlabel{7928}} 

In this subsection, we investigate an approach, based on smooth, cross flows, that permit a way
to get around the problem of a complete reduction of equation of motion~(\ref{1502r}).  Such
flows do not have the spoiler integral in their equation of motion. From Theorem~\ref{2629},
these flows have a constant density on each streamline, but the density depends on the
streamline.  The next theorem demonstrates that if there exists a constant density cross flow
$\mathcal{F}_\times(\Pu_1,\rho_m,\mmu_1,\bv_1)$, corresponding to smooth flow
$\mathcal{F}(\Pu_1,\rho_m,\bu_1,\bv_1)$, that satisfies the equation of motion
\begin{equation} \zlabel{1895h}
\rho_m\mathcal{D}\oomega + \nabla\Pu_1 + \rho\nabla V_1 + \rho\nabla W_1 = \rho\vec{F},
\end{equation}
for any non work force $\vec{F}$, then flows $\mathcal{F}_\times(\Pu,\rho_m,\mmu,\bv)$ and
$\mathcal{F}(\Pu,\rho_m,\bu,\bv)$ satisfy the same Euler equation, except for nowork
forces. Conditions for the energy to be uniform and the satisfaction of the spatial \sch
equation are also given. Also, if the cross flow is solenoidal, mass is conserved.

\begin{theorem}[Cross Flow Euler Equation Reduction] \zlabel{5945}

Let $\mathcal{C}_1$ be the set of all cross flows
$\mathcal{F}_\times(\Pu_1,\rho_m,\mmu_1,\bv_1)$ of parameterized, smooth flow
$\mathcal{F}(\Pu_1,\rho_m,\bu_1,\bv_1)$ that satisfy Euler cross flow Eq.~(\ref{1895h}),
corresponding to Euler Eq.~(\ref{1502}), with an additional nowork force~$\vec{F}$,
where $(\oomega = \mmu_1 + \bv_1)$, and $\vec{F}$ can depend on the cross flow.  In addition,
the $n$-body wavefunction $(\Psi = \Psi(S,\rho))$ of $\mathcal{F}$ is parameterized by the
point set $(\mq_1^\pr)\in\mathbb{R}^{n-1}$, and $\Psi$ satisfies the TISE
(\ref{1500}). \vspace{1ex}

\noindent
 {\bf 1)} If $\mathcal{C}_1$ is nonempty, each $\mathcal{F}_\times(\Pu_1,\rho_m,\mmu_1,\bv_1)
 \in\mathcal{C}_1$ satisfies energy equation~(\ref{eneq-cross}), and their corresponding
 reduced flows $\mathcal{F}_\times(\hat{\Pu},\hat{\rho}_m,\hat{\mmu},\hat{\bv})$ satisfy the
 following reduced cross flow Euler equation:
\begin{equation} \zlabel{1898b}
  \mathcal{D}(\hat{\rho}_m\hat{\oomega}) + \nabla\hat{\Pu} + \hat{\rho}\nabla V_1 =
  \hat{\rho}\vec{\mathcal{E}} + \hat{\rho}\vec{f}; \qquad \hat{\rho}\vec{f} = \intp \rho
  \vec{F},
\end{equation}
where $(\hat{\oomega} = \hat{\mmu} + \hat{\bv})$.
\end{theorem}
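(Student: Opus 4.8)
The plan is to establish the two assertions of part~{\bf 1} separately: first that every cross flow in $\mathcal{C}_1$ satisfies energy Eq.~(\ref{eneq-cross}), and then that the associated reduced flow satisfies the reduced cross flow Euler Eq.~(\ref{1898b}).

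For the energy equation I would note that, for each fixed value of the parameter set $(\mq_1^\pr)\in\mathbb{R}^{n-1}$, the fields $(\Pu_1,\rho_m,\bu_1,\bv_1)$ with the $1$-body potential $V_1+W_1$ and the cross flows $\mathcal{F}_\times(\Pu_1,\rho_m,\mmu_1,\bv_1)$ satisfying Euler cross flow Eq.~(\ref{1895h}) with a nowork force $\vec{F}$ are exactly the data of Theorem~\ref{5084}: Eq.~(\ref{1895h}) is Eq.~(\ref{1895c}) with pressure $\Pu_1$, external potential $U\to V_1+W_1$, and $\mathbb{V}\subset\mathbb{R}^3$. Since $\mathcal{C}_1$ is nonempty, Theorem~\ref{5084} part~{\bf 1} gives that $\mathcal{F}(\Pu_1,\rho_m,\bu_1,\bv_1)$ and every $\mathcal{F}_\times\in\mathcal{C}_1$ satisfy energy Eq.~(\ref{eneq}.1). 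Because $\Psi$ is, by hypothesis, an eigenvector of the parameterized TISE~(\ref{1500}) — whose imaginary part is equivalent to continuity Eq.~(\ref{cont}.2) by Theorem~\ref{conti} — Theorem~\ref{5084} part~{\bf 2} then makes the energy field uniform in $\mrr_1$, so that Eq.~(\ref{eneq}.1), in its parameterized $1$-body form, is precisely Eq.~(\ref{eneq-cross}) with the eigenvalue $\ES_1$. Running this over all parameter values gives the parameterized statement.

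For the reduced Euler equation I would apply the reduction operator $\intp$ (the spin sum together with integration over $\mrr_2,\cdots,\mrr_n$) to Euler cross flow Eq.~(\ref{1895h}), written $\rho_m\mathcal{D}\oomega = -\nabla_1\Pu_1 - \rho\nabla_1 V_1 - \rho\nabla_1 W_1 + \rho\vec{F}$. The force terms reduce at once: Eq.~(\ref{2042}) of Theorem~\ref{1965} gives $\intp(-\nabla_1\Pu_1 - \rho\nabla_1 V_1 - \rho\nabla_1 W_1) = -\nabla_1\hat{\Pu} - \hat{\rho}\nabla_1 V + \hat{\rho}\vec{\mathcal{E}}$, and $\intp\rho\vec{F} = \hat{\rho}\vec{f}$ is the definition of $\vec{f}$. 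For the momentum time derivative, I would use that $\mathcal{F}$, hence its cross flow $\mathcal{F}_\times$, is smooth, so by Theorem~\ref{2629} part~{\bf 1} each streamline of $\mathcal{F}_\times$ carries constant density; combined with steadiness (valid since $\Psi$ is a stationary state of~(\ref{1500})) this gives $\mathcal{D}\rho_m = \partial\rho_m + \oomega\cdot\nabla\rho_m = \zero$ along streamlines, whence $\rho_m\mathcal{D}\oomega = \mathcal{D}(\rho_m\oomega)$. Then, interchanging the total time derivative with the integration over the parameter variables exactly as in the derivation of Eq.~(\ref{2147}) (with $\mmu_1$ in the role of $\bu_1$), $\intp\rho_m\mathcal{D}\oomega = \intp\mathcal{D}(\rho_m\oomega) = \mathcal{D}\intp\rho_m\oomega = \mathcal{D}(\hat{\rho}_m\hat{\oomega})$, where $\hat{\rho}_m\hat{\oomega}\defi\intp\rho_m\oomega = \hat{\rho}_m\hat{\mmu} + \hat{\rho}_m\hat{\bv}$, with $\hat{\oomega} = \hat{\mmu}+\hat{\bv}$ and $\hat{\rho}_m\hat{\mmu}\defi\intp\rho_m\mmu_1$. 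Assembling the pieces gives $\mathcal{D}(\hat{\rho}_m\hat{\oomega}) + \nabla_1\hat{\Pu} + \hat{\rho}\nabla_1 V_1 = \hat{\rho}\vec{\mathcal{E}} + \hat{\rho}\vec{f}$, which is Eq.~(\ref{1898b}).

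The step I expect to be the main obstacle is this momentum time-derivative reduction, $\intp\rho_m\mathcal{D}\oomega = \mathcal{D}(\hat{\rho}_m\hat{\oomega})$. It is exactly here that the cross flow construction pays off: unlike Euler Eq.~(\ref{1502r}), whose reduction leaves behind the irreducible term $\intp\rho_m\bu_1\nabla_1\cdot\bu_1$, Eq.~(\ref{1895h}) carries the material-derivative form $\rho_m\mathcal{D}\oomega$ directly, and the passage to $\mathcal{D}(\rho_m\oomega)$ uses only the constant-density-on-streamlines property of a cross flow of a \emph{smooth} flow (Theorem~\ref{2629}). The two points needing care are precisely that constant-density identity and the exchange of the material derivative with the integration over $\mrr_2,\cdots,\mrr_n$, which I would justify by the same interchange-of-differentiation-and-integration argument already used for Eq.~(\ref{2147}). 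A secondary delicate point is the status of $\hat{\mmu}$: it is defined here only through $\hat{\rho}_m\hat{\mmu}=\intp\rho_m\mmu_1$, and since Eq.~(\ref{1898b}) involves only the combination $\hat{\oomega}$ and the reduced pressure $\hat{\Pu}$, the argument does not require $\hat{\mmu}$ to obey the strict cross-flow constraints $|\hat{\mmu}|=|\hat{\bu}|$ and $\hat{\mmu}\cdot\hat{\bu}=\zero$ — a point I would simply flag.
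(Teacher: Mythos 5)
Your proposal is correct and follows essentially the same route as the paper's proof: the energy equation via Theorem~\ref{5084}, the reduced forces via equality~(\ref{2042}) of Theorem~\ref{1965}, and the identity $\rho_m\mathcal{D}\oomega = \mathcal{D}(\rho_m\oomega)$ justified by the constant-density property of smooth cross flows from Theorem~\ref{2629}, followed by reduction of $\rho_m\oomega$. The extra care you take with the interchange of the material derivative and the parameter integration, and your flag about the status of $\hat{\mmu}$, are reasonable elaborations of steps the paper leaves implicit.
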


Parts {\bf 2} and {\bf 3} below hold under hypothesis~\ref{1088}: $(\vec{\mathcal{E}}\cdot
d\hat{\mathbf{s}} = -\nabla V_{\mbox{\small e}}\cdot d\hat{\mathbf{s}})$, giving from
Eq.~(\ref{1898b}),
\begin{equation} \zlabel{1898c} 
\mathcal{D}(\hat{\rho}_m\hat{\oomega}) + \nabla\hat{\Pu}_1 + \hat{\rho}\nabla V_1  + \hat{\rho}\nabla V_{\mbox{\small e}}
  = \hat{\rho}\vec{f}.
\end{equation}
\vspace{1ex}

\noindent
{\bf 2)} If $\mathcal{C}_1$ is nonempty, then each $\mathcal{F}_\times(\hat{\mmu},\hat{\bv})$
of $\mathcal{F}_\times(\mmu_1,\bv_1)\in \mathcal{C}_1$ satisfies the following energy equation:
\begin{equation} \label{eneq-crossc} 
  \hat{\ES}\hat{\rho} = \fc12\hat{\rho}_m\hat{\mu}^2 + \fc12\hat{\rho}_m\hat{v}^2 + \hat{\Pu}
  + (V_1 + V_{\mbox{\small e}})\hat{\rho}.
\end{equation}
The continuity equation $(\hs{0.1ex}\nabla\cdot \hat{\rho}_m\hat{\bv}) = \zero)$ is also satisfied.
\vspace{1ex}

\noindent 
    {\bf 3)} If $(\hat{u}^2 = \hat{\mu}^2)$ then reduced flow
    $\mathcal{F}(\hat{\bu},\hat{\bv})$ also satisfies energy Eq.~(\ref{eneq-crossc}).
    If, in addition, energy $\hat{E}$ of Eq.~(\ref{eneq-crossc}) is uniform, and $\hat{\bv}$ is
    irrotational, giving $(\nabla\mathcal{S} \defi m\hat{\bv})$, then the wavefunction
    $(\mathcal{I} = \sqrt{\hat{\rho}}e^{i\mathcal{S}/\hbar})$ of reduced from
    $\mathcal{F}(\hat{\bu},\hat{\bv})$ satisfies the 1-body \sch Eq.~(\ref{4740}).
%

\noindent
\textbf{\textit{Proof}} \mbox{}

\noindent 
 {\bf 1)} Each $\mathcal{F}_\times(\Pu_1,\rho_m,\mmu_1,\bv_1)\in\mathcal{C}_1$ satisfies energy
 equation~(\ref{eneq-cross}) follows from Theorem~\ref{5084}. Each
 $\mathcal{F}_\times(\hat{\Pu},\hat{\rho}_m,\hat{\mmu},\hat{\bv})$ satisfies the reduced cross
 flow Euler Eq.~(\ref{1898b}), follows from Theorem~(\ref{1965}), equality~(\ref{2042}), for
 the forces, and that $(\rho_m\mathcal{D}\oomega = \mathcal{D}(\rho_m\oomega))$, because,
 according to Theorem~\ref{2629}, smooth cross flows have constant density, and
 $\rho_m\oomega$ can be reduced. \vspace{1ex}

\noindent
{\bf 2} Each $\mathcal{F}_\times(\hat{\mmu},\hat{\bv})$ satisfies reduced energy
equation~(\ref{eneq-crossc}) follows from the direct integration of Euler Eq.~(\ref{1898c})
along the streamline, as is done in Theorem~(\ref{5084}). 
Since the wavefunction $\Psi$ satisfies the TISE, the continuity equation, given by
$(\nabla_1\cdot(\rho_m\bv_1) = \zero)$, is satisfied, implying that $(\nabla
\cdot(\hat{\rho}_m\hat{\bv}) = \zero)$.\vspace{1ex}

\noindent
{\bf 3} If $(\hat{u}^2 = \hat{\mu}^2)$ then reduced flow $\mathcal{F}(\hat{\bu},\hat{\bv})$
also satisfies energy Eq.~(\ref{eneq-crossc}) because the other flow parameters are the same for both reduced flows.
The satisfaction of the \sch equation follows from Theorem~\ref{theorem00}.
\begin{flushright}\fbox{\phantom{\rule{0.5ex}{0.5ex}}}\end{flushright}


\section{Summary and Conclusion}


The work presented here provides a foundation of a formalism for quantum states based on
the cause and effect of Newton's second law.  While the energy and equations of motion
derived here are applicable to models involving fluid flow and particle trajectories, the
best possible description of the nature of states of quantum systems, consistent with these
equations, might be something different. Since a trajectory model has limitations set by
the uncertainty principle involving position and momentum, a model based on a continuum is
preferred over the trajectory of point masses. According to Broer \cite{Broer}, at least
during the 1970s, there was considerable interests in fluid models for this reason.  At
this stage of development, it is useful to develop many interpretations of the equations,
for use in a process of elimination. 

Theorem~\ref{conti} connects different definitions associated with the imaginary part of the
\sch equation in density matrix form.  Theorem~\ref{theorem00} demonstrates that $n$-body
Bernoullian Energy Eq.~(\ref{eneq}.1) and continuity Eq.~(\ref{eneq}.2) are equivalent to the
\sch equation. Theorem~\ref{theorem22b} provides a generalized equation that removes a
nonclassical feature of the kinetic energy formula.  Section~\ref{2215} gives a complete
identification of the terms in the spectrum of the Euler equations of Theorem~\ref{5293},
\ref{5293b}, and \ref{5293c}, and identifies the flow type as compressible, inviscid,
irrotational, and variable mass. The cross flows of Sec.~\ref{4938} demonstrate that it is
possible to assign kinetic energy to a velocity vector that have properties that are closer to
ones seen in systems of classical mechanics, including orbit trajectories with continuous
velocities and conserved mass fluid flow. The standard Lagrangian formulation of classical
mechanics is described in Sec.~\ref{9918}.

Via theorems, and reasonable hypotheses, for orbital flows of $n$-body states, Sec.~\ref{4918}
obtains sets of $n$ Energy~(\ref{3822}), Euler~(\ref{3823}) and 1-body \sch Eqs.~(\ref{5855}).
The \sch equations are generalizations of the Hartree--Fock equations with the Coulomb and
exchange potentials replaced by potential $V_{\mbox{\small e}}$, giving a quantum Coulomb's law
of force. The same set of three equations are obtained for flows, called mixed flows, that are
defined by Quantum Flow Defs.~(\ref{orbflow}).  For mixed flows, instead of $n$ 1-body
equations, there is only one reduced Euler, energy and \sch equation, to represent an $n$-body
state. Section~\ref{7928} demonstrates that certain cross flows, that are investigated in
Sec.~\ref{4938}, can be reduced to the same Euler equation of mixed flows, but without the need
for hypothesis.


\appendix

\section{Bohmian mechanics Eqs.\ and \sch Eq.\ equivalence \zlabel{7825}} 

\begin{theorem}[Bohmian Mechanics Equations] \zlabel{bohmian}
  The Bohmian Mechanics equations
\begin{equation} \zlabel{5200} 
  -\partial S = \fc12mv^2 + Q + U,  \quad\partial\rho + \fc{1}{m}\nabla\cdot(\rho\nabla S) =\mathbf{0}, 
\end{equation}
and the definition of the quantum potential $Q$:
\begin{equation} \zlabel{p3322} 
Q \defi\! -\fc{\hbar^2}{2m}\rho^{-1}R\nabla^2R,\quad Q\in\mathbf{\Sigma},
\end{equation}
are equivalent to equation set~(\ref{eneq})$\times\rho^{-1}$, definitions~(\ref{en2}) for $\ES$
and $\Etheta$, definition~(\ref{4720}) for $\Pu$ and $\Pv$, and the second equation of
(\ref{def77}) for $u^2$. The Bohmian mechanics Eqs.~(\ref{5200}) and (\ref{p3322}) are
also equivalent to the \sch Eq.~(\ref{schrodinger}).

{\rm Note that both approaches use
\begin{equation}\zlabel{5220} \bigl((m\bv)\defi (\nabla S)\bigl) \quad\text{giving}\quad
  \partial\rho + \nabla\cdot(\rho\bv)=\mathbf{0},
\end{equation}
but this is not needed for the proof that follows.}
\end{theorem}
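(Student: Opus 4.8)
The plan is to reduce everything to the already-proven equivalences involving the polar-form identities from Lemmas~\ref{Lap1} and \ref{Lap2} and Theorems~\ref{conti} and \ref{theorem00}. First I would establish the key correspondence between the quantum potential $Q$ of Eq.~(\ref{p3322}) and the kinetic-energy pieces appearing in the formalism. Using $(\rho = R^2)$ and the definition $(\theta \defim -\zeta_0\ln\rho)$, together with Eq.~(\ref{def77}): $(\fc12 m u^2 = \fc12\zeta_0^2 m|\nabla\rho|^2\rho^{-2})$, and Eq.~(\ref{4720}.2) for $\Pu$: $(\Pu = -\zeta\zeta_0\nabla^2\rho)$, I would verify by direct computation that
\[
Q = \fc12 m u^2 + \fc{\Pu}{\rho}.
\]
This is the crucial algebraic identity: $-\fc{\hbar^2}{2m}\rho^{-1}R\nabla^2 R$ expands, via $R = \sqrt{\rho}$ and $\nabla^2\sqrt{\rho} = \fc{1}{2\sqrt{\rho}}\nabla^2\rho - \fc{1}{4\rho^{3/2}}|\nabla\rho|^2$, into exactly $\fc{\hbar^2}{8m}|\nabla\rho|^2\rho^{-2} - \fc{\hbar^2}{4m}\rho^{-1}\nabla^2\rho$, which matches $\fc12 m u^2 + \Pu\rho^{-1}$ after inserting $\zeta_0 = \hbar/2$ and $\zeta = \zeta_0/m$.

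With that identity in hand, the first Bohmian equation $(-\partial S = \fc12 m v^2 + Q + U)$ becomes, upon substituting the identity for $Q$ and using $(\ES = -\partial S)$ from Eq.~(\ref{en2}), precisely $(\ES = \fc12 m v^2 + \fc12 m u^2 + \Pu\rho^{-1} + U)$, which is Eq.~(\ref{eneq}.1) divided by $\rho$. So the real part of the \sch equation matches the first Bohmian equation under these definitions. For the second Bohmian equation, $(\partial\rho + \fc{1}{m}\nabla\cdot(\rho\nabla S) = \zero)$, I would note that $(m\bv = \nabla S)$ gives $(\fc{1}{m}\nabla\cdot(\rho\nabla S) = \nabla\cdot(\rho\bv))$, so this is exactly the continuity Eq.~(\ref{cont}.2), which by Theorem~\ref{conti} is equivalent to $(\Etheta\rho = \Pv)$, i.e.\ Eq.~(\ref{eneq}.2), and also to the imaginary part of the \sch equation. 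Then invoking Theorem~\ref{theorem00}, the pair Eqs.~(\ref{eneq}) is equivalent to the full \sch Eq.~(\ref{schrodinger}); hence so is the Bohmian pair together with Def.~(\ref{p3322}). The parenthetical remark about Eq.~(\ref{5220}) just records that the passage from $(\partial\rho + \fc{1}{m}\nabla\cdot(\rho\nabla S) = \zero)$ to $(\partial\rho + \nabla\cdot(\rho\bv) = \zero)$ is immediate once $(m\bv \defi \nabla S)$ is posited, but the argument above only uses the algebraic form, so it is not logically required.

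The main obstacle is really just the bookkeeping of the $Q$-to-$(\fc12 m u^2 + \Pu\rho^{-1})$ identity: one must be careful that the formalism here normalizes so that $(\hat{\text P} = -i\hbar\nabla)$ and $\zeta_0 = \hbar/2$, and that $\Pu$ carries the correct sign (from Eq.~(\ref{4720}.2), $(\Pu = -\zeta\zeta_0\nabla^2\rho)$, note the minus sign) so that the $-\fc{\hbar^2}{4m}\rho^{-1}\nabla^2\rho$ term of $Q$ comes out with matching sign. Everything else is a routine chain of "equivalent to" citations: Eq.~(\ref{en2}) supplies $(\ES = -\partial S)$ and $(\Etheta = -\partial\theta)$, Eq.~(\ref{def77}) supplies the $u^2$ formula, Theorem~\ref{conti} supplies the chain of equivalences for the continuity/imaginary-part side, and Theorem~\ref{theorem00} closes the loop to the \sch equation. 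I would present the proof as: (i) establish $Q = \fc12 m u^2 + \Pu\rho^{-1}$; (ii) rewrite Eq.~(\ref{5200}.1) as Eq.~(\ref{eneq}.1)$\div\rho$; (iii) rewrite Eq.~(\ref{5200}.2) as Eq.~(\ref{cont}.2) hence Eq.~(\ref{eneq}.2)$\div\rho$; (iv) cite Theorem~\ref{theorem00} for equivalence with the \sch equation; (v) dispatch the remark on Eq.~(\ref{5220}).
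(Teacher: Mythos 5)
Your proposal is correct and follows essentially the same route as the paper: the core step in both is the identity $Q = \fc12 mu^2 + \Pu\rho^{-1}$ (the paper's Eq.~(\ref{5552}), verified via the expansion $-\fc{\hbar^2}{2m}\rho^{-1}R\nabla^2R = \fc{\hbar^2}{8m}\rho^{-2}\vert\nabla\rho\vert^2 - \fc{\hbar^2}{4m}\rho^{-1}\nabla^2\rho$), after which the first Bohmian equation matches Eq.~(\ref{eneq}.1)$\div\rho$, the second matches Eq.~(\ref{eneq}.2) by substituting the definitions of $\Etheta$ and $\Pv$, and Theorem~\ref{theorem00} closes the loop to the \sch equation. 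The only cosmetic difference is that you route the continuity step through Theorem~\ref{conti} and the velocity $\bv$, whereas the paper substitutes the definitions directly in terms of $\nabla S$, which is why it can claim Eq.~(\ref{5220}) is not needed.
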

\textbf{\textit{Proof}}
First we prove that the first equations of (\ref{eneq})$\times\rho^{-1}$ and (\ref{5200}) are
equivalent. Comparing these two equations, and using the first definition from (\ref{en2})
($\ES = -\partial S$), the definition above for $(Q)$, and noting that the equations have the
same definition for $(v^2)$, it follows that the two equations are equivalent, if and only if
\begin{equation} \zlabel{5552}
-\fc{\hbar^2}{2m}\rho^{-1}R\nabla^2R = \fc12mu^2 + \Pu\rho^{-1}.
\end{equation}
The identity:
\begin{equation*} 
-\fc{\hbar^2}{2m}\rho^{-1}R\nabla^2R = \fc{\hbar^2}{8m}\rho^{-2}\nabla\rho\cdot\nabla\rho - \fc{\hbar^2}{4m}\rho^{-1}\nabla^2\rho,
\end{equation*}
is easily proved by expanding out $(\nabla^2\rho)$ or by setting $S$ to a constant in (\ref{pexpect}).
(A proof is also give elsewhere \cite{Finley-Bern}.)
Hence, we only need to prove that the rhs of this identity is equivalent to the rhs of (\ref{5552}).
Substituting definition (\ref{def77}) for $u^2$ and (\ref{4725}) for $\Pu$, the above identity becomes~(\ref{5552}), 
completing the first part of the proof.

To complete the proof for the first statement, we show that the second equations of
(\ref{eneq}) and (\ref{5200}) are equivalent:
\[ \hspace{-14ex} 
\text{Substituting  definitions (\ref{en2}):} \quad  \Etheta\rho \defi \fc{\hbar}{2}\partial\rho,\quad \text{and} \quad (\ref{4725}):
\quad\Pv \defim -\fc{\hbar}{2m}\nabla\cdot(\rho\nabla S),\quad\text{into} 
\]
definition (\ref{eneq}): ($\Etheta\rho = \Pv$), gives
the second equation of (\ref{5200}), if 
the equation is multiplied by $2/\hbar$ and solved for the zero function.

For the second statement, the equivalence of the Bohmian mechanics Eqs.~(\ref{5200}) and
(\ref{p3322}) and the \sch Eq.~(\ref{schrodinger}) follows immediately from Theorem~\ref{theorem00}.
\begin{flushright}\fbox{\phantom{\rule{0.5ex}{0.5ex}}}\end{flushright}

\bibliography{ref}
\bibliographystyle{unsrt}
\end{document}